\newcommand{\norm}[1]{\left\| #1 \right\|}
\newcommand{\ssubset}{\subset\joinrel\subset}
\def\openone{\leavevmode\hbox{\small1\kern-3.8pt\normalsize1}}
\def\CC{\mathbb{C}}
\def\RR{\mathbb{R}}
\def\ZZ{\mathbb{Z}}
\def\NN{\mathbb{N}}
\def\LL{\mathbb{L}}
\def\11{\mathbb{I}}
\def\LL{\mathcal{L}}
\def\RRR{\mathcal{R}}
\newtheorem{theorem}{Theorem}
\newtheorem{lemma}{Lemma}
\newtheorem{proposition}{Proposition}
\newtheorem{corollary}{Corollary}
\newtheorem{step}{Step}
\theoremstyle{definition}
\newtheorem{definition}{Definition}
\newtheorem{step2}{Step}
\newcommand{\dist}{{\operatorname{dist}}}
\def\eps{\varepsilon}
\newcommand{\vertiii}[1]{{\left\vert\kern-0.25ex\left\vert\kern-0.25ex\left\vert #1 
		\right\vert\kern-0.25ex\right\vert\kern-0.25ex\right\vert}}
\newcommand{\qty}[1]{\left\lbrace #1   \right\rbrace}
\newcommand{\supp}{\mathop{\rm supp}\nolimits}
\newcommand{\tr}{\mathop{\rm Tr}\nolimits}
\newcommand{\ds}{\displaystyle}
\newcommand{\ccc}{{\operatorname{c}}}
\newcommand{\ket}[1]{|#1\rangle}
\newcommand{\cA}{{\cal A}}
\newcommand{\cB}{{\cal B}}
\newcommand{\cD}{{\cal D}}
\newcommand{\cE}{{\cal E}}
\newcommand{\cF}{{\mathcal{F}}}
\newcommand{\cN}{{\cal N}}
\newcommand{\cT}{{\cal T}}
\newcommand{\cH}{{\cal H}}
\newcommand{\cK}{{\cal K}}
\newcommand{\cJ}{{\cal J}}
\newcommand{\cS}{\mathcal{S}}
\newcommand{\cP}{\mathcal{P}}
\newcommand{\cL}{{\cal L}}
\newcommand{\cM}{{\mathcal{M}}}
\newcommand{\cO}{{\cal O}}
\newcommand{\Id}{{\mathds{1}}}
\newcommand{\R}{{\mathbb{R}}}
\def\e{\mathrm{e}}
\newcommand{\ketbra}[2]{|#1\rangle\langle #2|}
\theoremstyle{definition}
\theoremstyle{remark}
\newtheorem{remark}{Remark}
\newtheorem{condition}{Condition}
\numberwithin{equation}{section}
\newcommand{\abs}[1]{\lvert#1\rvert}
\DeclareRobustCommand\openone{\leavevmode\hbox{\small1\normalsize\kern-.33em1}}
\newcommand{\id}{{\rm{id}}}
\newcommand{\be}{\begin{equation}}
	\newcommand{\ee}{\end{equation}}
\newcommand{\bea}{\begin{eqnarray}}
	\newcommand{\eea}{\end{eqnarray}}
\newcommand{\beas}{\begin{eqnarray*}}
	\newcommand{\eeas}{\end{eqnarray*}}
\title{The modified logarithmic Sobolev inequality for quantum spin systems: classical and commuting nearest neighbour interactions}
\begin{document}

\author[1,2]{{\'A}ngela Capel}
\author[1,2]{Cambyse Rouz\'{e}}
\affil[1]{Department of Mathematics, Technische Universit\"at M\"unchen, 85748 Garching, Germany}
\affil[2]{Munich Center for Quantum Science and Technology (MCQST), M\"unchen, Germany}
\author[3]{Daniel Stilck Fran\c ca}
\affil[3]{QMATH, Department of Mathematical Sciences, University of Copenhagen, Universitetsparken 5, 2100 Copenhagen, Denmark}
\date{}

\maketitle

\begin{abstract}
Given a uniform, frustration-free family of local Lindbladians defined on a quantum lattice spin system in any spatial dimension, we prove a strong exponential convergence in relative entropy of the system to equilibrium under a condition of spatial mixing of the stationary Gibbs states and the rapid decay of the relative entropy on finite-size blocks. Our result leads to the first examples of the positivity of the modified logarithmic Sobolev inequality for quantum lattice spin systems independently of the system size. Moreover, we show that our notion of spatial mixing is a consequence of the recent quantum generalization of Dobrushin and Shlosman's complete analyticity of the free-energy at equilibrium. The latter typically holds above a critical temperature $T_c$. 

Our results have wide-ranging applications in quantum information. As an illustration, we discuss four of them: first, using techniques of quantum optimal transport, we show that a quantum annealer subject to a finite range classical noise will output an energy close to that of the fixed point after constant annealing time. Second, we prove Gaussian concentration inequalities for Lipschitz observables and show that the eigenstate thermalization hypothesis holds for certain high-temperture Gibbs states. Third, we prove a finite blocklength refinement of the quantum Stein lemma for the task of asymmetric discrimination of two Gibbs states of commuting Hamiltonians satisfying our conditions. Fourth, in the same setting, our results imply the existence of a local quantum circuit of logarithmic depth to prepare Gibbs states of a class of commuting Hamiltonians.

In order to prove our main result, we introduce the concept of \textit{peeling}, which refers to the decomposition of the analysis of the evolution into two steps: first, we study the rapidity at which the initial state becomes indistinguishable from the stationary state on some finite-size cubes which tile the lattice. This first step requires the existence of the complete modified logarithmic Sobolev inequality on finite lattice subregions. Then, we show that the convergence of these previously partially ``peeled'' states towards the global Gibbs state rapidly occurs under our condition of spatial mixing. The proof of this last statement requires a newly derived approximate tensorization of the relative entropy between the partially peeled state and the fixed point.

	\end{abstract}

\newpage
\tableofcontents
\newpage
\section{Introduction}
In any realistic setting, a quantum system undergoes unavoidable interactions with its environment. These interactions lead to alterations of the information initially contained in the system. Within the current context of emerging quantum information-processing devices, a proposed solution to the problem of decoherence is to encode the quantum logical information into a highly entangled many-body state in order to protect it from the action of local noise \cite{shor1995scheme,dennis2002topological}.  Such a state will typically belong to the ground space of a Hamiltonian modeling the noiseless, unitary evolution of the system in the absence of an environment. When the environmental noise can be modeled by a Markovian evolution and below some critical temperature, the resulting self-correcting quantum memory should survive for a time which scales at least polynomially with the size of the system. Conversely, faster decoherence was recently used as a viable method for the preparation and control of relevant phases of matter \cite{verstraete2009quantum,diehl2008quantum,kraus2008preparation,syassen2008strong,witthaut2009dissipation}, as well as to estimate the run-time of algorithms based on the efficient preparation of a Gibbs state \cite{brandao2017quantum}. The variety of the aforementioned applications indicates the importance of finding easy criteria for the study of the speed at which quantum lattice spin systems thermalize.

 Since the seminal works of Dobrushin-Shlosman and Stroock-Zegarlinski, equilibrium and out-of-equilibrium properties of classical lattice spin systems are known to be closely related: in their attempt to answer the problem of the analytical dependence of a Gibbs measure to its corresponding potential, Dobrushin and Shlosman introduced twelve equivalent statements, one of which we refer to as the condition of \textit{exponential decay of correlations} (sometimes also referred to as \textit{clustering of correlations}): the correlations between two separated regions $A$ and $B$ of a lattice spin system decay exponentially in the distance separating $A$ from $B$. On the other hand, given a potential, one can construct a  Markov process, called \textit{Glauber dynamics}, whose stationary state coincides with the Gibbs state for the given potential. For these dynamics, Holley and Stroock \cite{holley1976,holley1989uniform} made the key observation that systems thermalizing in times scaling logarithmically in the system size, a property known as \textit{rapid mixing}, satisfy exponential decay of correlations at equilibrium.
The converse implication, namely that exponential decay of correlations implies rapid mixing, was investigated later on in a series of articles \cite{STROOCK1992299,stroock1992b,stroock1992c} by Zegarlinski and Stroock, who proved the stronger condition of an exponential entropic decay of the dynamics towards the limiting Gibbs measure, also known as \textit{logarithmic Sobolev inequality}. Exponential decay of correlations was also proven to be equivalent to the non-closure of the spectral gap of the generator of the Glauber dynamics. Finally, since all these conditions occur above some critical temperature, their equivalence rigorously establishes the equivalence between dynamical and static phase transitions.

Functional inequalities like the logarithmic Sobolev inequality are by now one of the most powerful tools available in the study of classical spin systems \cite{Martinelli1999,zegarlinski1990log,zegarlinski1990logb,ZEGARLINSKI199277,martinelli1994approacha,martinelli1994approach}, and are still the subject of active research \cite{chen2020optimal,cryan2019modified,lee2018stochastic}. They have also found numerous applications in optimization, information theory and probability theory \cite{raginsky2012concentration,villani2008optimal,bakry2013analysis,boucheron2013concentration}, just to name a few. Functional inequalities can be described as differential versions of strong contraction properties of various distance measures under the action of a semigroup. For instance, the \textit{Poincaré inequality} provides an estimate on the Lindbladian's spectral gap and can be understood as quantifying how fast the variance of observables decays under the semigroup. Significantly faster convergence can be shown via the existence of a (modified) logarithmic Sobolev inequality, which implies exponential convergence in relative entropy of any initial state evolving towards equilibrium \cite{[OZ99],[KT13]}, with a rate that is referred to as the \textit{(modified) logarithmic Sobolev constant}. Unlike the spectral gap, this convergence can further be used to provide tight estimates on various capacities of the semigroup \cite{bardet2019group}.

The extension of the above unifying theory to quantum spin systems is still far from being well understood despite a large body of literature devoted to the subject. From the static point of view, the theory of Dobrushin and Shlosman was recently almost completely generalized to the quantum setting in \cite{harrow2020classical} (see also \cite{araki1969gibbs,kliesch2014locality,kuwahara2020clustering,kato2019quantum}), whereas various notions of exponential decay of correlations and area laws were derived under the existence of a functional inequality in \cite{Kastoryano2013a,brandao2015area}. In the low-temperature regime, Temme \cite{Temme2014} proved a lower bound on the spectral gap of the Davies generator corresponding to a stabilizer Hamiltonian in terms of the energy barrier of the corresponding code, hence rigorously connecting the latter to the memory’s lifetime.  On the other hand, based on the previous work of \cite{majewski1995quantum} (see also \cite{majewski1996quantum,[MOZ98],Zegarlinski2000}), Temme and Kastoryano recently showed that, above a critical temperature, any heat-bath dynamics associated with a commuting Hamiltonian satisfies the rapid mixing property \cite{temme2015fast}. Previously, the uniform positivity of the spectral gap for these Markov processes was shown in \cite{[BK16]} to be equivalent to a stronger condition of clustering of the correlations in the Gibbs state between separated regions of the lattice. More recently, exponential clustering of correlations of a Gibbs state was proved to imply its efficient preparation on a quantum \cite{brandao2019finite} or classical \cite{harrow2020classical} computer. In other words, the transition in the phase of a quantum system is also accompanied by a transition in the hardness of approximation \cite{sly2010computational}. 

In spite of these advances in the understanding of quantum Gibbs states, only very few many-body quantum systems are known to satisfy a modified logarithmic Sobolev inequality (see \cite{[KT13],[TPK14],beigi2018quantum,capel2018quantum} for non-interacting systems, and \cite{hyperspectral} for Fermionic systems), and establishing it in generic situations has been an open problem for decades. 
One reason behind this can be explained by the fact that the presence of entanglement  poses significant technical challenges, as most proofs for classical systems rely on concepts that do not generalize to the quantum settings, such as conditioning on the boundary or coupling. The goal of this paper is precisely to find ways around these issues in order to fill in this missing gap.

 \begin{figure}[h!]
	\centering
	\includegraphics[width=1\linewidth]{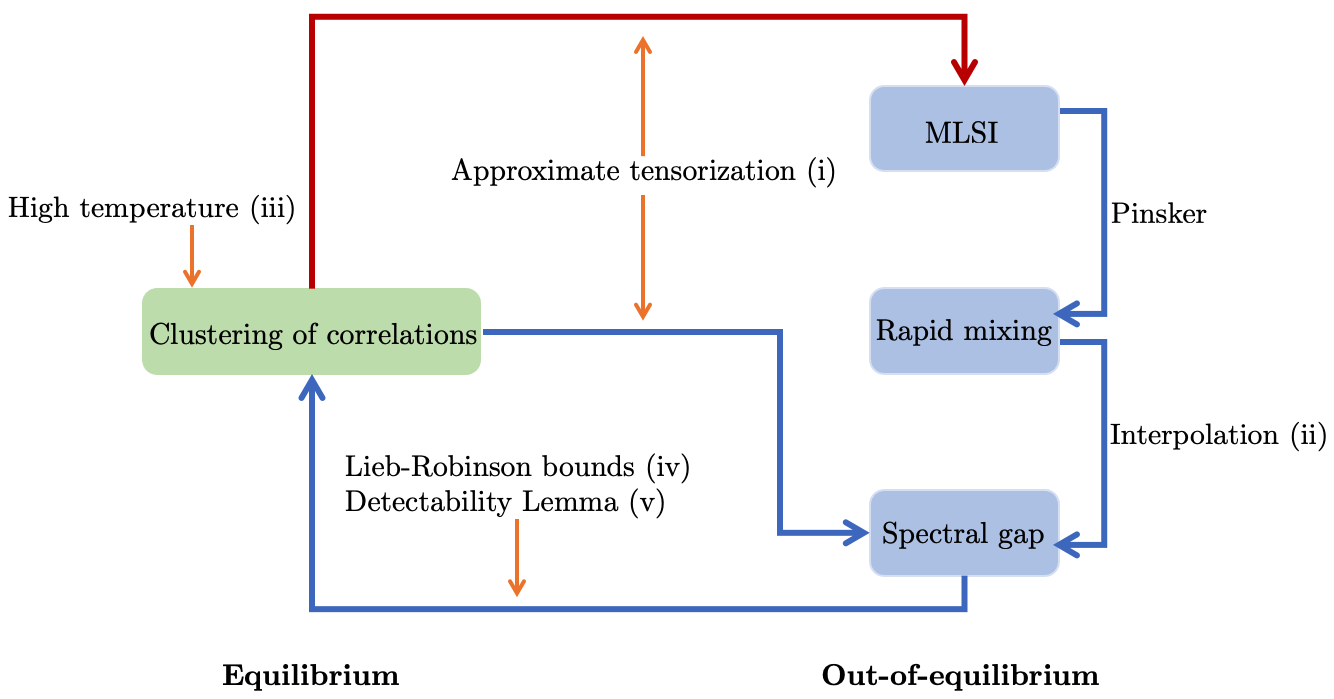}
	\caption{Relations between static and dynamical properties of quantum Gibbs states. The main result of this paper is depicted by the red arrow connecting the notion of clustering of correlations to the existence of a modified logarithmic Sobolev constant. (i) The approximate tensorization of the variance was proved in \cite{[BK16]} to lead to the non-closure of the gap, and an entropic strengthening of that statement is the subject of Theorem \ref{ATAC}. The interpolation (ii) was essentially proven in \cite[Lemma 6]{temme2015fast}.
	We derive the clustering of correlations at high enough temperature and show (iii) by adapting techniques recently pioneered in~\cite{harrow2020classical}, whereas Lieb-Robinson bounds were employed in \cite{Kastoryano2013a} to derive it from the non-closure of the gap (iv). Finally, the detectability lemma was employed in \cite{[BK16]} to derive a stronger notion of clustering of correlation from the gap condition (v).  }
\end{figure}

\paragraph{Main results and proof strategy} 
From a mathematical point of view, our main result constitutes the first complete proof of the existence of the modified logarithmic Sobolev inequality for interacting quantum spin systems independently of the lattice size under exponential clustering of correlations:

\begin{theorem}[MLSI for quantum lattice spin systems (informal)]\label{TheoMLSI} Given the Gibbs state $\sigma_\Lambda$ of a local commuting Hamiltonian $H_\Lambda$ on the $d$-dimensional lattice $\Lambda\ssubset \mathbb{Z}^d$, there exists a local quantum Markov semigroup $(\e^{t\mathcal{L}_{\Lambda*}})_{t\ge 0}$ converging to $\sigma_\Lambda$ exponentially fast in relative entropy distance if $\sigma_\Lambda$ satisfies exponential decay of correlations and any of the three conditions below is satisfied:
\begin{itemize}
\item[$\operatorname{(i)}$] $H_\Lambda$ is classical;
\item[$\operatorname{(ii)}$] $H_\Lambda$ is a nearest neighbour Hamiltonian;
\item[$\operatorname{(iii)}$] $\Lambda$ is a one-dimensional spin chain.
\end{itemize}
More precisely, for every initial state $\rho$
 \begin{align*}
     D(e^{t{\cL}_{\Lambda *}}(\rho)\|\sigma_\Lambda)\leq e^{-\alpha t}D(\rho\|\sigma_\Lambda)\,,
 \end{align*}
 for a constant $\alpha>0$ independent of system size $|\Lambda|$, where $D(\rho\|\sigma_\Lambda):=\tr[\rho\,(\ln\rho-\ln\sigma_\Lambda)].$ Moreover, the notion of decay of correlations that we use holds at any inverse temperature $\beta<\beta_c$, where $\beta_c:=(5 \e g h\kappa)^{-1}$ depends on the locality $\kappa$, the interaction strength $h$ and the growth constant $g$ of $H_\Lambda$. In the case of a classical Hamiltonian, we further prove the equivalence between $\operatorname{(a)}$ the existence of a modified logarithmic Sobolev inequality, $\operatorname{(b)}$ the exponential decay of correlations in $\sigma_\Lambda$, $\operatorname{(c)}$ the uniform positivity of the spectral gap and $\operatorname{(d)}$ rapid mixing. 
\end{theorem}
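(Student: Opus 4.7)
The plan is to establish the MLSI through the \emph{peeling} argument announced in the abstract, which decomposes the global convergence into a local equilibration step and a recombination step governed by spatial mixing. First, I would tile the lattice $\Lambda$ with disjoint cubes $\{C_i\}$ of a fixed side length $\ell$, to be tuned large enough in terms of the correlation length appearing in the decay-of-correlations hypothesis. On each cube $C_i$, the restricted Gibbs sampler $\cL_{C_i}$ is a finite-size commuting semigroup, so it satisfies a complete MLSI with some constant $\alpha_\ell > 0$ depending only on $\ell$. Applying the peeled map $\e^{t_0 \cL_A}$ with $A = \bigsqcup_i C_i$ and $t_0 = O(\log \ell / \alpha_\ell)$ produces a state $\rho' $ whose marginals on each $C_i$ are already very close (in relative entropy) to those of $\sigma_\Lambda$, at the cost of only constant time.

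Second, for this partially peeled state I would prove an approximate tensorization of the relative entropy of the schematic form
\begin{equation*}
D(\rho'\|\sigma_\Lambda) \;\le\; (1+\delta(\ell))\,\sum_i D\bigl(\rho'\,\big\|\, \cE_{C_i^c}(\rho')\bigr),
\end{equation*}
where $\cE_{C_i^c}$ is the conditional expectation onto the complement algebra and $\delta(\ell)\to 0$ as $\ell$ grows, by direct appeal to the exponential clustering hypothesis on $\sigma_\Lambda$. Combined with the local cMLSI from the first step, this yields contraction of the global relative entropy by a constant factor per peel; iterating with a staggered partition (say, the original tiling and a shift by $\ell/2$ in each coordinate) produces an exponentially decaying envelope with a rate $\alpha>0$ independent of $|\Lambda|$. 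The three cases (i)--(iii) enter only at the tensorization step: in case (i) the classical structure lets one invoke conditional expectations onto boundary configurations just as in the Stroock--Zegarlinski framework; in case (ii) the nearest-neighbour structure restricts the ``bad'' Hamiltonian terms contributing to the tensorization error to a thin layer around $\partial C_i$, which can then be bounded through commuting-Hamiltonian pinching identities; and in case (iii) the one-dimensional geometry collapses each inter-block boundary to a single site, making the quasi-factorization structurally elementary.

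I expect the main obstacle to be the quantitative approximate tensorization of the relative entropy between the peeled state and $\sigma_\Lambda$. In the classical setting the analogue of this bound hinges on writing relative entropy as an expectation over boundary conditions and exploiting the exact factorization of the Gibbs measure; in the quantum commuting case the corresponding splitting picks up cross terms between the block and its complement that involve the non-trivial overlap of the modular groups of $\sigma_{C_i}$ and $\sigma_{C_i^c}$. Making $\delta(\ell)$ genuinely small will require exploiting both the exponential decay of correlations in $\sigma_\Lambda$ across $\partial C_i$ \emph{and} the fact that $\rho'$ has already been equilibrated on the blocks, so that only cross-block contributions need to be controlled. Once this is available, the equivalences $\mathrm{(a)}\Leftrightarrow\mathrm{(b)}\Leftrightarrow\mathrm{(c)}\Leftrightarrow\mathrm{(d)}$ in the classical case close the cycle depicted in the introductory figure: (a) implies (d) by Pinsker, (d) implies (c) by the standard spectral-gap/mixing-time comparison, (c) implies (b) via Lieb--Robinson (arrow (iv)) or the detectability-lemma route (arrow (v)), and (b) implies (a) by the main theorem itself.
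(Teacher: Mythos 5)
Your outline captures the advertised narrative (local equilibration on a tiling, then recombination via clustering), but two of its load-bearing steps do not work as stated, and the paper's actual argument is structured differently precisely to avoid them. First, your ``peeling'' is an approximate one: you run $\e^{t_0\cL_A}$ for a finite time and obtain a state $\rho'$ that is only \emph{close} to being block-equilibrated. The paper instead applies the exact conditional expectation $E_{A\cap\Lambda*}$ through the chain rule $D(\rho\|\sigma^\Lambda)=D(\rho\|E_{A\cap\Lambda*}(\rho))+D(E_{A\cap\Lambda*}(\rho)\|\sigma^\Lambda)$, which is an identity, and then proves a \emph{differential} inequality (a ``pinched'' MLSI) for each term at the level of entropy production. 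This matters because the additive error $d(\cdot)$ in the quantum approximate tensorization vanishes only for states exactly in the image of $E_{A\cap\Lambda*}$; for your approximately peeled $\rho'$ it is small but nonzero, and your scheme offers no mechanism to prevent these additive errors from accumulating over iterations. Moreover, a discrete ``contract by a constant factor per peel, then iterate with a staggered partition'' argument yields at best a bound of the form $D(\rho_t\|\sigma)\le C\,\e^{-\alpha t}D(\rho\|\sigma)$ with a prefactor $C>1$ (exactly the unsatisfactory factor $2$ the paper obtains in its warm-up and then works to remove); it does not give the prefactor-free exponential decay equivalent to a genuine MLSI.

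Second, your tensorization over disjoint cubes of a \emph{fixed} side length $\ell$ cannot have $\delta(\ell)$ small uniformly in $|\Lambda|$: the clustering hypothesis delivers an error constant of the form $c\,|C\cup D|\,\e^{-\dist(C\backslash D,D\backslash C)/\xi}$, so the volume prefactor grows with the system while the exponential gain is capped by $\ell$. The paper's solution is a multiscale Cesi-type recursion on overlapping fat regions (rhomboids/grained rectangles) in which the overlap width grows like $\sqrt{L}$ with the region size $L$, so that $\e^{-\sqrt{L}/\xi}$ beats the polynomial volume factor $L^d$; averaging the entropy production over $\sim L/\sqrt{L}$ translates of the cut (using $\operatorname{EP}_{\cL_C}+\operatorname{EP}_{\cL_D}=\operatorname{EP}_{\cL_{C\cup D}}+\operatorname{EP}_{\cL_{C\cap D}}$) then yields the recursion $S(2L)\ge(1+K/\sqrt{L})^{-\mathrm{const}}S(L)$, whose infinite product converges to a strictly positive, size-independent constant. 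Without this doubling structure --- or some substitute that defeats the volume factor --- your single-scale bound does not close. Your placement of where hypotheses (i)--(iii) enter and your cycle of classical equivalences are consistent with the paper, but the two gaps above are essential.
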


 We emphasize that proving such a result for quantum systems is nontrivial, even in the case of systems thermalizing to a classical state. This is because the initial state could be highly entangled, and it is a-priori not clear whether entanglement could be used as a resource to substantially slow down the thermalization. Our analysis rigorously proves that this is not the case. Our proof of Theorem \ref{TheoMLSI} is adapted from a modern strategy by \cite{cesi2001quasi}. It splits into three parts:
\newline 

\textit{Strengthened exponential decay of correlations} (Section \ref{clustercorrela}): First, we prove a strengthened exponential decay of correlations below the critical inverse temperature $\beta_c$. For a classical Gibbs state, this condition is precisely the one of Dobrushin-Shlosman. We provide an extension to the commuting, nearest neighbour setting. Our construction of the conditional expectations involved in the result relies on a Schmidt decomposition of the local interactions, which was already used in the study of the local Hamiltonian problem in \cite{bravyi2003commutative}. We refer to Section \ref{clustercorrela} for more details.
\begin{theorem}[Conditioned $\mathbb{L}_1- \mathbb{L}_\infty$ exponential decay of correlations (informal)] 
Let $\sigma_\Lambda$ be the Gibbs state of a commuting nearest neighbour Hamiltonian $H_\Lambda$ at inverse temperature $\beta\le \beta_c$. Then, for any two overlapping regions $C,D\subset \Lambda$, any boundary condition $\omega\equiv \omega_{\partial{C\cup D}}$ and any observable $X^{\omega}$ conditioned on the boundary of $C\cup D$, 
\begin{align*}
    \langle (E^{\omega}_C-E_{C\cup D}^{\omega})(X^{\omega}),\, (E^{\omega}_D-E_{C\cup D}^{\omega})(X^{\omega})\rangle_{\sigma^{\omega}}\le \,c\,|C\cup D|\,\e^{-\operatorname{dist}(D\backslash C,C\backslash D)/\xi}\,\|X^\omega\|_\infty\,\|X^\omega\|_{\mathbb{L}_1(\sigma^\omega)}\,,
\end{align*}
where $\{E^{\omega}_A\}$, $A\in\{C,D,C\cup D\}$, is a family of conditional expectations with respect to $\sigma_\Lambda$, and where $\sigma^\omega$ is the local Gibbs state conditioned on the boundary of $C\cup D$. 
\end{theorem}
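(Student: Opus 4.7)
The plan is to use the commuting nearest-neighbour structure together with the Schmidt decomposition of each local interaction (in the Bravyi-Vyalyi sense alluded to in the paragraph preceding the statement) to cast the left-hand side as a covariance between two observables effectively supported on the disjoint sets $C\setminus D$ and $D\setminus C$, and then to invoke the exponential decay of correlations in $\sigma_\Lambda$ to extract the exponential factor.

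First I would make precise the construction of $E^\omega_A$. Writing $H_\Lambda=\sum_e h_e$ with $h_e=\sum_k a^{(e)}_k\otimes b^{(e)}_k$ in Schmidt form, the commutativity of the interactions together with the algebraic relations between the Schmidt components allow one to represent $\sigma^\omega$ on $C\cup D$ in a factorized form along $\partial A$. Standard constructions (Petz/KMS projection) then give $E^\omega_A$ as the $\sigma^\omega$-preserving conditional expectation onto the subalgebra of observables that depend on $A$ only through the boundary Schmidt operators. The key structural facts are that $E^\omega_A$ is a unital completely positive idempotent contracting both $\|\cdot\|_\infty$ and $\|\cdot\|_{\mathbb{L}_1(\sigma^\omega)}$, and that $E^\omega_C\circ E^\omega_{C\cup D}=E^\omega_D\circ E^\omega_{C\cup D}=E^\omega_{C\cup D}$.

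Setting $Y:=(E^\omega_C-E^\omega_{C\cup D})(X^\omega)$ and $Z:=(E^\omega_D-E^\omega_{C\cup D})(X^\omega)$, the identities above give $E^\omega_{C\cup D}(Y)=E^\omega_{C\cup D}(Z)=0$. Modulo a boundary correction of thickness $O(1)$ coming from the leakage of the Schmidt components into the overlap, $Y$ may be taken to be supported in $C\setminus D$ and $Z$ in $D\setminus C$. The inner product $\langle Y,Z\rangle_{\sigma^\omega}$ then becomes a genuine covariance $\operatorname{Cov}_{\sigma^\omega}(Y,Z)$, to which I would apply the assumed exponential clustering at $\beta<\beta_c$ in its $\mathbb{L}_1-\mathbb{L}_\infty$ form to obtain
\begin{equation*}
|\operatorname{Cov}_{\sigma^\omega}(Y,Z)|\,\le\, c'\,\mathrm{e}^{-\operatorname{dist}(\supp Y,\,\supp Z)/\xi}\,\|Y\|_\infty\,\|Z\|_{\mathbb{L}_1(\sigma^\omega)}\,.
\end{equation*}
The bounds $\|Y\|_\infty\le 2\|X^\omega\|_\infty$ and $\|Z\|_{\mathbb{L}_1(\sigma^\omega)}\le 2\|X^\omega\|_{\mathbb{L}_1(\sigma^\omega)}$ follow from the contractivity of $E^\omega_A$, while the combinatorial factor $|C\cup D|$ absorbs the union bound over the interface cells arising from the boundary approximation step.

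The main obstacle I foresee is the second step: cleanly isolating the effective supports of $Y$ and $Z$. The Schmidt-decomposed interactions used to define $E^\omega_A$ inherently leak into $C\cap D$, and one must verify that the portion of $Y$ (resp.\ $Z$) not strictly supported on $C\setminus D$ (resp.\ $D\setminus C$) contributes only a correction controlled by a power of $|C\cup D|$ times the same exponential decay. A related subtlety is that $\langle\cdot,\cdot\rangle_{\sigma^\omega}$ is a KMS-type pairing, so rewriting it as a genuine covariance compatible with the input clustering assumption requires modular translations that are harmless for commuting Hamiltonians but must be tracked explicitly. Finally, upgrading an input clustering bound (typically stated in a symmetric $\mathbb{L}_2-\mathbb{L}_2$ form) to the asymmetric $\mathbb{L}_1-\mathbb{L}_\infty$ form used above is a Hölder interpolation against $\sigma^\omega$, again facilitated by the commuting structure.
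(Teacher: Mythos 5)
There is a genuine gap at the heart of your argument: the step where you ``invoke the assumed exponential clustering at $\beta<\beta_c$ in its $\mathbb{L}_1-\mathbb{L}_\infty$ form'' assumes essentially the statement to be proven. The whole point of this theorem is that the bound involves the product $\|X^\omega\|_\infty\,\|X^\omega\|_{\mathbb{L}_1(\sigma^\omega)}$ rather than the standard (and strictly weaker) $\|\cdot\|_\infty\,\|\cdot\|_\infty$ or $\|\cdot\|_{\mathbb{L}_2}\,\|\cdot\|_{\mathbb{L}_2}$ products, and this strengthening is not available as an off-the-shelf input at high temperature. Your proposed fallback --- upgrading an $\mathbb{L}_2-\mathbb{L}_2$ clustering bound to the $\mathbb{L}_1-\mathbb{L}_\infty$ form by H\"older interpolation against $\sigma^\omega$ --- goes in the wrong direction: since $\|Z\|_{\mathbb{L}_1(\sigma^\omega)}\le\|Z\|_{\mathbb{L}_2(\sigma^\omega)}\le\|Z\|_\infty$, a bound with an $\mathbb{L}_1$ norm on the right-hand side is \emph{stronger} than one with an $\mathbb{L}_2$ norm, and passing from the latter to the former via norm equivalences costs a factor exponential in $|C\cup D|$, which destroys the estimate. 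So the mechanism that actually produces the $\mathbb{L}_1$ norm is missing from your proof.

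The paper obtains that mechanism differently. It first proves (Theorem \ref{thm:analyticity-high-temperature}, via a cluster expansion) that the free energy after measurement is analytic and bounded by $O(|\Lambda|)$ for $\beta<\beta_c$, and then uses a vanishing-derivatives lemma for bounded analytic functions (Lemma \ref{blackmagic}) to deduce the quantum Dobrushin--Shlosman condition \eqref{qIIId}. The crucial feature of \eqref{qIIId} is that it is a \emph{multiplicative} bound: the difference of post-selected expectations of a positive observable $N_A$ under two boundary tests is controlled by $c\,|\Gamma|\,\e^{-\dist(A,B)/\xi}$ times $\tr[\sigma^{\Lambda,P'_BP_{\partial\Gamma}}N_A]$ itself. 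In Proposition \ref{propDScorrdecay} this is applied with $N_{C\setminus D}=E_D^{(\alpha)}[X]$ and boundary projections built from the Schmidt block structure, and it is precisely the factor $\tr[\sigma\,N_A]=\tr[\tau_\alpha^{C\cup D}X]=\|X\|_{\mathbb{L}_1(\tau_\alpha^{C\cup D})}$ on the right-hand side of \eqref{qIIId} that becomes the $\mathbb{L}_1$ norm in the final statement. Your structural observations about the Schmidt conditional expectations (contractivity, the composition identities, the support of $E_D[X]$ away from $D$) are consistent with the paper's construction, but without the analyticity-based relative bound \eqref{qIIId} --- or some substitute yielding a relative rather than additive error --- the argument does not close.
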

Our result extends on the recent quantum generalization of Dobrushin-Shlosman's conditions \cite{harrow2020classical} in two ways: First, we get a bound in terms of the product of an $\mathbb{L}_1$ and an $\mathbb{L}_\infty$ norm, as opposed to the standard albeit weaker $\mathbb{L}_\infty-\mathbb{L}_\infty$ bound. Secondly, our construction in this specific $2$-local setting allows for a local bound in any subregion $C\cup D$ conditioned on its boundary, as opposed to the global bounds found in \cite{harrow2020classical}. This local refinement is crucial to our subsequent proof of the modified logarithmic Sobolev inequality. It is also one of the reasons why we have to restrict our analysis to nearest-neighbours, except for the case  of $1D$ systems, as for those systems a simple coarse-graining argument allows us to reduce the analysis to the nearest neighbour case.
\newline

\textit{Approximate tensorization of the relative entropy} (Section \ref{sec:main}): 
 At the turn of the millennium, a new strategy to prove the modified logarithmic Sobolev inequality for classical spin systems, based on the approximate tensorization of the relative entropy, was provided \cite{cesi2001quasi,[D02]}, which arguably simplifies the classical result of Stroock and Zegarlinski. This strategy's core insight was to realize that Dobrushin and Shlosman's $\mathbb{L}_1-\mathbb{L}_\infty$ exponential decay of correlations could be used to prove the following generalization of the strong subadditivity (SSA) of the entropy, here written with quantum notations for simplicity: for any classical state $\rho$, 
\begin{align}\tag{$\ast$}\label{ast}
    D(\rho\| E_{C\cup D*}^\omega(\rho))\le \,(1+c\,|C\cup D|\,\e^{-\operatorname{dist}(D\backslash C,C\backslash D)/\xi})\,\big(\,D(\rho\| E_{C*}^\omega(\rho))+D(\rho\|E_{D*}^\omega(\rho)) \big)\,.
\end{align}
Indeed, when $\sigma_\Lambda$ is the maximally mixed state, i.e. at $\beta=0$, the (dual) conditional expectation $E_{A*}$ reduces to the partial trace $\tr_A$ on any region $A$, and $c=0$ so that \eqref{ast} reduces to the celebrated SSA \cite{lieb1973proof}: taking $C$ and $D$ non-overlapping, and $B=\Lambda\backslash CD$, $S(BCD)_\rho+S(B)_\rho\le S(BC)_\rho+S(BD)_\rho$. Using the multivariate trace inequalities recently derived in \cite{Sutter2017}, two of the authors extended the result of Cesi to quantum states in \cite{bardet2020approximate}, informally stated below in a more general von Neumann algebraic setting.

\begin{theorem}[Approximate tensorization of the quantum relative entropy \cite{bardet2020approximate} (informal)] \label{theoremapproxtens}
Let $\mathcal{M}\subset \mathcal{N}_1,\mathcal{N}_2\subset\mathcal{N}$ be finite-dimensional von Neumann algebras, with corresponding conditional expectations $E_\mathcal{M}$, $E_1$ and $E_2$. Under a condition of $\mathbb{L}_1-\mathbb{L}_\infty$ clustering of correlations, the following inequality holds: there exists a constant $c$ depending on the clustering, such that for all quantum states $\rho$,
\[
    D(\rho\|E_{\mathcal{M}*}(\rho))\le c\,\big(D(\rho\|E_{1*}(\rho))+D(\rho\|E_{2*}(\rho))  \big)+d(\rho)\,,
\]
where $d(\rho)$ is a $\rho$-dependent additive error term that measures the deviation of $\rho$ from being diagonal in the block decomposition of the matrix algebra $\mathcal{M}$.
\end{theorem}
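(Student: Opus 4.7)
The plan is to adapt the classical strategy of Cesi to the non-commutative setting, handling non-commutativity of the various conditional expectations via multivariate trace inequalities. Setting $\sigma_i := E_{i*}(\rho)$ for $i=1,2$ and $\sigma_\mathcal{M} := E_{\mathcal{M}*}(\rho)$, and writing $H(\rho) = -\tr[\rho\ln\rho]$, one rewrites the desired inequality as an upper bound on
\begin{align*}
D(\rho\|\sigma_\mathcal{M}) - D(\rho\|\sigma_1) - D(\rho\|\sigma_2) = H(\rho) + \tr\!\big[\rho\,(\ln\sigma_1 + \ln\sigma_2 - \ln\sigma_\mathcal{M})\big].
\end{align*}
By the Gibbs variational principle, this quantity is bounded from above by $\ln\tr\!\big[\exp(\ln\sigma_1+\ln\sigma_2-\ln\sigma_\mathcal{M})\big]$. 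In the classical case, where $\sigma_1,\sigma_2,\sigma_\mathcal{M}$ commute, the exponential combines into $\sigma_1\sigma_2\sigma_\mathcal{M}^{-1}$, whose trace is exactly the object that Cesi controls through $\mathbb{L}_1-\mathbb{L}_\infty$ decoupling: at perfect decoupling it equals $\tr[\rho]=1$ and its logarithm vanishes.

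The second step handles non-commutativity using the multivariate Golden-Thompson inequality of Sutter-Berta-Tomamichel, which provides an upper bound of the form
\begin{align*}
\tr\!\big[\exp(\ln\sigma_1+\ln\sigma_2-\ln\sigma_\mathcal{M})\big] \;\le\; \int_{-\infty}^{\infty}\!dt\;\beta_0(t)\,\tr\!\big[\sigma_1\,\sigma_\mathcal{M}^{-(1+it)/2}\,\sigma_2\,\sigma_\mathcal{M}^{-(1-it)/2}\big],
\end{align*}
for a universal probability density $\beta_0$ on $\mathbb{R}$. The integrand, after inserting resolutions of identity and using the duality $\tr[E_{i*}(\rho)X]=\tr[\rho\,E_i(X)]$, can be reinterpreted as a covariance between observables in the ranges of $E_1$ and $E_2$, twisted by the modular group of $\sigma_\mathcal{M}$. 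This is the quantum analogue of the classical covariance that appears in Cesi's estimate.

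The third step uses the $\mathbb{L}_1-\mathbb{L}_\infty$ clustering hypothesis to bound this twisted covariance: the $\mathbb{L}_\infty$ factor controls the $it$-rotated observables uniformly in $t$, while the $\mathbb{L}_1$ factor is controlled by $\rho$ itself. Taking the logarithm and using $\ln(1+x)\le x$ delivers an additive correction; whenever the correction is not small enough to yield constant $1$, one pays a multiplicative constant $c>1$ via a crude interpolation between the covariance estimate and the trivial bound $D(\rho\|\sigma_\mathcal{M})\le D(\rho\|\sigma_1)+D(\rho\|\sigma_2)+\text{const}$.

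The main obstacle is the additive term $d(\rho)$, which arises because the conditional expectation $E_{\mathcal{M}*}$ onto a von Neumann algebra with non-trivial center decomposes as a direct sum over the central blocks of $\mathcal{M}$, and the multivariate trace estimate above acquires off-block contributions when $\rho$ is not block-diagonal in this decomposition. The delicate point is to design $d(\rho)$ to capture precisely this deviation, and to verify that (i) it vanishes when $\rho$ is block-diagonal with respect to the center of $\mathcal{M}$, and (ii) the remaining block-diagonal analysis reduces, via the splitting of $\rho$ into its block components, to the factor situation where the Sutter-Berta-Tomamichel bound and the clustering hypothesis can be applied without further corrections.
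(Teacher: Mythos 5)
Your proposal follows essentially the same route as the paper's own argument (given in Appendix~\ref{appendix} for the closely related Theorem~\ref{ATAC}, and in the cited reference): rewrite the entropy difference as $-D(\rho\|M)$ with $M=\exp(\ln\sigma_1+\ln\sigma_2-\ln\sigma_{\mathcal{M}})$, bound it by $\ln\tr[M]$, apply Lieb's triple-matrix/Sutter--Berta--Tomamichel multivariate trace inequality with the density $\beta_0(t)$, reinterpret the integrand as a modular-group-twisted covariance controlled by the $\mathbb{L}_1-\mathbb{L}_\infty$ clustering, and isolate the off-block contributions as the additive term $d(\rho)$, which vanishes for states block-diagonal with respect to the center of $\mathcal{M}$. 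This matches the paper's strategy in all essential steps, so no further comparison is needed.
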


\textit{Removing additive errors by peeling} (Section \ref{sec:main}): Theorem \ref{TheoMLSI} states the existence of a constant rate $\alpha>0$, independent of the size of $\Lambda$, such that for any initial state $\rho$ evolving according to the semigroup, $    D(\e^{t\mathcal{L}_{\Lambda*}}(\rho)\|\sigma_\Lambda)\le \,\e^{-\alpha \,t}
    D(\rho\|\sigma_\Lambda)$. It turns out that this exponential convergence is equivalent to its derivative with respect to $t$ at $t=0$. The resulting inequality turns out to be the \textit{modified logarithmic Sobolev inequality} (MLSI) that we already mentioned: for any state $\rho$,
\begin{align}\tag{MLSI}
    \alpha\,D(\rho\|\sigma_\Lambda)\le \,-\left.\frac{d}{dt}\right|_{t=0}\,D(\e^{t\mathcal{L}_{\Lambda*}}(\rho)\|\sigma_\Lambda)=\operatorname{EP}_{\mathcal{L}_{\Lambda*}}(\rho)\,.
\end{align}
The right-hand side of the MLSI has the useful property of being linear in the generator $\mathcal{L}_{\Lambda*}$. Moreover, under the $\mathbb{L}_1-\mathbb{L}_\infty$ clustering property of $\sigma_\Lambda$, the approximate tensorization of the relative entropy for classical spins can be used to prove that the left-hand side of MLSI is approximately sub-additive. These two crucial properties led Cesi to formulate the idea of decomposing the problem into regions of a small fixed size, where the MLSI constant $\alpha$ is known to exist. However, the non-vanishing of the constant $d$ on quantum states found in Theorem \ref{theoremapproxtens} is responsible for the failure of Cesi's argument in the quantum regime. In Section \ref{sec:main}, we devise an original argument, which we refer to as \textit{peeling}, in order to manage our way around this issue. From a high-level perspective, our idea consists in proving that any initial state $\rho$ will very quickly converge into a state $\gamma$ whose constant $d(\gamma)$ vanishes on appropriately chosen regions $C\cup D$. For these states, we recover \eqref{ast}, which allows us to conclude our proof. Here again, the restriction to nearest-neighbour interactions appears to be difficult to relax. 
\newline

\textit{Applications} (Section \ref{sec:applications}) We then apply our results to four settings in which the relative entropy decay estimates given by the modified logarithmic Sobolev inequality are crucial. First, we show that the output energy of an Ising quantum annealer subject to finite range classical thermal noise at high enough temperature outputs a state whose energy is close to that of the thermal state of the noise after an annealing time that is constant in system-size. Although the results of~\cite{2009.05532} also allow us to make a similar analysis based on our modified logarithmic Sobolev inequality, here we take a new approach by exploiting quantum optimal transport techniques~\cite{Rouz__2019,gao2018fisher,Carlen20171810}, showcasing the potential of such techniques in the analysis of noisy quantum computation.
Secondly, we apply the results of~\cite{Rouz__2019} to get Gaussian concentration bounds for the outcome distributions of Lipschitz observables on Gibbs states, while also showing how to deduce that the eigenstate thermalization hypothesis holds with use of a MLSI. In both of these applications, using our methods expands the set of observables the results apply to when compared with state-of-the-art~\cite{brandao_equivalence_2015,anshu_concentration_2016,kuwahara2020gaussian,kuwahara_connecting_2016}, albeit for a smaller set of Gibbs states.
Thirdly, we apply our results to quantum asymmetric hypothesis testing. There we show a decay estimate on the type II error for two Gibbs states corresponding to commuting potentials in the finite blocklength regime.
Finally, we also apply our main result to obtain efficient quantum Gibbs samplers for certain Gibbs states corresponding to commuting potentials. Our methods only require the implementation of a circuit of local quantum channels of logarithmic depth, in contrast to previous results~\cite{brandao2019finite} that required quasi-local quantum channels.

\paragraph{Outline of the paper}

In Section \ref{sec2}, we introduce some necessary notation as well as the main concepts of this work, namely quantum Gibbs states, Gibbs samplers and functional inequalities. 

Section \ref{clustercorrela} is devoted to a thorough recap and extension of the various notions of clustering of correlations. We start this section by reviewing Dobrushin-Shlosman's mixing condition and its connection to some functional inequalities and strong ergodicity for classical systems. Then we introduce our key notion of $\mathbb{L}_1 - \mathbb{L}_\infty$ clustering of correlations and establish it above a critical temperature. We hope that this will serve as a useful reference for future work on quantum many-body systems at finite temperature. 

In Section \ref{sec:main}, we expose our main results regarding the relations between $\mathbb{L}_1 - \mathbb{L}_\infty$ clustering and the modified logarithmic Sobolev inequality. We begin by showing that the embedded Glauber dynamics with an additional dephasing satisfy a MLSI. This simple example serves as a toy model for the rest of the section. Afterwards, we implement a tiling of the $d$-dimensional lattice and devise a geometric construction based on grained sets over that tiling, both of which constitute some of the main ingredients for our main result. We present our main result of the existence of MLSI for $d$-dimensional systems, whose proof we leave to Appendix \ref{appendix:nD} and conclude the section by showing the simplified version of this result in one and two dimensions, employing a geometric construction based on rhomboidal regions. 

As a consequence of these results, we present four applications of our result to the field of quantum information theory and information processing in Section \ref{sec:applications}.

\section{Notations and definitions}\label{sec2}

\subsection{Basic notations}\label{subsec:notations}

We denote a finite-dimensional Hilbert space of dimension $d_\cH$ by$(\cH,\langle .|.\rangle)$, the algebra of bounded operators on $\cH$ by $\cB(\cH)$,  by $\cB(\cH)_{\operatorname{sa}}$ the subspace of self-adjoint operators on $\cH$, i.e. $\cB(\cH)_{\operatorname{sa}}:=\left\{X\in\cB(\cH);\ X=X^\dagger\right\}$, where the adjoint of an operator $Y$ is written as $Y^\dagger$, and by $\cB(\cH)_+$ the cone of positive semidefinite operators on $\cH$. We will also use the same notations $\cN_{\operatorname{sa}}$ and $\cN_+$ in the case of a von Neumann subalgebra $\cN$ of $\cB(\cH)$. The identity operator on $\cN$ is denoted by $\Id_\cN$, and we drop the index $\cN$ when it is unnecessary.

Given a map $\Psi:\cB(\cH)\to\cB(\cH)$, we denote its dual with respect to the Hilbert-Schmidt inner product as $\Psi_*$. We also denote by $\id_{\cB(\cH)}$, or simply $\id$, the identity superoperator on $\cB(\cH)$. We further denote by $\mathcal{D}(\cH)$ the set of positive semidefinite, trace one operators on $\cH$, also known as density matrices, and by $\cD(\cH)_+$ the subset of full-rank density operators. In the following, we will often identify a density matrix $\rho\in\mathcal{D}(\cH)$ and the state it defines, that is the positive linear functional $\cB(\cH)\ni X\mapsto\tr[\rho \,X]$.

Given two states $\rho,\sigma\in\cD(\cH)$ with $\supp(\rho)\subseteq\supp(\sigma)$, the \textit{relative entropy} between $\rho$ and $\sigma$ is defined as
\begin{align*}
D(\rho\|\sigma):=\tr\big[\rho\,\big( \ln\rho-\ln\sigma \big)  \big]\,.
\end{align*}
Next, given a state $\sigma\in\cD(\cH)_+$ and a $*$-subalgebra $\cN\subset \cB(\cH)$, a linear map $E:\cB(\cH)\to\cN$ is called a \textit{conditional expectation} with respect to $\sigma$ onto $\cN$ if the following conditions are satisfied  \cite{Aspects2003}: 
\begin{itemize}
    \item[(i)] For all $X\in\cB(\cH)$, $\|E[X]\|\le \|X\|$ .
    \item[(ii)] For all $X\in \cN$, $E[X]=X$ .
    \item[(iii)] For all $X\in\cB(\cH)$, $\tr[\sigma E[X]]=\tr[\sigma X]$ .
\end{itemize}
Given any state $\rho\in\cD(\cH)$ and any state $\sigma=E_{*}(\sigma)$, the following \textit{chain rule} holds true (see for instance Lemma 3.4 in \cite{junge2019stability}):
\begin{align}\label{chainrulerelatent}
D(\rho\|\sigma)=D(\rho\|E_*(\rho))+D(E_*(\rho)\|\sigma)\,.
\end{align}

Moreover, given $p\ge 1$ and a full-rank state $\sigma\in\cD(\cH_\Lambda)$, we also define the following \textit{weighted $\mathbb{L}_p$-norms} on $\cB(\cH)$:
\begin{align*}
\|X\|_{\mathbb{L}_p(\sigma)}:=\Big(\tr\Big[ \big|\sigma^{\frac{1}{2p}}X\sigma^{\frac{1}{2p}}\big|^p\Big]\Big)^{\frac{1}{p}}\,,
\end{align*}
For $p=2$, these norms provide $\cB(\cH)$ with a Hilbert space structure with inner product $\langle X,Y\rangle_\sigma^{\operatorname{KMS}}:=\tr[\sigma^{\frac{1}{2}} X^\dagger \sigma^{\frac{1}{2}}Y]$. We write the resulting \textit{covariance} as 
\begin{align*}
\operatorname{Cov}_\sigma(X,Y):=\langle X-\tr[\sigma X]\Id,\, Y-\tr[\sigma Y]\Id\rangle^{\operatorname{KMS}}_\sigma\,,
\end{align*}
 We also use another notion of covariance based on the so-called GNS inner product $\langle X,Y\rangle_\sigma^{\operatorname{GNS}}:=\tr[\sigma X^\dagger Y] $ (see \cite{kliesch2014locality}):
\begin{align*}
\operatorname{Cov}^{(0)}_\sigma(X,Y):=\langle X-\tr[\sigma X]\Id,\, Y-\tr[\sigma Y]\Id\rangle^{\operatorname{GNS}}_\sigma\,.
\end{align*}
Furthermore, we denote by $\Delta_{\sigma}:X\mapsto \sigma X\sigma^{-1}$ the \textit{modular operator} corresponding to a full-rank state $\sigma$, and by $(\Delta_{\sigma}^{it}:=e^{i\sigma t}.e^{-it\sigma})_{t\in\RR}$ its modular group. Finally, we define the maps $\Gamma_\sigma:X\mapsto \sigma^{\frac{1}{2}}X\sigma^{\frac{1}{2}}$, whose action is to embed $\mathbb{L}_1(\sigma)$ onto the space $\cT_1(\cH)$ of trace-class operators.

\subsection{Quantum Hamiltonians and Gibbs states}

Given a finite region $\Lambda\ssubset \ZZ^d$, we denote by $|\Lambda|$ the number of its sites. The Hilbert space of the system is denoted by $\cH_\Lambda=\bigotimes_{x\in\Lambda}\cH_x$, where $\cH_x$ is a copy of the Hilbert space $\cH$ corresponding to a particle at site $x\in\Lambda$. Given a region $A\subseteq \Lambda$, we denote by $A^c$ the complement of $A$ in $\ZZ^d$, and by $\Lambda\backslash A$ the complement of $A$ in $\Lambda$. The distance between two sites $i,j\in\ZZ^d$ is denoted by $\operatorname{dist}(i,j)$, the distance of a site $i$ to a set $A$ by $\operatorname{dist}(i,A)$, and the distance between two sets $A$ and $B$ by $\operatorname{dist}(A,B)$. We adopt similar notations for graphs.

Let $\{\Phi(X)\}_{X\ssubset \ZZ^d}$ be an $r$-local potential, i.e. for any $X\ssubset \ZZ^d$, $\Phi(X)$ is self-adjoint and supported in a ball of radius $r$ around $X$. We assume further that $\| \Phi(X) \|\le h$ for all $X\ssubset \ZZ^d$, and some constant $h<\infty$. The potential $\Phi$ is said to be a \textit{commuting potential} if for any $X,Z\ssubset \ZZ^d$, $[\Phi(X),\Phi(Y)]=0$.  Given such a local potential, the \textit{Hamiltonian} on a finite region $ \Lambda\ssubset \ZZ^d$ is defined as
\begin{align}
H_{\Lambda}=\sum_{X\subseteq \Lambda}\,\Phi(X)\,.
\end{align}
The Hamiltonian is called $(\kappa,R)$\textit{-local}, or simply \textit{geometrically local}, if  there exist parameters $\kappa,R> 0$ such that $\Phi(X)=0$ whenever the diameter $\operatorname{diam}(X)>R$ or $|X|>\kappa$. Moreover, the \textit{growth constant} $g$ of a geometrically-local Hamiltonian is defined such that
\begin{equation*}
    \left| \sum_{X:x_0 \in X} \Phi (X) \right| \leq g h \, ,
\end{equation*}
for all sites $x_0 \in \Lambda$. The Gibbs state corresponding to the region $A$ and at inverse temperature $\beta$ is defined as
\begin{align}
\sigma^A:=\frac{\e^{-\beta H_A}}{\tr[\e^{-\beta H_A}]}\,.
\end{align}
Note that this is in general not equal to the state $\tr_{A^c}[\sigma_\Lambda]$. Moreover, given $A\subset \Lambda$, we define the \textit{boundary} of $A$ by
\begin{equation*}
    \partial A := \{ x \in \Lambda \setminus A \, : \, \text{dist}(x,A) < \kappa \} \, ,
\end{equation*}
and denote by $A\partial$ the union of $A$ and its boundary. Note that $H_A$, and thus $\sigma^A$, have support on $A \partial$.

\subsection{Uniform families of Lindbladians}\label{Markov}

 We consider the basic model for the evolution of an open system in the Markovian regime given by a quantum Markov semigroup (or QMS) $(\cP_t)_{t\ge0}$ acting on the algebra $\cB(\cH)$ of bounded operators over a finite-dimensional Hilbert space $
 \cH$. Such a semigroup is characterised by the Lindbladian $\LL$, its generator, which is defined on $\cB(\cH)$ by $\cL(X)={\lim}_{t\to 0}\,\frac{1}{t}\,(\cP_t(X)-X)$ for all $X\in\cB(\cH)$. Recall that by the GKLS Theorem \cite{Lind,[GKS76]}, $\cL$ takes the following form: for all $X\in\cB(\cH)$,
\begin{equation}\label{eqlindblad}
\cL(X)=i[H,X]+\frac{1}{2}\sum_{k=1}^l{\left[2\,L_k^\dagger XL_k-\left(L_k^\dagger L_k\,X+X\,L_k^\dagger L_k\right)\right]}\,,
\end{equation}
where $H\in\cB(\cH)_{\operatorname{sa}}$, the sum runs over a finite number of \textit{Lindblad operators} $L_k\in\cB(\cH)$, and $[\cdot,\cdot]$ denotes the commutator defined as $[X,Y]:=XY-YX$, $\forall X,Y\in\cB(\cH)$. The QMS is said to be \textit{faithful} if it admits a full-rank invariant state $\sigma$, and when the state $\sigma$ is the unique invariant state, the semigroup is called \textit{primitive}. Moreover, we say that the Lindbladian is KMS-\textit{reversible} with respect to $\sigma$ if it is self-adjoint with respect to the KMS inner product defined in Section \ref{subsec:notations}. Whenever this condition holds, there exists a conditional expectation $E\equiv E_{\cF}$ onto the kernel of the generator
$\operatorname{Ker}(\cL):=\{X\in\cB(\cH):\,\cL(X)=0\}$, which by a slight abuse of notation we frequently call fixed-point subalgebra, and denote by $\cF(\cL)$, such that
\begin{align*}
\cP_t(X)\underset{t\to\infty}{\to}E[X]\,.
\end{align*}
Similarly, we say that the semigroup satisfies the \textit{detailed balance condition}, or is GNS-reversible with respect to $\sigma$ if it is self-adjoint with respect to the GNS inner product defined in  Section \ref{subsec:notations}. Under the assumption of GNS symmetry, it is also possible to write the generators in the following normal form:
\begin{theorem}[\cite{maas2011gradient} Theorem 3.1]\label{thm:normalformCM}Let $\cL$ be the generator of a quantum Markov semigroup on $\cB(\cH)$, where $\cH$ is an $N$-dimensional Hilbert space, with full-rank stationary state $\sigma$. Suppose that the generator $\cL $ is self-adjoint with respect to  $\langle \cdot,\cdot\rangle_{\sigma}^{\operatorname{GNS}}$. Then the generator $\LL$ has the following form: $\forall X\in \cB(\cH)$, 
					\begin{align}\label{LLDBC}
						\LL(X)&=\sum_{j\in \mathcal{J}}c_j\left( \e^{-\omega_j/2}\tilde{L}_j^*[X,\tilde{L}_j]+\e^{\omega_j/2}[\tilde{L}_j,X]\tilde{L}_j^*\right)\\
						&=\sum_{j\in \mathcal{J}}c_j\e^{-\omega_j/2}\left( \tilde{L}_j^*[X,\tilde{L}_j]+[\tilde{L}_j^*,X]\tilde{L}_j\right),
					\end{align}
					where $\cJ$ is a finite set of cardinality $|\cJ|\le N^2-1$, $\omega_j\in\RR$ and $c_j>0$ for all $j\in\mathcal{J}$, and $\{\tilde{L}_j\}_{j\in\mathcal{J}}$ is a set of operators in $\cB(\cH)$ with the properties:
					\begin{itemize}
						\item[1] $\{\tilde{L}_j\}_{j\in\mathcal{J}}=\{\tilde{L}_j^*\}_{j\in\mathcal{J}}$;
						\item[2] $\{\tilde{L}_j\}_{j\in\mathcal{J}}$ consists of eigenvectors of the modular operator $\Delta_\sigma$ with
						\begin{align}\label{eigenD}
							\Delta_\sigma(\tilde{L}_j)=\e^{-\omega_j}\tilde{L}_j;
						\end{align}
						\item[3] $\frac{1}{\dim(\cH)}\tr(\tilde{L}_j^*\tilde{L}_k)=\delta_{k,j}$ for all $j,k\in\mathcal{J}$;
						\item[4] $\tr(\tilde{L}_j)=0$ for all $j\in\mathcal{J}$.
						\end{itemize}

\end{theorem}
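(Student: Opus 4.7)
The plan is to start from an arbitrary GKLS representation of $\LL$ in the form \eqref{eqlindblad} and progressively bring it to the advertised symmetric form, using the GNS-detailed balance condition $\langle X,\LL(Y)\rangle_\sigma^{\operatorname{GNS}}=\langle \LL(X),Y\rangle_\sigma^{\operatorname{GNS}}$ together with the unitary gauge freedom of the GKLS representation. First, I would split $\LL$ into its Hamiltonian part $i[H,\cdot]$ and its dissipative part, and expand the GNS-symmetry identity in each sector separately. A direct computation shows that the Hamiltonian part is GNS-symmetric exactly when $[H,\sigma]=0$, in which case one checks that $H$ can either be removed by a modular-compatible redefinition of the Lindblad operators, or absorbed into an imaginary shift of the modular eigenvalues, so that no separate Hamiltonian term survives in the final form.

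Next, I would use the well-known unitary gauge freedom in the Lindblad representation to put the Lindblad operators $\{V_k\}$ into a preferred basis. The modular operator $\Delta_\sigma$ is a similarity on $\cB(\cH)$, and GNS-symmetry forces the linear span of the $V_k$'s to be invariant under $\Delta_\sigma$. Applying the spectral decomposition of $\Delta_\sigma$ restricted to this span yields an equivalent set $\{\tilde L_j\}_{j\in\cJ}$ satisfying $\Delta_\sigma(\tilde L_j)=\e^{-\omega_j}\tilde L_j$ (property 2). Two small further reductions establish properties (3) and (4): tracelessness follows by subtracting a multiple of $\Id$ from each $\tilde L_j$, which generates no dissipation and hence leaves $\LL$ unchanged; orthonormality with respect to $\tfrac{1}{\dim(\cH)}\tr(\cdot^\dagger\,\cdot)$ is obtained by a Gram--Schmidt step inside each modular eigenspace, with the scalings absorbed into new positive weights $c_j$.

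With the Lindblad operators in this canonical form, I would re-impose the GNS-detailed balance identity at the level of the reduced generator. The central structural observation is that because $\Delta_\sigma(\tilde L_j^*)=\e^{\omega_j}\tilde L_j^*$, GNS-symmetry forces the family $\{\tilde L_j\}_{j\in\cJ}$ to be closed under the adjoint operation (property 1), and the coefficient attached to the partner $\tilde L_j^*$ must equal $\e^{\omega_j}$ times that of $\tilde L_j$, this being precisely the KMS weight associated to $\sigma$. Pairing each $\tilde L_j$ with its adjoint and symmetrizing produces the prefactors $\e^{\pm\omega_j/2}$ in the first line of the claim, with positive coefficients $c_j>0$. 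The equivalent second line follows from a one-line manipulation, using property (1) to pair indices, which recasts $\e^{-\omega_j/2}\tilde L_j^*[X,\tilde L_j]+\e^{\omega_j/2}[\tilde L_j,X]\tilde L_j^*$ as $\e^{-\omega_j/2}\bigl(\tilde L_j^*[X,\tilde L_j]+[\tilde L_j^*,X]\tilde L_j\bigr)$ after relabeling $j\leftrightarrow j'$ with $\tilde L_{j'}=\tilde L_j^*$.

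The hard part, as I see it, is the bookkeeping in the intermediate step: one must combine the unitary freedom of the GKLS representation with simultaneous diagonalization under $\Delta_\sigma$, tracelessness, and trace-orthonormalization in a way that is compatible with GNS-symmetry and that yields the precise pairing of coefficients demanded by detailed balance. In particular, positivity of the $c_j$ must emerge from the quantum detailed balance condition rather than be imposed by hand, and each canonical operator $\tilde L_j$ has to be matched with its adjoint $\tilde L_j^*$ carrying exactly the weight $\e^{\omega_j}c_j$; keeping this correspondence straight through the successive basis changes is the main technical content of the argument.
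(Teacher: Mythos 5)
The paper does not prove this statement at all: it is imported verbatim as Theorem 3.1 of \cite{maas2011gradient} and used as a black box, so there is no in-paper argument to compare yours against. Judged against the standard literature proof (Alicki's structure theorem as presented by Carlen and Maas), your outline follows the right overall route --- remove the Hamiltonian part, diagonalize the Lindblad operators under the modular operator, and extract the adjoint-pairing and the weights $\e^{\pm\omega_j/2}$ from detailed balance --- but it contains one outright error and leaves the central lemma unproven.

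The error: you claim that the Hamiltonian part $i[H,\cdot]$ is GNS-symmetric exactly when $[H,\sigma]=0$. It is not; when $[H,\sigma]=0$ one computes $\langle X, i[H,Y]\rangle_\sigma^{\operatorname{GNS}}=-\langle i[H,X],Y\rangle_\sigma^{\operatorname{GNS}}$, i.e.\ the commutator part is GNS \emph{skew}-adjoint. The correct conclusion of detailed balance is that the Hamiltonian contribution must vanish altogether (in a suitable gauge), precisely because a nonzero skew-adjoint piece is incompatible with self-adjointness of the whole generator --- and making that precise requires controlling the non-uniqueness of the splitting into Hamiltonian and dissipative parts. The gap: your assertion that ``GNS-symmetry forces the linear span of the $V_k$'s to be invariant under $\Delta_\sigma$'' is the crux of the entire theorem and is stated without justification. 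The standard argument first proves that GNS-self-adjointness implies $\cL$ commutes with the modular group $(\Delta_\sigma^{it})_{t\in\RR}$, and then invokes the essential uniqueness of the GKLS representation (up to unitary gauge) to convert the modular action into a unitary rotation of the Lindblad operators, which is what makes the span $\Delta_\sigma$-invariant and lets you diagonalize. Without that lemma the rest of your construction --- the eigenbasis $\{\tilde L_j\}$, the closure under adjoints with $\omega_{j^*}=-\omega_j$, and the weight relation $c_{j^*}=\e^{\omega_j}c_j$ --- has nothing to stand on. A further minor point: subtracting a multiple of $\Id$ from a Lindblad operator does change $\cL$ unless you simultaneously shift $H$ (the usual gauge freedom), so tracelessness should instead be argued by noting that eigenvectors of $\Delta_\sigma$ with eigenvalue $\ne 1$ are automatically trace-orthogonal to $\Id$, with the gauge argument reserved for the $\omega_j=0$ sector.
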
	
This normal form will later be important to obtain bounds on the performance of noisy quantum annealers from quantum transport inequalities in Section~\ref{sec:applications}.

Next, we introduce the notion of a uniform family of quantum Markov semigroups defined on subregions of the lattice $\ZZ^d$. Our setup and notations are taken from \cite{cubitt2015stability}. 
\begin{definition}[Uniform family of Lindbladians]
Let $J\ge 0$ and $f:\NN\to \RR_+ $. Then, a family $\cL:=\{\cL_\Lambda,\cL_{\partial \Lambda}\}_{\Lambda\ssubset \ZZ^d}$ composed of \textit{bulk} $\cL_\Lambda$ and \textit{boundary} $\cL_{ \partial \Lambda}$ Lindbladians of strength $(J,f)$, both indexed on the finite subsets of $\ZZ^d$, is called \textit{uniform} whenever the following conditions hold:
\begin{itemize}
\item[(i)] \textit{Bulk Lindbladians:} For all $\Lambda\ssubset \ZZ^d$, $$\cL_\Lambda=\sum_{u\in\Lambda}\sum_{r\in\NN}\,\cL_{u,r}\,,~~~\text{ where }~~~ \supp(\cL_{u,r})=B_u(r)\,,$$ 
\end{itemize}
where $B_u(r)$ denotes the ball in $\ZZ^d$ centred at $u$ and of radius $r$. Moreover, 
\begin{align*}
J:=\sup_{u,r}\|\cL_{u,r*}\|_{1\to 1,\,\operatorname{cb}}<\infty ~~~\text{ and }~~~ f(r):=\sup_u\,\frac{\|\cL_{u,r*}\|_{1\to 1,\,\operatorname{cb}}}{J}\,,
\end{align*}
where $\|\Psi\|_{1\to 1,\operatorname{cb}}$ denotes the completely bounded $1\to 1$ norm of the superoperator $\Psi$. 
\item[(ii)] \textit{Boundary Lindbladian:} For all $\Lambda\ssubset \ZZ^d$, $$\cL_{\partial \Lambda}:=\sum_{k\in\NN}\,\cM_k\,,$$ 
where $$\|\cM_{k*}\|_{1\to 1,\operatorname{cb}}\le J\,|\partial_k^\text{in} \Lambda|\,f(k)$$
with 
$$\partial^\text{in}_k\Lambda:=\big\{  x\in\Lambda;\, \dist(x,\Lambda^c)\le k\big\}\,,~~~~~\operatorname{supp}(\cM_k)\subset \partial^\text{in}_k\Lambda\,.$$
The \textit{closed boundary} Lindbladians $\{\overline{\cL}^{\Lambda}\}_{\Lambda\ssubset\ZZ^d}$ are then defined as the sum of the bulk Lindbladians and of the boundary conditions:
\begin{align*}
\overline{\cL}_\Lambda:=\cL_\Lambda+\cL_{\partial \Lambda}\,.
\end{align*}
The uniform family $\cL$ is said to be \textit{$\kappa$-local}, $\kappa\in\NN$, if $f(r)=0$ for all $r > \kappa-1$. Moreover, $\cL$ is said to have a unique stationary state if there exists a family of quantum states $\{\sigma^{\Lambda}\}_{\Lambda\ssubset \ZZ^d}$ such that, for all $\Lambda\ssubset \ZZ^d$,  $\sigma^\Lambda$ is the unique stationary state of $\overline{\cL}_\Lambda$. Furthermore, the family $\cL$ is said to be \textit{primitive} if the states $\sigma^\Lambda$ are full-rank. $\cL$ is said to be \textit{locally reversible} if  $\overline{\cL}_\Lambda$ as well as $\cL_A$, for all $A\subseteq \Lambda$, are KMS-symmetric with respect to $\sigma^\Lambda$. Finally, $\cL$ is said to be \textit{frustration-free} if for all $A\subseteq B\ssubset \ZZ^d$, $\rho$ is a stationary state of $\cL_A$ whenever it is a stationary state of $\overline{\cL}_B$. In other words, we have $E_{A}\circ E_B=E_B\circ E_A=E_B$, where for a region $X\ssubset\ZZ^d$ we denote by $E_X:=\lim_{t\to\infty}\e^{t\cL_X}$ the conditional expectation onto the fixed-point subalgebra of $\cL_X$. 
\end{definition}

Given a primitive and reversible uniform family of Lindbladians $\cL$ and a finite region $A\ssubset \ZZ^d$, we decompose the fixed-point algebra $\cF(\cL_A)$ as
$$\cF(\cL_A):=\bigoplus_{i\in I_{\partial A}}\,\cB(\cH^A_i)\otimes \Id_{\cK^A_i}\,,~~~~~\text{ where }~~~~~\cH_\Lambda:=\bigoplus_{i\in I_{\partial A}}\cH^A_i\otimes \cK^A_i\,.$$ 
Then the conditional expectation $E_{A*}$ is expressed in the Schrödinger picture by
\begin{align}\label{Daviescond}
E_{A*}(\rho):=\lim_{t\to \infty}\e^{t\cL_{A*}}(\rho)\equiv \sum_{i\in I_{\partial A}}\tr_{\cK_i}\big[P^A_i  \rho P^A_i\big]\otimes \tau^A_i  \,.
\end{align}
Above, $\{P^A_i\}_{i\in I_A}$ are the central projections of $\cF(\cL_A)$, and $\tau_i^A$ are full-rank states supported on the space $\cK^A_i$. From now on, we often omit the dependence of the above spaces and algebras on the set $A$ for sake of simplicity, and only use the sum over the ``boundary conditions'' $i\in I_{\partial A}$ in order to remind the reader of the region being considered. When the states $\{\sigma^{\Lambda}\}_{\Lambda \ssubset \ZZ^d}$ are derived from a potential $\{\Phi(X)\}_{X\ssubset \ZZ^d}$, the maps $E_A$ and the family $\cL$ will be respectively referred to as the \textit{local specifications} and the \textit{quantum Gibbs sampler} corresponding to that potential. 

Assuming that the family $\cL$ is frustration-free, we have that for all $A\subset B\subset\Lambda \ssubset \ZZ^d$, the blocks $P^B_i\cB(\cH_\Lambda)P^B_i$ are preserved by the conditional expectation $E_A$. Moreover, on each of these blocks, $E_A$ only acts non-trivially on the factor $\cB(\cK^B_i)$, i. e. there exists a family of conditional expectations $\{E_{A}^{(i)}\in\cB(\cB(\cH_{\cK^B_i}))\}_{i\in I_{\partial B}}$ such that for each boundary condition $i\in I_{\partial B}$,  
\begin{align}\label{EAi}
E_A|_{P^B_i\cB(\cH_\Lambda)P^B_i}:=\id_{\cB(\cH^B_i)}\otimes E^{(i)}_{A}\,,~~~\text{ with }~~~
   E^{(i)}_{A*}(\rho):=\sum_{j\in I^{i}_{\partial A}}\,\tr(P^{i,A}_j\,\rho\,P^{i,A}_j)\otimes \tau^{i,A}_j\,. 
\end{align}
This is a consequence of \Cref{lemm1}.

\subsection{Examples of Gibbs samplers}\label{sec-examplessemigroups}

In this subsection, we introduce the Gibbs samplers which we will consider in the rest of the paper, what we call Schmidt generators and the more traditional embedded Glauber dynamics. Other works along similar lines~\cite{Temme2014,[BK16],bardet2020approximate} mostly consider the Davies and Heat-bath generators and we refer to~\cite{[SL78],majewski1995quantum,[BK16]} for more details on these other families of Gibbs samplers. These families of Gibbs samplers, like the Schmidt generators we will introduce,  enjoy many desirable properties. They are locally reversible and have $\sigma_{\Lambda}$ as their unique invariant state. However, the main reason we do not work with Davies or Heat-bath Gibbs samplers in this work is that the conditional expectation associated to them do not admit an explicit enough characterization, in contrast to the Schmidt generators we now introduce:

\paragraph{Schmidt generators with nearest neighbour interactions:}

In this section, we construct a more tractable family of conditional expectations, and a corresponding uniform family of Lindbladians, stabilizing the Gibbs state of a commuting Hamiltonian. This family is inspired by  a decomposition one can find in the proof of Lemma 8 in \cite{bravyi2003commutative} (see also \cite{johnson2017exact}). Here, we will restrict ourselves to nearest neighbour interactions, namely $2$-local interactions (i.e. Hamiltonians defined on graphs). Let $G=(V,E)$ be a graph, where each vertex $j\in V$ corresponds to a system with Hilbert space $\cH_j$, so that $\cH_V:=\bigotimes_{j\in V}\cH_j$, and define a Gibbs state on $\cH_V$ corresponding to the commuting Hamiltonian $H_V:=\sum_{(j,k)\in E}H_{jk}$, where $H_{jk}$ acts nontrivially on $\cH_j\otimes \cH_k$. Consider now the operator $e^{-H_{jk}}$ and perform a Schmidt decomposition over $\cH_j\otimes \cH_k$ ($\beta$ is taken to be equal to $1$ for sake of simplicity):
\begin{align*}
e^{-H_{jk}}=\sum_l X_j^l\otimes Y_k^l \, ,
\end{align*}
where $\{X_j^l\in\cB(\cH_j)\}$, resp. $\{Y_k^l\in\cB(\cH_{k})\}$, are independent. Given this decomposition for each interaction, we define the $C^*$-algebra $\mathcal{A}_{jk}\subset\cB(\cH_j)$ generated by the operators $X_j^l$. We now fix a vertex $j\in V$, and consider all the vertices $k_j$ such that $(j,k_j)\in E$, i.e. the neighborhood of $j$. We denote the neighborhood of $j$ by $V_j$. Since the vector spaces generated by $Y_{k}^l\otimes \Id_{k^c}$ and $Y_{k'}^l\otimes \Id_{k'^c}$, for $k\ne k'\in V_j$, are independent and $H_{jk},H_{jk'}$ commute, the algebras $\cA_{jk}$, $k\in V_j$, commute as well. Therefore, they can be jointly block decomposed:
\begin{align*}
    \cH_j:=\bigoplus_{\alpha_j}\,\bigotimes_{k\in V_j}\cH_{jk}^{\alpha_j}\otimes \cH_{jj}^{\alpha_j}\,,
\end{align*}
such that
\begin{align*}
    \cA_{jk}:=\bigoplus_{\alpha_j}\cB(\cH_{jk}^{\alpha_j})\otimes\bigotimes_{k'\in V_j\backslash \{k\}} \Id_{\cH_{jk'}^{\alpha_j}}\otimes \Id_{\cH_{jj}^{\alpha_j}}\,.
\end{align*}
According to this decomposition, the operator $\e^{-H_{jk}}$ for $(j,k)\in E$ on $\cH_{j}\otimes \cH_k$ can be decomposed as
\begin{align*}
    \e^{-H_{jk}}:=\bigoplus_{\alpha_j\alpha_k}(\e^{-H_{jk}})^{\alpha_j\alpha_k}\,,
\end{align*}
where each block $(\e^{-H_{jk}})^{\alpha_j\alpha_k}$ acts on $\cH^{\alpha_j}_{jk}\otimes \cH_{kj}^{\alpha_k}$. Therefore:
\begin{align}
   \sigma^V\simeq  \e^{-H_V}=\prod_{(j,k)\in E}\e^{-H_{jk}}=\bigoplus_{\alpha}\,\bigotimes_{(j,k)\in E}\,(\e^{-H_{jk}})^{\alpha_j\alpha_k}\,\otimes\bigotimes_{j'\in V}\Id_{\cH_{j'j'}^{\alpha_{j'}}}\,,\label{sigmavdecomp}
\end{align}
where the decomposition is over $\alpha=\{\alpha_k\}_{k\in V}$.

With these concepts at hand, we are now ready to define the conditional expectation $E_A^S$ corresponding to a subset of vertices $A\subset V$. 

First, given a set $A$, we define $\mathcal{A}_{A,\operatorname{out}}$ (see Figure \ref{tryschmidt}) to be

 \begin{align}
 \mathcal{A}_{A,\operatorname{out}}=\bigotimes_{j\in\partial A}\,\bigotimes_{\substack{k\in V\backslash A\\(j,k)\in E}}\,\bigoplus_{\alpha_j}\Id_{\cH_{jj}^{\alpha_j}}\,\otimes\,\cB(\cH_{jk}^{\alpha_j})\,
  \otimes
  \bigotimes_{\substack{k'\in A\\(j,k')\in E}} \Id_{ \cH_{jk'}^{\alpha_{j}}}.\label{equ:decomp}
 \end{align}
 	\begin{figure}[!ht]
	\centering
	\includegraphics[width=0.4\linewidth]{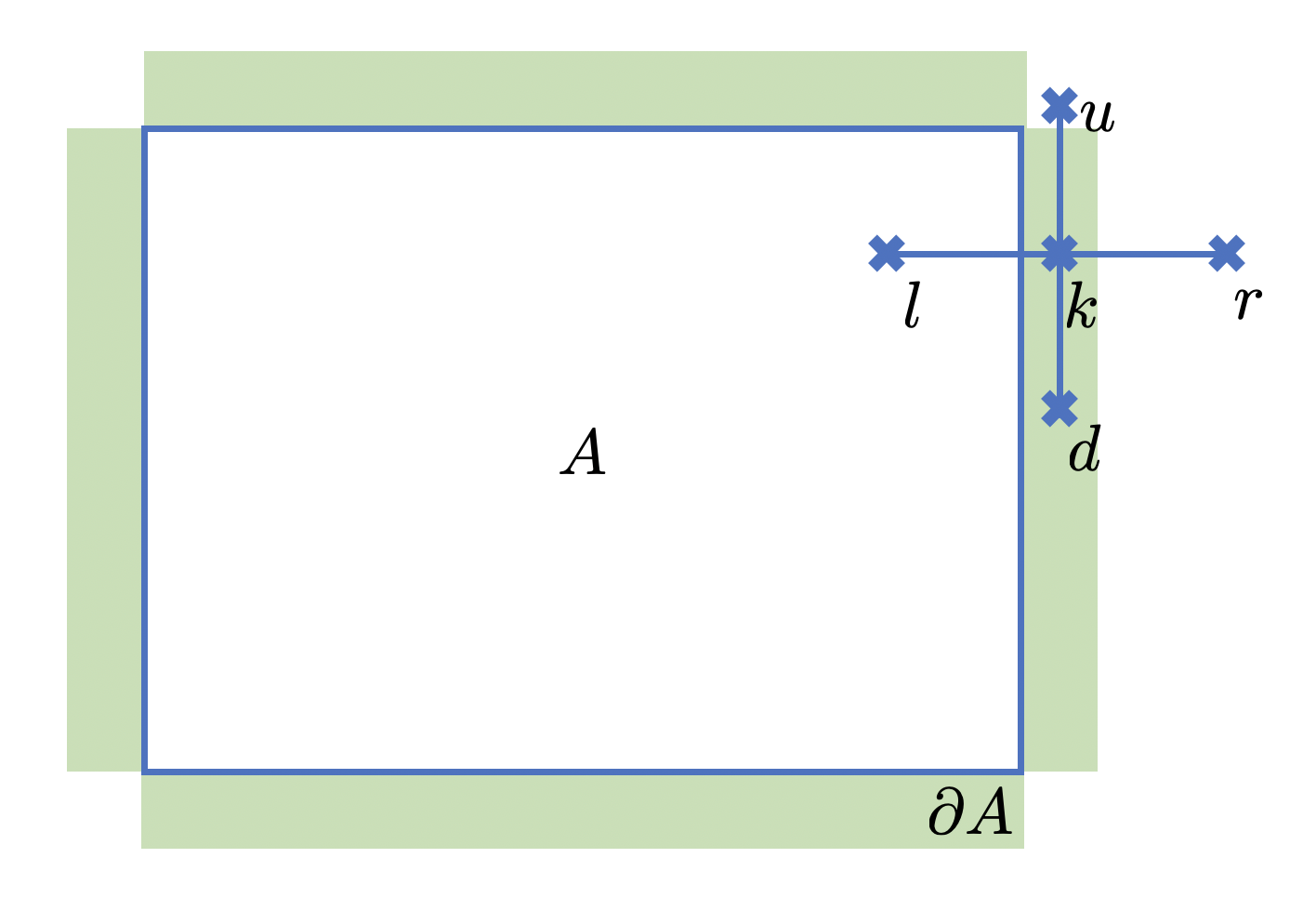}
	\caption{For the construction of $\mathcal{A}_{A, \text{out}}$, given a site $k$ in the boundary of $A$, the algebra acts trivially on the site $k$ itself, as well as on the edge $l-k$, whereas it acts non-trivially in the other three edges.}
	\label{tryschmidt}
	\end{figure}
Note that we are constructing this algebra over the boundary of $A$ by acting non-trivially only on the edges between a vertex of the boundary and a vertex in $V \setminus A$ (possibly in the boundary too).  We then finally define $E_A^{S}$ to be the so-called \textit{Schmidt conditional expectation} with respect to the Gibbs state $\sigma^{V}$ onto

\begin{align*}
    \Id_{\cH_A}\otimes \mathcal{A}_{A,\operatorname{out}}\otimes\,\bigotimes_{k\in V\backslash A\partial}\cB(\cH_k) \,.
\end{align*}
The existence and uniqueness of $E_A^S$ is guaranteed by Takesaki's theorem (see \Cref{propositioncondexp}) and the decomposition \eqref{sigmavdecomp} of $\sigma^V$. The main advantage of this $2$-local setting is that central projections of $\cA_{A,\operatorname{out}}$ are tensor products of $1$-local projections in each of the sites in $\partial A$. In analogy with the classical setting, we will call each vector $\alpha:=(\alpha_1,...,\alpha_{|\partial A|})$ defining a central projection in $\cA_{A,\operatorname{out}}$ a \textit{configuration}. 

We are now ready to define the \textit{Schmidt generator} as
\begin{equation*}
    \overline{\mathcal{L}}^S_V (X)  := \underset{k \in V}{\sum} E_{k}^S [X] - X \, .
\end{equation*}
Note that $\overline{\mathcal{L}}^S_V $ is such that $\sigma^V$ is the unique invariant state of $\overline{\mathcal{L}}^S_{V*} $, it is frustration-free and locally reversible. Thus, these generators still retain the desirable properties of the Heat-bath and Davies generators. However, in contrast to them, we see that the structure of the underlying
conditional expectations does not depend on system-bath couplings and is thus simpler to analyse.

 \begin{lemma}\label{lemmaSchmidt}
 	Let $\{\Phi(X)\}_{X\ssubset V}$ be a $2$-local commuting potential on a graph $G=(V,E)$. Then, the corresponding family $\cL^S= \{ \cL^S_A,\cL^S_{\partial A}\}_{A\ssubset V}$ of Schmidt generators introduced above satisfies the following: for all $A\subset V$,
 	\begin{align*}
 	\operatorname{Ker}(\cL^S_A)=\bigcap_{k\in A}\cF(E_k^S)=\cF(E_{A}^S)\,.
 	\end{align*}
 \end{lemma}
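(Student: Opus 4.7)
The plan is to establish the two equalities separately, exploiting the explicit block structure of the Schmidt conditional expectations set up in the previous paragraphs.

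\textbf{Step 1: $\operatorname{Ker}(\cL^S_A) = \bigcap_{k\in A}\cF(E_k^S)$.} The inclusion $\supseteq$ is immediate from the definition $\cL^S_A = \sum_{k \in A}(E_k^S - \id)$, since any $X$ fixed pointwise by every $E_k^S$ trivially gives $\cL^S_A(X) = 0$. For the opposite inclusion, I would use that each $E_k^S$ is, by Takesaki's theorem applied to the Schmidt decomposition \eqref{sigmavdecomp}, KMS-symmetric with respect to $\sigma^V$. Consequently, $E_k^S$ is an orthogonal projection in the KMS inner product $\langle \cdot,\cdot\rangle^{\operatorname{KMS}}_{\sigma^V}$, so $\langle X - E_k^S(X),\, X\rangle^{\operatorname{KMS}}_{\sigma^V} = \|X - E_k^S(X)\|^2_{\operatorname{KMS}} \ge 0$. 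Pairing $\cL^S_A(X)=0$ with $X$ in the KMS inner product then yields
\[
\sum_{k \in A} \|X - E_k^S(X)\|^2_{\operatorname{KMS}} = 0,
\]
and each summand must vanish, giving $E_k^S(X)=X$ for all $k\in A$.

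\textbf{Step 2: $\bigcap_{k\in A}\cF(E_k^S) = \cF(E_A^S)$.} For the inclusion $\subseteq$, start from $X \in \cF(E_A^S)$; by definition $X$ acts as $\Id$ on $\cH_A$ and lies in $\cA_{A,\operatorname{out}}$ on $\partial A$. Since $\cA_{A,\operatorname{out}}$ already places $\Id_{\cH_{jk'}^{\alpha_j}}$ on every edge $(j,k')$ with $k'\in A$, the algebra $\cF(E_A^S)$ is a subalgebra of $\cF(E_k^S)$ for each $k \in A$, i.e.\ $E_k^S \circ E_A^S = E_A^S$, which is exactly the frustration-freeness identity. For the reverse inclusion, analyse an $X \in \bigcap_{k \in A}\cF(E_k^S)$. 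Membership in $\cF(E_k^S)$ forces $X$ to be of the form $\Id_{\cH_k}\otimes (\text{something})$, and doing this simultaneously for all $k \in A$ forces $X$ to act as the identity on $\cH_A$. The remaining content of $X$ lives on $\partial A$ and on $V \setminus A\partial$; on a site $j \notin \partial A$ no constraint beyond $\cB(\cH_j)$ survives, while on $j \in \partial A$ one reads off from the definition of $\cA_{\{k\},\operatorname{out}}$ that each neighbour $k\in A\cap V_j$ forces an $\Id_{\cH_{jk}^{\alpha_j}}$ factor. Combining all such constraints over $k\in A\cap V_j$ reconstructs the tensor-factor structure that defines $\cA_{A,\operatorname{out}}$ in \eqref{equ:decomp}.

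\textbf{Main obstacle.} The delicate point is Step 2, and specifically showing that the various single-site constraints on a shared boundary vertex $j\in\partial A$ can be simultaneously imposed without collapsing the direct-sum indexed by $\alpha_j$. This is precisely where one must use the commutativity of the algebras $\{\cA_{jk}\}_{k\in V_j}$, which was the reason for admitting a joint block decomposition $\cH_j = \bigoplus_{\alpha_j}\bigotimes_{k\in V_j}\cH_{jk}^{\alpha_j}\otimes \cH_{jj}^{\alpha_j}$ in the construction of the Schmidt generators. Once this joint decomposition is in hand, the ``identity on $\cH_{jk}^{\alpha_j}$'' constraints coming from different neighbours $k \in A\cap V_j$ are compatible block-by-block, and their intersection matches \eqref{equ:decomp} exactly. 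An appeal to Takesaki's theorem (the same invocation used to define $E_A^S$ in the first place) then identifies this intersection as the range of a conditional expectation, which must coincide with $E_A^S$ by uniqueness.
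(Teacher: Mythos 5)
Your proof is correct and follows essentially the same route as the paper's: the first equality via the KMS-symmetrization trick (pairing $\cL_A^S(X)=0$ with $X$ and using orthogonality of $E_k^S[X]$ and $X-E_k^S[X]$), and the second by tracking which tensor factors $\cH_{jk}^{\alpha_j}$ each fixed-point algebra acts nontrivially on through the decomposition \eqref{equ:decomp}. The only difference is organizational — the paper establishes the second equality by induction on $|A|$, adjoining one vertex at a time, whereas you argue both inclusions directly; the underlying block-by-block comparison on shared boundary sites is the same.
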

\begin{proof}

	The first identity is a consequence of the following well-known symmetrization trick. For any $X\in\cB(\cH_{\Lambda})_{\operatorname{sa}}$, 
	\begin{align*}
	-\langle X,\,\cL_A^S(X)\rangle_{\sigma^\Lambda}=\sum_{k\in A}\,\langle X,\,X-E_k^S[X]\rangle_{\sigma^\Lambda}=\frac{1}{2}\,\sum_{k\in A}\langle X-E_k^S[X],\,X-E_k^S[X]\rangle_{\sigma^\Lambda}\,,
	\end{align*}
	where the last identity follows from the KMS-orthogonality of $E_k^S[X]$ and $X-E_k^S[X]$. From this, we directly see that $\cL_A^S(X)=0$ if and only if $X=E_k^S[X]$ for all $k\in A$. 
	
	The second identity can be shown invoking the decomposition given in~\eqref{equ:decomp}. We will proceed by induction on $|A|$. The claim is obvious for $|A|=1$. Assume it is true for all $|A|\leq m$ and consider $B=A\cup\{k\}$ for some $k\in V\backslash A$. By our induction hypothesis and the previous discussion, we have that $\operatorname{Ker}(\cL^S_{B})=\operatorname{Ker}(\cL^S_{A})\cap\operatorname{Ker}(\cL^S_{k})=\cF(E_{A}^S)\cap\cF(E_{k}^S)$. Let us now compare the two algebras $\cF(E_{A}^S)$ and $\cF(E_{k}^S)$. First, they clearly agree on $\left(A\partial\cup\{k\}\partial\right)^c$, as in that region both conditional expectations act trivially. In $\left(A\partial\cup\{k\}\partial\right)$, the elements of $\cF(E_{k}^S)$ will only act nontrivially on the Hilbert spaces $\cH_{ij}^{\alpha_i}$ with $j\not=k$  and $i\in\partial\{k\}$ such that $(i,j)\in E$. Similarly, the elements of $\cF(E_{A}^S)$ will only act nontrivially on $\cH_{i'j'}^{\alpha_{i'}}$ such that $i'\in\partial A$ and $j'\not\in A$. Thus, we conclude that the operators in the intersection of the two algebras will only act nontrivially on $\cH_{ij}^{\alpha_i}$ where $i\in\partial{A}\cup\partial{k}$ and $j\not\in\{k\}\cup A$. As these are exactly the Hilbert spaces in $\left(A\partial\cup\{k\}\partial\right)$ on which the elements of $\cF(E_{k\cup A}^S)$ act nontrivially, this concludes the proof.

\end{proof}

\paragraph{Embedded Glauber dynamics:}
 
  The situation becomes even simpler when $\sigma^\Lambda$ is diagonal in the computational basis: fix local bases $\{|\eta_x\rangle\}_{\eta=1}^{d_\cH}$ for $x\in\ZZ^d$, and denote the tensor products of the local basis elements by $\{|\eta^\Lambda\rangle\equiv \otimes_{x\in\Lambda}|\eta_x\rangle\}$, where $\eta^\Lambda:=(\eta_1,\dots,\eta_\Lambda)\in\{1,\dots, d_\cH\}^{\Lambda}$. Next, we assume the existence of a Gibbs measure $\mu^{\Lambda}$ on the configuration space $\Omega_\Lambda:=\{1,\dots d_\cH\}^{\Lambda}$ such that $$\sigma^\Lambda:=\sum_{\eta^\Lambda\in\Omega_\Lambda}\mu^\Lambda(\eta^\Lambda)\,|\eta^\Lambda\rangle\langle \eta^\Lambda|\,.$$

One can easily verify that the resulting Heat-bath dynamics leaves the computational basis invariant. Moreover, when restricted to that diagonal, it acts as the Glauber dynamics as defined for instance in Section 5.1 of \cite{Guionnet2003}: for any function $f:\Omega_\Lambda\to \RR$, and $A\subset \Lambda$,
\begin{align*}
L_A^{G}(f)(\eta^\Lambda)=\sum_{j\in A}\, \mathbb{E}_j^{\eta^{\{j\}^c}}\big[ f \big]-f(\eta^\Lambda)\,,
\end{align*}
where $\mathbb{E}^{\eta^{A^c}}_A$ is the classical conditional expectation associated with the local Gibbs measure $\mu^A$ conditioned on the boundary configuration $\eta^{A^c}$. The noncommutative conditional expectation $E^G_A$ takes then the form
\begin{align}\label{EAcondexp}
E^G_{A*}(\rho):=\sum_{\eta_{\partial A}\in\Omega_{\partial A}}\,\tr_A\big[\langle \eta_{\partial A}|\,\rho\,|\eta_{\partial A}\rangle\big]\otimes |\eta_{\partial A}\rangle\langle \eta_{\partial A}|\otimes \tau_A^{\eta_{\partial A}}\,.
\end{align}
Moreover, denoting by $\mathcal{C}_{A\partial }$ the local Pinching map onto the commutative algebra of local functions $f:\Omega_{A\partial }\to\CC$,  
\begin{align}\label{EAclassEA}
E_A^{G} = \mathbb{E}_A \circ \mathcal{C}_{A\partial }\,.
\end{align}

\subsection{Functional inequalities and rapid mixing}\label{FIs}

 Given the generator $\cL$ of a quantum Markov semigroup over $\cB(\cH)$ which we assume KMS-reversible with respect to an invariant state $\sigma$, the \textit{entropy production} of $\cL$ is defined for any other state $\rho\in\cD(\cH)$ by
\begin{align*}
\operatorname{EP}_{\cL}(\rho):=-\left.\frac{dD(\e^{t\cL_*}(\rho)\|E_*(\rho))}{dt}\right|_{t=0}\,.
\end{align*}
The entropy production is always non-negative, by the monotonicity of the relative entropy under quantum channels. Moreover, it satisfies the following useful property:
\begin{lemma}\label{EPexpress}
	Let $(\cP_t)_{t\ge 0}$ be a faithful, reversible quantum Markov semigroup of generator $\cL$. Then, for any full-rank invariant state $\omega$:
	\begin{align*}
	\operatorname{EP}_{\cL}(\rho)=-\tr\big[\cL_*(\rho)\big( \ln(\rho)-\ln(\omega)  \big)\big]\,.
	\end{align*} 
\end{lemma}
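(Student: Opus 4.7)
The plan is to compute the derivative in the definition of $\operatorname{EP}_\cL(\rho)$ directly, and then to check that the resulting expression is unchanged when $E_*(\rho)$ is replaced by any other full-rank invariant state $\omega$.

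First I would exploit the fact that $E_*(\rho)$ is itself a fixed point of $\e^{t\cL_*}$, so $\ln E_*(\rho)$ does not depend on $t$. Splitting
\begin{align*}
D(\e^{t\cL_*}(\rho)\|E_*(\rho))=\tr\big[\e^{t\cL_*}(\rho)\ln \e^{t\cL_*}(\rho)\big]-\tr\big[\e^{t\cL_*}(\rho)\ln E_*(\rho)\big],
\end{align*}
the second term differentiates trivially to $\tr[\cL_*(\rho)\ln E_*(\rho)]$. For the first, I would use the standard identity $\frac{d}{dt}\tr[f(\sigma(t))]=\tr[\dot\sigma(t)\,f'(\sigma(t))]$ with $f(x)=x\ln x$, together with the trace preservation $\tr[\cL_*(\rho)]=0$ (which follows from $\cL(\Id)=0$), to obtain $\tr[\cL_*(\rho)\ln\rho]$ at $t=0$. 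Combining the two pieces yields
\begin{align*}
\operatorname{EP}_\cL(\rho)=-\tr\big[\cL_*(\rho)\big(\ln\rho-\ln E_*(\rho)\big)\big],
\end{align*}
which is the claim in the special case $\omega=E_*(\rho)$.

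Second, I would show that the right-hand side is insensitive to the choice of full-rank invariant state, i.e.\ that $\tr[\cL_*(\rho)(\ln E_*(\rho)-\ln\omega)]=0$ for every such $\omega$. The key input is the structure of the fixed-point set of a KMS-reversible semigroup recalled around \eqref{Daviescond}: one has a decomposition $\cH=\bigoplus_i \cH_i\otimes \cK_i$ with central projections $P_i$ and fixed-point algebra $\cF(\cL)=\bigoplus_i \cB(\cH_i)\otimes \Id_{\cK_i}$, and any invariant state takes the form $\sum_i \tilde\eta_i\otimes \tau_i$ with the \emph{same} factors $\tau_i$. Since $\omega$ is invariant, $\omega=\lim_{t\to\infty}\e^{t\cL_*}(\omega)=E_*(\omega)$, so $\omega$ has this form; the same is true for $E_*(\rho)$ by construction. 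Hence within each block the two logarithms differ by an operator of the form $X_i\otimes \Id_{\cK_i}$, so $\ln E_*(\rho)-\ln\omega\in\cF(\cL)$. By the duality $\tr[\cL_*(\rho)\,Y]=\tr[\rho\,\cL(Y)]$ and $\cL(Y)=0$ for $Y\in\cF(\cL)$, the offending term vanishes, completing the proof.

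I do not expect any real obstacle here: the only delicate point is using the right structure theorem for $\cF(\cL)$ and the block form of invariant states under KMS-reversibility, but both are standard and already invoked in the paper (cf.\ \eqref{Daviescond} and \eqref{EAi}). The rest is an elementary differentiation plus trace-preservation of $\cL_*$.
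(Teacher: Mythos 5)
Your proposal is correct and its essential step — showing that $\ln E_*(\rho)-\ln\omega$ lies in the fixed-point algebra $\cF(\cL)$ via the block structure of invariant states, and then killing the term by the duality $\tr[\cL_*(\rho)Y]=\tr[\rho\,\cL(Y)]=0$ — is exactly the argument the paper gives. The paper simply takes the preliminary differentiation identity $\operatorname{EP}_\cL(\rho)=-\tr[\cL_*(\rho)(\ln\rho-\ln E_*(\rho))]$ for granted, whereas you spell it out; this is a harmless and correct addition.
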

\begin{proof}
	This simply follows from the fact that the difference of the logarithms of the two invariant states $\ln(\omega)-\ln(E_*[\rho])$ belongs to the fixed point algebra $\cF(\cL)$ (see for instance the structure of invariant states in Equation (2.10) of \cite{bardet2018hypercontractivity}). Therefore
	\begin{align*}
	\tr\big[\cL_*(\rho)\big(\ln(E[\rho])-\ln\omega\big)\big]=\tr\big[\rho\cL\big(\ln(E[\rho])-\ln\omega\big)\big]=0\,.
	\end{align*}
\end{proof}

\begin{definition}[\cite{[KT13],CM15, BarEID17,gao2018fisher}]
	The quantum Markov semigroup $(\cP_t)_{t\ge 0}$ is said to satisfy a (non-primitive) \textit{modified logarithmic Sobolev inequality} (MLSI) if there exists a constant $\alpha>0$ such that, for all $\rho\in\cD(\cH)$:
	\begin{align}\tag{MLSI}\label{MLSI}
	4\alpha\, D(\rho\|E_*(\rho))\le \operatorname{EP}_\cL(\rho)\,.
	\end{align}
	The best constant satisfying \eqref{MLSI} is called the \textit{modified logarithmic Sobolev constant} and denoted by $\alpha(\cL)$. Moreover, the semigroup satisfies a \textit{complete modified logarithmic Sobolev inequality} (CMLSI) if, for any reference system $\cH_R$, the semigroup $(\e^{t\cL}\otimes \id_R)_{t\ge 0}$ satisfies a modified logarithmic Sobolev inequality with a constant $\alpha$ independent of $R$. In this case, the best constant satisfying CMLSI is called the \textit{complete modified logarithmic Sobolev constant} and is denoted by $\alpha_{\operatorname{c}}(\cL)$.
\end{definition}
	 The reason for the introduction of the complete modified logarithmic Sobolev constant is due to its tensorization property:
	 \begin{lemma}[\cite{gao2018fisher}]
	     Let $\cL$ and $\cK$ be two generators of $\operatorname{KMS}$-symmetric quantum Markov semigroups, and denote by $E_\cL$, resp. by $E_\cK$, their corresponding conditional expectations. Moreover, assume that $[E_\cL,E_\cK]=0$. Then
	     \begin{align*}
	        \alpha_{\operatorname{c}}(\cL+\cK)\ge \min\{ \alpha_{\operatorname{c}}(\cL),\,\alpha_{\operatorname{c}}(\cK)\}\,.
	     \end{align*}
	 \end{lemma}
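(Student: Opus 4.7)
Proposal: The natural approach is to split the quantity $D(\rho\|E_{(\cL+\cK)*}(\rho))$ via the chain rule~\eqref{chainrulerelatent} and control each of the resulting pieces by one of the two (complete) MLSIs. As a preliminary observation, since $E_\cL$ and $E_\cK$ are commuting idempotents that are self-adjoint for the KMS inner product, the product $E_\cL\circ E_\cK=E_\cK\circ E_\cL$ is itself a conditional expectation, onto $\cF(\cL)\cap\cF(\cK)$. Moreover, the KMS-symmetry of $\cL$ and $\cK$ makes $-\cL$ and $-\cK$ positive on $\cB(\cH)$ equipped with the KMS inner product, so $\cF(\cL+\cK)=\cF(\cL)\cap\cF(\cK)$, and we may identify $E_{(\cL+\cK)*}=E_{\cK*}\circ E_{\cL*}=E_{\cL*}\circ E_{\cK*}$.

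The chain rule applied to $(\rho,E_{(\cL+\cK)*}(\rho))$ with intermediate state $E_{\cL*}(\rho)$ then gives
$$D\bigl(\rho\,\big\|\,E_{(\cL+\cK)*}(\rho)\bigr)\;=\;D\bigl(\rho\,\big\|\,E_{\cL*}(\rho)\bigr)\;+\;D\bigl(E_{\cL*}(\rho)\,\big\|\,E_{\cK*}(E_{\cL*}(\rho))\bigr).$$
The MLSI for $\cL$ bounds the first summand by $\EP_{\cL}(\rho)/(4\alpha_{\operatorname{c}}(\cL))$. Applying the MLSI for $\cK$ at the state $E_{\cL*}(\rho)$ bounds the second summand by $\EP_{\cK}(E_{\cL*}(\rho))/(4\alpha_{\operatorname{c}}(\cK))$. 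Combining these with the monotonicity $\EP_{\cK}(E_{\cL*}(\rho))\le \EP_{\cK}(\rho)$ — a data-processing-type inequality for the entropy production enabled by the commutation $[E_\cL,E_\cK]=0$ and the KMS-symmetry of $\cK$ — and summing yields
$$4\min\{\alpha_{\operatorname{c}}(\cL),\alpha_{\operatorname{c}}(\cK)\}\,D\bigl(\rho\,\big\|\,E_{(\cL+\cK)*}(\rho)\bigr)\;\le\;\EP_{\cL}(\rho)+\EP_{\cK}(\rho)\;=\;\EP_{\cL+\cK}(\rho),$$
which is the MLSI for $\cL+\cK$ at rate $\min\{\alpha_{\operatorname{c}}(\cL),\alpha_{\operatorname{c}}(\cK)\}$. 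To upgrade the result to the \emph{complete} constants one simply tensors the whole argument with an identity on a reference system $R$: $E_\cL\otimes\id_R$ and $E_\cK\otimes\id_R$ remain commuting KMS-symmetric conditional expectations, so the argument runs verbatim.

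The main obstacle is establishing the monotonicity $\EP_{\cK}(E_{\cL*}(\rho))\le\EP_{\cK}(\rho)$. While it has the flavour of a data-processing inequality for $\cK$'s Fisher information under the channel $E_{\cL*}$, its rigorous derivation requires that $E_{\cL*}$ interact compatibly with the $\cK$-dynamics, something that does not follow formally from the sole commutation of the two conditional expectations. The simplest route is to exploit the simultaneous block decomposition of $\cB(\cH)$ induced by the two commuting KMS-projections $E_\cL$ and $E_\cK$: this yields an orthogonal decomposition with respect to which $E_{\cL*}$ acts as a projection and the BKM-type Dirichlet form representing $\EP_{\cK}$ factors, from which the stated monotonicity can be read off. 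Once this step is secured, the remaining ingredients are essentially algebraic manipulations of the chain rule.
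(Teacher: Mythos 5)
Your global architecture is the right one and is essentially that of \cite{gao2018fisher} (the paper itself only cites this lemma and does not reprove it): identify $E_{(\cL+\cK)*}=E_{\cL*}\circ E_{\cK*}$ as the conditional expectation onto $\cF(\cL)\cap\cF(\cK)$, split $D(\rho\|E_{(\cL+\cK)*}(\rho))$ by the chain rule \eqref{chainrulerelatent}, bound each piece by one of the two entropy productions, use the additivity $\EP_{\cL+\cK}(\rho)=\EP_\cL(\rho)+\EP_\cK(\rho)$ (Lemma \ref{EPexpress} with a common invariant state), and tensor with a reference system to pass to the complete constants. However, the step you yourself single out as the main obstacle --- the monotonicity $\EP_\cK(E_{\cL*}(\rho))\le\EP_\cK(\rho)$ --- is a genuine gap as you have argued it. First, the hypothesis $[E_\cL,E_\cK]=0$ concerns only the limiting projections: it does not give $[E_{\cL*},\cK_*]=0$, nor $[E_{\cL*},\e^{t\cK_*}]=0$, and an intertwining relation of this kind is precisely what every rigorous data-processing argument for the entropy production (Fisher information) of $\cK$ under the channel $E_{\cL*}$ requires. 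Second, the proposed shortcut via the simultaneous block decomposition does not let one ``read off'' the inequality: $\EP_\cK(\rho)=-\tr[\cK_*(\rho)(\ln\rho-\ln\omega)]$ is not a quadratic form in $\rho$ (it is nonlinear in $\rho$ through $\ln\rho$), so KMS-orthogonality of the projection $E_{\cL}$ does not make this functional decrease; the known monotonicity statements go through operator means and joint convexity and still need the intertwining just mentioned.

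The good news is that the problematic step is avoidable, and avoiding it recovers the proof of \cite{gao2018fisher}. In your chain-rule decomposition, do not apply the MLSI of $\cK$ at the state $E_{\cL*}(\rho)$; instead apply the data processing inequality for the relative entropy to the channel $E_{\cL*}$:
\begin{align*}
D\big(E_{\cL*}(\rho)\,\big\|\,E_{\cK*}(E_{\cL*}(\rho))\big)=D\big(E_{\cL*}(\rho)\,\big\|\,E_{\cL*}(E_{\cK*}(\rho))\big)\le D\big(\rho\,\big\|\,E_{\cK*}(\rho)\big)\le\frac{1}{4\alpha_{\operatorname{c}}(\cK)}\,\EP_{\cK}(\rho)\,,
\end{align*}
where the first equality uses $[E_{\cL*},E_{\cK*}]=0$. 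Combined with $D(\rho\|E_{\cL*}(\rho))\le\frac{1}{4\alpha_{\operatorname{c}}(\cL)}\EP_\cL(\rho)$, the nonnegativity of each entropy production, and their additivity, this yields the MLSI for $\cL+\cK$ with constant $\min\{\alpha_{\operatorname{c}}(\cL),\alpha_{\operatorname{c}}(\cK)\}$, and the whole argument tensorizes with $\id_R$ verbatim as you note. Observe that this is exactly the exact (constant $1$, no error term) instance of the approximate tensorization in Theorem \ref{ATAC}, which is the mechanism the paper relies on throughout.
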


By Gr\"{o}nwall's inequality, the (complete) modified logarithmic Sobolev inequality is directly related to the exponential convergence of the evolution towards its equilibrium, as measured in relative entropy:
\begin{align*}
	D\big(\e^{t\cL_*}(\rho)\big\|E_*(\rho) \big)\le \e^{-4\alpha (\mathcal{L}) t}\,D(\rho\|E_*(\rho))\,.
	\end{align*}

The problem of determining whether a quantum Markov semigroup satisfies a MLSI has been addressed in various settings in the last years. Some examples appear in \cite{muller2016relative, muller2016entropy}, where it was shown that the MLSI constant of the depolarizing channel can be lower bounded by $1/2$. This was subsequently extended to the generalized depolarizing channel in \cite{beigi2018quantum, capel2018quantum}. \cite{BardetCapelLuciaPerezGarciaRouze-HeatBath1DMLSI-2019} constitutes the first attempt to prove the inequality in the setting of spin systems, where the Heat-bath  generator in 1D was shown to satisfy a MLSI under two conditions of decay of correlations on the Gibbs state.
	
Similarly to the case for the MLSI, the spectral gap of $\cL$ provides a weaker notion of convergence with respect to the variance:
\begin{definition}[\cite{BarEID17}]
The \textit{spectral gap} of the semigroup $(\e^{t\cL})_{t\ge 0}$ with an invariant state $\sigma$ is given by the largest constant $\lambda$ which satisfies the following (non-primitive) \textit{Poincar\'{e} inequality}: for all $X\in\cB(\cH)_{\operatorname{sa}}$,
	\begin{align}\tag{PI}\label{poincare}
\lambda	\operatorname{Var}_E(X)\le \cE_{\cL}(X)\, ,
	\end{align}
	where the variance is given by $\operatorname{Var}_E (X):= \| X - E[ X]  \|^2_{\mathbb{L}_2(\sigma)}$ and $ \cE_{\cL}(X)$ is the \textit{Dirichlet form} of $X$:
\begin{align*}
\cE_{\cL}(X):=-\left.\frac{d}{dt}\right|_{t=0}\,\operatorname{Var}_E\big(\e^{t\cL}(X)\big) \, ,
\end{align*}
The spectral gap is denoted by $\lambda(\cL)$.
\end{definition}
From the previous definition we notice that a notion of complete spectral gap would be redundant, since it would provide the same information than the usual spectral gap. Moreover,  \eqref{poincare} is equivalent to the exponential decay of the variance: For all $X\in\cB(\cH)_{\operatorname{sa}}$,
	\begin{align*}
	\operatorname{Var}_E\big(\e^{t\cL}(X)\big)\le \e^{-\lambda(\mathcal{L}) t}\,\operatorname{Var}_E(X)\,.
	\end{align*}
	 The spectral gap of the Heat-bath and Davies generators was studied in \cite{[BK16]} and its positivity independently of the system size was proven to be equivalent to a strong form of clustering of correlations in the Gibbs state, which we discuss in \Cref{clustercorrela}. 
	
	Moreover, the positivity of the spectral gap can be used to show the existence of a CMLSI, as shown in~\cite{gao_spectral_2021}:
\begin{theorem}[Complete modified logarithmic Sobolev inequality~\cite{gao_spectral_2021},Theorem 4.3]\label{CMLSIholds}
	Let $\cL$ be a generator of a $\operatorname{GNS}$-symmetric quantum Markov semigroup acting on a $\operatorname{\dim}(\cH)$-dimensional Hilbert space with invariant state $\sigma>0$. Then $\cL$ satisfies:
\begin{align*}
\alpha_{\operatorname{c}}(\cL)\geq \frac{\lambda(\cL)\|\sigma^{-1}\|^{-1}}{(\operatorname{\dim}\cH)^2}\,.	
\end{align*}
\end{theorem}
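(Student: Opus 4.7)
The plan is to deduce the CMLSI constant from the Poincar\'e inequality by constructing a chain of inequalities that compares the entropy production to a quadratic Dirichlet form, then to a variance, and finally to the relative entropy. The first step is to observe that it suffices to bound $\alpha(\cL \otimes \id_R)$ below by the claimed quantity, uniformly in the reference space $\cH_R$. Since the Poincar\'e inequality tensorizes trivially under identity extensions (the Dirichlet form of $\id_R$ vanishes while the variance decomposes along the two factors), we have $\lambda(\cL \otimes \id_R) = \lambda(\cL)$, so all gains in spectral gap transfer automatically to the enlarged system.

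The second step is a comparison between the entropy production $\operatorname{EP}_\cL(\rho)$ and a KMS-type Dirichlet form evaluated at a square-root ansatz such as $\mathcal{E}_\cL(\rho^{1/2})$. For $\operatorname{GNS}$-symmetric Lindbladians in the normal form of Theorem~\ref{thm:normalformCM}, this is typically achieved via the integral representation
\begin{align*}
    \log x - \log y = \int_0^\infty \left( \frac{1}{y+s} - \frac{1}{x+s} \right) \d s ,
\end{align*}
combined with operator monotonicity, which converts the logarithmic quantity $\operatorname{EP}_\cL(\rho) = -\tr[\cL_*(\rho)(\log\rho - \log E_*\rho)]$ (see Lemma~\ref{EPexpress}) into a quadratic form at the cost of a factor involving $\|(E_*\rho)^{-1}\|$, which is in turn controlled by $\|\sigma^{-1}\|$ up to a factor of $\dim\cH$ on the relevant block of the fixed-point decomposition.

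The third step applies the Poincar\'e inequality to lower-bound this Dirichlet form by $\lambda(\cL)$ times a KMS variance, and then relates this variance to the relative entropy $D(\rho\|E_*\rho)$ through a reverse Pinsker-type estimate. Here one uses that, when $\sigma>0$ and $E_*\rho \ge \|\sigma^{-1}\|^{-1}(\dim\cH)^{-1}\,\Id$ on each block of $\cF(\cL)$, the function $x\log x$ can be Taylor-controlled against $(x-1)^2$ with a constant proportional to $\|\sigma^{-1}\|\dim\cH$; applied to the operator $(E_*\rho)^{-1/2}\rho(E_*\rho)^{-1/2}$ this produces a bound $D(\rho\|E_*\rho) \le \|\sigma^{-1}\| \dim\cH \cdot \operatorname{Var}_{E}(\cdot)$, giving a second factor of $\|\sigma^{-1}\|\dim\cH$. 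Multiplying the two comparison constants accumulates the full factor $\|\sigma^{-1}\|(\dim\cH)^2$ appearing in the statement, while the spectral gap $\lambda(\cL)$ is retained intact.

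The main obstacle will be the second step: converting the logarithmic entropy production into a quadratic Dirichlet form while ensuring the resulting constant depends only on $\|\sigma^{-1}\|$ and $\dim\cH$, not on the reference dimension $\dim\cH_R$. This dimension-independence on the $R$-side is precisely what makes the bound \emph{complete}, and it hinges on the fact that the operator $(E_*\rho)^{-1}$, although living on $\cH\otimes\cH_R$, is block-controlled by $\sigma^{-1}$ and the conditional-expectation structure~\eqref{Daviescond}--\eqref{EAi} whose non-trivial action is confined to $\cB(\cH)$. Provided this $R$-independent control is made quantitative, combining the three comparison steps yields the stated inequality $\alpha_{\operatorname{c}}(\cL) \ge \lambda(\cL)\|\sigma^{-1}\|^{-1}/(\dim\cH)^2$.
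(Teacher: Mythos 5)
First, a point of reference: the paper does not prove this statement at all --- it is imported verbatim as Theorem 4.3 of \cite{gao_spectral_2021}, so there is no internal proof to compare against. Judged on its own merits, your outline has the right high-level architecture (the proof in the cited work does indeed run the spectral gap through a chain of comparisons between the entropic Dirichlet form, a quadratic/$\chi^2$-type form, and the relative entropy), and your Step 1 observation that $\lambda(\cL\otimes\id_R)=\lambda(\cL)$ is correct and is used there.

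The genuine gap is in how you propose to secure independence of the reference system, which is the entire content of the ``complete'' claim. Both your Step 2 and Step 3 hinge on absolute bounds of the form $\|(E_*\rho)^{-1}\|\lesssim \|\sigma^{-1}\|\dim\cH$ and $E_*\rho\ge \|\sigma^{-1}\|^{-1}(\dim\cH)^{-1}\Id$ ``on each block.'' These are false uniformly in $R$: writing $E_{*}(\rho)=\sum_i \tr_{\cK_i}[P_i\rho P_i]\otimes\tau_i$ as in \eqref{Daviescond}, the factor $\tr_{\cK_i}[P_i\rho P_i]$ is an essentially arbitrary positive operator on $\cH_i\otimes\cH_R$ and can have eigenvalues as small as one likes (take $\rho$ close to a pure state, or supported on a small subspace of $\cH_R$), so $\|(E_*\rho)^{-1}\|$ blows up with no control by $\sigma$ or $\dim\cH$. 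What \emph{is} uniformly controlled is the relative bound $\rho\le C\,(E_*\otimes\id_R)(\rho)$, i.e.\ the completely bounded Pimsner--Popa index $C_{\operatorname{cb}}(E)$, and it is this quantity that satisfies $C_{\operatorname{cb}}(E)\le (\dim\cH)^2\|\sigma^{-1}\|$ and produces the stated constant via $\alpha_{\operatorname{c}}(\cL)\ge \lambda(\cL)/C_{\operatorname{cb}}(E)$. Correspondingly, your Taylor control of $x\log x$ against $(x-1)^2$ applied to $(E_*\rho)^{-1/2}\rho(E_*\rho)^{-1/2}$ needs an \emph{upper} bound on that operator --- again the index --- not a lower bound on $E_*\rho$; and the conversion of the logarithmic entropy production into a quadratic form must be carried out with double-operator-integral weights compared through $\rho\le C\,E_*\rho$ and operator monotonicity, rather than through $\|(E_*\rho)^{-1}\|$. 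Until the argument is reorganised around this relative (index) bound, the constants in your Steps 2 and 3 depend on $\cH_R$ and the proof does not yield a \emph{complete} MLSI.
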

For our purposes the result above will be helpful to ensure that terms of the generator acting on a bounded region always satisfy a CMLSI of constant order. The strict positivitity of the CMLSI was also proved in~\cite{gao_spectral_2021}, although without an explicit lower-bound expression.	

	In the classical locally finite setting a variant of the modified logarithmic Sobolev inequality -which predates it- is more naturally considered:

	\begin{definition}[\cite{[KT13]}]
Assume that the quantum Markov semigroup $(\cP_t)_{t\ge 0}$ is primitive with $\sigma$ as unique fixed point. It is said to satisfy a \textit{logarithmic Sobolev inequality} (LSI) if there exists a constant $\alpha_2>0$ such that, for all $\rho\in\cD(\cH)$:
	\begin{align}\tag{LSI}\label{LSI}
\alpha_2 \,	 D(\rho\| \sigma) \le \cE_{\cL}(\Gamma_\sigma^{-1/2}(\sqrt{\rho}))\, .	\end{align}
		The best constant $\alpha_2$ satisfying \eqref{LSI} is called the \textit{logarithmic Sobolev constant} and denoted by $\alpha_2(\cL)$.
\end{definition}

 The reason why the analysis of the LSI constant has attracted more attention in the classical literature is due to its connection to the useful property of hypercontractivity of the semigroup. Moreover, LSI implies MLSI, as least for GNS-symmetric semigroups:
	
\begin{proposition}[\cite{[KT13], CM15,BarEID17}]
Let $\cL$ be the generator of a primitive, $\operatorname{KMS}$-symmetric quantum Markov semigroup over a finite-dimensional Hilbert space. Then, \eqref{MLSI} $\Rightarrow$ \eqref{poincare}. Moreover, if the semigroup is $\operatorname{GNS}$-symmetric, then \eqref{LSI} $\Rightarrow $ \eqref{MLSI}.
\end{proposition}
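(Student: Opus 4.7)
The plan is to establish both implications via linearization arguments and operator comparisons; both proofs can be found in various forms in the literature \cite{[KT13],CM15,BarEID17}. I outline the strategy for each.

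\textbf{MLSI $\Rightarrow$ PI.} The plan is to linearize MLSI around the invariant state $\sigma$. Fix a self-adjoint $X$ with $\tr[\sigma X]=0$ (equivalently $E[X]=0$ in the primitive case) and consider the perturbed states $\rho_\epsilon := \sigma + \epsilon A$ with $A := (\sigma X + X\sigma)/2$, which are genuine density operators for $\epsilon$ small enough. Since $\tr[A]=0$ and $E_*(B)=\tr[B]\sigma$ for a primitive semigroup, $E_*(\rho_\epsilon)=\sigma$. Using the integral representation $\ln(\sigma+\epsilon B) = \ln\sigma + \epsilon\int_0^\infty (\sigma+s)^{-1}B(\sigma+s)^{-1}\,ds + O(\epsilon^2)$ together with Lemma \ref{EPexpress}, a second-order Taylor expansion yields
\[
    D(\rho_\epsilon\|\sigma) = \tfrac{\epsilon^2}{2}\,\chi_\sigma(A,A) + O(\epsilon^3), \qquad \operatorname{EP}_{\cL}(\rho_\epsilon) = \epsilon^2\,\mathcal{D}_\sigma(X,X) + O(\epsilon^3),
\]
where $\chi_\sigma(A,A):=\int_0^\infty\tr[A(\sigma+s)^{-1}A(\sigma+s)^{-1}]\,ds$ is the Bogoliubov--Kubo--Mori form and $\mathcal{D}_\sigma$ is a matching Dirichlet-type form. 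Plugging into MLSI, dividing by $\epsilon^2$ and sending $\epsilon\to 0$ gives a BKM-Poincaré inequality $2\alpha\,\chi_\sigma(A,A)\le \mathcal{D}_\sigma(X,X)$. Comparing quadratic forms in the eigenbasis of $\sigma$, the mean inequalities $\sqrt{ab}\le L(a,b)\le (a+b)/2$ applied to the weights $(\lambda_i+\lambda_j)^2/(4L(\lambda_i,\lambda_j))$ show $\chi_\sigma(A,A)\ge\operatorname{Var}_E(X)$, while KMS-self-adjointness of $\cL$ produces a bound $\mathcal{D}_\sigma(X,X)\le C\,\cE_{\cL}(X)$ for an explicit numerical constant $C$; combining these yields PI.

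\textbf{LSI $\Rightarrow$ MLSI.} In the GNS-symmetric setting, the strategy is to prove the pointwise operator comparison
\[
    \operatorname{EP}_{\cL}(\rho) \ge 4\,\cE_{\cL}\bigl(\Gamma_\sigma^{-1/2}(\sqrt{\rho})\bigr),
\]
valid for every full-rank state $\rho$; combined with LSI this immediately produces MLSI with the same constant, up to a universal factor of $4$. The inequality is the noncommutative analogue of the scalar $(a-b)(\ln a-\ln b)\ge 4(\sqrt a-\sqrt b)^2$. Its proof relies on the normal form of a GNS-symmetric generator given by Theorem \ref{thm:normalformCM}: writing $\cL$ in terms of the modular eigenoperators $\tilde L_j$ of $\Delta_\sigma$, both $\operatorname{EP}_{\cL}(\rho)$ (through Lemma \ref{EPexpress}) and $\cE_{\cL}(\Gamma_\sigma^{-1/2}(\sqrt\rho))$ decompose as sums over $j$ of quadratic expressions in the commutators $[\tilde L_j,\rho]$ and $[\tilde L_j,\sqrt\rho]$ respectively, with matching prefactors $c_j\e^{-\omega_j/2}$. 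Integral representations of $\ln$ together with the operator concavity of the square root then reduce the claim to its scalar counterpart applied to pairs of eigenvalues of $\rho$.

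\textbf{Main obstacle.} The more delicate step is the second implication. The scalar inequality $(a-b)(\ln a-\ln b)\ge 4(\sqrt a-\sqrt b)^2$ cannot be lifted to non-commuting $\rho$ and $\sqrt\rho$ by naive diagonalization; one must exploit the explicit eigenoperator decomposition afforded by GNS-symmetry so that the weights on the two sides align, and then combine it with an operator-integral representation of the logarithm. For a merely KMS-symmetric generator this alignment fails, and the pointwise comparison $\operatorname{EP}_{\cL}\ge 4\,\cE_{\cL}(\Gamma_\sigma^{-1/2}(\sqrt\cdot))$ need not hold, which is precisely why the second half of the proposition requires the stronger GNS assumption.
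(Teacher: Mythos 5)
The paper gives no proof of this proposition, deferring entirely to the cited references, and your sketch reproduces exactly the standard arguments found there: the second-order linearization of MLSI at $\sigma$ via the Bogoliubov--Kubo--Mori form compared against the $\mathbb{L}_2(\sigma)$ variance through the logarithmic-mean inequalities, and the pointwise comparison $\operatorname{EP}_{\cL}(\rho)\ge 4\,\cE_{\cL}(\Gamma_\sigma^{-1/2}(\sqrt\rho))$ built on the GNS normal form of Theorem \ref{thm:normalformCM}. Both halves are correctly oriented (including the directions of the two quadratic-form comparisons in the first part) and you correctly identify why GNS-symmetry, not mere KMS-symmetry, is needed for the second implication, so this matches the intended proof.
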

In the case of locally unbounded classical evolutions, the existence of a positive uniform lower bound on the LSI constant of a family of generators is a strictly stronger condition than its analogue for the MLSI \cite{[D02]}. In the quantum case, the situation is even worse, since the existence of a complete LSI as introduced in \cite{[BK16a]}, or even of an LSI constant for a non-primitive semigroup, was proved to always fail \cite{bardet2018hypercontractivity}. This fact justifies the focus of the current article on the MLSI constant, since our proof heavily relies on the notion of a complete functional inequality.

Regarding the speed of convergence of a uniform family of quantum Markov evolutions to their corresponding equilibrium, we introduce the notion of a rapidly mixing uniform family of Lindbladians as follows:

\begin{definition}[Rapid mixing \cite{cubitt2015stability}]
	A primitive uniform family $\cL:=\{\cL_\Lambda,\cL_{\partial \Lambda}\}_{\Lambda\ssubset \ZZ^d}$ of Lindbladians with corresponding invariant states $\sigma^\Lambda$ is said to be \textit{rapidly mixing} if there exist positive constants $c,\gamma,\delta$ such that, for all $\Lambda\ssubset \ZZ^d$
\begin{align}\tag{RM}\label{RM}
\|\e^{t\cL_{\Lambda*}}(\rho)-\sigma^{\Lambda}\|_1\le  c\,\ln^\delta(\dim(\cH_\Lambda))\,\e^{-t\gamma}\,.
\end{align}
\end{definition}

This property has profound implications for the system, such as stability against external perturbations \cite{cubitt2015stability} and an area law in the mutual information for its fixed points \cite{brandao2015area}. To conclude, we recall that, by means of Pinsker's inequality, a positive uniform lower bound in the MLSI constant of a family of generators is a sufficient condition for a quantum system to satisfy rapid mixing. This serves as a motivation for our main result. From now on, we define the MLSI constant of a uniform family $\cL$ of Lindbladians as
$$\alpha (\cL) := \underset{\Lambda \nearrow \mathbb{Z}^d}{\text{lim inf}} \; \alpha (\mathcal{L}_\Lambda)\,.$$
We immediately have:
\begin{lemma}[\cite{[KT13]}]
Let $\cL:=\{\cL_\Lambda\}_{\Lambda\ssubset \ZZ^d}$ be a primitive uniform family of Lindbladians. If $\alpha(\cL)>0$, then $\cL$ is rapidly mixing.
\end{lemma}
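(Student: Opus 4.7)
The plan is to combine the exponential decay in relative entropy implied by \eqref{MLSI} (via Grönwall's lemma) with the quantum Pinsker inequality, and then to control the initial relative entropy $D(\rho\|\sigma^\Lambda)$ by a polylogarithmic function of $\dim\cH_\Lambda$.

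First, since $\alpha:=\alpha(\cL)>0$ by hypothesis, there exists $\Lambda_0$ such that $4\alpha(\cL_\Lambda)\ge 2\alpha$ for all $\Lambda\supseteq \Lambda_0$. As already noted in the paragraph preceding the lemma, the differential inequality \eqref{MLSI} together with the semigroup property integrates, via Grönwall, to
$$D(\e^{t\cL_{\Lambda*}}(\rho)\|\sigma^\Lambda)\le\e^{-2\alpha t}\,D(\rho\|\sigma^\Lambda),$$
where primitivity has been used to identify $E_{*}(\rho)=\sigma^\Lambda$ for every initial state $\rho$. Next, quantum Pinsker's inequality $\|\eta-\sigma\|_1\le\sqrt{2D(\eta\|\sigma)}$ applied with $\eta=\e^{t\cL_{\Lambda*}}(\rho)$ and $\sigma=\sigma^\Lambda$ upgrades this entropic decay to a trace-norm decay,
$$\|\e^{t\cL_{\Lambda*}}(\rho)-\sigma^\Lambda\|_1\le\sqrt{2D(\rho\|\sigma^\Lambda)}\,\e^{-\alpha t}.$$

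It remains only to bound the prefactor $D(\rho\|\sigma^\Lambda)$ uniformly in $\rho$. Using non-negativity of the von Neumann entropy,
$$D(\rho\|\sigma^\Lambda)\le-\tr[\rho\ln\sigma^\Lambda]\le\ln\!\bigl(1/\lambda_{\min}(\sigma^\Lambda)\bigr),$$
so it suffices to know that $\lambda_{\min}(\sigma^\Lambda)\ge\e^{-C|\Lambda|}$ for some constant $C$ depending only on the uniform strength $J$ and on the locality of $\cL$. Since $|\Lambda|=\ln\dim\cH_\Lambda/\ln d_\cH$, this gives $D(\rho\|\sigma^\Lambda)=O(\ln\dim\cH_\Lambda)$, and substituting into the previous display yields \eqref{RM} with $\delta=1/2$ and $\gamma=\alpha$, as desired.

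The only non-routine step is the uniform lower bound on $\lambda_{\min}(\sigma^\Lambda)$: this is immediate when $\sigma^\Lambda$ is a Gibbs state of a geometrically local Hamiltonian with $\|H_\Lambda\|_\infty=O(|\Lambda|)$ (the case of interest in this paper, since then $\sigma^\Lambda\ge\e^{-\beta\|H_\Lambda\|_\infty}/\dim\cH_\Lambda$), but in the abstract primitive uniform setting it must be extracted from the uniform bound on $\|\cL_{u,r*}\|_{1\to 1,\operatorname{cb}}$ together with standard perturbation/mixing-time estimates. Everything else in the argument (Grönwall, Pinsker, and the logarithmic bound on the relative entropy) is off-the-shelf.
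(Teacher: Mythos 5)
Your argument is correct and is exactly the route the paper has in mind: it states this lemma with a citation to [KT13] and the preceding paragraph explicitly sketches the same chain (Gr\"{o}nwall on \eqref{MLSI}, then Pinsker, then the bound $D(\rho\|\sigma^\Lambda)\le\ln(1/\lambda_{\min}(\sigma^\Lambda))=O(|\Lambda|)$). The caveat you flag about $\lambda_{\min}(\sigma^\Lambda)$ is genuine in the fully abstract primitive setting but is automatic throughout the paper, where the invariant states are Gibbs states of uniformly bounded, geometrically local Hamiltonians.
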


\section{Clustering of correlations}\label{clustercorrela}
In this section we discuss various relevant notions of clustering of correlations, both for classical and quantum systems, and their relation to logarithmic Sobolev inequalities. Given the zoo of different notions of clustering present in the literature and the notation and language barriers arising from the different communities working on this subject, we start with a thorough review of the main concepts. But in~\Cref{clustering} we also show new connections between the recently introduced notion of analyticity after measurement~\cite{harrow2020classical} and strengthenings of  the standard notions of clustering. Furthermore, for the special case of the previously introduced Schmidt semigroup, we also derive a version of clustering needed in the proof of recent approximate tensorization results for the relative entropy~\cite{bardet2020approximate}.

\subsection{Dobrushin and Shlosman's mixing condition}

As mentioned in the introduction, the classical Glauber dynamics over a classical system is known to satisfy a logarithmic Sobolev inequality with constant independent of the lattice size if and only if correlations between two regions, as measured in the Gibbs equilibrium state, decay exponentially fast with the distance separating them. This general notion of \textit{decay of correlations} has many equivalent formulations in the classical setting. These were first put forward in the seventies with the ground-breaking works of Dobrushin and Shlosman.

Although refined results about unicity and mixing properties of Gibbs states are often model dependent, Dobrushin's original introduction \cite{dobrushin1974markov} of a widely applicable criterion, nowadays known as the \textit{Dobrushin uniqueness condition}, opened the door to the possibility of a global analysis of the equilibrium theory of spin systems. Given a potential $\Phi$, this criterion ensures the uniqueness of the Gibbs state in the thermodynamic limit whose local specifications on region $A$ given boundary condition $\omega_{A^c}$ correspond to $\omega_A\mapsto Z^{-1}{\e^{-\beta H(\omega_A,\omega_{A^c})}}$. This criterion was shown to hold at high enough temperature for a large class of models, including translation invariant, finite range interactions. Gross showed in \cite{gross1981absence} that Dobrushin's original condition implies that the mapping taking a potential to its associated Gibbs measure is twice differentiable. Later, Dobrushin and Shlosman \cite{Dobrushin1985,Dobrushin1985d} introduced a multi-site generalization of the Dobrushin uniqueness condition, known as \textit{Dobrushin-Shlosman uniqueness condition}, which also implies the uniqueness of the Gibbs state. However, none of these conditions imply the analytical dependence of the Gibbs measure to its corresponding potential. In their attempt to answer this problem, Dobrushin and Shlosman introduced twelve statements equivalent to analyticity, one of which being usually referred to as \textit{Dobrushin-Shlosman's mixing condition} \cite{Dobrushin1985,dobrushin1987completely}: There exists $\gamma\in (0,\infty)$ such that, for any $\emptyset \ne A\subset \Lambda\ssubset \ZZ^d$, there exists a constant $C(A)\in [0,\infty)$ such that, for any function $f$ supported in $A$ and $k\in \partial_r\Lambda$
	\begin{align}\label{DSM}\tag{DSM}
	\sup_{\substack{\omega,\eta\\\omega_j=\eta_j\forall j\ne k}}\big| \mathbb{E}^\eta_\Lambda[f]-\mathbb{E}^\omega_\Lambda[f]  \big|\le C(A)\,\vertiii f\,\e^{-\gamma \operatorname{dist}(k,A)}\,.
	\end{align}
	for some constant $C(A)$. Here, $\vertiii f :=\sum_{k\in \ZZ^d}\| f-\nu_k(f)\|$, where $\nu_k$ denotes the uniform measure at site $k$.

\subsection{Decay of correlations: the dynamical theory}

Since the 90's, Gibbs states have also attracted a lot of attention from the point of view of their dynamical properties \cite{liggett2012interacting}. Given a potential, one can construct a  Markov process, usually called \textit{Glauber dynamics}, whose reversing states coincide with the set of Gibbs states for the given potential (cf. \Cref{sec-examplessemigroups}). In particular, primitivity of the Glauber dynamics ensures the uniqueness of the Gibbs measure. In this case, Holley and Stroock \cite{holley1976,holley1989uniform} made the key observation that rapid uniform convergence of the evolution further ensures the decay of correlations at equilibrium. Their proof relied on finite propagation speed arguments. Roughly speaking, the probability that two distant regions correlate during a finite time interval is exponentially small in the distance separating them. Then, rapid convergence of the dynamics enables to transfer this property to the Gibbs state in the limit of large times.

The program of showing logarithmic Sobolev inequalities for non-trivial Gibbs measures was initiated by the work of Carlen and Stroock \cite{10.1007/BFb0075726} (see also \cite{deuschel1990hypercontractivity}). However, their techniques could only handle very special models at high temperatures. Later, Zegarlinski took a different approach \cite{ZEGARLINSKI199277,zegarlinski1990log,zegarlinski1990logb} with the goal of relating the existence of a logarithmic Sobolev inequality to the equilibrium theory of Dobrushin and Shlosman. This program was completed in a series of articles \cite{STROOCK1992299,stroock1992b,stroock1992c} where the authors showed the equivalence of the LSI and Dobrushin and Shlosman's mixing condition. Essentially, Stroock and Zegarlinski proved the equivalence between the following four notions:
\begin{itemize}
	\item[(i)] \textit{Dobrushin-Shlosman mixing condition}: Condition \eqref{DSM} holds for some constant $\gamma\in (0,\infty)$.
	\item[(ii)] \textit{Logarithmic Sobolev inequality}: the logarithmic Sobolev constant is lower bounded away from $0$ uniformly in any finite subset $\Lambda\ssubset \ZZ^d$ as well as in the boundary conditions $\omega_{\Lambda^c}$ chosen.
	\item[(iii)] \textit{Strong ergodicity}: There exist constants $\eps>0$ and $K<\infty$ such that for all function $f$ of the configurations, any subset $\Lambda\ssubset \ZZ^d$ and any boundary conditions $\omega_{\Lambda^c}$, 
	\begin{align}\label{SE}\tag{SE}
	\| \e^{t L_\Lambda^{\omega_{\Lambda^c}}}(f)-\mathbb{E}^{\omega_{\Lambda^c}}_{\Lambda}(f)\|\le K\,\vertiii f \,\e^{-\eps t},~~~ t\in(0,\infty)\,.
	\end{align}
	\item[(iv)] \textit{Spectral gap}: the spectral gap is lower bounded away from $0$ uniformly in the finite subset $\Lambda\ssubset \ZZ^d$ as well as in the boundary conditions $\omega_{\Lambda^c}$ chosen.
\end{itemize}
Moreover, since \eqref{DSM} typically holds above a threshold temperature, the above equivalence establishes a dynamical phase transition between low and high temperature regimes: indeed, the spectral gap estimate implies a mixing time in $\mathcal{O}(\operatorname{poly}(|\Lambda|))$ for fixed inverse temperature $\beta$. On the other hand, strong ergodicity implies a mixing time in $\mathcal{O}(\operatorname{polylog}(|\Lambda|))$, i.e. rapid mixing, since for $f$ supported on $\Lambda\ssubset \ZZ^d$,
\begin{align*}
\vertiii f=\sum_{k\in\ZZ^d}\|f-\nu_k(f)\|\overset{(1)}{=}\sum_{k\in\Lambda}\|f-\nu_k(f)\|\le 2 |\Lambda| \|f\|
\end{align*}
where $(1)$ simply follows from the fact that $f$ only acts on configurations in $\Lambda$, so that $f=\nu_k(f)$ for $k\in\Lambda^c$. Hence \eqref{SE}$\Rightarrow$\eqref{RM}. Moreover, it is a standard exercise to show that (iii)$\Rightarrow $(iv) by simply showing the stronger implication \eqref{RM}$\Rightarrow$(iv) via interpolation of $\mathbb{L}_p$ spaces (see e.g. Lemma 6 in \cite{temme2015fast}). This fact, which directly extends to quantum lattice spin systems, also establishes that the mixing time scales either logarithmically with system size, or at least exponentially.

Later, the assumption of the existence of any of the above statements uniformly for any finite set $\Lambda$ was relaxed to that for regular volumes (i.e. volumes which are unions of translations of a sufficiently large given cube) by Lu and Yau \cite{lu1993}, as well as Martinelli and Oliveri \cite{martinelli1994approach,martinelli1994approacha}. These weakened assumptions, referred to as \textit{strong mixing} in \cite{martinelli1994approach,martinelli1994approacha}, permitted to extend the domain of validity of logarithmic Sobolev inequalities to a larger class of potentials. For more information on this, we point the interested reader to the excellent lecture notes \cite{Guionnet2003,Martinelli1999}.

More recently, new and arguably simpler proofs of (i)$\Rightarrow $(ii) and (i)$\Rightarrow$ (iv) (or analogously their relaxations to regular volumes), based on an approximate tensorization of the variance and relative entropy, appeared in \cite{cesi2001quasi,bertini2002spectral,[D02]}. In \cite{cesi2001quasi} for instance, it is shown that \eqref{DSM} implies the existence of  constants $c$ and $\xi$ such that, for any two intersecting sets $C,D\ssubset \ZZ^d$, with $\Lambda:=C\cup D$, and all boundary condition $\omega_{\Lambda^c}$,
\begin{align}\tag{$\mathbb{L}_1\to\mathbb{L}_\infty$}\label{l1linfty}
\|\mathbb{E}_D\circ \mathbb{E}_C-\mathbb{E}^{\omega_{\Lambda^c}}_{\Lambda}[f]\|_{\mathbb{L}_\infty(\mu^{\Lambda,\omega_{\Lambda^c}})}\le c\,\e^{-\frac{\operatorname{dist}(\Lambda\backslash C,\Lambda\backslash D)}{\xi}}\,\|\mathbb{E}_C[f]\|_{\mathbb{L}_1(\mu^{\Lambda,\omega_{\Lambda^c}})}\,,
\end{align}
where $\mu^{\Lambda,\omega_{\Lambda^c}}$ denotes the Gibbs measure restricted to region $\Lambda$, with boundary condition $\omega_{\Lambda^c}$. This condition was then used to retrieve an approximate tensorization of the relative entropy, which in turn allows for an iterative procedure in order to prove the logarithmic Sobolev inequality by reduction to smaller regions.

\subsection{Quantum clustering of correlations}\label{clustering}

In the recent years, the classification of quantum lattice spin systems in and out of equilibrium has been the subject of active research within the community of mathematical physicists and that of quantum information theorists. In \cite{Kastoryano2013a}, a decay of correlations similar in spirit to \eqref{DSM} was found under the condition of positivity of the spectral gap independent of the system size, based on Lieb-Robinson bounds. Under the stronger assumption of a positive logarithmic Sobolev constant, \cite{Kastoryano2013a,brandao2015area} derived a stronger clustering in mutual information, leading to area laws implying an efficient classical approximate description as matrix product operators. Similar techniques were also used to prove the stability of rapidly mixing local quantum Markov semigroups against polynomially decaying error terms in the generator in \cite{cubitt2015stability}. More recently, the equivalence (i)$\Leftrightarrow$(iv) in the quantum setting was addressed by Kastoryano and Brand\~{a}o \cite{[BK16]}. There, the authors showed the equivalence between the positivity of the spectral gap independently of the lattice size and the following analogue of  \eqref{DSM} for frustration-free conditional expectations: for any $\Lambda \ssubset \ZZ^d$ a family of local specifications $\{E_A\}_{A\subset \Lambda}$ corresponding to the Gibbs state $\sigma^{\Lambda}\in \cD(\cH_\Lambda)$ satisfies a \textit{strong $\mathbb{L}_2$ clustering of correlations} if for any $A,B\subseteq\Lambda$ with $A\cap B\ne\emptyset$ and $A\cup B=\Lambda$, there exist constants $K,\gamma>0$ such that for any observable $X\in\cB(\cH_\Lambda)_{\operatorname{sa}}$:
\begin{align}\label{L2L2}\tag{$\operatorname{sq}\mathbb{L}_2$}
\|E_{B}\circ  E_A[X]-E_{\Lambda}[X]\|_{\mathbb{L}_2(\sigma^\Lambda)}\le K\, \e^{-\gamma \operatorname{dist}(B\backslash A, A\backslash B)}\,\|X\|_{\mathbb{L}_2(\sigma^{\Lambda})}\,.
 \end{align} 
Simple equivalence of $\mathbb{L}_p$ norms arguments can be used to show that for classical systems, the condition of strong $\mathbb{L}_2$ clustering is implied by \eqref{l1linfty}. Surprisingly enough, the equivalence between (ii) and (iv) above also provides the opposite implication. The direction \eqref{L2L2} implies spectral gap was shown by extending the classical proof of \cite{bertini2002spectral}, whereas the opposite implication is a consequence of the detectability lemma (see \cite{Aharonov_2009}). In \cite{[BK16]}, \eqref{L2L2} was also shown to hold for one dimensional systems, and for any lattice system at high enough temperature. Let us however stress the importance of not considering a Hamiltonian interaction part in the generator of the semigroup: previous work found examples of semigroups with vanishing gap even at infinite temperature due to the presence of internal interactions \cite{cai2013algebraic}. 

Before moving to the definition of mixing that we will use to derive our main result, let us briefly mention some interesting related work on the fast preparation of quantum Gibbs states: In \cite{majewski1995quantum,majewski1996quantum}, Majewski and Zegarlinski found similar conditions as those of Dobrushin and Shlosman under which the quantum Heat-bath generator is rapidly mixing. These conditions were typically shown to hold at high temperature by Kastoryano and Temme in \cite{temme2015fast}. More recently, rigorous connections between the analyticity of the partition function of the Gibbs state, its estimation by means of a classical algorithm, and decay of correlations, were found in \cite{harrow2020classical}. These results can be interpreted as the first quantum extensions of the seminal work of Dobrushin and Shlosman beyond the 1D case \cite{araki1969gibbs} or the high temperature regime \cite{kliesch2014locality}.

Brand\~{a}o and Kastoryano \cite{brandao2019finite} derived an efficient quantum dissipative algorithm for the preparation of quantum Gibbs states of a possibly non-commuting potential satisfying a uniform approximate Markov property, under a condition of \textit{uniform clustering of correlations} (see \Cref{clustering}). These two conditions were shown to hold at high enough temperature \cite{kuwahara2020clustering,kliesch2014locality}, hence proving the existence of efficient Gibbs samplers in that regime.

Although the algorithm of \cite{brandao2019finite} has constant depth, it employs log-size gates. On the other hand, as rightfully pointed by the authors of that paper, proving the logarithmic Sobolev inequality for a local Gibbs sampler would provide an algorithm which would converge with time scaling logarithmically with the system size, with local Lindblad operators. Therefore,  our main result can also be turned into an algorithm that efficiently prepares the Gibbs state of a commuting potential with local channels only and logarithmic depth.

In this section, we extend some of the equivalent  mixing conditions of  Dobrushin and Shlosman to the quantum realm. In particular, we show that the following notions of quantum clustering all follow from the recently introduced notion of \textit{analyticity after measurement} \cite{harrow2020classical}. Similar statements can be found in \cite{kliesch2014locality,[BK16],harrow2020classical}. In what follows, given the Gibbs state $\sigma^\Lambda$ with ``open boundary conditions'' over region $\Lambda$ and a test $0\le P_A\le \Id$ supported on sub-region $A$, we denote by 
\begin{align*}
\sigma^{\Lambda, P_A}:=\frac{\sqrt{P_A}\sigma^{\Lambda}\sqrt{P_A}}{\tr\big[ P_A\sigma^\Lambda\big]}\,
\end{align*}
the post-selected state after test $P_A$ has occurred. This definition generalizes the concept of a closed boundary condition to the quantum setting. Then,
\begin{definition}[Clustering of correlations]
A potential $\{\Phi(X)\}_{X\ssubset \ZZ^d}$ satisfies  
\begin{itemize}
	\item[(i)] the \textit{uniform $\mathbb{L}_\infty$ clustering of correlations} if there exist constants $c\ge 0$ and $\xi>0$ such that, for any $A,B\subset \Gamma\subset  \Lambda$, and all $X_A\in\cB(\cH_{A})_{\operatorname{sa}}$, $Y_B\in\cB(\cH_{B})_{\operatorname{sa}}$ and tensor product of local tests $P_{\partial \Gamma}\in \cB(\cH_{\partial \Gamma})_+$:
\begin{align}\tag{q$\mathbb{L}_\infty$}\label{Linfty}
\operatorname{Cov}_{\sigma^{\Lambda,P_{\partial\Gamma}}}(X_A,Y_B),\,\operatorname{Cov}^{(0)}_{\sigma^{\Lambda,P_{\partial \Gamma}}}(X_A,Y_B)\le c\,|\Gamma|\,\|X_A\|_{\infty}\,\|Y_B\|_{\infty}\,\e^{-\frac{\dist(A ,B)}{\xi}}\,.
\end{align}
\item[(ii)] the \textit{uniform $\mathbb{L}_2$ clustering of correlations} if there exist constants $c\ge 0$ and $\xi>0$ such that, for any $A,B\subset \Gamma\subset \Lambda$, and all $X_A\in\cB(\cH_{A})_{\operatorname{sa}}$, $Y_B\in\cB(\cH_{B})_{\operatorname{sa}}$ and tensor product of local tests $P_{\partial \Gamma}\in \cB(\cH_{\partial \Gamma})_+$:
\begin{align}\tag{q$\mathbb{L}_2$}\label{qL2}
\operatorname{Cov}_{\sigma^{\Lambda.P_{\partial\Gamma}}}(X_A,Y_B)\le c\,|\Gamma|\,\|X_A\|_{\mathbb{L}_2(\sigma^{\Lambda,P_{\partial \Gamma}})}\,\|Y_B\|_{\mathbb{L}_2(\sigma^{\Lambda, P_{\partial \Gamma}})}\,\e^{-\frac{\dist(A,B)}{\xi}}\,.
\end{align}
\item[(iii)] the \textit{uniform $\mathbb{L}_2^{(0)}$ clustering of correlations} if there exist constants $c\ge 0$ and $\xi>0$ such that, for any $A,B\subset \Lambda$, and all $X_A\in\cB(\cH_{A})_{\operatorname{sa}}$, $Y_B\in\cB(\cH_{B})_{\operatorname{sa}}$ and tensor product of local tests $P_{\partial \Gamma}\in \cB(\cH_{\partial \Gamma})_+$:
\begin{align}\tag{q$\mathbb{L}_2^{(0)}$}\label{qL20}
\operatorname{Cov}^{(0)}_{\sigma^{\Lambda,P_{\partial\Gamma }}}(X_A,Y_B)\le c\,|\Gamma|\,\Big(\tr\big[\sigma^{\Lambda,P_{\partial\Gamma}} X_A^\dagger X_A\big]\Big)^{\frac{1}{2}}\,\Big(\tr\big[\sigma^{\Lambda,P_{\partial\Gamma}} Y_B^\dagger Y_B\big]\Big)^{\frac{1}{2}} \,\e^{-\frac{\dist(A,B)}{\xi}}\,.
\end{align}
\item[(iv)] the \textit{quantum Dobrushin-Shlosman condition} (qIIId) if there exist constants $c\ge 0$ and $\xi>0$ such that, for any $A,B\subset\Gamma\subset  \Lambda$, any $N_A\in\cB(\cH_A)_+$, $\|N_A\|_{\mathbb{L}_1(\sigma^{\Lambda,P_{\partial\Gamma}})}\le 1$, $P_B,P_B'\in\cB(\cH_D)_+$ and tensor product of local tests $P_{\partial \Gamma}\in \cB(\cH_{\partial \Gamma})_+$, $0\le P_B,P_B',P_{\partial\Gamma}\le \Id$:
\begin{align}\label{qIIId}\tag{qIIId}
\big|  \tr[\sigma^{\Lambda, P_BP_{\partial\Gamma}}N_A]  -\tr[\sigma^{\Lambda,P_B^{\prime}P_{\partial\Gamma}}N_A]\,\big|\le  c\,|\Gamma|\,\e^{-\frac{\dist(A,B)}{\xi}}\tr\big[  \sigma^{\Lambda,P_B^{\prime}P_{\partial\Gamma}}N_A \big]\,,
\end{align}

In particular, taking the supremum over tests $0\le N_A\le\Id$, we have the following \textit{local indistinguishability}:
\begin{align}\tag{$\operatorname{qIIIc}$}
\big\|  \tr_{A^C}(\sigma^{\Lambda, P_BP_{\partial\Gamma}})  -\tr_{A^C}(\sigma^{\Lambda,P_B^{\prime}P_{\partial\Gamma}})\,\big\|_1\le  c\,|\Gamma|\,\e^{-\frac{\dist(A,B)}{\xi}}\,.
\end{align}

\end{itemize}

\end{definition}

\begin{remark}\label{remarkboundarycond}
Observe that our quantum Dobrushin-Shlosman conditions are slightly stronger than the ones enunciated for instance in \cite{dobrushin1987completely}, since the ``boundary conditions'' $P_D$ and $P_D'$ are allowed to differ on more than one site when $|D|>1$. 
\end{remark}

 \begin{figure}[h!]
	\centering
	\includegraphics[width=1\linewidth]{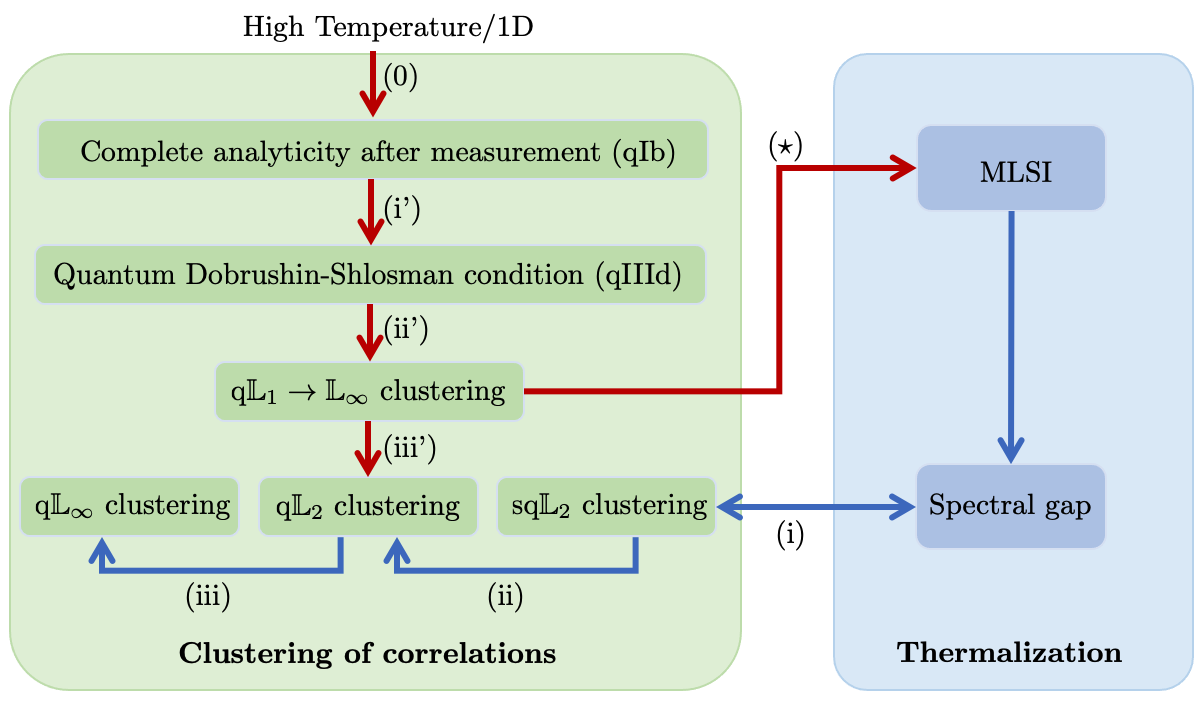}
	\caption{Relations between different notions of clustering of correlations and their link to thermalization times. (0) is proved in Theorem \ref{thm:analyticity-high-temperature} in the high temperature regime. (i) is the main result of \cite{[BK16]}. (i') and (iii) are proved in Theorem \ref{thm:differentclusterings} whereas (ii) is proved in \cite{[BK16]}. (ii') is proved in Proposition \ref{propDScorrdecay}. Finally ($\star$) is the subject of Section \ref{sec:main}.   } 
	\label{figurezoo}
\end{figure}

Uniform $\mathbb{L}_\infty$ clustering is the condition usually considered in the literature \cite{kliesch2014locality,brandao2019finite,harrow2020classical}, whereas the strong $\mathbb{L}_2$ clustering \eqref{L2L2} was shown in \cite{[BK16]} to be equivalent to the above uniform $\mathbb{L}_2$ clustering when $P_{\partial\Gamma}=\Id$ for commuting 1D Hamiltonians. While the fact that (IIId)$\Rightarrow$\eqref{Linfty} follows relatively easily, Dobrushin and Shlosman could also prove the opposite direction for translation invariant interactions by the introduction of an equivalent condition of \textit{complete analyticity} (conditions (Ia)-(Ic) in \cite{dobrushin1987completely}). Recently, the condition of uniform $\mathbb{L}_\infty$ clustering was shown to be a consequence of the following quantum generalization of Dobrushin and Shlosman's complete analyticity condition in \cite{harrow2020classical}, where the authors also extended the reverse direction to the case of possibly non-translation invariant classical interactions. 
	\begin{definition}[Analyticity after measurement, see Condition 1 in \cite{harrow2020classical}]\label{def:analyticity_after_measurement}  Given a geometrically-local Hamiltonian $H$, its free energy is said to be $\delta$\textit{-analytic} for all $\beta\in [0,\beta_c)$ if it is analytic in the open ball of radius $\delta$ around $\beta$ and if there exists a constant $c$ such that, for any operator $N\ge 0$ with $\|N\|_\infty=1$,
		\begin{align}\tag{$\operatorname{qIb}$}\label{qCA}
		\Big|\ln\Big(\tr\Big[\e^{-\sum_{X\subseteq \Lambda}z_X \Phi(X)}N\Big]\Big)\Big|\le c\,|\Lambda|
		~~~~\forall z_X\in\CC\,, |z_X-\beta|\le \delta \,.
		\end{align}
	\end{definition}
	
	This property can be shown to hold above a critical temperature, with a similar proof to the analogous fact for the partition function  in \cite{harrow2020classical}. We leave the proof of the following result to Appendix \ref{ap:Analyticity}.

	\begin{theorem}\label{thm:analyticity-high-temperature}
Let $H$ be a geometrically-local Hamiltonian with range $\kappa$, growth constant $g$ and local interactions with norm at most $h$.  Given $0< \delta <\frac{1}{5 \operatorname{e} g h \kappa }$, we denote $\beta_c=\frac{1}{5 \operatorname{e} g h \kappa } - \delta$. Then, for all $\beta \in [0,\beta_c)$ and $N \geq 0$ with $\|N\|_\infty=1$, the function $\mathbf{z}\mapsto  \ln\left( \tr \left[\operatorname{e}^{- \sum_{X \subset \Lambda} z_X \Phi(X)} N \right] \right)$ is analytic and  bounded in modulus by $( \operatorname{e}^2 g h (\beta + \delta) + \ln(d) ) \abs{\Lambda}$. 
\end{theorem}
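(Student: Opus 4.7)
The plan is to mimic the cluster/abstract-polymer expansion used in \cite{harrow2020classical} for the log-partition function, adapting it so that the reference functional is $A\mapsto \tr[AN]$ rather than $A\mapsto \tr[A]$. The presence of $N$ with $0\le N\le\Id$ should only affect the constant (order-zero) term, producing the additive $\ln(d)|\Lambda|$ in the bound, while the interaction-dependent contribution is essentially unchanged and yields the $e^{2}gh(\beta+\delta)|\Lambda|$ piece.

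Concretely, I would first factor
\[
  \ln\tr\!\big[\e^{-\sum_{X}z_{X}\Phi(X)}N\big]
  =\ln\tr[N]+\ln\!\frac{\tr[\e^{-\sum_{X}z_{X}\Phi(X)}N]}{\tr[N]}\,.
\]
The first summand is trivially analytic and bounded in modulus by $|\Lambda|\ln(d)$ since $0\le N\le \Id$ implies $\tr[N]\in[0,d^{|\Lambda|}]$ (so we may as well assume $\tr[N]>0$; otherwise the statement is vacuous). For the second summand I would Taylor expand the exponential,
\[
  \frac{\tr[\e^{-\sum_{X}z_{X}\Phi(X)}N]}{\tr[N]}
  =\sum_{n\ge 0}\frac{(-1)^{n}}{n!}\sum_{X_{1},\ldots,X_{n}} z_{X_{1}}\cdots z_{X_{n}}\,\omega_{N}\!\big(\Phi(X_{1})\cdots\Phi(X_{n})\big),
\]
where $\omega_{N}(\cdot):=\tr[\,\cdot\,N]/\tr[N]$ is a state. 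Group the multi-indices $(X_{1},\ldots,X_{n})$ by the connected components of the hypergraph whose hyperedges are the $X_{i}$'s: this rewrites the series as a polymer partition function with polymers supported on connected clusters of interaction hyperedges, with activities $z_{\gamma}$ that depend on $\omega_{N}$ only through expectations of products of at most $|\gamma|$ local operators, each of operator norm at most $h$.

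Next I would apply an abstract cluster (Kotecký--Preiss) expansion to obtain
\[
  \ln\!\frac{\tr[\e^{-\sum_{X}z_{X}\Phi(X)}N]}{\tr[N]}
  =\sum_{C}\Phi^{T}(C)\prod_{\gamma\in C}z_{\gamma},
\]
valid as long as the activities satisfy a standard smallness condition. The activities of a polymer $\gamma$ on $n$ hyperedges are bounded by $(|z|h)^{n}/n!$ (using $|\omega_{N}(\Phi(X_{1})\cdots\Phi(X_{n}))|\le h^{n}$, which holds for any state); the number of connected hypergraphs on $n$ hyperedges rooted at a given site is controlled by the growth constant $g$ and locality $\kappa$ via a tree-graph bound of the form $(e\kappa g)^{n-1}$. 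Inserting these estimates, the Kotecký--Preiss condition becomes $|z|\,5 e g h\kappa<1$, which is exactly why the threshold $\beta_{c}=(5egh\kappa)^{-1}-\delta$ shows up for $|z_{X}-\beta|\le\delta$.

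Finally I would sum the absolute cluster weights: since every cluster $C$ contains at least one site of $\Lambda$, the sum over clusters containing a fixed site $x_{0}$ is a convergent geometric-type series bounded by $e^{2}gh(\beta+\delta)$, and summing over $x_{0}\in\Lambda$ gives the $e^{2}gh(\beta+\delta)|\Lambda|$ contribution. Combining with the $|\Lambda|\ln(d)$ bound on $\ln\tr[N]$ yields the claimed modulus bound, while the absolute convergence inside $|z_{X}-\beta|<\delta$ gives analyticity. The main obstacle is the cluster/tree-graph bookkeeping in the non-commuting setting: verifying that the operator-valued weights $\omega_{N}(\Phi(X_{1})\cdots\Phi(X_{n}))$ still admit the product bound $h^{n}$ uniformly in $N$, and that the combinatorial factor coming from different orderings of the $X_{i}$'s is absorbed into the standard tree-graph constant $(e\kappa g)^{n-1}$ appearing in the Kotecký--Preiss criterion.
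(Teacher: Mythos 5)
Your guiding heuristic---that $N$ should only contribute an additive $\ln(d)\,\abs{\Lambda}$ while the interaction part behaves as for the partition function---matches the final bound, but the mechanism you propose does not deliver it. The gap is at the step where you ``rewrite the series as a polymer partition function.'' After grouping the multi-indices $(X_1,\dots,X_n)$ by connected components, the weight of a compatible (disjoint) family of polymers equals the product of the individual weights only if the reference functional factorizes across disjoint supports, i.e.\ only if $\omega_N(A\otimes B)=\omega_N(A)\,\omega_N(B)$ for $A,B$ supported on different components. This holds for the normalized trace (the case $N=\Id$, which is the ordinary partition function), but fails for a general $N\ge 0$ with $\|N\|_\infty=1$---for instance a projection onto an entangled state---and the theorem must cover such $N$. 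Without this multiplicativity there is no polymer partition function, so the Kotecký--Preiss criterion and the identity $\ln Z=\sum_{C}\Phi^{T}(C)\prod_{\gamma\in C}z_{\gamma}$ are simply not available. The non-factorization of $\omega_N$ is precisely the central difficulty in passing from $\tr[\operatorname{e}^{-H}]$ to $\tr[\operatorname{e}^{-H}N]$; it is not a bookkeeping issue absorbed by the tree-graph constant.

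The paper's proof takes a different decomposition that avoids this. Writing $g_{\mathbf z}(\Lambda)=\tr[\operatorname{e}^{-\sum_X z_X\Phi(X)}N]$, it telescopes $\ln g_{\mathbf z}(\Lambda)=\sum_j\ln\bigl(g_{\mathbf z}(\Lambda_{j+1})/g_{\mathbf z}(\Lambda_j)\bigr)$ over a chain of sublattices each obtained by removing a single site, derives from the Taylor expansion a recursion of the form $g_{\mathbf z}(\Lambda)=g_{\mathbf z}(\Lambda\setminus\{x_0\})+\sum_{\mathcal X\ni x_0}W_{\mathbf z}(\mathcal X)\,g_{\mathbf z}(\Lambda\setminus\operatorname{supp}(\mathcal X))$ in which only the clusters through the distinguished site $x_0$ are peeled off and $N$ remains inside the smaller-volume functionals, and then proves $\bigl|\ln\bigl(g_{\mathbf z}(\Lambda)/(d\,g_{\mathbf z}(\Lambda\setminus\{x_0\}))\bigr)\bigr|\le \operatorname{e}^2 gh(\beta+\delta)$ by strong induction on $\abs{\Lambda}$, using the counting bound $g^{\abs{\mathcal X}}$ for connected sets and the weight bound $(\operatorname{e}^{(\beta+\delta)h}-1)^{\abs{\mathcal X}}$. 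In particular the $\ln(d)\abs{\Lambda}$ term arises from the per-site normalization $1/d$ carried through the induction, not from $\ln\tr[N]$ as in your factorization. To salvage your route you would either have to restrict $N$ to tensor products of local tests (not permitted by the statement) or replace the abstract polymer expansion by exactly this kind of site-removal recursion with an induction hypothesis on smaller volumes.
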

	
	Furthermore, the proof of Theorem 31 of \cite{harrow2020classical} can be readily adapted to prove the implication \eqref{qCA}$\Rightarrow$\eqref{qIIId}, which is a part of the following result.

\begin{theorem}\label{thm:differentclusterings}
	For a given local commuting potential $\{\Phi(X)\}_{X\ssubset\ZZ^d}$, the following chain of implications holds:
	\begin{align*}
\eqref{qCA}\Longrightarrow	\eqref{qIIId}\Longrightarrow\eqref{qL2}\Longleftrightarrow\eqref{qL20}\Longrightarrow \eqref{Linfty}\,. 
	\end{align*}
In the case of a local but non-commuting potential, the same conclusion can be reached after replacing $|\Gamma|$ by $|\Lambda|$ in the upper bounds in the definitions of \eqref{qIIId}, \eqref{qL2}, \eqref{qL20} and \eqref{Linfty}.	Finally, the reverse implication \eqref{Linfty}$\Rightarrow$\eqref{qCA} holds in the case of a classical potential.
\end{theorem}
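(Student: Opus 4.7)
The plan is to treat the chain implications one by one, starting from the strongest property \eqref{qCA} and systematically distilling weaker correlation bounds. A unifying observation is that the $|\Gamma|$ prefactor (rather than $|\Lambda|$) for commuting potentials is gained at every step because a commuting Hamiltonian factorizes into contributions supported inside and outside $\Gamma$, so interactions supported strictly outside $\Gamma$ cancel in ratios of partition functions and cannot contribute to the bound.

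For \eqref{qCA}$\Rightarrow$\eqref{qIIId}, I would adapt the argument of \cite[Theorem~31]{harrow2020classical}. Define, for $\mathbf z = (z_X)_{X\subset \Lambda}$ in the polydisc of radius $\delta$ around $\beta$,
\[
F_N(\mathbf z) := \ln \tr\!\bigl[ \e^{-\sum_X z_X \Phi(X)} N\bigr],
\]
with $N \in \{P_B P_{\partial\Gamma},\ \sqrt{P_B P_{\partial\Gamma}}\,N_A\,\sqrt{P_B P_{\partial\Gamma}}\}$ and similarly with $P_B'$. Then \eqref{qCA} supplies an $\mathcal{O}(|\Gamma|)$ bound on each $F_N$ in the commuting case. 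The quantity of interest, the difference of the two log-ratios at $\mathbf z = \beta \mathbf 1$, is obtained by integration along a path morphing $P_B$ into $P_B'$; the associated partial derivatives with respect to $z_X$ are controlled by Cauchy's estimates on the polydisc. For interaction terms $\Phi(X)$ whose support is disjoint from a minimal geodesic between $A$ and $B$, a symmetry/cancellation argument shows that the two derivatives coincide, so only terms on a bridging surface contribute; summing them geometrically yields a decay factor $\e^{-\dist(A,B)/\xi}$ with $\xi$ determined by $\delta$ and the local growth constant.

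For \eqref{qIIId}$\Rightarrow$\eqref{qL2}, I would spectrally decompose the self-adjoints $X_A = \sum_x x\, P^A_x$ and $Y_B = \sum_y y\, Q^B_y$. After centering, $\operatorname{Cov}_{\sigma^{\Lambda,P_{\partial\Gamma}}}(X_A,Y_B)$ rewrites as a bilinear combination of terms of the form $\bigl(\tr[\sigma^{\Lambda,Q^B_y P_{\partial\Gamma}}P^A_x] - \tr[\sigma^{\Lambda,P_{\partial\Gamma}}P^A_x]\bigr)\tr[\sigma^{\Lambda,P_{\partial\Gamma}}Q^B_y]$, which is exactly what \eqref{qIIId} (with $N_A = P^A_x$, $P_B = Q^B_y$, $P_B' = \Id$) bounds by $c|\Gamma|\,\e^{-\dist(A,B)/\xi}\,\tr[\sigma^{\Lambda,P_{\partial\Gamma}}P^A_x]\tr[\sigma^{\Lambda,P_{\partial\Gamma}}Q^B_y]$. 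A Cauchy--Schwarz reassembly over the spectral weights produces the advertised $\mathbb{L}_2(\sigma^{\Lambda,P_{\partial\Gamma}})$ norms. The equivalence \eqref{qL2}$\Leftrightarrow$\eqref{qL20} follows from Stein-type interpolation along the modular group $\Delta_\sigma^{it}$: the modular automorphism is an isometry on both inner products and, since our operators are supported on disjoint sets, it moves through the sandwich without cost beyond a factor absorbed into the $|\Gamma|$ prefactor. Finally \eqref{qL2}$\Rightarrow$\eqref{Linfty} is immediate from $\|X\|_{\mathbb{L}_2(\sigma)}\le \|X\|_\infty$.

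For the classical converse \eqref{Linfty}$\Rightarrow$\eqref{qCA}, I would invoke the equivalence of uniform $\mathbb{L}_\infty$ clustering with Dobrushin--Shlosman's complete analyticity for classical Gibbs measures; Koteck\'y--Preiss-type polymer expansions then yield convergence of the cluster series for $\ln\tr[\e^{-\sum_X z_X\Phi(X)} N]$ in a complex polydisc around $\beta$, establishing \eqref{qCA}. The main obstacle I anticipate is the step \eqref{qIIId}$\Rightarrow$\eqref{qL2}: in the classical setting spectral projectors of $X_A$ and $Y_B$ commute with the Gibbs measure and the reduction to \eqref{qIIId} is automatic, whereas quantumly the KMS sandwich $\sigma^{1/2}\cdot\sigma^{1/2}$ does not commute with the spectral resolutions of $X_A$ and $Y_B$; tracking this twist without losing the polynomial $|\Gamma|$ prefactor (and, crucially, not needing $|\Lambda|$ in the commuting case) is the delicate point. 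Commutativity of the potential is what keeps the prefactor localized to $\Gamma$; dropping it forces the weaker $|\Lambda|$ prefactor asserted in the non-commuting variant.
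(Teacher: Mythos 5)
Your proposal follows the same overall architecture as the paper's proof: an adaptation of the analyticity-after-measurement argument of \cite{harrow2020classical} for \eqref{qCA}$\Rightarrow$\eqref{qIIId} (with the commuting structure localizing the prefactor to $|\Gamma|$ for exactly the reason you give), a reduction of the covariance bound to \eqref{qIIId} applied to positive operators built from $X_A$ and $Y_B$, an appeal to the known equivalence of the two $\mathbb{L}_2$ conditions, the trivial bound $\|X\|_{\mathbb{L}_2(\sigma)}\le\|X\|_\infty$, and the classical Dobrushin--Shlosman theory for the converse. The differences are implementational, and two of them deserve a warning. First, for \eqref{qIIId}$\Rightarrow$\eqref{qL2} the paper uses the Jordan decomposition $X_A=X_A^+-X_A^-$, $Y_B=Y_B^+-Y_B^-$ and feeds $N_A=X_A^\gamma$, $P_B=Y_B^\alpha/\|Y_B^\alpha\|_\infty$, $P_B'=\Id$ into \eqref{qIIId}; your spectral resolution works equally well, but the bilinear expansion you write down is the GNS covariance $\operatorname{Cov}^{(0)}$, not the KMS one, so (as you yourself anticipate) the implication one actually proves this way is \eqref{qIIId}$\Rightarrow$\eqref{qL20}, and the KMS version is then recovered from the equivalence \eqref{qL20}$\Leftrightarrow$\eqref{qL2}, which the paper imports from Proposition 17 of \cite{[BK16]}; note that your ``isometry of the modular group'' heuristic does not suffice there, since the comparison involves $\Delta_\sigma^{1/2}$ (imaginary time) and genuinely uses the locality of the modular flow for commuting Hamiltonians. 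Second, for \eqref{qCA}$\Rightarrow$\eqref{qIIId} the paper does not integrate along a path with Cauchy estimates but applies the multivariate Schwarz-type Lemma \ref{blackmagic}: the difference of log-ratios $f(\mathbf{z})$ is analytic, bounded by $O(|\Gamma|)$, and vanishes at $\mathbf{z}=0$ together with every mixed derivative for which the union of the differentiated interaction supports, augmented by the sites of $\partial\Gamma$, fails to connect $A$ to $B$. Your ``minimal geodesic'' picture is too coarse: the relevant combinatorial object is a connected cluster of supports possibly bridged by boundary sites, and the paper's partition of $\partial\Gamma$ into components $\partial\Gamma_A,\partial\Gamma_B,\partial\Gamma_C$ is what guarantees that any non-vanishing derivative involves $\Omega(\dist(A,B))$ distinct variables. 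These are repairable imprecisions rather than fatal gaps, but they are where the real work of the proof lives.
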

Before proving the above theorem, we recall the crucial Theorem 4.1 from \cite{Dobrushin1985} (see also Lemma 28 in \cite{harrow2020classical}):
\begin{lemma}\label{blackmagic}
    Let $f:\CC^m\to\CC$ be analytic on a connected open set $\Omega\subset \CC^m$ such that $|f(\mathbf{z})|\le M$ for all $\mathbf{z}\in\Omega$. Let moreover $k_1,...,k_m$ be non-negative integers summing up to $K$, and suppose that there exists $\mathbf{z}_0\in \Omega$ such that $f(\mathbf{z}_0)=0$ as well as the partial derivatives $\frac{\partial^K}{\partial^{k_1}z_1 ...\partial^{k_m}z_m}f(\mathbf{z}_0)=0$ unless we take derivatives with respect to at least $L$ distinct variables $z_i$. Then, for all $\mathbf{z}\in\Omega$, there exist $c_1\equiv c_1(\mathbf{z}), \, c_2\equiv c_2(\mathbf{z})>0$ such that $|f(\mathbf{z})|\le M c_1\,\e^{-c_2L}$.
\end{lemma}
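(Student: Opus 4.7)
The plan is to reduce the multivariate statement to a one-variable Schwarz lemma bound by restricting $f$ to a complex line through $\mathbf{z}_0$. Fix $\mathbf{z}\in\Omega$ and set $\mathbf{v}:=\mathbf{z}-\mathbf{z}_0$. First I would consider the one-variable analytic function
\[
g(w)\;:=\;f(\mathbf{z}_0+w\mathbf{v}),
\]
which is well-defined and bounded in modulus by $M$ on $\{|w|<R\}\subset\CC$ provided the complex disk $\{\mathbf{z}_0+w\mathbf{v}:|w|\le R\}$ is contained in $\Omega$. The key geometric point is to choose $R>1$, so that $w=1$ (the value corresponding to $\mathbf{z}$) lies strictly inside this disk of analyticity.

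Next I would transfer the vanishing hypothesis to $g$ via the multivariate chain rule, which gives
\[
g^{(k)}(0)\;=\;\sum_{k_1+\cdots+k_m=k}\binom{k}{k_1,\dots,k_m}\,\bigl(\partial_{z_1}^{k_1}\cdots\partial_{z_m}^{k_m}f\bigr)(\mathbf{z}_0)\,v_1^{k_1}\cdots v_m^{k_m}.
\]
By hypothesis each summand vanishes unless the multi-index $(k_1,\dots,k_m)$ has at least $L$ nonzero entries, in which case $k=k_1+\cdots+k_m\ge L$ since every nonzero $k_i$ is at least $1$. Hence $g(0)=g'(0)=\cdots=g^{(L-1)}(0)=0$, so $h(w):=g(w)/w^L$ extends to an analytic function on $\{|w|<R\}$. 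Since $|h(w)|\le M/R^L$ as $|w|\to R$, the maximum modulus principle yields $|h(w)|\le M/R^L$ throughout the disk, and evaluating at $w=1$ gives
\[
|f(\mathbf{z})|\;=\;|g(1)|\;\le\;M\,R^{-L}\;=\;M\cdot e^{-L\ln R}.
\]
The claim then follows on taking $c_1(\mathbf{z}):=1$ and $c_2(\mathbf{z}):=\ln R>0$.

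The main obstacle is the geometric step of securing such an $R>1$. When the straight complex disk from $\mathbf{z}_0$ through $\mathbf{z}$ fits comfortably inside $\Omega$ (e.g.\ when $\Omega$ is convex or contains a suitable complex neighborhood of the real segment $[\mathbf{z}_0,\mathbf{z}]$), compactness of the segment and openness of $\Omega$ produce such an $R$ immediately. For a general connected open $\Omega$ one must chain finitely many complex disks along a path from $\mathbf{z}_0$ to $\mathbf{z}$. The subtlety here is that the vanishing hypothesis is only available at $\mathbf{z}_0$, so the bound obtained on the first disk cannot be refreshed by a new Schwarz lemma at each subsequent corner; instead one propagates the small bound on a neighborhood of $\mathbf{z}_0$ to $\mathbf{z}$ via a Hadamard three-circles / subharmonic majorization argument applied to $\log|f|$, which contributes a factor $\lambda(\mathbf{z})\in(0,1)$ in the exponent but preserves the essential $e^{-c_2 L}$ decay. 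Since $c_1,c_2$ are allowed to depend on $\mathbf{z}$, this geometric chaining is absorbed harmlessly into the final constants.
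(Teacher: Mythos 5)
The paper gives no proof of this lemma: it is quoted from Theorem 4.1 of \cite{Dobrushin1985} (see also Lemma 28 of \cite{harrow2020classical}), so there is no internal argument to compare against. Your proposal reconstructs essentially the standard proof, and its core mechanism is correct: the chain-rule identity shows that every multi-index $(k_1,\dots,k_m)$ with $k_1+\cdots+k_m=k<L$ has fewer than $L$ nonzero entries, so by hypothesis $g^{(k)}(0)=0$ for all $k<L$; the factorization $g(w)=w^Lh(w)$ together with the maximum principle then gives $|f(\mathbf{z})|=|g(1)|\le M R^{-L}$ whenever the complex disk $\{\mathbf{z}_0+w\mathbf{v}:|w|\le R\}$, $R>1$, sits inside $\Omega$. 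Two caveats. First, the parenthetical claim that convexity of $\Omega$ secures this disk is incorrect: convexity only yields the real segment $[\mathbf{z}_0,\mathbf{z}]$, not the two-real-dimensional disk (the point $\mathbf{z}_0+i\mathbf{v}$ is not a convex combination of the endpoints). The cleaner fix even in nice geometries is to work on the planar slice $D=\{w\in\CC:\mathbf{z}_0+w\mathbf{v}\in\Omega\}$, which contains $0$ and $1$, and replace $\ln R$ by the Green's function $G_D(1,0)>0$ of $D$ with pole at $0$; your disk bound is the special case $D\supset\{|w|<R\}$. Second, the passage to general connected $\Omega$ is only sketched, but you name the right tool: run the Schwarz argument in every complex direction to get $|f|\le M\theta^L$ on a small ball around $\mathbf{z}_0$, then propagate by the two-constants/three-circles theorem applied to the plurisubharmonic function $\log|f|$, which degrades the exponent by a harmonic-measure factor $\lambda(\mathbf{z})\in(0,1)$. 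Since the lemma explicitly allows $c_1,c_2$ to depend on $\mathbf{z}$ and on the domain of analyticity, this is exactly the statement being proved, and I would accept the argument once those two points are tightened.
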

Let us emphasize that the Lemma above ensures that the constants $c_1$ and $c_2$ only depend on $f$ through its domain of analiticity and on $\mathbf{z}$. Moreover, a close inspection of the results of~\cite{Dobrushin1985,harrow2020classical}  shows that for the special case in which we take $\Omega$ to be the region where~\eqref{qCA} holds, then they do not depend on $m$, only on $\|\mathbf{z}\|_{\infty}$.
\begin{proof}[Proof of \Cref{thm:differentclusterings}]

\eqref{qCA}$\Rightarrow$\eqref{qIIId}: follows from a refinement of the proof of Theorem 31 in \cite{harrow2020classical}: First, we observe that \eqref{qCA} holds for any local, non-zero, positive semidefinite operator $N$ such that $\|N\|_{\mathbb{L}_1(\sigma^{\Lambda,P_{\partial\Gamma}})}\le 1$. Indeed, for all $z_X\in\CC,\,|z_X-\beta|\le \delta$: 
	\begin{align*}
\Big|\ln\tr\Big[\e^{-\sum_{X\subset \Lambda}z_X \Phi(X)}N\Big]\Big|&\le \Big|\ln\tr\Big[\e^{-\sum_{X\subset \Lambda}z_X \Phi(X)}\frac{N}{\|N\|_\infty}\Big]\Big|+\big|\ln(\|N\|_\infty)\big|\\
&\overset{(1)}{\le} c\,|\Lambda|+|\ln\|(\sigma^{\Lambda,P_{\partial\Gamma}})^{-1}\|_\infty|\\
&\overset{(2)}{\le} \big(c+\ln(d_{\cH})\big)\,|\Lambda|\\
&\equiv c'\,|\Lambda|\,.
\end{align*}
In $(1)$ above, we used the equivalence relation between $\mathbb{L}_1$ and $\mathbb{L}_\infty$ norms, whereas $(2)$ comes from the following bounds:
\begin{align*}
    1\le \frac{\tr\big[\sqrt{P_{\partial\Gamma}}\sigma^{\Lambda}\sqrt{P_{\partial\Gamma}}\big]}{\|\sqrt{P_{\partial\Gamma}}\sigma^{\Lambda}\sqrt{P_{\partial\Gamma}}\|_\infty}\le d_{\cH}^{|\Lambda|}\,, 
\end{align*}
where the last inequality simply follows from the equivalence constant between the Schatten $1$ and $\infty$ norms. Therefore,  $|\ln\|(\sigma^{\Lambda,P_{\partial\Gamma}})^{-1}\|_\infty|\le |\Lambda|\,\ln(d_{\cH})$. The rest of the proof follows very similarly to that of Equation (57) in \cite{harrow2020classical}: Let us define  the complex perturbed Gibbs state as
\begin{align*}
    \sigma^{\Lambda,P}_{\mathbf{z}}:=\frac{\sqrt{P}\e^{-\sum_{X\subseteq\Lambda}z_X \Phi(X)}\sqrt{P}}{\tr\big[P\, \e^{-\sum_{X\subseteq\Lambda}z_X \Phi(X)}\big]}\,.
\end{align*}
Then, we need to prove that 
\begin{align}\label{toprove57}
|f(\textbf{z})|:= \left|    \ln\left( \frac{\tr\big[\sigma_{\textbf{z}}^{\Lambda,P_{\partial\Gamma} P_B}N_A\big]}{\tr\big[\sigma_{\textbf{z}}^{\Lambda,P_{\partial\Gamma} P_B'}N_A\big]}  \right)\right|\le c\,\e^{-\dist(A,B)/\xi}\,
\end{align}
for any $N_A, P_B$ and $P_{B}'$ as in the statement of \eqref{qIIId}.  \Cref{toprove57} will directly follow after showing that the function $f$ of the complex vector $\mathbf{z}$ satisfies the requirements of \Cref{blackmagic} with $L$ proportional to $\dist(A,B)$. First of all, we have by \eqref{qCA} that $f$ is a sum of analytic functions, and therefore is analytic itself. Moreover, denoting $$e_{\mathbf{z}}^{\Delta}:=\exp\left(-\sum_{\substack{X\subseteq\Lambda\\X\cap \Delta\ne 0}}z_X\Phi(X)\right)\,,$$
 we have that, for all $\mathbf{z}$ satisfying the condition \eqref{qCA}: 
\begin{align}\nonumber
    |f(\mathbf{z})|&\le \Big|\ln\tr\big[e_{\mathbf{z}}^\Lambda P_{\partial\Gamma} {P}_B {N}_A \big]\Big|+\Big|\ln\tr\big[e_{\mathbf{z}}^\Lambda\,P_{\partial\Gamma }{P}_B' \big]\Big|+\Big|\ln\tr\big[e^\Lambda_{\mathbf{z}}\,P_{\partial \Gamma}{P}_B \big]\Big|+\Big|\ln\tr\big[e^\Lambda_{\mathbf{z}}\, {P}_B' P_{\partial\Gamma}{N}_A\big]\Big|\\
    &\le 4c'|\Lambda|\label{boundnoncommuting}\,.
\end{align}
We are left with proving that all the derivatives of $f$ at $\mathbf{z}_0=0$ involving less than $\dist(A,B)$ distinct variables $z_X$ vanish. For this, we denote by $G$ the region $G:=\bigcup_{X\subseteq \Lambda,\,k_X\ge 1}X$, where $k_{X}$ denotes the degree of the partial derivative with respect to the variable $z_X$. That is, $G$ is the union of the support of terms we are taking derivatives of.

In general, $G=\bigcup G_i$ is a disjoint union of connected components $G_i$. We first consider the case where there is no connected path connecting $A$ to $B$ through unions of $G_i$'s or sites in $\partial\Gamma$. In this situation, the boundary $\partial\Gamma$ can be partitioned into sites $\partial\Gamma_A$ which are connected to $A$ through a union of regions $G_{A-\Gamma}:=\cup G_i$ or other sites in $\partial\Gamma$, those $\partial\Gamma_B$ connected to $B$ through another union of regions $G_{B-\Gamma}=\cup_j G_j$, disconnected from $G_{A-\Gamma}$, or other sites in $\partial\Gamma$, and the remaining sites $\partial\Gamma_C$ which are neither connected to $A$, nor to $B$. Finally, we define $G_A$, resp. $G_B$, as the union of  $G_{A-\Gamma}$ and regions $G_k$ intersecting $A$, resp. that of $G_{G-\Gamma}$ and regions $G_{k'}$ intersecting $B$. The union of the remaining regions constituting $G$ which either intersect $\partial\Gamma_C$, or do not intersect $A\cup B\cup \partial\Gamma_C$, is denoted by $G_C$. Then,
\begin{align*}
  \left.  \frac{\partial^K}{\prod_{X\subseteq \Lambda}\partial^{k_X}z_X}\right|_{\mathbf{z}=0}
  \,f(\mathbf{z})
 &=  \left.  \frac{\partial^K}{\prod_{X\subseteq \Lambda}\partial^{k_X}z_X}\right|_{\mathbf{z}=0} \left( \ln\tr\big[e_{\mathbf{z}}^G P_{\partial\Gamma} {P}_B {N}_A \big]+\ln\tr\big[e_{\mathbf{z}}^G\,P_{\partial\Gamma }{P}_B' \big] \right.\\
 & \phantom{sadasdasdsadasdasda}\left. -\ln\tr\big[e^G_{\mathbf{z}}\,P_{\partial \Gamma}{P}_B \big]-\ln\tr\big[e^G_{\mathbf{z}}\, {P}_B' P_{\partial\Gamma}{N}_A\big]\right)\\
  &= \left.  \frac{\partial^K}{\prod_{X\subseteq \Lambda}\partial^{k_X}z_X}\right|_{\mathbf{z}=0}\ln\Big(\tr\big[ \e_{\mathbf{z}}^{G_A}N_AP_{\partial\Gamma_A}\big]\tr\big[\e_{\mathbf{z}}^{G_B}P_BP_{\partial\Gamma_B} \big]\tr\big[\e_{\mathbf{z}}^{G_C}P_{\partial\Gamma_C} \big]\Big)\\
  &\; \; \; + \left.  \frac{\partial^K}{\prod_{X\subseteq \Lambda}\partial^{k_X}z_X}\right|_{\mathbf{z}=0}\ln\Big(\tr\big[ \e_{\mathbf{z}}^{G_A}P_{\partial \Gamma_A}\big]\tr\big[\e_{\mathbf{z}}^{G_B}P'_BP_{\partial\Gamma_B} \big]\tr\big[\e_{\mathbf{z}}^{G_C}P_{\partial\Gamma_C} \big]\Big)\\
  & \; \; \;  - \left.  \frac{\partial^K}{\prod_{X\subseteq \Lambda}\partial^{k_X}z_X}\right|_{\mathbf{z}=0}\ln\Big(\tr\big[ \e_{\mathbf{z}}^{G_A}P_{\partial \Gamma_A}\big]\tr\big[\e_{\mathbf{z}}^{G_B}P_BP_{\partial\Gamma_B} \big]\tr\big[\e_{\mathbf{z}}^{G_C}P_{\partial\Gamma_C} \big]\Big)\\
 & \; \; \;  - \left.  \frac{\partial^K}{\prod_{X\subseteq \Lambda}\partial^{k_X}z_X}\right|_{\mathbf{z}=0}\ln\Big(\tr\big[ \e_{\mathbf{z}}^{G_A}N_AP_{\partial \Gamma_A}\big]\tr\big[\e_{\mathbf{z}}^{G_B}P'_BP_{\partial\Gamma_B} \big]\tr\big[\e_{\mathbf{z}}^{G_C}P_{\partial\Gamma_C} \big]\Big)\\
& =0\,.
\end{align*}
 It remains to consider the case when regions $A$ and $B$ can be connected through a path constituted of regions $G_i$ and sites in $\partial\Gamma$. By the locality of the test $P_{\partial\Gamma }$ as well as that of the potential $\Phi$, this can only happen if $|G|$ scales linearly with $\dist(A,B)$. Finally, if $\{\Phi(X)\}_{X\ssubset \ZZ^d}$ is assumed to be commuting, the bound in Equation \eqref{boundnoncommuting} can be refined as follows, thus leading to the desired claim:
\begin{align*}
    |f(\mathbf{z})|&\le \Big|\ln\tr\big[e_{\mathbf{z}}^\Gamma P_{\partial\Gamma} {P}_B {N}_A \big]\Big|+\Big|\ln\tr\big[e_{\mathbf{z}}^\Gamma\,P_{\partial\Gamma }{P}_B' \big]\Big|+\Big|\ln\tr\big[e^\Gamma_{\mathbf{z}}\,P_{\partial \Gamma}{P}_B \big]\Big|+\Big|\ln\tr\big[e^\Gamma_{\mathbf{z}}\, {P}_B' P_{\partial\Gamma}{N}_A\big]\Big|\\
    &\le 4c'|\Gamma|\,.
\end{align*}

\eqref{qIIId}$\Rightarrow$ \eqref{qL20}$\Leftrightarrow$\eqref{qL2}:  The equivalence was already proved in Proposition 17 of \cite{[BK16]}. Hence, it is enough to show the first implication. For this, we first reduce the problem to proving the bound for $X_A, Y_B\ge 0$: Indeed, decomposing $X_A:=X_A^{+}-X_A^{-}$ and $Y_B:=Y_B^{+}-Y_B^{-}$ into their positive and negative parts, we have
\begin{align}
\operatorname{Cov}^{(0)}_{\sigma^{\Lambda, P_{\partial \Gamma}}}(X_A,Y_B)&=\big|\tr\big[\sigma^{\Lambda, P_{\partial \Gamma}}{X}_A{Y}_B \big]-\tr\big[ \sigma^{\Lambda, P_{\partial \Gamma}} X_A\big]\,\tr\big[\sigma^{\Lambda, P_{\partial \Gamma}} Y_B\big]\big|\nonumber\\
&\le \sum_{\gamma,\alpha\in\{\pm\}}\big|\tr\big[\sigma^{\Lambda, P_{\partial \Gamma}}{X}^\gamma_A{Y}^\alpha_B  \big]-\tr\big[  \sigma^{\Lambda, P_{\partial \Gamma}} X_A^\gamma \big]\,\tr\big[\sigma^{\Lambda, P_{\partial \Gamma}} Y_B^\alpha\big]\,\big| \, . \label{eq}
\end{align}
Next, choosing $N_A=X_A^\gamma$, $P_B=Y_B^\alpha/\|Y_B^\alpha \|_\infty $ and $P_B^{\prime}=\Id$ in \eqref{qIIId}, we have that 
\begin{align*}
&\big|  \tr[\sigma^{\Lambda, P_{\partial \Gamma}} Y_B^{\alpha}X_A^{\gamma}]  -\tr[\sigma^{\Lambda, P_{\partial \Gamma}}X_A^{\gamma}]\,\tr\big[\sigma^{\Lambda, P_{\partial \Gamma}} Y_B^{\alpha}\big]\big| \\
& \phantom{asdasdasdasdsdasdasdadsasdasdasda} \leq  c\, | \Gamma |\,\e^{-\frac{\dist(A ,B)}{\xi}}\,\tr\big[\sigma^{\Lambda, P_{\partial \Gamma}} Y_B^{\alpha}\big]\tr[\sigma^{\Lambda, P_{\partial \Gamma}}X_A^{\gamma}]\\
&\phantom{asdasdasdasdsdasdasdadsasdasdasda} \le c\,| \Gamma |\,\e^{-\frac{\dist(A,B)}{\xi}}\,\Big(\tr\big[\sigma^{\Lambda, P_{\partial \Gamma}} (Y_B^{\alpha})^2\big]\Big)^{\frac{1}{2}}\,\Big(\tr[\sigma^{\Lambda, P_{\partial \Gamma}}(X_A^{\gamma})^2]\Big)^{\frac{1}{2}}\\
&\phantom{asdasdasdasdsdasdasdadsasdasdasda} \le c\,| \Gamma |\,\e^{-\frac{\dist(A,B)}{\xi}}\,\Big(\tr\big[\sigma^{\Lambda, P_{\partial \Gamma}} (Y_B)^2\big]\Big)^{\frac{1}{2}}\,\Big(\tr[\sigma^{\Lambda, P_{\partial \Gamma}}(X_A)^2]\Big)^{\frac{1}{2}}\,,
\end{align*}
where in the last inequality, we used that e.g. $X_A^2=(X_A^+)^2+(X_A^-)^2\ge (X_A^+)^2,(X_A^{-})^2$. Inserting the last bound into Equation \eqref{eq} we have
\begin{align*}
\operatorname{Cov}^{(0)}_{\sigma^{\Lambda, P_{\partial \Gamma}}}(X_A,Y_B)\le 4c\,| \Gamma |\,\e^{-\frac{\dist(A,B)}{\xi}}\,\Big(\tr\big[\sigma^{\Lambda, P_{\partial \Gamma}} (Y_B)^2\big]\Big)^{\frac{1}{2}}\,\Big(\tr[\sigma^{\Lambda, P_{\partial \Gamma}}(X_A)^2]\Big)^{\frac{1}{2}}\,.
\end{align*}

\eqref{qL2}$\Rightarrow$\eqref{Linfty}: follows directly from the fact that $\|X\|_{\operatorname{L}_2(\sigma^{\Lambda, P_{\partial \Gamma}})},\Big(\tr\big[\sigma^{\Lambda, P_{\partial \Gamma}} X^2  \big]\Big)^{\frac{1}{2}}\le \|X\|_\infty$. 

\end{proof}

In order to prove the modified logarithmic Sobolev inequality, we need a condition introduced in \cite{bardet2020approximate}.

	\begin{definition}[$\mathbb{L}_1\to\mathbb{L}_\infty$ clustering of correlations]
		Let $\cL:=\big\{ \cL_\Lambda \big\}_{\Lambda\ssubset\ZZ^d}$ be a uniform family of primitive, reversible and frustration-free Lindbladians with corresponding unique fixed points $\{\sigma^{\Lambda}\}_{\Lambda\ssubset \ZZ^d}$. The family $\cL$ satisfies the \textit{$\mathbb{L}_1\to\mathbb{L}_\infty$} clustering of correlations if there exist constants $c\ge0$ and $\xi>0$ such that for any intersecting $C,D\ssubset \ZZ^d$, 
		\begin{align}\label{LiLinftyaa}\tag{q$\mathbb{L}_1\to\mathbb{L}_\infty$}
		&\max_{i\in I_{\partial( C\cup D)}}\,\big\|   E^{(i)}_C\circ E^{(i)}_D-E_{C\cup D}^{(i)}:\,\mathbb{L}_1(\tau_i^{C\cup D})_{\operatorname{sa}} \to \cB(\cK^{C\cup D}_i )_{\operatorname{sa}} \big\|\le c\,|C\cup D|\,\e^{-\frac{\dist(C\backslash D,D\backslash C)}{\xi}}\,,
		\end{align}
		where the maps $E^{(i)}_G$, $G\in\{C,D,C\cup D\}$, are defined as in \Cref{EAi}. 
\end{definition}

 In the next proposition, we show that the condition \eqref{LiLinftyaa} is a consequence of \eqref{qIIId} for classical states as well as for the Schmidt Gibbs sampler with nearest neighbour interactions. The proof is inspired by that of Lemma 4.2 in \cite{cesi2001quasi}.
 
\begin{proposition}\label{propDScorrdecay}
	Let $\cL$ be either $\operatorname{(a)}$ the embedded Glauber dynamics, or $\operatorname{(b)}$ the Schmidt Gibbs sampler with nearest neighbour interactions. Assume moreover that the potential satisfies \eqref{qIIId}. Then $\cL^S$ satisfies \eqref{LiLinftyaa}.
\end{proposition}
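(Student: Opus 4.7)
The plan is to adapt the strategy of Cesi's Lemma 4.2 in \cite{cesi2001quasi} to both the classical (embedded Glauber) and Schmidt settings by reducing to a block-wise analysis and then invoking \eqref{qIIId}. First I fix a boundary index $i\in I_{\partial(C\cup D)}$ and work within the block $\cK_i^{C\cup D}$. The frustration-freeness of the family (for Schmidt generators this is Lemma \ref{lemmaSchmidt}) yields $E_C^{(i)}\circ E_{C\cup D}^{(i)} = E_{C\cup D}^{(i)}$, which allows the rewriting
\begin{align*}
E_C^{(i)}\circ E_D^{(i)} - E_{C\cup D}^{(i)} \;=\; E_C^{(i)}\circ \bigl( E_D^{(i)} - E_{C\cup D}^{(i)}\bigr).
\end{align*}
Since $E_C^{(i)}$ is a unital completely positive contraction on $\cB(\cK_i^{C\cup D})$, bounding its $\mathbb{L}_1\!\to\!\mathbb{L}_\infty$ norm reduces to bounding the corresponding norm of the inner difference.

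Next I dualise this norm by writing
\begin{align*}
\bigl\|E_D^{(i)} - E_{C\cup D}^{(i)}\bigr\|_{\mathbb{L}_1(\tau_i^{C\cup D})\to\mathbb{L}_\infty} \;=\; \sup\,\Bigl|\tr\Bigl[N\,\bigl(E_{D*}^{(i)} - E_{C\cup D*}^{(i)}\bigr)(\rho)\Bigr]\Bigr|,
\end{align*}
where the supremum runs over $0\le N\le \Id$ in $\cB(\cK_i^{C\cup D})$ and $\rho\ge 0$ with $\|\rho\|_{\mathbb{L}_1(\tau_i^{C\cup D})}\le 1$. The explicit form of the Schmidt conditional expectation coming from \eqref{Daviescond} and \eqref{equ:decomp} shows that $E_{D*}^{(i)}(\rho)$ and $E_{C\cup D*}^{(i)}(\rho)$ are both weighted averages of post-selected Gibbs states $\sigma^{\Lambda, P}$ that differ only by the insertion of central projections $P$ living on edges of $\partial D\setminus \partial(C\cup D)\subset C\setminus D$. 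In the classical embedded Glauber case these are literal boundary spin configurations on $C\setminus D$ by \eqref{EAcondexp}; in the Schmidt case they are the central projections of $\cA_{A,\operatorname{out}}$ from \eqref{EAi}, which by construction are supported on the factors $\cH^{\alpha_j}_{jk}$ with $j\in\partial D$ and $k\notin D$.

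Plugging this decomposition into the matrix element and applying \eqref{qIIId} with $A:=D\setminus C$, $B\supset C\setminus D$ and $\Gamma:=C\cup D$ then yields
\begin{align*}
\Bigl|\tr\Bigl[N\bigl(E_{D*}^{(i)}-E_{C\cup D*}^{(i)}\bigr)(\rho)\Bigr]\Bigr|\;\le\; c\,|C\cup D|\,\e^{-\dist(D\setminus C,\,C\setminus D)/\xi},
\end{align*}
uniformly in $i$, which gives the desired bound \eqref{LiLinftyaa} after taking the supremum and maximising over $i$.

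The main obstacle is the algebraic bookkeeping required to justify the identification of $E_{D*}^{(i)}(\rho)$ and $E_{C\cup D*}^{(i)}(\rho)$ with (mixtures of) post-selected Gibbs states whose conditioning projections live inside $C\setminus D$. In the classical case this is automatic from \eqref{EAcondexp}, but in the Schmidt case one must invoke Takesaki's theorem together with the block decomposition \eqref{sigmavdecomp} of $\sigma^V$ to show that the central projections of $\cA_{A,\operatorname{out}}$ are precisely those $1$-local objects one can treat as "boundary conditions", and that after application of $E_C^{(i)}$ the residual observable sits at distance at least $\dist(D\setminus C, C\setminus D)$ from those projections. This localisation is exactly where the nearest-neighbour assumption is crucial: only for $2$-local commuting interactions does $\cA_{A,\operatorname{out}}$ split as a tensor product of $1$-local projections indexed over the sites of $\partial A$, which makes \eqref{qIIId} directly applicable.
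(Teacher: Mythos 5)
There is a genuine gap at the second step of your argument. You factor $E_C^{(i)}\circ E_D^{(i)}-E_{C\cup D}^{(i)}=E_C^{(i)}\circ\bigl(E_D^{(i)}-E_{C\cup D}^{(i)}\bigr)$ and then discard the outer $E_C^{(i)}$ by contractivity, reducing the problem to bounding $\bigl\|E_D^{(i)}-E_{C\cup D}^{(i)}\bigr\|_{\mathbb{L}_1(\tau_i^{C\cup D})\to\mathbb{L}_\infty}$. But this quantity is \emph{not} exponentially small: take $X$ supported on $C\setminus D$ (well inside $\cK_i^{C\cup D}$, away from $D\partial$), so that $E_D^{(i)}[X]=X$ while $E_{C\cup D}^{(i)}[X]$ is an averaged multiple of the identity; then $\bigl\|(E_D^{(i)}-E_{C\cup D}^{(i)})[X]\bigr\|_\infty$ is of order $\|X\|_\infty$, which can be as large as $\|(\tau_i^{C\cup D})^{-1}\|_\infty$ relative to $\|X\|_{\mathbb{L}_1(\tau_i^{C\cup D})}$. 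The smallness of the composite map relies essentially on the conditional expectation you threw away: it is the one that localises the test observable away from the sites where the two remaining conditionings differ. Relatedly, your identification of $E_{D*}^{(i)}(\rho)$ as a weighted average of post-selected Gibbs states on $C\cup D$ is incorrect, since $E_{D*}^{(i)}$ leaves the (arbitrary) reduced state of $\rho$ on $C\setminus D$ untouched; consequently \eqref{qIIId} cannot be applied to the matrix element $\tr\bigl[N\,(E_{D*}^{(i)}-E_{C\cup D*}^{(i)})(\rho)\bigr]$ for arbitrary $N$, because $N$ may sit exactly on $C\setminus D$ where the conditionings disagree, and \eqref{qIIId} requires the observable to be far from the differing tests.

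The paper's proof uses the \emph{other} frustration-freeness identity, $E_{C\cup D}\circ E_D=E_{C\cup D}$, to write $E_C^{(\alpha)}\circ E_D^{(\alpha)}-E_{C\cup D}^{(\alpha)}=\bigl(E_C^{(\alpha)}-E_{C\cup D}^{(\alpha)}\bigr)\circ E_D^{(\alpha)}$. The inner $E_D^{(\alpha)}$ is applied first, producing $N_{C\setminus D}:=E_D^{(\alpha)}[X]$, an observable genuinely supported on $C\setminus D$ and hence at distance $\dist(C\setminus D,D\setminus C)$ from $\partial C\cap D\subset D\setminus C$, which is where the central projections $P^{\alpha,C}_{\beta^{\partial 2}}$ distinguishing $E_C$ from $E_{C\cup D}$ live. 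Condition \eqref{qIIId} is then applied with $N_A=N_{C\setminus D}$, $A=C\setminus D$, $P_B=P^{\alpha,C}_{\beta^{\partial 2}}$, $P_B'=\Id$, $B=D\setminus C$ and $P_{\partial\Gamma}=P^{C\cup D}_\alpha$, and the $\mathbb{L}_\infty$ norm of the result is a maximum over $\beta^{\partial 2}$ of exactly such differences of traces. Your overall intuition (central projections of $\cA_{A,\operatorname{out}}$ as boundary conditions, nearest-neighbour locality making them $1$-local) is correct and matches the paper; the fix is to absorb $E_D$ rather than $E_C$, so that the observable, not the test, ends up localised far from the differing boundary conditions.
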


\begin{proof}
	
The case (a) is a simple consequence of the reasoning in the proof of Lemma 4.2 in \cite{cesi2001quasi} as well as \Cref{EAclassEA}. In fact, its proof is even more direct than the one of Cesi, where a summation over the boundary is performed. This is due to the fact that our definition for \eqref{qIIId} already allows for different boundary conditions over finite regions $D$ (cf. \Cref{remarkboundarycond}). 

	\begin{figure}[!ht]
		\centering
		\includegraphics[width=0.7\linewidth]{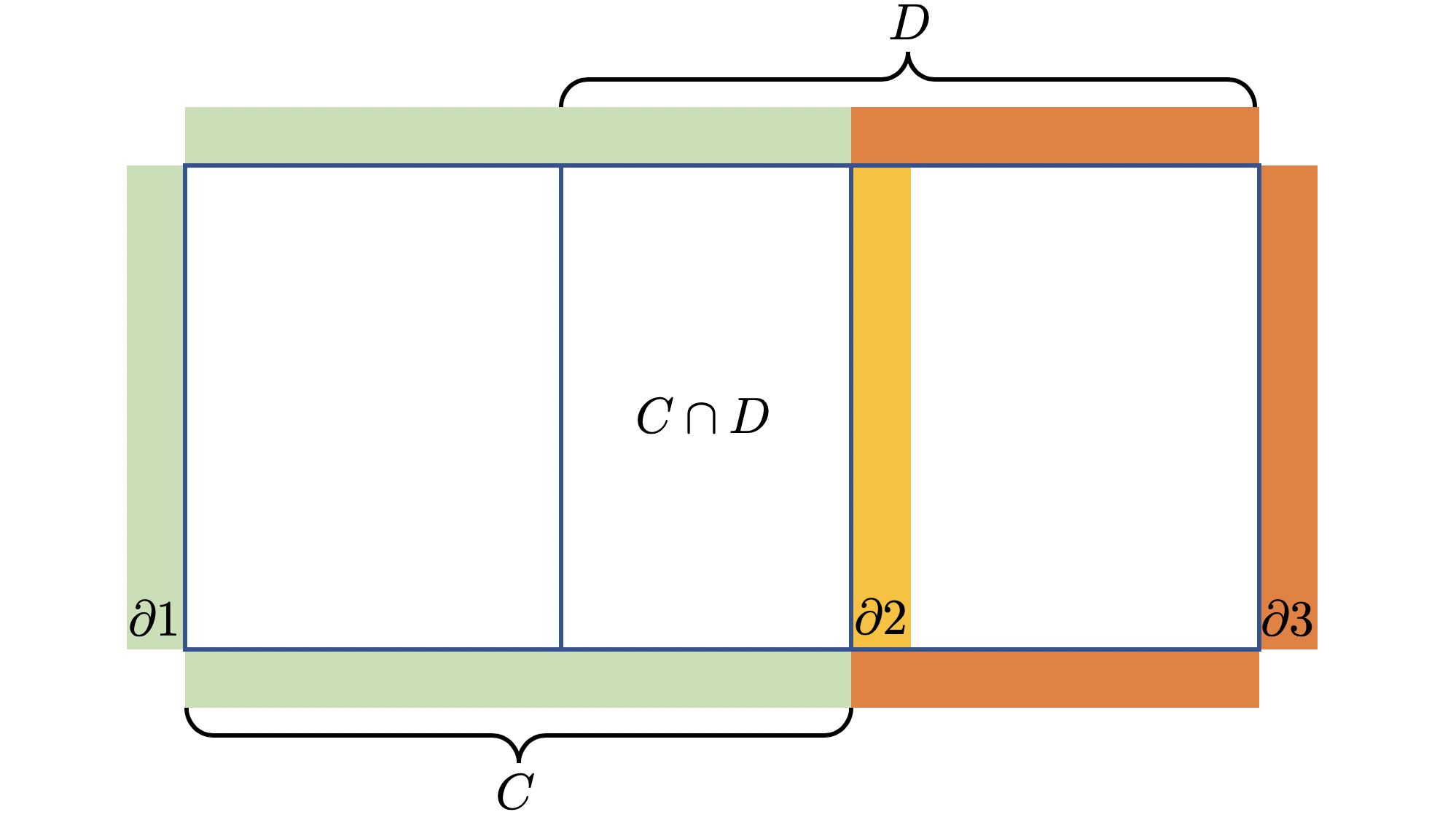}
		\caption{Decomposition of the boundaries of $C$ and $C \cup D$ for Proposition \ref{propDScorrdecay}. Here $\partial 1:=\partial C\backslash D$, $\partial 2:=\partial C \cap D$ and $\partial 3:=\partial D\backslash C\partial $. Therefore $\partial 1\cup \partial 2=\partial C$ whereas $\partial 1\cup \partial 3=\partial(C\cup D)$. Here, we considered nearest neighbour interactions.}
		\label{figuretilingCD}
	\end{figure}

We now turn our attention to the proof of (b). We recall that, in this case, a central projection $P^{C\cup D}_{\alpha}$ is labeled by a configuration $i\equiv \alpha:=(\alpha_1,...,\alpha_{|C\cup D|})$ in $I_{\partial (C\cup D)}$. Moreover, it can be decomposed into a product of projections onto each site of the boundary $\partial (C\cup D) $:
$$P^{C\cup D}_{\alpha}:=\bigotimes_{j\in\partial (C\cup D)}P_j^{\alpha_j}\,.$$
 Next, decomposing the boundary of $C$ into $\partial 1:=\partial C\backslash D$ and $\partial 2:=\partial C\cap D$ as in \Cref{figuretilingCD}, we choose a configuration $\beta:=(\alpha^{\partial 1},\beta^{\partial 2})\in I_{\partial C}$ which coincides with $\alpha$ in $\partial 1$, and denote $P_\beta^C:=P_{\partial1}^{\alpha^{\partial 1}}\otimes P_{\partial 2}^{\beta^{\partial 2}}$. Next, we let $X\in\cB(\cK_\alpha^{C\cup D})_+$ with $\|X\|_{\mathbb{L}_1(\tau_{\alpha}^{C\cup D})}=1$. We define $N_{C\backslash D}:= E_D^{(\alpha)}[X]\in\cB(\cH_{C\backslash D})$,
	$P_{\partial \Gamma}:=P_\alpha^{C\cup D}$, $P_{D\backslash C}':=\Id_{D\backslash C}$ and $P_{D\backslash C}:= P^{\alpha,C}_{\beta^{\partial 2}}\in\cB(\cH_{D\backslash C})$. Then, by \eqref{qIIId}:
	\begin{align*}
	\big|  \tr[\sigma^{\Lambda, P_\alpha^{C\cup D}P^{\alpha,C}_{\beta^{\partial 2}}}N_{C\backslash D}]  -\tr[\sigma^{\Lambda,P_\alpha^{C\cup D}}N_{C\backslash D}]\,\big|\le  c\,|C\cup D|\,\e^{-\frac{\dist(C\backslash D,D\backslash C)}{\xi}}\, \tr[\sigma^{\Lambda,P_\alpha^{C\cup D}}N_{C\backslash D}]\,.
	\end{align*}
	Moreover, by construction, we have that  $\sigma^{\Lambda,P_\alpha^{C\cup D}}\equiv \tau_\alpha^{C\cup D}$ and $\sigma^{\Lambda, P_\alpha^{C\cup D} P^{\alpha,C}_{\beta^{\partial 2}}}\equiv \tau^{\alpha,C}_{\beta^{\partial 2}}$ with the notations of \eqref{EAi}.

	Therefore:
	\begin{align*}
&	\|E_C^{(\alpha)}\circ E_D^{(\alpha)}[X]-E_{C\cup D}^{(\alpha)}[X]\|_\infty\\
&  \phantom{asdasdasdasdasdad}=\|E_C^{(\alpha)}[N_{C\backslash D}]-E_{C\cup D}^{(\alpha)}[N_{C\backslash D}]\|_\infty\\
	&\phantom{asdasdasdasdasdad}=\Big\|\sum_{\beta^{\partial 2}\in I^{\alpha}_{\partial C}}\Big(\tr[P^{\alpha,C}_{\beta^{\partial 2}}N_{C\backslash D}P^{\alpha,C}_{\beta^{\partial 2}}\tau_{\beta^{\partial 2}}^{\alpha,C}]-\tr[P_\alpha^{C\cup D}N_{C\backslash D}P_\alpha^{C\cup D}\,\tau_\alpha^{C\cup D}]\Big)\,\Id_{\beta^{\partial 2}}^{\alpha ,C}\Big\|_\infty\\
		&\phantom{asdasdasdasdasdad}=\max_{\beta^{\partial 2}\in I^\alpha_{\partial C}}\,\big|     \tr[\sigma^{\Lambda,P_\alpha^{C\cup D}}\,N_{C\backslash D}]-\tr[\sigma^{\Lambda, P_\alpha^{C\cup D} P^{\alpha,C}_{\beta^{\partial 2}}}\,N_{C\backslash D}]\,  \big|_\infty\\
	&\phantom{asdasdasdasdasdad}\le c\,|C\cup D|\,\e^{-\frac{\operatorname{dist}( C\backslash D,D\backslash C)}{\xi}}\,\tr\big[ \tau_\alpha^{C\cup D}X\big]\\
	&\phantom{asdasdasdasdasdad}=c\,|C\cup D|\,\e^{-\frac{\operatorname{dist}( C\backslash D,D\backslash C)}{\xi}}\,, 
	\end{align*}
	where we are using the explicit form for the conditional expectations from Equation \eqref{EAi} as well as \eqref{qIIId}.
	\end{proof}

In the case of the embedded Glauber dynamics, one can easily relate \Cref{LiLinftyaa} to Dobrushin and Shlosman's complete analyticity:

\begin{proposition}\label{classclust}
	Let $\cL^{G}$ be the uniform family of Glauber dynamics introduced in \Cref{Markov}. Then \Cref{LiLinftyaa} holds whenever \eqref{l1linfty} holds. In particular, \eqref{LiLinftyaa} holds for 1D systems, as well as for any dimensions above the critical temperature.
\end{proposition}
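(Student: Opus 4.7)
The strategy is a direct reduction of the quantum \eqref{LiLinftyaa} condition to the classical hypothesis \eqref{l1linfty}, using the factorization $E_A^G = \mathbb{E}_A \circ \mathcal{C}_{A\partial}$ recorded in \eqref{EAclassEA}. Fix a boundary condition $i\equiv\eta\in I_{\partial(C\cup D)}$; for the embedded Glauber dynamics this labels a classical configuration on $\partial(C\cup D)$, with central projection $P^{C\cup D}_\eta=|\eta\rangle\langle\eta|$. On the resulting block, $\cK_\eta^{C\cup D}$ identifies with $\cH_{C\cup D}$, and $\tau_\eta^{C\cup D}$ is diagonal with entries $\mu^{C\cup D,\eta}(\sigma)$, the classical Gibbs measure on $C\cup D$ with boundary configuration $\eta$.

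The first step is to observe that both $E_C^{(\eta)}\circ E_D^{(\eta)}$ and $E_{C\cup D}^{(\eta)}$ factor through the computational pinching $\mathcal{C}_{C\cup D}$ on $\cK_\eta^{C\cup D}$. For $E_{C\cup D}^{(\eta)}$ this is immediate from \eqref{EAclassEA}. For the composition, decomposing $\partial C=\partial_1\cup\partial_2$ with $\partial_1\subset\partial(C\cup D)$ and $\partial_2=\partial C\cap D$, and analogously for $\partial D$, one sees that $\mathcal{C}_{D\partial}$ pinches $D$ together with $\partial D\cap C$, while the subsequent $\mathcal{C}_{C\partial}$ pinches $C$ together with $\partial_2$; their union covers all of $C\cup D$. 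Setting $\bar X:=\mathcal{C}_{C\cup D}[X]$, the difference $(E_C^{(\eta)}\circ E_D^{(\eta)} - E_{C\cup D}^{(\eta)})[X]$ therefore coincides with its value at $\bar X$, and $\|\bar X\|_{\mathbb{L}_1(\tau_\eta^{C\cup D})}\le\|X\|_{\mathbb{L}_1(\tau_\eta^{C\cup D})}$ because the pinching is a contraction in $\mathbb{L}_1$ with respect to a diagonal weight.

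Second, I identify each diagonal operator $\bar X=\sum_\sigma f(\sigma)\,|\sigma\rangle\langle\sigma|$ on $\cK_\eta^{C\cup D}$ with the function $f:\Omega_{C\cup D}\to\RR$, so that $\|\bar X\|_{\mathbb{L}_1(\tau_\eta^{C\cup D})}=\|f\|_{L^1(\mu^{C\cup D,\eta})}$ and $\|\bar X\|_\infty=\|f\|_{L^\infty}$. Using the block-restriction formula \eqref{EAi} together with \eqref{EAclassEA}, the maps $E_C^{(\eta)}$, $E_D^{(\eta)}$ and $E_{C\cup D}^{(\eta)}$ coincide on diagonals with the classical conditional expectations $\mathbb{E}_C$, $\mathbb{E}_D$ and $\mathbb{E}_{C\cup D}$ taken with respect to $\mu^{C\cup D,\eta}$: the portion $\partial_1$ of the classical boundary of $C$ is already absorbed into $\eta$, while $\partial_2$ is averaged over by $\mu^{C\cup D,\eta}$. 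Applying \eqref{l1linfty} (after the trivial relabeling $C\leftrightarrow D$, which leaves the relevant distance invariant) together with the $L^1$-contractivity of $\mathbb{E}_D$ then yields
\[
\|(E_C^{(\eta)}\circ E_D^{(\eta)}-E_{C\cup D}^{(\eta)})[X]\|_\infty\le c\,\e^{-\dist(C\setminus D,D\setminus C)/\xi}\,\|X\|_{\mathbb{L}_1(\tau_\eta^{C\cup D})},
\]
which is in fact strictly stronger than \eqref{LiLinftyaa}, as it omits the $|C\cup D|$ prefactor. Taking the maximum over $\eta\in I_{\partial(C\cup D)}$ concludes.

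The ``in particular'' assertion follows because classically \eqref{l1linfty} is known to hold in one dimension at every temperature, and in any higher dimension above the critical temperature, as a consequence of \eqref{DSM} combined with Lemma~4.2 of \cite{cesi2001quasi}. The main technical care in the argument lies in the geometric bookkeeping of the boundary decomposition $\partial C=\partial_1\cup\partial_2$ and the verification that the two block-restricted pinchings compose to the full pinching on $C\cup D$; the remaining work is a routine translation of quantum norms of diagonal operators into classical $L^p$ norms of the associated functions.
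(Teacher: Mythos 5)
Your proof is correct and follows essentially the same route as the paper, which simply asserts that the claim "is a consequence of the embedding in \eqref{EAclassEA}"; you have filled in exactly the details that one-line argument presupposes (block restriction to a fixed boundary configuration, factorization of both maps through the pinching $\mathcal{C}_{C\cup D}$, and the identification of the weighted quantum norms of diagonal operators with classical $L^p(\mu)$ norms). The geometric bookkeeping of $\partial C=\partial_1\cup\partial_2$ and the verification that the composed conditional expectations factor through the full pinching are sound, so nothing is missing.
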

\begin{proof}
	This is simply a consequence of  the embedding in \eqref{EAclassEA}.
\end{proof}

\section{Clustering of correlations implies MLSI}\label{sec:main}
In this section, we prove the positivity of the MLSI constant for  generators defined over lattice spin systems under the \eqref{LiLinftyaa} clustering of correlations defined in \Cref{clustering}. 
But before we do this, we discuss two warm-up examples to give some intuition on our approach by proving the MLSI in two settings: embedded Glauber dynamics with dephasing and embedded Glauber dynamics in $1D$.
These two steps, although interesting in their own right, are simple and illustrate the key ideas of our approach.
\subsection{Intuitive outline of proof strategy: MLSI with dephasing and the importance of CMLSI}
Let us start by showing how to derive a MLSI directly by adding an additional dephasing to the generator of the embedded Glauber dynamics. Let $\overline{\cL}_{\Lambda}^{{G}}$ be the generator of an embedded Glauber dynamics and consider the family of generators $\overline{\cL}_{\Lambda}'=\overline{\cL}_{\Lambda}^{{G}}+(\mathcal{C}_{\Lambda}-\text{id})$, where $\mathcal{C}_{\Lambda}$ is the pinching with respect to the computational basis. That is, the classical dynamics with additional (global) dephasing. 
The chain rule \eqref{chainrulerelatent} for the relative entropy implies that for all states $\rho\in\cD(\cH_{\Lambda})$:
\begin{align*}
D(e^{t\overline{\cL}_{\Lambda *}'}(\rho)\|\sigma)=D(e^{t\overline{\cL}_{\Lambda*}'}(\rho)\|\mathcal{C}_\Lambda(\e^{t\overline{\cL}_{\Lambda*}'}\rho))+D(\mathcal{C}_\Lambda(e^{t\overline{\cL}_{\Lambda*}'}(\rho))\|\sigma)\,.
\end{align*}
Now, note that the generators $\overline{\cL}_{\Lambda}'$ and $\overline{\cL}_{\Lambda}^{G}$ commute. This immediately yields
 \begin{align}\label{equ:commutelind}
 D(e^{t\overline{\cL}_{\Lambda*}'}(\rho)\|\sigma)
  =D(e^{t(\mathcal{C}_\Lambda-\text{id})}(e^{t\overline{\cL}_{\Lambda*}^{G}}(\rho))
   \|\,\mathcal{C}_\Lambda(\e^{t\overline{\cL}_{\Lambda*}^{G}}(\rho)))
   +D(e^{t\overline{\cL}_{\Lambda*}^{G}}(\mathcal{C}_\Lambda(\rho))\|\sigma).
 \end{align}
Note that the second term on the r.h.s. of Equation~\eqref{equ:commutelind} is the relative entropy between two classical states and, thus, if the classical Glauber dynamics satisfies a MLSI with constant $\alpha_1$, we have 
$$D(e^{t\overline{\cL}_{\Lambda*}^{G}}(\mathcal{C}_\Lambda(\rho))\|\sigma)\leq e^{-4\alpha_1 t}D(\mathcal{C}_\Lambda(\rho)\|\sigma)\,.$$ 
To control the first term on the r.h.s. of \eqref{equ:commutelind}, note that the dephasing semigroup satisfies a MLSI with constant $\tfrac{1}{4}$, and so 
$$D(e^{t\overline{\cL}_{\Lambda*}'}(\rho)\|\mathcal{C}_\Lambda(\e^{t\overline{\cL}_{\Lambda*}^{{G}}}(\rho)))\leq e^{-t }D(e^{t\overline{\cL}_{\Lambda*}^{{G}}}(\rho)\|\mathcal{C}_\Lambda(e^{t\overline{\cL}_{\Lambda*}^{{G}}}(\rho)))\,.$$ 
Thus, by the data processing inequality:
\begin{align*}
D(e^{t\overline{\cL}_{\Lambda*}'}(\rho)\|\sigma)&\leq \max\{e^{-4\alpha_1 t},e^{-t}\}(D(\rho\|\mathcal{C}_\Lambda(\rho))+D(\mathcal{C}_\Lambda(\rho)\|\sigma))\\
&=\max\{e^{-4\alpha_1 t},e^{-t}\}D(\rho\|\sigma)\,,
\end{align*}
by another application of the chain rule. We conclude that with the extra dephasing semigroup on top of the classical Glauber dynamics, the semigroup $(\e^{t\overline{\cL}_\Lambda'})_{t\ge 0}$ satisfies a MLSI with a constant that is given by the minimal of the dephasing rate and the constant for the classical dynamics, $\min\{\alpha_1,\tfrac{1}{4}\}$. The same argument would also apply if instead we added local dephasing noise on each site, since the dephasing semigroup satisfies CMLSI with the same constant $\frac{1}{4}$. The lesson to be learned from the example above is that the application of the chain rule allowed us to handle the dynamics in the computational basis and the dephasing separately. This will be crucial for our analysis later and will motivate the introduction of the \emph{pinched} MLSI in Definition~\ref{def:pinchedmlsi}. 

In order to prove our main result and get rid of the additional dephasing assumed above, we will also need to resort to complete MLSI inequalities. Indeed, by Theorem~\ref{CMLSIholds} we have that all local terms of the generator satisfy CMLSI.

This result is crucial to generalize the argument of MLSI with extra dephasing given before. It is instructive to shortly consider the implication of the complete MLSI for embedded Glauber dynamics before moving on to our main result. Although we restrict the discussion to the one dimensional Ising model ${\cL}^{\operatorname{Ising}}$ with nearest neighbour interactions, our argument would easily extend to higher dimensions: Define the sets $A_i=\{i\}$ for $i\in\mathbb{Z}$, so that $A_i\partial=\{i-1,i,i+1\}$. Clearly $ \cup_{i\in \mathbb{Z}} A_{2i}\partial$ tile the whole integers. Moreover, for this classical Glauber dynamics with nearest neighbour interactions, we also have that the conditional expectations $E_{A_{2i}}$ commute for all $i\in\ZZ$ (cf. \Cref{EAcondexp}). Then, for $\Lambda=[-(2n+1),2n+1]$, denoting $ A:= \cup_{i\in \mathbb{Z}} A_{2i}$ and defining $E_{A\cap \Lambda}:=\prod_{k=-n}^{n}E_{A_{2k}}$, we have for all $\rho\in\cD(\cH_\Lambda)$:
\begin{align}\label{equ:chainruleagain}
D(\rho_t\|\sigma^\Lambda)=D(\rho_t\|E_{A\cap \Lambda*}(\rho_t))+D(E_{A\cap \Lambda*}(\rho_t)\|\sigma^\Lambda)\,,
\end{align}
where $\rho_t:=\e^{t\overline{\cL}^{\operatorname{Ising}}}(\rho)$. Through the application of the CMLSI we are able to control the first term on the r.h.s. of~\eqref{equ:chainruleagain}, as the size of the region on which each conditional expectation acts is bounded. Moreover, since the image of the conditional expectation $E_{A\cap\Lambda *}$ is diagonal in the computational basis over $\cH_\Lambda$, the second relative entropy in \eqref{equ:chainruleagain} is classical. Therefore, we can control it in terms of the classical modified logarithmic Sobolev inequality constant:
\begin{align*}
D(\rho_t\|\sigma^\Lambda)&\le \e^{-\alpha_{\ccc}(\overline{\cL}^{\operatorname{Ising}}_{A\cap\Lambda})t}D(\rho\|E_{A\cap \Lambda*}(\rho))+\e^{-\alpha(L^{\operatorname{Ising}})t}D(\mathcal{C}_\Lambda(\rho)\|\sigma^\Lambda)\\
&\le 2 \max\big\{ \e^{-\alpha_{\ccc}(\overline{\cL}^{\operatorname{Ising}}_{A\cap\Lambda})t},  \e^{-\alpha(L^{\operatorname{Ising}})t}\big\}\,D(\rho\|\sigma^\Lambda)\,,
\end{align*}
 where $L^{\operatorname{Ising}}$ is the restriction of $\overline{\cL}^{\operatorname{Ising}}$ to the classical algebra. By a direct extension of the above method, we arrive at the following result:
\begin{remark}
	Let $\cL^{G}$ be a uniform family of embedded Glauber Lindbladians, and denote by $L^{{G}}$ their restriction to the classical algebra. Then the following conditions are equivalent:
\begin{itemize}
	\item[(i)]  $L^{G}$ satisfies \eqref{DSM} for some constant $\gamma\in (0,\infty)$.
		\item[(ii)] $L^{G}$ is gapped.
	\item[(iii)] $L^{G}$ has a positive logarithmic Sobolev constant independent of the system size.
	\item[(iv)] There exists $\tilde{\alpha}>0$ such that, for any system $\Lambda$, for all $t\ge 0$, and all $\rho\in\cD(\cH_\Lambda)$,
	\begin{align}\label{MLSI2}
	D(\e^{t\overline{\cL}_\Lambda^{G}}(\rho)\|\sigma^\Lambda)\le 2\,e^{-\tilde{\alpha} t}D(\rho\|\sigma^\Lambda)\,.
	\end{align}
	\item[(v)] $\cL^{G}$ satisfies the rapid mixing condition.
\end{itemize}
\end{remark}
	 
Although the bound \eqref{MLSI2} is enough to derive rapid mixing and its consequences, it is unsatisfactory from a mathematical point of view, due to the presence of the factor $2$ on its right-hand side which prevents us from claiming the existence of a modified logarithmic Sobolev constant for embedded Glauber dynamics and renders the bound trivial for small times. Moreover, we would like to extend the result to non-classical Gibbs states. The analysis carried out in the next sections will allow us to solve both these issues.

\subsection{Geometric conditions for MLSI}\label{geomconditions}
In this subsection, we introduce a condition inspired from the use of the map $E_{A\cap \Lambda*}$ above that we require for our proof of the MLSI. We recall that, given a family of conditional expectations $\{E_\Lambda\}_{\Lambda\ssubset \ZZ^d}$ associated to a $\kappa$-local Gibbs sampler,
 $$\partial\Lambda:=\supp(E_\Lambda)\backslash \Lambda,~~~\text{ and }~~~\Lambda\partial=\Lambda\cup\partial\Lambda\,.$$
 Next, define a coarse-graining of $\ZZ^d$ as follows: Given the hypercube $A_0=[0,D-1]\times \dots \times [0,D-1]\ssubset \ZZ^d$ of size $D^d$, for some integer $D>2\kappa$, we cover the whole lattice $\ZZ^d$ with translations $A_j$ of $A_0$ and their boundaries. In what follows, the sets $A_j$ will be called \textit{pixels}. More explicitly, singling out the first coordinate basis $e_1\in\ZZ^d$, we first construct a non-planar sheet of pixels orthogonal to $e_1$: for $j=1,...,d-1$, define the translations $\cT^{j}$ by the vector 
 \begin{align*}
     t^{j}:=\Big( \Big\lfloor \frac{D-1+\kappa}{2} \Big\rfloor, 0,...,0, \underbrace{D-1+\kappa}_{j+1\text{-th coordinate}},0,...,0 \Big)\,,
 \end{align*}
and define the sets $A_{j}:=\mathcal{T}^{j}(A_0)$. The rest of the non-planar sheet is constructed by translations of the $d$ pixels $A_j$, $j=0,...,d-1$, by the vectors $\pm 2\ell(D+\kappa-1)e_k$, $k\in \{2,...,d\}$ and $\ell\in\NN$. In a final step, we translate all the pixels generated by the previous procedure by the vectors $\pm \ell(D+\kappa-1)e_1$, $\ell\in \NN$. We refer to the set generated by the translation of a pixel along the direction $e_1$ as a \textit{column} $C$ of the tiling. The centre of the column refers to the sites in $C$ that are at distance at least $\kappa$ from $C^c$. The tiling $A:=\bigcup_{j\in\cJ}A_j $ generated this way enjoys the following two properties:
\begin{itemize}
    \item[(i)] For any $i\in \cJ$ 
   \begin{equation*}
       \underset{\mathcal{J}\ni j \ne i}{\text{inf}} \, \{\text{dist}(A_i, A_j) \} = \kappa \, .
   \end{equation*}
   \item[(ii)] The pixels and their boundaries cover the whole lattice, i.e.
     \begin{equation*}
       \underset{j \in \mathcal{J}}{\bigcup} \,  A_j \partial =  \mathbb{Z}^d \, .
   \end{equation*}
\end{itemize}

	\begin{figure}[!ht]
	\centering
	\includegraphics[width=0.7\linewidth]{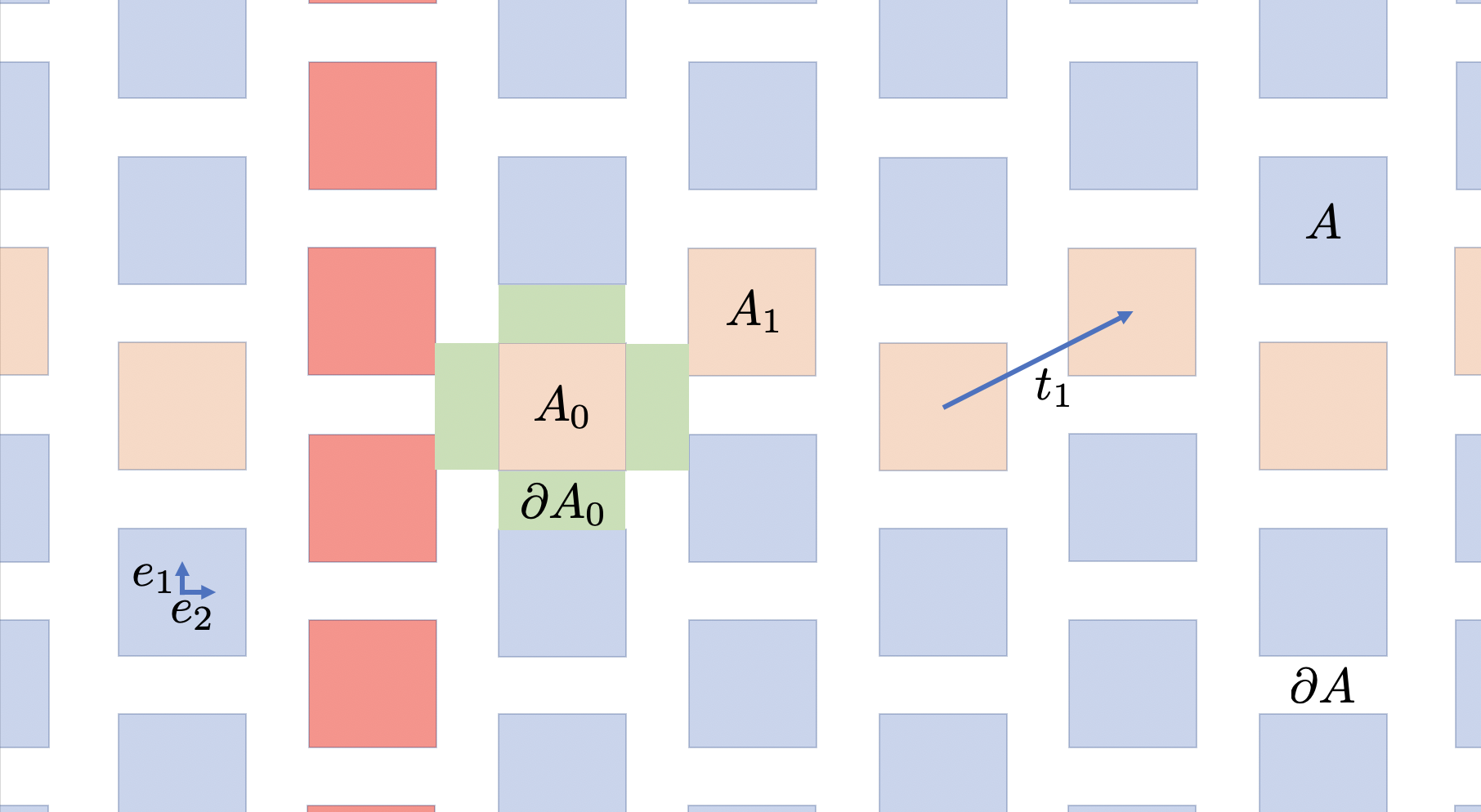}
	\caption{Tiling of $\ZZ^2$. Here, we assumed two-local interactions. The pixels in pink represent the non-planar sheet constructed out of translations of the pixels $A_0$ and $A_1$. The ones in red correspond to a column.}
\label{figuretiling}
	\end{figure}

These conditions lead to a pattern such as the one showed in \Cref{figuretiling}. Moreover, note that from condition (i) above, clearly $A_i \cap A_j \partial = \emptyset$ for any two different pixels $A_i$ and $A_j$. Any two pixels with intersecting boundaries will be called adjacent. Next, a \textit{cluster of pixels} is any finite union $A_{\mathcal{J}'}:=\bigcup_{i\in\mathcal{J}'} A_i$ of pixels such that, for any two $i,j\in\mathcal{J}'$, there exists a path of adjacent pixels in $A_{\mathcal{J}'}$  connecting $A_i$ and $A_j$. 
\begin{lemma}\label{grainedsets}
    Given any cluster of pixels $A_{\mathcal{J}'}:=\bigcup_{j\in \mathcal{J}'}A_j$, there exists a finite connected set $\widetilde{S}_{\mathcal{J}'}$ such that:
    \begin{itemize}
        \item[$\operatorname{(i)}$] $A_{\mathcal{J}'}\subset \widetilde{S}_{\mathcal{J}'}$\,;
        \item[$\operatorname{(ii)}$] $\partial \widetilde{S}_{\cJ'}\subset \partial A_{\mathcal{J}'}$\,;
        \item[$\operatorname{(iii)}$]  $A_{\cJ'}\partial^c =\widetilde{S}_{\cJ'}\partial^c$\, .
    \end{itemize}
\end{lemma}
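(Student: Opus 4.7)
The plan is to construct $\widetilde{S}_{\cJ'}$ explicitly by augmenting $A_{\cJ'}$ with a straight ``bridge'' of lattice sites through the gap between each pair of adjacent pixels in the cluster, taking these bridges thin and placed in the interior of the shared face so that no new boundary sites arise outside of $\partial A_{\cJ'}$. Concretely, for each pair $A_i,A_j\subset A_{\cJ'}$ at distance exactly $\kappa$ (adjacent in the cluster adjacency graph), I would choose a $1\times\cdots\times 1\times(\kappa-1)$ segment $B_{ij}$ of lattice sites crossing the gap along the axis of separation, positioned at transverse coordinates each inside the interval $[\kappa-1,D-\kappa]$ of the shared face. Such a choice is possible since $D>2\kappa$ ensures that this interior interval has width at least $D-2\kappa+1\geq 2$. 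I then set $\widetilde{S}_{\cJ'}:=A_{\cJ'}\cup\bigcup_{(i,j)}B_{ij}$, where the union runs over adjacent pairs in $\cJ'$.

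Conclusion (i) is then immediate, and connectedness follows because each $B_{ij}$ is a chain of lattice-adjacent sites sharing its endpoints with $A_i$ and $A_j$, while by definition $A_{\cJ'}$ being a cluster means that any two of its pixels are linked by a chain of adjacent pixels, hence by a chain of bridges. For (ii), first $\widetilde{S}_{\cJ'}\subset A_{\cJ'}\partial$ since each $B_{ij}$ lies within distance $\kappa-1$ of both $A_i$ and $A_j$. The key geometric claim is that for any bridge site $y\in B_{ij}$ and any $x\in\mathbb{Z}^d$ with $\operatorname{dist}(x,y)<\kappa$, either $x\in A_i\cup A_j\subset A_{\cJ'}$ or $\operatorname{dist}(x,A_i\cup A_j)<\kappa$: indeed the taxicab bound $|x_k-y_k|\leq\kappa-1$ for each coordinate $k$, combined with the choice $y_k\in[\kappa-1,D-\kappa]$ in transverse directions, forces the transverse coordinates of $x$ to stay inside $[0,D-1]$. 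Consequently the nearest point of $A_i\cup A_j$ to $x$ shares those transverse coordinates with $x$ and sits within distance $\kappa-1$ of $x$ along the separation axis. This proves $\partial\widetilde{S}_{\cJ'}\subset\partial A_{\cJ'}$.

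Property (iii) then follows formally from (i) and (ii): combining $\widetilde{S}_{\cJ'}\subset A_{\cJ'}\partial$ with $\partial\widetilde{S}_{\cJ'}\subset\partial A_{\cJ'}$ yields $\widetilde{S}_{\cJ'}\partial\subset A_{\cJ'}\partial$, while the reverse inclusion follows from (i), since any $x\in A_{\cJ'}\partial$ is within distance $<\kappa$ of $A_{\cJ'}\subset\widetilde{S}_{\cJ'}$ and therefore lies in $\widetilde{S}_{\cJ'}\partial$. Passing to complements gives the desired equality $A_{\cJ'}\partial^c=\widetilde{S}_{\cJ'}\partial^c$.

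The main obstacle is the geometric verification underlying (ii): it is crucial that the bridges be placed in the \emph{interior} of the shared face rather than near its edges, for otherwise the $<\kappa$-neighborhood of a bridge site can escape $\partial A_i\cup\partial A_j$ through a ``corner'' perpendicular to the bridge direction, and thus introduce boundary sites of $\widetilde{S}_{\cJ'}$ that are not in $\partial A_{\cJ'}$. The hypothesis $D>2\kappa$ supplies precisely enough transverse room for such an interior placement. A minor bookkeeping difficulty is that the staggered structure of the tiling makes adjacent pixels share only a portion of their faces rather than a full face, but this portion still has side of order $D-\kappa$, which is comfortably larger than $2\kappa$ and thus leaves ample space to position each $B_{ij}$.
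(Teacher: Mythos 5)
Your construction takes a genuinely different route from the paper's. The paper proceeds ``top-down'': it enumerates the sites of $\partial A_{\cJ'}$, keeps those sitting between pixels of the cluster (in the centre of a column for same-column adjacencies, or in the intersection of the relevant pixel boundaries for cross-column ones), rejects the rest, and defines $\widetilde{S}_{\cJ'}$ as the smallest simply connected set containing $A_{\cJ'}$ and the kept sites, the rejected sites then forming $\partial\widetilde{S}_{\cJ'}$. Your ``bottom-up'' bridge construction is a legitimate alternative; conclusion (i), connectedness, the containment $\widetilde{S}_{\cJ'}\subset A_{\cJ'}\partial$, and the purely formal derivation of (iii) from (i) and (ii) are all correct, and your geometric verification of (ii) is sound for two adjacent pixels in the \emph{same} column, where the pixels share the same footprint in every transverse direction.

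The gap is in the cross-column case, which you dismiss with an incorrect estimate. Adjacent pixels in neighbouring columns are offset by $\lfloor(D-1+\kappa)/2\rfloor$ in the first coordinate, so their footprints in that transverse direction overlap only in an interval of length about $(D-\kappa)/2$ --- not $D-\kappa$ as you claim --- which for $D$ just above $2\kappa$ is only about $\kappa/2$. Hence there is in general no admissible first coordinate for the bridge lying at depth $\kappa-1$ inside \emph{both} footprints (that would require roughly $D\geq 5\kappa$), and your key claim that all transverse coordinates of $x$ stay inside $[0,D-1]$ cannot be invoked: for $D=2\kappa+1$ and $\kappa\geq 3$ the interval $[\kappa-1,D-\kappa]$ does not even meet the overlap, while the bridge must lie in the overlap to be lattice-adjacent to both pixels. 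The construction survives, but the verification of (ii) must be replaced for staggered adjacencies: put the bridge anywhere in the overlap and note that, because the two first-coordinate ranges overlap, any $x$ within distance $\kappa-1$ of the bridge has its first coordinate exactly inside at least one of the two ranges; choosing the pixel accordingly (and letting the separation-axis coordinate of $x$ dictate the choice between $A_i$ and $A_j$ when both are available), one checks that the distance from $x$ to the chosen pixel is realized by a single coordinate and is at most $\kappa-1$. Without this repair the argument does not establish $\partial\widetilde{S}_{\cJ'}\subset\partial A_{\cJ'}$ for clusters spanning more than one column.
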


\begin{proof}
The proof proceeds by an enumeration of the sites at the boundary of $A_{\cJ'}$. First, we consider the situation when a site $j\in \partial A_{\cJ'}$ belongs to a column $C$: we distinguish two cases:
\begin{itemize}
    \item[(1)]$j$ is in the boundary of two pixels in $A_{\cJ'}$: in that case, we keep it if it is in the centre of $C$.
    \item[(2)] $j$ is in the boundary between a pixel in $A_{\cJ'}$ and a pixel in $A_{\cJ'}^c$: in this case, we reject $j$.   
\end{itemize}
This procedure permits to join adjacent pixels of $A_{\cJ'}$ belonging to a same column. Next, we consider sites in the boundary of $A_{\cJ'}$ which sit in between two columns $C$ and $C'$: then the problem reduces to a 2 dimensional problem (see \Cref{figuretiling2}). Here again, we need to distinguish between different situations. 
\begin{itemize}
    \item[(1')] $j$ is in between three pixels in $A_{\cJ'}$: then it is kept. 
    \item[(2')] either $C$ or $C'$ does not contain any pixel of $A_{\cJ'}$ whose boundary contains $j$: in that case, we reject $j$.
    \item[(3')] both $C$ and $C'$ exactly contain one pixel of $A_{\cJ'}$ whose boundary contains $j$: in that case, we keep $j$ if it lies in the intersection of the boundaries of the aforementioned pixels. Otherwise, we reject it.
\end{itemize}
This second separation of cases permits us to join adjacent pixels of $A_{\cJ'}$ which belong to different columns. Then, we define the set $\widetilde{S}_{\cJ'}$ as the smallest simply connected set which includes the union of the sites $j$ kept and $A_{\cJ'}$. Those sites $j$ which were rejected constitute the boundary of $\widetilde{S}_{\cJ'}$.

\end{proof}

For any cluster of pixels $A_{\mathcal{J}'}$, we call the largest set $\widetilde{S}_{\cJ'}$ satisfying conditions $(i)$ and $(ii)$ of \Cref{grainedsets} a \textit{grained set}, and denote the set $\partial A_{\cJ'}\backslash \partial \widetilde{S}_{\cJ'} \equiv \partial(\widetilde{S}\cap A)_{\operatorname{in}}$ (see  \Cref{figuretiling2}). The collection of all grained sets in $\ZZ^d$ is denoted by $\widetilde{\mathcal{S}}$.

 \begin{figure}[!ht]
	\centering
	\includegraphics[width=0.7\linewidth]{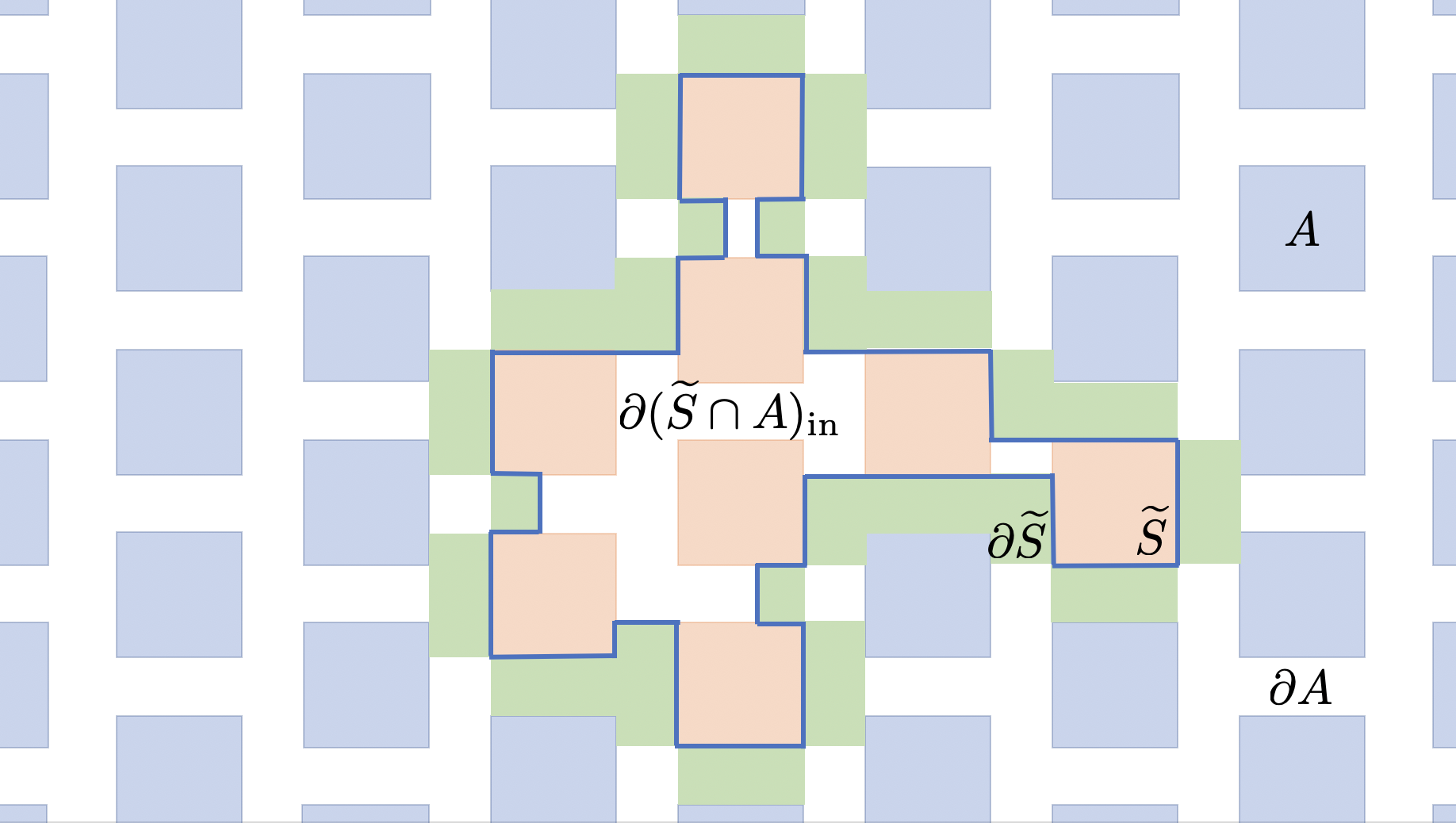}
	\caption{A grained set in $\ZZ^2$. Here again, we assume two-local interactions.}
	\label{figuretiling2}
\end{figure}

With these definitions and properties at hand, we are now ready to define our third condition for the existence of a MLSI:
		\begin{condition}\label{Pi}
The covering $A=\bigcup_{i\in \mathcal{J}} A_i$ defined above satisfies:
\begin{itemize} 
	\item[(i)] For all $i,j\in\mathcal{J}$, $E_{A_i}\circ E_{A_j}=E_{A_j}\circ E_{A_i}=E_{A_i\cup A_j}$; and 
	\item[(ii)] For any grained set  $\widetilde{S}\in \widetilde{\mathcal{S}}$, 

  there exists a decomposition $\cK_j^{\widetilde{S}}:=\bigoplus_{k}\cH^{(j,k)}$ such that
	$$\cF(\cL_{A\cap\widetilde{S}}):=\Id_{A\cap\tilde{S}}\otimes\bigoplus_{j\in I_{\partial\widetilde{S}}}\bigoplus_k\,\Id_{\cH^{(j,k)}}\otimes \cB(\cH_{j}^{\widetilde{S}})\,.$$

\end{itemize}
\end{condition}

\begin{remark}
Note that from \Cref{Pi}(i) and frustration-freeness it clearly follows that
$$ E_{\widetilde{S}}\circ E_{A}=E_{A}\circ E_{\widetilde{S}}$$
for any grained set  $\widetilde{S}\in \widetilde{\mathcal{S}}$. This observation will be used at the very last step of the proof of the main result (see \Cref{betatoalpha}).
\end{remark}

\Cref{Pi}(i) is crucially needed together with \Cref{CMLSIholds} in order to control the relative entropy by the CMLSI constant over a fixed sized region $A_j\in A$ in the decomposition \eqref{eqchainrule} below. Moreover, \Cref{Pi}(ii) plays a crucial role in the reduction of the analysis into smaller blocks (see a first discussion at the end of this subsection). For the time being, in the next propositions, we prove that \eqref{Pi}(i) is satisfied for various Gibbs samplers and that Condition \eqref{Pi}(ii) is satisfied for Schmidt semigroups.

\begin{lemma}[Examples for Conditions \eqref{Pi}(i)]\label{lemma:ConditionI}
Let $A,B\ssubset \ZZ^d$ be two regions such that $A\partial\cap B=\emptyset$ and $A\cap B\partial=\emptyset$. Then $E_A\circ E_B=E_B\circ E_A=E_{A\cup B}$ for the conditional expectations of the Schmidt semigroups corresponding to commuting potentials. 
\end{lemma}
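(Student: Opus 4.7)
The approach I would take is to reduce the claim to two assertions: $(a)$ $E_A^S$ and $E_B^S$ commute as maps on $\cB(\cH_V)$, and $(b)$ their common fixed-point algebra coincides with $\cF(E_{A\cup B}^S)$. Once $(a)$ and $(b)$ are in hand, $E_A^S\circ E_B^S$ is a conditional expectation with respect to $\sigma^V$ onto $\cF(E_A^S)\cap\cF(E_B^S)=\cF(E_{A\cup B}^S)$, and the uniqueness part of Takesaki's theorem (applicable in view of the block decomposition \eqref{sigmavdecomp}, which makes each Schmidt subalgebra invariant under the modular automorphism group of $\sigma^V$) forces this composition to equal $E_{A\cup B}^S$. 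The symmetric argument handles $E_B^S\circ E_A^S$.

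For $(a)$, the hypotheses $A\partial\cap B=\emptyset$ and $A\cap B\partial=\emptyset$ give $\partial A\subseteq V\setminus B$ and $\partial B\subseteq V\setminus A$, hence $\partial(A\cup B)=\partial A\cup\partial B$. Inspecting the explicit expression \eqref{equ:decomp} for $\cA_{A,\operatorname{out}}$, one sees that $E_A^S$ acts as the identity on sites of $A$, as the full algebra $\cB(\cH_k)$ on sites $k\in V\setminus A\partial$, and on each $j\in\partial A$ only through subalgebras of the family $\{\cA_{jk}:(j,k)\in E,\,k\notin A\}$; an analogous description holds for $E_B^S$. On sites where at most one of the two expectations acts non-trivially, commutation is immediate; on a shared boundary site $j\in\partial A\cap\partial B$ both restricted actions are built from subalgebras of $\{\cA_{jk}:(j,k)\in E\}$, which is a mutually commuting family by the argument that opens the Schmidt construction (using that the potential is commuting). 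Hence $[E_A^S,E_B^S]=0$ on every tensor factor, and so globally.

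For $(b)$, the inclusion $\cF(E_{A\cup B}^S)\subseteq\cF(E_A^S)\cap\cF(E_B^S)$ is exactly the frustration-freeness statement established in the second part of \Cref{lemmaSchmidt}. For the reverse inclusion, any $X\in\cF(E_A^S)\cap\cF(E_B^S)$ must be trivial on $A$ and on $B$ (hence on $A\cup B$) and, site-by-site on $\partial(A\cup B)=\partial A\cup\partial B$, must lie in the intersection of the two outgoing Schmidt algebras. The main obstacle lies precisely here, in identifying this intersection at shared boundary sites $j\in\partial A\cap\partial B$: one has to show that the subalgebra generated by $\{\cA_{jk}:k\notin A\}$ intersected with the one generated by $\{\cA_{jk}:k\notin B\}$ coincides with the subalgebra generated by $\{\cA_{jk}:k\notin A\cup B\}$. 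Both separation hypotheses enter critically: $A\partial\cap B=\emptyset$ prevents $B$ from supplying a neighbor of $j$ that would otherwise be absorbed into the ``inert'' directions of $E_A^S$, and symmetrically for the other hypothesis. Once this componentwise identification is carried out, $X$ is seen to lie in $\cF(E_{A\cup B}^S)$, closing $(b)$ and the proof.
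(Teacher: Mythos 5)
Your proposal is correct and follows essentially the same route as the paper: commutation is obtained from the fact that, at a shared boundary vertex $j\in\partial A\cap\partial B$, the factors $\cH_{jk}^{\alpha_j}$ on which $E_A^S$ and $E_B^S$ act non-trivially are indexed by disjoint sets of inward edges (precisely because $A\cap B=\emptyset$, the point the paper isolates), and the identification of the commuting product with $E_{A\cup B}^S$ then follows from the fixed-point computation together with uniqueness of the conditional expectation. The paper's own proof is terser — it notes the disjointness of the relevant Hilbert spaces and says the rest ``follows along the same lines'' as \Cref{lemmaSchmidt} — but the content is the same; note only that \Cref{lemmaSchmidt} already yields both inclusions in $\cF(E_{A\cup B}^S)=\cF(E_A^S)\cap\cF(E_B^S)$, so the componentwise argument you give for the reverse inclusion in step $(b)$ can be replaced by a direct citation.
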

\begin{proof}
The proof follows a similar path as Lemma~\ref{lemmaSchmidt}. First, let us see that the conditional Schmidt expectations commute. From the assumptions $A\partial\cap B=\emptyset, A\cap B\partial=\emptyset$, we see from the decomposition in Equation~\eqref{equ:decomp} that $E_A$ and $E_B$ only act nontrivially on disjoint Hilbert spaces. This is because there is no edge that connects both $A$ and $B$ to the same vertex in the intersection of their boundaries. From this it follows that they commute. The fact that the corresponding product is a conditional expectation $E_{A\cup B}$ follows along the same lines.
\end{proof}

\begin{remark}
For classical evolutions over quantum systems, we can heavily simplify the above construction, as we can take regions $A_j$ whose union tiles the lattice. In particular, $E_A$ and $E_B$ commute even for $A\cap\partial B\ne \emptyset$. However, this is not necessarily the case in the commuting setting, even in the case of the $2$-local Schmidt conditional expectations, since there are edges on which both $E_A$ and $E_B$ act non-trivially.
 \end{remark}

\begin{proposition}[Examples for Condition \eqref{Pi}(ii)]
Condition \eqref{Pi}$\operatorname{(ii)}$ holds for the Schmidt semigroups corresponding to $2$-local interactions in any dimension, as well as for embedded Glauber dynamics.  
\end{proposition}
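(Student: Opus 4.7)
The plan is to leverage the explicit description of the fixed-point algebra for Schmidt conditional expectations from Equation~\eqref{equ:decomp} together with the geometric properties of grained sets from Lemma~\ref{grainedsets}, notably that $\partial\widetilde{S}\subset \partial A_{\cJ'}$ and $A_{\cJ'}\partial = \widetilde{S}\partial$, where I write $A_{\cJ'}:=A\cap\widetilde{S}$ for the cluster of pixels contained in~$\widetilde{S}$. Applying Lemma~\ref{lemmaSchmidt} to identify $\cF(\cL_{A_{\cJ'}}^S)$ with $\cF(E_{A_{\cJ'}}^S)$, the equality $A_{\cJ'}\partial = \widetilde{S}\partial$ further shows that the full matrix algebras occurring in the fixed-point decomposition outside $A_{\cJ'}\partial$ coincide with those outside $\widetilde{S}\partial$, so the whole problem reduces to re-indexing the Schmidt boundary algebra $\cA_{A_{\cJ'},\operatorname{out}}$ acting on $\partial A_{\cJ'}$.

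For the core argument, I would first use the assumption that the potential is $2$-local to rewrite $\cA_{A_{\cJ'},\operatorname{out}}$ as a tensor product over boundary sites $v\in\partial A_{\cJ'}$ of the site-local algebras $\bigoplus_{\alpha_v}\Id\otimes \cB(\cdot)$ read off from~\eqref{equ:decomp}. The central projections on different sites commute and multiply to yield a joint block decomposition indexed by the product set $I_{\partial A_{\cJ'}}=\prod_{v\in\partial A_{\cJ'}} I^{(v)}$. Next, using the disjoint partition $\partial A_{\cJ'} = \partial\widetilde{S}\sqcup (\widetilde{S}\setminus A_{\cJ'})$ supplied by the grained set construction, this product splits as $I_{\partial A_{\cJ'}} = I_{\partial\widetilde{S}}\times I_{\widetilde{S}\setminus A_{\cJ'}}$, and I would re-group the direct sum so that the outer index (call it $j$) ranges over $I_{\partial\widetilde{S}}$ while the inner one (call it $k$) ranges over $I_{\widetilde{S}\setminus A_{\cJ'}}$.

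Within each $(j,k)$-block the algebra is a tensor product of the local Schmidt factors at the sites of $\partial A_{\cJ'}$, each of which is already of the form $\Id\otimes \cB(\cdot)$ on prescribed sub-Hilbert spaces dictated by the Schmidt block structure, tensored with $\bigotimes_{x\in V\setminus \widetilde{S}\partial}\cB(\cH_x)$. Collecting all identity factors into a single space $\cH^{(j,k)}$ and all full-$\cB$ factors into $\cH_j^{\widetilde{S}}$ then yields exactly the required form $\Id_{A\cap\widetilde{S}}\otimes \bigoplus_{j\in I_{\partial\widetilde{S}}}\bigoplus_k \Id_{\cH^{(j,k)}}\otimes \cB(\cH_j^{\widetilde{S}})$. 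For embedded Glauber dynamics the same strategy applies and is considerably simpler: the conditional expectations of~\eqref{EAcondexp} are diagonal in the computational basis with central projections that literally factorise over individual sites, so the product decomposition $I_{\partial A_{\cJ'}} = I_{\partial\widetilde{S}}\times I_{\widetilde{S}\setminus A_{\cJ'}}$ is essentially tautological and all local blocks are one-dimensional.

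The main obstacle is really that very first step, namely ensuring that the decomposition $I_{\partial A_{\cJ'}} = I_{\partial\widetilde{S}}\times I_{\widetilde{S}\setminus A_{\cJ'}}$ has no cross-coupling between its two factors. This is precisely where the restriction to nearest-neighbour interactions enters: a commuting interaction supported on more than two sites could produce a Schmidt block that simultaneously constrains a site in $\partial\widetilde{S}$ and one in $\widetilde{S}\setminus A_{\cJ'}$, ruining the per-site factorisation of $\cA_{A_{\cJ'},\operatorname{out}}$ and hence of the index set. Relaxing this would presumably require either a finer grained-set construction enforcing a wider ``buffer'' between outer and inner boundary sites, or a stronger commutation assumption on the potential.
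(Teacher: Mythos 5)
Your proposal is correct and follows essentially the same route as the paper's proof: identify $\operatorname{Ker}(\cL_{A\cap\widetilde{S}})$ with $\cF(E^S_{A\cap\widetilde{S}})$ via Lemma~\ref{lemmaSchmidt}, observe that $A\cap\widetilde{S}$ is a cluster of pixels, and read off the required block decomposition from the explicit form of $\cA_{A\cap\widetilde{S},\operatorname{out}}$ in Equation~\eqref{equ:decomp}, with the Glauber case being immediate from the classical product structure of the boundary projections. Your write-up merely makes explicit the per-site factorisation of the index set over $\partial\widetilde{S}$ versus the inner boundary sites, which the paper leaves implicit.
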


\begin{proof}
The case of embedded Glauber dynamics is obvious, since the boundary can be decomposed into tensor products of local projections onto the classical basis. We focus our attention to the case of Schmidt generators. Following the proof of Lemma~\ref{lemmaSchmidt}, we see that $\operatorname{Ker}(\cL_{A\cap\widetilde{S}})=\cF(E_{A\cap\widetilde{S}})$. Moreover, it is not difficult to see that $A\cap\widetilde{S}$ corresponds to a cluster of pixels. The desired decomposition then immediately follows from $\mathcal{A}_{A\cap\widetilde{S},\operatorname{out}}$ given in Equation~\eqref{equ:decomp}. 
\end{proof}

The proposition below provides a justification to the introduction of \Cref{Pi} and showcases how working with a restricted set of input states significantly simplifies the analysis of the relative entropy on different regions:

\begin{theorem}[Approximate tensorization of the relative entropy\label{ATAC}]
Let $\cL$ be a Gibbs sampler corresponding to a commuting potential. Assume further that the family $\cL$ satisfies \eqref{LiLinftyaa} with parameters $c\ge 0$ and $\xi>0$, as well as \Cref{Pi}. Then, for any $C,D\in\widetilde{\mathcal{S} }$ such that $C,D\subset \Lambda\ssubset \ZZ^d$ with $2c\,|C\cup D|\,\exp\big(  -\frac{\dist(C\backslash D,D\backslash C)}{\xi} \big)<1$, and all  $\rho\in\cD(\cH_{\Lambda})$,
\begin{align*}
D(\omega\|E_{C\cup D*}(\omega))\le \frac{1}{1-2c\,|C\cup D|\,\e^{-\frac{\dist(C\backslash D,D\backslash C)}{\xi}}}\,\Big( D(\omega\|E_{C*}(\omega))+D(\omega\|E_{D*}(\omega))\Big)\,,
\end{align*}
with $\omega:=E_{A\cap \Lambda*}(\rho)$.
\end{theorem}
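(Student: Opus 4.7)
The plan is to apply the general quantum approximate tensorization theorem of \cite{bardet2020approximate} (Theorem \ref{theoremapproxtens}) to the peeled state $\omega$, and then exploit Condition \ref{Pi}(ii) to show that the residual additive defect in that theorem vanishes. In this sense, the operation $\rho\mapsto \omega=E_{A\cap\Lambda*}(\rho)$ plays the role of projecting onto the ``quasi-classical'' sector on which Cesi's classical approximate tensorization argument can be imitated verbatim.

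First, I would set $\mathcal{M}:=\cF(\cL_{C\cup D})$, $\mathcal{N}_1:=\cF(\cL_C)$, and $\mathcal{N}_2:=\cF(\cL_D)$. By frustration-freeness of the Gibbs sampler, $\mathcal{M}\subset \mathcal{N}_1,\mathcal{N}_2$, and the associated conditional expectations coincide with $E_{C\cup D}$, $E_C$, $E_D$. The \eqref{LiLinftyaa} hypothesis supplies the quantitative input required by Theorem \ref{theoremapproxtens}:
\begin{equation*}
\max_{i\in I_{\partial(C\cup D)}} \bigl\| E^{(i)}_C\circ E^{(i)}_D - E^{(i)}_{C\cup D}\bigr\| \le c\,|C\cup D|\,e^{-\dist(C\backslash D,\, D\backslash C)/\xi},
\end{equation*}
which under the smallness condition $2c|C\cup D|e^{-\dist(C\backslash D,D\backslash C)/\xi}<1$ yields an inequality of the form
\begin{equation*}
D(\omega\|E_{C\cup D*}(\omega))\le \frac{D(\omega\|E_{C*}(\omega))+D(\omega\|E_{D*}(\omega))}{1-2c|C\cup D|e^{-\dist(C\backslash D,D\backslash C)/\xi}}+d(\omega),
\end{equation*}
where $d(\omega)$ is an explicit defect term that vanishes whenever $\omega$ is block-diagonal in the central decomposition $\{P^{C\cup D}_j\}_{j\in I_{\partial(C\cup D)}}$ of $\cF(\cL_{C\cup D})$.

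The heart of the argument is to prove $d(\omega)=0$. By Condition \ref{Pi}(i), frustration-freeness, and Lemma \ref{lemma:ConditionI}, the pixel-level conditional expectations composing $E_{A\cap\Lambda*}$ commute in the required way, so that
\begin{equation*}
E_{A\cap\Lambda*}=E_{A\cap(\Lambda\setminus(C\cup D))*}\circ E_{A\cap(C\cup D)*}.
\end{equation*}
In particular $\omega$ lies in the image of $E_{A\cap(C\cup D)*}$. Since $C$ and $D$ are grained sets, their union contains the underlying cluster of pixels whose fixed-point algebra, by Condition \ref{Pi}(ii), is block-diagonal with respect to the central projections $\{P^{C\cup D}_j\}$ of $\cF(\cL_{C\cup D})$ sitting on $\partial(C\cup D)$. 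Hence $\omega$ is automatically block-diagonal in that decomposition, and $d(\omega)=0$ as required.

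The main obstacle is the bookkeeping in this last step: one has to verify that the block decomposition prescribed by \ref{Pi}(ii) on $A\cap(C\cup D)$ refines the central decomposition of $\cF(\cL_{C\cup D})$ entering $d(\omega)$ in Theorem \ref{theoremapproxtens}, rather than merely being compatible with it. This should reduce to a direct verification from the explicit form of the Schmidt conditional expectations and the geometry of grained sets, exploiting the fact that the boundaries $\partial\widetilde{S}$ of grained sets align with faces of pixels, so that the central projections at $\partial(C\cup D)$ inherited from $\cF(\cL_{A\cap(C\cup D)})$ factorize as tensor products of single-site projections that lie inside the algebra generated by the $\{P^{C\cup D}_j\}$. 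Once this alignment is checked, the peeling genuinely removes the quantum coherence responsible for the defect, and the approximate tensorization follows.
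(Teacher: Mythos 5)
Your proposal is correct and takes essentially the same route as the paper: the proof in Appendix \ref{appendix} is exactly an adaptation of the argument behind Theorem \ref{theoremapproxtens} from \cite{bardet2020approximate}, in which the block-diagonality of $\omega=E_{A\cap\Lambda*}(\rho)$ guaranteed by \Cref{Pi}(ii) (obtained there most cleanly from frustration-freeness, $E_{A\cap(C\cup D)*}(\omega)=\omega$, rather than from your factorization of $E_{A\cap\Lambda*}$, which as written only places $\omega$ in the image of the outermost map) is what permits a blockwise application of \eqref{LiLinftyaa} followed by Pinsker's inequality. The only packaging difference is that the paper re-runs the multivariate-trace-inequality derivation inline so that the would-be defect shows up as a term proportional to $D(\omega\|E_{C\cup D*}(\omega))$ that is absorbed into the left-hand side --- which is precisely where the prefactor $\big(1-2c\,|C\cup D|\,\e^{-\dist(C\backslash D,D\backslash C)/\xi}\big)^{-1}$ comes from --- instead of invoking the cited theorem as a black box with a vanishing additive defect.
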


We include the proof of this result in Appendix \ref{appendix}. Some other results in the same spirit have appeared in the last years in the literature of quantum systems, frequently termed as \textit{approximate factorization} \cite{capel2018quantum, BardetCapelLuciaPerezGarciaRouze-HeatBath1DMLSI-2019} or \textit{approximate tensorization} \cite{bardet2020approximate} of the relative entropy. Such approximate tensorization statements constitute the most important step in recent classical proofs of functional inequalities~\cite{cesi2001quasi}.

\subsection{Main result for $d$-dimensional systems}\label{sec:mainresult}

In this section, we state the main result of the current manuscript, namely the positivity of  the  MLSI constant of a family of Lindbladians with a specific geometry satisfying certain conditions of clustering of correlations. Before stating this theorem, we need to conceive a new geometrical argument, inspired by that of  \cite{cesi2001quasi,[D02]} by restricting the analysis to some grained sets such as the ones presented in Lemma \ref{grainedsets}. For that, we need to introduce the notion of ``subordinated grained fat rectangle'' from that of ``fat rectangle'' presented in \cite{[D02]}. 
 
\begin{definition}[Fat rectangle]
	Let $x\in  \ZZ^d$ be a site and $l_1, \ldots, l_d \in \NN$. We define the following \textit{rectangle}:
	\begin{equation}
	T(x;l_1,\ldots,l_d):=x+([1,l_1]\times \ldots \times [1,l_d])\cap \ZZ^d.
	\end{equation}
	Given a rectangle of this form, we define its \textit{size} by $\text{max}\qty{l_k \, : \, k=1,\ldots,d}$, and we say that the rectangle is \textit{fat} if
	\begin{equation}
	\text{min} \qty{l_k \, : \, k=1,\ldots,d} \geq \frac{1}{10} \text{max}\qty{l_k \, : \, k=1,\ldots,d}.
	\end{equation}
	A rectangle is denoted by $T$, and the class of rectangles of size at most $L$ is written by $\mathcal{T}_L$.  We further write 
	\begin{equation*}
	    \mathcal{T} := \underset{L \geq 1}{\bigcup} \, \mathcal{T}_L \, . 
	\end{equation*}
 \end{definition}

Now, given a rectangle $T$, we define the  \textit{grained rectangle subordinated to $T$} as the largest grained set contained in $T$, and denote it by $\widetilde{T}$. Note that for $T$ large enough,  $\widetilde{T}$ always exists and can be constructed by considering the pixels contained in $T$ and following Lemma \ref{grainedsets}. $\widetilde{T}$ is then said to be a   \textit{grained fat rectangle} if there exists a fat rectangle $T$ such that $\widetilde{T}$ is the grained set subordinated to $T$. 

We are ready to state and prove our main result:

\begin{theorem}\label{maintheorem}
		Let $\{ \Lambda\}_{\Lambda  \ssubset \ZZ^d}$ be an increasing family of fat rectangles such that $\Lambda \nearrow \mathbb{Z}^d$ and let $\widetilde{\Lambda}$ be the subordinated grained rectangle associated to each $\Lambda$. Let	$\cL:=\{\cL_{\widetilde{\Lambda}}, \cL_{\partial \widetilde{\Lambda}}\}_{\widetilde{\Lambda}}$ be a uniform family of local, primitive, reversible and frustration-free Lindbladians satisfying \eqref{LiLinftyaa}. Moreover, assume \eqref{Pi} holds. Then,
	\begin{equation*}
\underset{\widetilde{\Lambda} \nearrow \mathbb{Z}^d}{\operatorname{lim \, inf}}\,\alpha(\cL_{\widetilde{\Lambda}})>0 \,,
	\end{equation*}
	where the infimum above is taken over all families of subordinated fat grained rectangles $\widetilde{\Lambda}$.

\end{theorem}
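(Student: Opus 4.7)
The plan is to follow the high-level strategy of~\cite{cesi2001quasi,[D02]} adapted to the quantum setting: combine the chain rule for the relative entropy, the complete MLSI on bounded regions (\Cref{CMLSIholds}), and the approximate tensorization result (\Cref{ATAC}) into a recursive argument on grained fat rectangles. First I would apply the chain rule \eqref{chainrulerelatent} along the \emph{peeling} conditional expectation $E_{A\cap\widetilde{\Lambda}*}=\prod_{j:\,A_j\subseteq\widetilde{\Lambda}}E_{A_j*}$, whose factorization is well-defined because the pixel-wise conditional expectations commute by \Cref{Pi}(i):
\[
 D(\rho\,\|\,\sigma^{\widetilde{\Lambda}})=D\bigl(\rho\,\big\|\,E_{A\cap\widetilde{\Lambda}*}(\rho)\bigr)+D\bigl(E_{A\cap\widetilde{\Lambda}*}(\rho)\,\big\|\,\sigma^{\widetilde{\Lambda}}\bigr).
\]
The first (peeling) term is controlled by a CMLSI of constant rate: each $\cL_{A_j}$ acts on a Hilbert space of fixed dimension, so \Cref{CMLSIholds} yields a uniform $\alpha_{\mathrm{loc}}>0$; the tensorization lemma for commuting generators stated after \eqref{MLSI} transfers this to $\cL_{A\cap\widetilde{\Lambda}}=\sum_j\cL_{A_j}$, and since $\sigma^{\widetilde{\Lambda}}$ is jointly invariant by frustration-freeness, the additivity of the entropy production (\Cref{EPexpress}) gives $EP_{\cL_{A\cap\widetilde{\Lambda}}}(\rho)\le EP_{\cL_{\widetilde{\Lambda}}}(\rho)$. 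Altogether $\alpha_{\mathrm{loc}}\,D(\rho\|E_{A\cap\widetilde{\Lambda}*}(\rho))\le EP_{\cL_{\widetilde{\Lambda}}}(\rho)$.

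For the bulk term, I set $\omega:=E_{A\cap\widetilde{\Lambda}*}(\rho)$, which satisfies $\omega=E_{A\cap\widetilde{\Lambda}*}(\omega)$, placing it exactly in the class of inputs to which \Cref{ATAC} applies. I will then prove by induction on the size $L$ of the grained fat rectangle $\widetilde{\Lambda}$ that there exists $\alpha_{\mathrm{bulk}}>0$ independent of $L$ such that $\alpha_{\mathrm{bulk}}\,D(\omega\|\sigma^{\widetilde{\Lambda}})\le EP_{\cL_{\widetilde{\Lambda}}}(\omega)$ for every such pinched $\omega$. For $L$ above a fixed threshold $L_0$, split $\widetilde{\Lambda}$ along its longest direction into two overlapping grained fat sub-rectangles $\widetilde{C},\widetilde{D}$ with $\widetilde{C}\cup\widetilde{D}=\widetilde{\Lambda}$, both of size at most $\tfrac{7}{10}L$, and with $\dist(\widetilde{C}\setminus\widetilde{D},\widetilde{D}\setminus\widetilde{C})\ge\eta L$ for a fixed $\eta>0$ (the fatness assumption guarantees existence of such a split). \Cref{ATAC} then yields
\[
 D(\omega\|E_{\widetilde{\Lambda}*}(\omega))\le \frac{D(\omega\|E_{\widetilde{C}*}(\omega))+D(\omega\|E_{\widetilde{D}*}(\omega))}{1-2c|\widetilde{\Lambda}|e^{-\eta L/\xi}}\,,
\]
and because $E_{A\cap\widetilde{\Lambda}}$ factors through both $E_{A\cap\widetilde{C}}$ and $E_{A\cap\widetilde{D}}$ (using \Cref{Pi}(i) and the disjoint decomposition of pixels), the state $\omega$ remains pinched on both sub-rectangles, so the induction hypothesis applies there. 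Combining the two inductive bounds and using that each local term of $\cL_{\widetilde{\Lambda}}$ lies in at most an overlap-controlled number $K$ of the sub-generators $\cL_{\widetilde{C}},\cL_{\widetilde{D}}$, one obtains a recursion of the form $\alpha_{\mathrm{bulk}}(L)\ge \alpha_{\mathrm{bulk}}(7L/10)\,(1-2c|\widetilde{\Lambda}|e^{-\eta L/\xi})/K$, whose telescoping iteration converges to a strictly positive limit because the correction $|\widetilde{\Lambda}|e^{-\eta L/\xi}$ is summable under $L\mapsto 7L/10$. The base case $L\le L_0$ is closed using \Cref{CMLSIholds} directly. Adding the peeling and bulk estimates yields the \eqref{MLSI} for $\cL_{\widetilde{\Lambda}}$ with a constant $\alpha\ge\tfrac{1}{2}\min(\alpha_{\mathrm{loc}},\alpha_{\mathrm{bulk}})$ uniform in $\widetilde{\Lambda}$.

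The hard part will be the geometric bookkeeping that guarantees $K$ stays bounded independently of $L$ and that every recursive splitting preserves the ``grained fat rectangle'' structure on which \Cref{Pi} and \Cref{ATAC} are available. This requires carefully choosing the splits so that boundary interaction terms of $\cL_{\widetilde{\Lambda}}$ are assigned to only one of the two sub-rectangles, and ordering successive splits along the longest remaining dimension so the overlap volume remains $O(L^{d-1})$ while the bulk is $\Theta(L^d)$ at every step. Secondarily, one must verify that the pinched states produced at each level of the recursion continue to satisfy the hypotheses of \Cref{ATAC} after prior pinchings; this is where \Cref{Pi}(ii) is essential, as it guarantees that $\cF(\cL_{A\cap\widetilde{S}})$ has the block structure compatible with restricting to any grained sub-region and ensures that the additive error term $d(\omega)$ arising in the general (non-pinched) approximate tensorization of \cite{bardet2020approximate} vanishes along the recursion.
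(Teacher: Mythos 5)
Your overall skeleton matches the paper's: chain rule along the peeling map $E_{A\cap\widetilde{\Lambda}*}$, CMLSI on the bounded pixels for the peeled term, and a geometric recursion driven by approximate tensorization (\Cref{ATAC}) for the bulk term. However, the recursion you set up for the bulk term does not close. With a single overlapping split $\widetilde{\Lambda}=\widetilde{C}\cup\widetilde{D}$ one has $\cL_{\widetilde{C}*}+\cL_{\widetilde{D}*}=\cL_{\widetilde{C}\cup\widetilde{D}*}+\cL_{\widetilde{C}\cap\widetilde{D}*}$, so the two entropy productions coming out of \Cref{ATAC} recombine into $\operatorname{EP}_{\cL_{\widetilde{\Lambda}}}(\rho)+\operatorname{EP}_{\cL_{\widetilde{C}\cap\widetilde{D}}}(\rho)$, and the best generic bound on the overlap term is $\operatorname{EP}_{\cL_{\widetilde{C}\cap\widetilde{D}}}(\rho)\le\operatorname{EP}_{\cL_{\widetilde{\Lambda}}}(\rho)$. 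This is exactly your constant $K=2$, and your recursion $\alpha_{\mathrm{bulk}}(L)\ge \alpha_{\mathrm{bulk}}(7L/10)(1-\epsilon_L)/K$ then iterates $\Theta(\log L)$ times, producing a factor $K^{-\Theta(\log L)}=L^{-\Theta(1)}$. The limit is zero, not a positive constant, so the size-independence of the MLSI constant is lost. The summability of the $|\widetilde{\Lambda}|e^{-\eta L/\xi}$ corrections is not the issue; the constant per-level loss from double-counting the overlap is.

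The paper's proof (Steps 1--4 of \Cref{secondterm}, and their $d$-dimensional analogues in Appendix \ref{appendix:nD}) repairs precisely this point with an averaging trick inherited from \cite{cesi2001quasi,[D02]}: instead of one cut, one takes $n_L\sim L/(10\sqrt{L})=\Theta(\sqrt{L})$ parallel cuts $(C_n,D_n)$ whose overlaps $C_n\cap D_n$ have width $\Theta(\sqrt{L})$ and are \emph{pairwise disjoint}, all contained in $\widetilde{\Lambda}$. Averaging the tensorization bound over $n$ gives $\frac{1}{n_L}\sum_n\operatorname{EP}_{\cL_{C_n\cap D_n}}(\rho)\le\frac{1}{n_L}\operatorname{EP}_{\cL_{\widetilde{\Lambda}}}(\rho)$, so the per-level multiplicative loss is only $(1+1/n_L)=(1+O(1/\sqrt{L}))$, and the infinite product $\prod_k(1+O(2^{-k/2}))$ converges; the separation $\Theta(\sqrt{L})$ is still enough for the error $|\widetilde{\Lambda}|e^{-\sqrt{L}/\xi}$ in $\theta(C_n,D_n)$ to vanish. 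You should replace your single-split induction by this family of translated cuts (and, in dimension $d>2$, iterate the averaging along each coordinate direction, which is why the paper's recursion carries an exponent $3d$ rather than $1$). The remaining ingredients you invoke --- the pinched states staying in the domain of \Cref{ATAC} via \Cref{Pi}, and the comparison $\beta_\Lambda(\cL_{\widetilde{T}})\ge\alpha_{\ccc}(\cL_{\widetilde{T}})$ for the base case --- are consistent with the paper.
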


\begin{remark}
Note that the same result would be satisfied for $\{ \overline{\mathcal{L}}_{\widetilde{\Lambda}} \}$  after fixing the boundary conditions. 
\end{remark}

In the next section, we present a simplified version of this result for 1D and 2D systems, based on a splitting of the plane into some \textit{rhomboids}, which constitute a particular and elegant case of the aforementioned subordinated grained sets. The proof of \Cref{maintheorem}, i.e. for $n$-dimensional systems, essentially follows the same steps, but needs to involve \textit{subordinated grained rectangles}, and thus presents some subtleties and more elaborate notations. Since the former is more instructive for the reader, we decide to prove it in the main text and  leave the proof of the latter to Appendix \ref{appendix:nD}.

To conclude this section, in the case of an embedded Glauber dynamics, we recover the full equivalence as a consequence of \Cref{classclust} and \Cref{maintheorem}:

\begin{corollary}\label{cor:glauberok}
	Let $\cL^{G}$ be a uniform family of embedded Glauber Lindbladians, and denote by $L^{{G}}$ their restriction to the classical algebra. Then the following conditions are equivalent:
\begin{itemize}
	\item[(i)]  $L^{G}$ satisfies \eqref{DSM} for some constant $\gamma\in (0,\infty)$.
	\item[(ii)] $L^{G}$ has a positive logarithmic Sobolev constant independent of the system size.
	\item[(iii)] $\cL^G$ has a positive $\operatorname{MLSI}$ constant independent of system size.
	\item[(iv)] $\cL^{G}$ satisfies the rapid mixing condition.
		\item[(v)] $L^{G}$ is gapped.
		\item[(vi)] $\cL^G$ is gapped.
\end{itemize}
For $\cL^S$ a uniform family of Schmidt evolutions with $2$-local interactions, the chain of implications $\eqref{LiLinftyaa} \Rightarrow(iii) \Rightarrow (iv) \Rightarrow (vi) \Rightarrow \eqref{qL2}$ holds.
\end{corollary}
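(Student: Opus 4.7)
The plan is to establish a cycle of implications for the embedded Glauber case by combining the classical Stroock–Zegarlinski equivalence with our new Theorem \ref{maintheorem}, and then to handle the Schmidt case by a similar (but shorter) chain.

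For the Glauber equivalence, I would first recall that the classical equivalences (i)$\Leftrightarrow$(ii)$\Leftrightarrow$(v) are precisely the content of Stroock–Zegarlinski's series \cite{STROOCK1992299,stroock1992b,stroock1992c} reviewed in Section~\ref{clustercorrela}. To close the remaining loop, I would proceed as follows. The key input is that for the embedded Glauber dynamics the conditional expectations $E^G_A$ factor as $\mathbb{E}_A\circ \mathcal{C}_{A\partial}$ (cf.~\eqref{EAclassEA}), so condition~\eqref{Pi} is automatic (the decomposition of the fixed-point algebra is a product of local classical projections) and \Cref{lemma:ConditionI} guarantees \Cref{Pi}(i) on appropriate tilings. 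Moreover \Cref{classclust} yields that \eqref{DSM} implies \eqref{LiLinftyaa}. Hence Theorem~\ref{maintheorem} is applicable and gives (i)$\Rightarrow$(iii).

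Next, (iii)$\Rightarrow$(iv) is the content of the very last lemma of Section~\ref{FIs}: a uniform lower bound on the MLSI implies, via Gr\"onwall and Pinsker, the rapid mixing estimate~\eqref{RM}. The implication (iv)$\Rightarrow$(vi) is the standard interpolation argument (Lemma~6 of \cite{temme2015fast}): rapid mixing yields an $\mathbb{L}_2$ contraction, which is equivalent to a positive spectral gap. Finally (vi)$\Rightarrow$(v) is immediate because restricting the Dirichlet form and variance to the classical subalgebra preserved by $\cL^G$ can only increase the spectral gap, and (v)$\Rightarrow$(i) is again classical Stroock–Zegarlinski. Together with the already noted (i)$\Leftrightarrow$(ii), this closes the equivalence.

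For the Schmidt case with $2$-local commuting interactions I would simply chain our new results: Theorem~\ref{maintheorem}, together with \Cref{lemma:ConditionI} and the fact that the Schmidt generators satisfy \Cref{Pi}(ii) in any dimension, gives \eqref{LiLinftyaa}$\Rightarrow$(iii). The implications (iii)$\Rightarrow$(iv)$\Rightarrow$(vi) go through exactly as in the Glauber case (Gr\"onwall plus Pinsker, then the $\mathbb{L}_p$-interpolation of \cite{temme2015fast}). The last step (vi)$\Rightarrow$\eqref{qL2} is Theorem~9 of \cite{[BK16]}, which shows that a size-independent spectral gap for a reversible local Gibbs sampler implies the uniform $\mathbb{L}_2$ clustering of correlations through a detectability-lemma argument.

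The main obstacle is verifying that Theorem~\ref{maintheorem} really applies to the embedded Glauber dynamics on arbitrary increasing fat rectangles $\Lambda\nearrow \ZZ^d$: one must check that the hypotheses of the theorem (frustration-freeness, local reversibility, primitivity, \Cref{Pi}, and \eqref{LiLinftyaa}) hold for $\{\cL^G_{\widetilde\Lambda}\}$. Frustration-freeness and reversibility are built into the construction in Section~\ref{sec-examplessemigroups}; primitivity follows from the uniqueness of the Gibbs measure implied by \eqref{DSM}; \Cref{Pi}(i) reduces to the commutation of classical heat-bath moves on disjoint supports, which is the content of \Cref{lemma:ConditionI}; \Cref{Pi}(ii) is trivial in the classical setting; and \eqref{LiLinftyaa} is \Cref{classclust}. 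Once this bookkeeping is done, the proof reduces to a short assembly of the pieces listed above.
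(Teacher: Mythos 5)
Your proposal is correct and follows essentially the same route as the paper, which states the corollary without a written proof as ``a consequence of \Cref{classclust} and \Cref{maintheorem}'', relying on the classical Stroock--Zegarlinski equivalences for (i)$\Leftrightarrow$(ii)$\Leftrightarrow$(v), the Pinsker-based lemma at the end of Section~\ref{FIs} for MLSI$\Rightarrow$RM, the $\mathbb{L}_p$-interpolation argument of \cite{temme2015fast} for RM$\Rightarrow$gap, and the detectability-lemma result of \cite{[BK16]} for gap$\Rightarrow$\eqref{qL2}. Your write-up simply makes explicit the bookkeeping (verification of the hypotheses of Theorem~\ref{maintheorem} for the embedded Glauber and Schmidt families) that the paper leaves implicit.
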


\subsection{Main result for 1D and 2D systems}\label{proof2D}

In this section, we present a simplified version of \Cref{maintheorem} for 2D systems (note that 1D systems can be seen as a particular case of the 2D setting) by introducing a simpler and more visual geometry than the one appearing in the statement of the aforementioned result. For that, we need to introduce the notion of ``rhombi'' and ``rhomboids''. 

\begin{remark}
Note that, for 1D systems, we can rewrite Hamiltonians with $k$-local interactions over quantum spin chains for any $k \geq 3$ in terms of $2$-local interactions, following an argument of coarse-graining. Indeed, we could regroup the sites composing the chain in a proper way, combine their associated Hilbert spaces and rewrite the interactions so that they are $2$-local in the new framework. Therefore, the 1D case can be interpreted as a particular case of Theorem \ref{maintheorem2} below which presents the advantage of allowing for a more general condition of locality. This argument does not hold in larger dimensions, where our proof only works for the $2$-local case.  
\end{remark}

 Given a grained set $ \widetilde{S} \in \widetilde{\mathcal{S}}$ in 2D, we call it a \textit{rhombus} of size $L$ if it satisfies the following conditions (see \Cref{figuretiling1}):
 \begin{figure}[h!]
	\centering
	\includegraphics[width=0.7\linewidth]{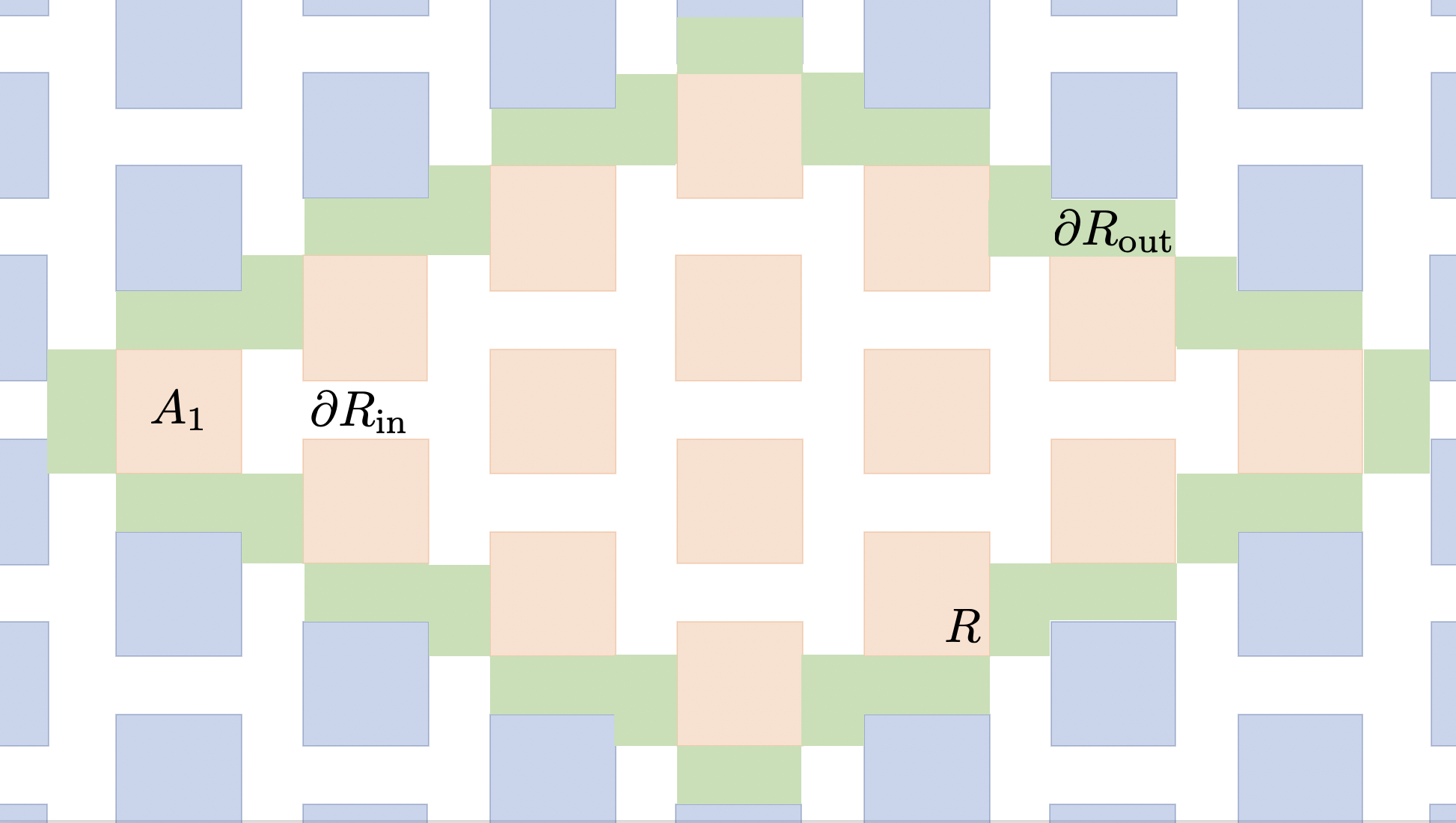}
	\caption{A Rhombus $R$ in $\ZZ^2$, its inner boundary $\partial R_{\operatorname{in}}$ and its outer boundary $\partial R_{\operatorname{out}}$.}
	\label{figuretiling1}
\end{figure}

 \begin{itemize}
     \item[(i)]  There is a unique pixel $A_1 \subset \widetilde{S}$  containing a site $x^1=(x^1_1 , x^1_2) \in  \widetilde{S} $ such that $x^1_1 \leq y_1 $ for any other site $y=(y_1, y_2) \in  \widetilde{S}$. Note that this site will not be unique, since there is a whole face of $A_1$ satisfying this condition. 
     \item[(ii)] In a second layer, there are exactly $2$ pixels $A_{2, i} \subset \widetilde{S}$  such that
     \begin{itemize}
         \item[(a)] dist$(A_1, A_{2, i}) = \kappa$ for  $i=1, 2$.
         \item[(b)] Given any site  $x^2=(x^2_1 ,  x^2_2)$ in $ A_{2,i}$ for $i=1, \ldots 2^d$, its first coordinate verifies
         \begin{equation*}
           x_1^1 + D + \kappa -1 \leq x_1^2 \leq  x_1^1 + 2D + \kappa - 2  \, ,
         \end{equation*}
     \end{itemize}
     This means that each $A_{2, i}$ is translated from $A_1$ by a vector whose first coordinate is equal to $D+\kappa-1$.
   \item[(iii)] The same construction follows recursively until layer $L$, in which there are exactly $L$ pixels $A_{L, j} \subset \widetilde{S}$ such that 
   \begin{itemize}
         \item[(a)] For any $j=1, \ldots {L-1}$, there is exactly one pixel $A_{L-1, i}$ such that    dist$(A_{L-1, i}, A_{L, j})=$  dist$(A_{L-1, i}, A_{L, j+1}) = \kappa$.
         \item[(b)] Given any site  $x^L=(x^L_1, x^L_2)$ in $ A_{L, j}$ for any $j=1, \ldots L$, its first coordinate verifies
        \begin{equation*}
          x_1^1 + (L-1)(D + \kappa-1) \leq x_1^L \leq  x_1^1 + L D - 1 + (L-1 )(\kappa-1) \, .
        \end{equation*}
    \end{itemize}
    \item[(iv)] From layer $L+1$ until layer $2L-1$, the number of pixels belonging to $\widetilde{S}$ decreases recursively in the following way: In layer $k=L+1, \ldots 2L-1$, there are exactly $2L-k$ pixels $A_{k, i} \subset \widetilde{S}$ such that 
  \begin{itemize}
       \item[(a)] For any $j=1, \ldots {L-1}$, there are exactly two pixels $A_{k-1, i-1}, \, A_{k-1, i-1}$ such that \phantom{sadasdad} dist$(A_{k-1, i-1}, A_{k, j})= $ dist$(A_{k-1, i}, A_{k, j}) = \kappa$.
       \item[(b)] Given any site  $x^k=(x^k_1 , x^k_2)$ in $ A_{k, i}$ for any $i=1, \ldots 2L-k$, its first coordinate verifies
       \begin{equation*}
          x_1^1 + (k-1)(D + \kappa-1) \leq x_k^L \leq  x_1^1 + k D - 1 + (k-1 )(\kappa-1) \, .
         \end{equation*}
     \end{itemize}
     Note with this construction that layer $2L-1$ consists of a unique pixel $A_{2L-1}$.
     \item[(v)] The set $\widetilde{S} $ also contains the intersections of the boundaries of adjacent pixels, i.e.
     \begin{equation*}
       \emptyset \neq   \partial A_{l, i} \cap \partial A_{l+1, j}  \subset \widetilde{S} \, ,
     \end{equation*}
     for every $1 \leq l \leq 2L-2$.
 \end{itemize}
 Another notion that is necessary for the geometrical construction in the main result in 2D is that of \textit{rhomboid}, namely a deformation of a rhombus as introduced above in which all the sides do not have the same length in number of pixels. Given a rhomboid with sides of length $l_1$ and $l_2$ respectively, we  call $L:= \text{max}\{l_1, l_2\}$  the \textit{size} of a rhomboid and define a \textit{fat rhomboid} as a rhomboid for which
 \begin{equation*}
     \frac{1}{10} \text{max}\{l_1, l_2\} \leq \text{min}\{l_1, l_2\} \, .
 \end{equation*}
 We  denote hereafter a rhombus or a rhomboid by $R$ and we further denote by $\mathcal{R}_L$ the set of all fat rhomboids with side at most $L$. Moreover, we take
 \begin{equation*}
     \mathcal{R} = \underset{L \geq 1}{\bigcup} \, \mathcal{R}_L \, .
 \end{equation*}
  Now, we are ready to state and prove our main result in 2D:
 
 \begin{theorem}\label{maintheorem2}
		Let $\{ R\}_{R  \ssubset \ZZ^2}$ be an increasing family of rhomboids and let	$\cL:=\{\cL_{R}, \cL_{\partial R}\}_{R}$ be a uniform family of local, primitive, reversible and frustration-free Lindbladians satisfying \eqref{LiLinftyaa}. Moreover, assume that \eqref{Pi} holds. Then, 
	\begin{equation*}
\underset{R \nearrow \mathbb{Z}^2}{\operatorname{lim \, inf}}\,\alpha(\cL_{R})>0 \,,
	\end{equation*}
	where the infimum above is taken over all increasing families of rhomboids. 
	
\end{theorem}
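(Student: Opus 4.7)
The strategy I would follow is the quantum adaptation of Cesi's classical recursive scheme, combined with the peeling argument anticipated in the introduction and made precise in Theorem \ref{ATAC}.

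\emph{Step 1 (peeling via the chain rule).} For any state $\rho\in\cD(\cH_R)$, set $\omega:=E_{A\cap R\,*}(\rho)$, where $A$ denotes the pixel tiling of Section \ref{geomconditions} restricted to $R$. Frustration-freeness gives $\sigma^R=E_{A\cap R\,*}(\sigma^R)$, so the chain rule \eqref{chainrulerelatent} decomposes
\begin{align*}
D(\rho\,\|\,\sigma^R)\;=\;D\!\big(\rho\,\big\|\,E_{A\cap R\,*}(\rho)\big)\;+\;D\!\big(\omega\,\big\|\,\sigma^R\big).
\end{align*}
By Condition \eqref{Pi}(i) and Lemma \ref{lemma:ConditionI}, the pixel conditional expectations in $A\cap R$ pairwise commute, and each pixel has fixed bounded size, so Theorem \ref{CMLSIholds} gives a universal lower bound $\alpha_1>0$ on $\alpha_{\operatorname{c}}(\cL_{A_j})$. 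The tensorization of CMLSI then bounds the first term by $\alpha_1^{-1}\,\EP_{\cL_R}(\rho)$, up to a universal constant depending only on the locality parameters of the uniform family.

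\emph{Step 2 (recursive bisection of fat rhomboids).} To control $D(\omega\,\|\,E_{R\,*}(\omega))$, I would run a Cesi-type recursion on $R$. At each level, bisect the current fat rhomboid $R_k$ of side $L_k$ into two overlapping fat rhomboids $C_k,D_k\in\widetilde{\cS}$ with $C_k\cup D_k=R_k$ and overlap width $s_k:=\lceil\xi\,\ln(4c|R_k|^2)\rceil$. Because $\omega$ and its images under subsequent conditional expectations $E_{\widetilde{S}\,*}$ still lie in the image of the peeling map (by the commutation remark after Condition \eqref{Pi}), Theorem \ref{ATAC} applies and yields
\begin{align*}
D(\omega\,\|\,E_{R_k\,*}(\omega))\;\le\;\frac{1}{1-2c|R_k|\,\e^{-s_k/\xi}}\Big(D(\omega\,\|\,E_{C_k\,*}(\omega))+D(\omega\,\|\,E_{D_k\,*}(\omega))\Big).
\end{align*}
The choice of $s_k$ makes every correction factor of the form $1+O(|R_k|^{-1})$, so after $O(\log L)$ halvings the telescoping product of these factors converges to a constant $K<\infty$ uniform in $L$. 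Fatness is preserved under the bisection (alternating the cut direction between the two edge-families of the rhomboid), which is what keeps the recursion depth logarithmic.

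\emph{Step 3 (base case and assembly).} At the bottom of the recursion the rhomboids $R_{(b)}$ have constant size, so Theorem \ref{CMLSIholds} gives $D(\omega\,\|\,E_{R_{(b)}\,*}(\omega))\le\alpha_2^{-1}\,\EP_{\cL_{R_{(b)}}}(\omega)$ for a universal $\alpha_2>0$. Summing over the bounded-multiplicity cover by base rhomboids, and using monotonicity of entropy production under the peeling channel $E_{A\cap R\,*}$ (data-processing plus frustration-freeness), gives $\sum_{(b)}\EP_{\cL_{R_{(b)}}}(\omega)\lesssim \EP_{\cL_R}(\omega)\le \EP_{\cL_R}(\rho)$. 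Combining with Step 1 yields $D(\rho\,\|\,\sigma^R)\le\alpha^{-1}\,\EP_{\cL_R}(\rho)$ with $\alpha>0$ independent of $R$, i.e.\ MLSI, and Lemma \ref{EPexpress} (together with the infimum definition of $\alpha(\cL)$) closes the proof. The main obstacle is Step 2: certifying that the bisection can genuinely be performed so that \emph{both} halves remain fat rhomboids and grained sets in $\widetilde{\cS}$ admissible to Theorem \ref{ATAC}, while simultaneously keeping the overlap wide enough for the product of Cesi correction factors to be summable. A secondary but crucial technical point is propagating the peeling property of $\omega$ across every level of the recursion, which is exactly what the compatibility $E_{\widetilde{S}}\circ E_A=E_A\circ E_{\widetilde{S}}$ noted after Condition \eqref{Pi} provides.
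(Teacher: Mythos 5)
Your Steps 1 and 3 (peeling via the chain rule, CMLSI on the bounded-size pixels and on the base case, and the data-processing argument relating the pinched quantity back to $\operatorname{EP}_{\cL_R}(\rho)$) match the paper's Lemma \ref{firstterm} and Lemma \ref{betatoalpha}, up to one imprecision: entropy production is not monotone under the peeling channel as you assert; the correct move, as in Lemma \ref{betatoalpha}, is to apply data processing to the relative entropy $D(E_{A\cap R*}(\rho)\|E_{A\cap R*}\circ E_{R_{(b)}*}(\rho))\le D(\rho\|E_{R_{(b)}*}(\rho))$ and only then invoke CMLSI, which is why the paper phrases everything in terms of the pinched MLSI constant $\beta_\Lambda(\cL_C)$ with $\operatorname{EP}_{\cL_C}(\rho)$ (not of $\omega$) on the right-hand side.

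The genuine gap is in Step 2. A single bisection per level, telescoped down to constant-size leaves, forces you to control $\sum_{(b)}\operatorname{EP}_{\cL_{R_{(b)}}}$ by $\operatorname{EP}_{\cL_R}$, i.e.\ to bound the multiplicity of the leaf cover. Because every bisection produces two \emph{overlapping} children, a site lying in the overlap strip at $m$ of the $O(\log L)$ recursion levels appears in $2^{m}$ leaves, and nothing in your argument bounds $\max_u m(u)$ uniformly; your remark that "the telescoping product of these factors converges" only controls the approximate-tensorization constants $\theta$, not this double counting of entropy production. Equivalently, if one recurses on the pinched constant directly, a single cut gives $\operatorname{EP}_{\cL_{C}}(\rho)+\operatorname{EP}_{\cL_{D}}(\rho)=\operatorname{EP}_{\cL_{C\cup D}}(\rho)+\operatorname{EP}_{\cL_{C\cap D}}(\rho)$, and bounding $\operatorname{EP}_{\cL_{C\cap D}}$ crudely by $\operatorname{EP}_{\cL_{R}}$ loses a factor $2$ per level, hence a factor polynomial in $L$ overall. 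The paper's proof of Theorem \ref{secondterm} resolves exactly this point by an averaging trick absent from your proposal: at each scale it uses $n_L\sim\sqrt{L}$ translated cuts $\{(C_n,D_n)\}$ whose overlaps $C_n\cap D_n$ (each of width $\sim\sqrt{L}$, not logarithmic) are pairwise disjoint, so that $\frac{1}{n_L}\sum_n\operatorname{EP}_{\cL_{C_n\cap D_n}}(\rho)\le\frac{1}{n_L}\operatorname{EP}_{\cL_R}(\rho)$ and the recursion for $S(L)=\inf_{R\in\RRR_L}\beta_\Lambda(\cL_R)$ closes with factors $(1+K/\sqrt{L})$ whose infinite product over dyadic scales converges (Steps \ref{step:2}--\ref{step:3}). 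Without this averaging (or some other device to make the overlap cost summable), your recursion does not yield a size-independent constant.
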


The main trick to the proof of \Cref{maintheorem2} can be easily summarized. First, we consider the tiling introduced above and  use Condition \ref{Pi} together with the chain rule \eqref{chainrulerelatent} in order to reduce the problem to that of proving the MLSI for the restricted class of approximately clustering states: for any state $\rho\in\cD(\cH_\Lambda)$,
	\begin{equation}\label{eqchainrule}
	D(\rho \| \sigma^\Lambda ) = D(\rho\|E_{A\cap\Lambda*}(\rho))+D(E_{A\cap \Lambda*}(\rho)\|\sigma^\Lambda)
	\end{equation}
Then, \Cref{maintheorem2} is a direct consequence of the two following results:
\begin{lemma}\label{firstterm}
Under the conditions of \Cref{maintheorem2},	there exists a constant $\alpha>0$, independent of $|\Lambda|$, such that any $\rho\in\cD(\cH_\Lambda)$, 
\begin{align*}
	4\alpha \,D(\rho\|E_{A\cap\Lambda*}(\rho))\le \operatorname{EP}_{\cL_\Lambda}(\rho)\,.
	\end{align*}
\end{lemma}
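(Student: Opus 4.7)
The strategy is to replace $\mathcal{L}_\Lambda$ by a simpler auxiliary generator whose fixed-point subalgebra is exactly the image of $E_{A\cap \Lambda}$ and which is a sub-sum of the local terms constituting $\mathcal{L}_\Lambda$. Because this auxiliary generator will be built out of commuting pieces acting on regions of bounded size, its MLSI constant can be controlled uniformly in $|\Lambda|$ via Theorem \ref{CMLSIholds} and the tensorization property of the complete MLSI.

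Concretely, let $\tilde{\mathcal{L}} := \sum_{i : A_i \subset \Lambda} \mathcal{L}_{A_i}$. The first step is to identify the fixed-point algebra of $\tilde{\mathcal{L}}$. By frustration-freeness, $\mathcal{F}(\tilde{\mathcal{L}}) = \bigcap_i \mathcal{F}(E_{A_i})$, and by Condition \ref{Pi}(i) the projections $E_{A_i}$ commute pairwise and iterate to $\prod_i E_{A_i} = E_{\cup_i A_i} = E_{A \cap \Lambda}$, so $\mathcal{F}(\tilde{\mathcal{L}})$ coincides with the image of $E_{A\cap \Lambda}$. Thus the natural relative entropy associated with $\tilde{\mathcal{L}}$ is precisely $D(\rho \| E_{A\cap \Lambda *}(\rho))$.

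The second step is to bound $\alpha_{\operatorname{c}}(\tilde{\mathcal{L}})$ from below uniformly in $\Lambda$. Each $\mathcal{L}_{A_i}$ is supported in the enlarged pixel $A_i\partial$, whose cardinality is bounded by a constant depending only on $D$, $\kappa$ and the local Hilbert space dimension $d_\mathcal{H}$. Restricting $\mathcal{L}_{A_i}$ to its effective support and applying Theorem \ref{CMLSIholds}, together with the elementary bound $\|(\sigma^{A_i\partial})^{-1}\|_\infty \le e^{O(\beta\, h\, |A_i\partial|)}$, provides a constant $\alpha^* > 0$ depending only on $D,\kappa,\beta,h,d_\mathcal{H}$ (but not on $|\Lambda|$) such that $\alpha_{\operatorname{c}}(\mathcal{L}_{A_i}) \ge \alpha^*$ for every pixel. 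The tensorization lemma for CMLSI stated immediately after Theorem \ref{CMLSIholds} applies thanks to Condition \ref{Pi}(i); iterating it over the finitely many pixels contained in $\Lambda$ yields $\alpha_{\operatorname{c}}(\tilde{\mathcal{L}}) \ge \alpha^*$. By the definition of (C)MLSI and Step 1, this gives
\begin{equation*}
4 \alpha^{*} \, D(\rho \| E_{A\cap \Lambda *}(\rho)) \;\le\; \operatorname{EP}_{\tilde{\mathcal{L}}}(\rho) .
\end{equation*}

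The final step is to dominate $\operatorname{EP}_{\tilde{\mathcal{L}}}$ by the full entropy production $\operatorname{EP}_{\mathcal{L}_\Lambda}$. Since $\tilde{\mathcal{L}}$ is a sub-sum of the local Lindbladians constituting $\mathcal{L}_\Lambda$, we may write $\mathcal{L}_\Lambda = \tilde{\mathcal{L}} + \mathcal{K}$, where $\mathcal{K}$ is itself a sum of local Lindbladians. Frustration-freeness ensures that $\sigma^\Lambda$ is an invariant state for each of these, so by Lemma \ref{EPexpress} the entropy production is additive: $\operatorname{EP}_{\mathcal{L}_\Lambda}(\rho) = \operatorname{EP}_{\tilde{\mathcal{L}}}(\rho) + \operatorname{EP}_{\mathcal{K}}(\rho) \ge \operatorname{EP}_{\tilde{\mathcal{L}}}(\rho)$, the inequality following from the non-negativity of the entropy production of any Lindbladian. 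Setting $\alpha := \alpha^{*}$ concludes the argument. The main technical obstacle is Step 2: one must carefully localize each $\mathcal{L}_{A_i}$ to a bounded region in order to apply Theorem \ref{CMLSIholds} with constants that are independent of $|\Lambda|$, and then lift the resulting inequality back to $\mathcal{H}_\Lambda$ via the completeness of the MLSI.
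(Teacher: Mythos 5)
Your proposal is correct and follows essentially the same route as the paper: the paper's proof also applies the complete MLSI to $\cL_{A\cap\Lambda}$ (your $\tilde{\cL}$), bounds $\alpha_{\ccc}(\cL_{A\cap\Lambda})$ uniformly via Theorem \ref{CMLSIholds} on the bounded-size pixels together with the tensorization property of CMLSI, and then dominates $\operatorname{EP}_{\cL_{A\cap\Lambda}}(\rho)$ by $\operatorname{EP}_{\cL_\Lambda}(\rho)$ using Lemma \ref{EPexpress}. Your write-up merely makes explicit some steps the paper leaves implicit (the identification of $\cF(\tilde{\cL})$ with the image of $E_{A\cap\Lambda}$ and the additivity of the entropy production).
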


\begin{proof}
		The right hand side can be controlled assuming complete MLSI:
	\begin{align}
	D(\rho\|E_{A\cap\Lambda*}(\rho))\le \frac{1}{4\alpha_\ccc(\cL_{A\cap \Lambda})}\operatorname{EP}_{\cL_{A\cap \Lambda}}(\rho)\overset{(1)}{\le} \frac{1}{4\alpha_\ccc(\cL_{A\cap \Lambda})}\operatorname{EP}_{\cL_\Lambda}(\rho)\,,\label{mlsifirstterm}
	\end{align} 
	where $(1)$ follows by \Cref{EPexpress}. We conclude by noticing that, by the tensorization property of CMLSI  together with the fact that the size of each of the regions $A_j$ constituting $A$ is uniformly bounded, $\alpha_\ccc(\cL_A\cap\Lambda)$ is lower bounded by a positive constant $\alpha$ independent of $\Lambda$ by Theorem~\ref{CMLSIholds}. 
\end{proof}

Note that the proof of this lemma does not depend on the dimension or the geometry employed after the tiling. We further need the following theorem, to which we devote the rest of the section:

\begin{theorem}\label{secondterm}

Under the conditions of \Cref{maintheorem2},	there exists a constant $\beta>0$, independent of $|R |$, such that for all $\rho\in\cD(\cH_R)$,
\begin{align*}
4\beta D(E_{A \cap R *}(\rho)\|\sigma^R)\le \operatorname{EP}_{\cL_R}(\rho)\,.
\end{align*}
	\end{theorem}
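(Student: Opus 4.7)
The plan is a Cesi-style recursive argument, enabled by the fact that the ``peeled'' state $\omega := E_{A\cap R*}(\rho)$ lies in the image of the conditional expectation $E_{A\cap R*}$, so Theorem~\ref{ATAC} applies \emph{without additive error}: for any two fat sub-rhomboids $C,D\in\widetilde{\cS}$ with $C\cup D \subset R$,
\begin{equation*}
D(\omega\,\|\,E_{C\cup D*}(\omega)) \le \frac{1}{1-\varepsilon(C,D)} \bigl(D(\omega\,\|\,E_{C*}(\omega)) + D(\omega\,\|\,E_{D*}(\omega))\bigr),
\end{equation*}
with $\varepsilon(C,D) = 2c\,|C\cup D|\,\e^{-\dist(C\setminus D,D\setminus C)/\xi}$. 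This removes the obstruction to Cesi's approach described in the introduction and is the reason for working with $\omega$ rather than $\rho$ on the left-hand side of Theorem~\ref{secondterm}.

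I would proceed by induction on the size $L$ of a fat rhomboid $R\in\cR_L$, showing that the specialised MLSI constant
\begin{equation*}
\beta_L := \inf_{R\in\cR_L}\ \inf_{\rho\in\cD(\cH_R)}\ \frac{\EP_{\cL_R}(\omega)}{4\,D(\omega\,\|\,E_{R*}(\omega))}, \qquad \omega = E_{A\cap R*}(\rho),
\end{equation*}
stays bounded below. The base case $\beta_{L_0}>0$ for some fixed $L_0$ follows from Theorem~\ref{CMLSIholds} applied to $\cL_{R_0}$: on a bounded-size rhomboid the generator is a finite-dimensional GNS-symmetric Lindbladian whose spectral gap is positive (thanks to Lemma~\ref{lemmaSchmidt} and finite-dimensionality), hence it has a positive CMLSI constant. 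For the inductive step I would split $R\in\cR_L$ into two fat sub-rhomboids $C,D\in\cR_{L'}$ with $L'\le\gamma L$ for a fixed $\gamma<1$ and overlap width proportional to $L$; the grained rhombus geometry of Section~\ref{geomconditions} guarantees that such a splitting preserves fatness and grainedness. Applying the displayed ATAC inequality reduces $D(\omega\,\|\,E_{R*}(\omega))$ to relative entropies on smaller rhomboids, with a multiplicative error $(1-\varepsilon(C,D))^{-1}\to 1$ as $L\to\infty$.

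Iterating to depth $M=O(\log L)$ yields a family $\mathcal{F}_R$ of base rhomboids of size $\le L_0$, giving
\begin{equation*}
D(\omega\,\|\,E_{R*}(\omega)) \le K(L)\,\sum_{R_0\in\mathcal{F}_R} D(\omega\,\|\,E_{R_0*}(\omega)) \le \frac{K(L)\,N_R}{4\beta_{L_0}}\,\EP_{\cL_R}(\omega),
\end{equation*}
where $K(L) := \prod_k(1-\varepsilon_{L_k})^{-1}$ accumulates the ATAC errors along the recursion and $N_R$ is the maximum number of base rhomboids containing a single site of $R$. To close the argument I would replace $\EP_{\cL_R}(\omega)$ by $\EP_{\cL_R}(\rho)$: the remark after Condition~\ref{Pi} gives $E_A\circ E_{\widetilde S}=E_{\widetilde S}\circ E_A$, so $(\e^{t\cL_{R*}})_{t\ge 0}$ commutes with $E_{A\cap R*}$, and Lemma~\ref{EPexpress} combined with the data processing inequality for the relative entropy then yields $\EP_{\cL_R}(E_{A\cap R*}(\rho))\le \EP_{\cL_R}(\rho)$.

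The main technical obstacle is to verify that $K(L)\cdot N_R$ remains uniformly bounded in $L$. Boundedness of $K(L)$ follows once the overlap widths $\delta L_k$ decrease geometrically while $\varepsilon_{L_k}$ decays super-polynomially in $L_k$, which ensures $\sum_k\varepsilon_{L_k}<\infty$. The delicate point is controlling $N_R$: a naïve binary splitting allows a single site to lie in the overlap region at many levels of the recursion, yielding a multiplicity that can grow polynomially in $L$. Keeping $N_R$ uniformly bounded forces us to organise the recursion so that overlap slabs at successive levels occur at spatially separated locations. This is the precise role of the grained-rhomboid geometry introduced in Section~\ref{geomconditions} and of the fatness restriction in Theorem~\ref{maintheorem2}, and the corresponding combinatorial bookkeeping is the key adaptation of the classical argument of~\cite{cesi2001quasi,[D02]} to our setting.
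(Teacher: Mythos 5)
Your setup is right and matches the paper's: you apply \Cref{ATAC} to the peeled state $\omega=E_{A\cap R*}(\rho)$ so that the additive error vanishes, you get the base case from \Cref{CMLSIholds}, and your observation that $\operatorname{EP}_{\cL_R}(E_{A\cap R*}(\rho))\le \operatorname{EP}_{\cL_R}(\rho)$ via commutation and data processing is correct (the paper avoids even needing it by defining the Pinched MLSI constant of \Cref{def:pinchedmlsi} with $\operatorname{EP}_{\cL_C}(\rho)$ directly on the right-hand side). However, the heart of the argument --- controlling the overcounting of entropy production coming from the identity $\operatorname{EP}_{\cL_C}(\rho)+\operatorname{EP}_{\cL_D}(\rho)=\operatorname{EP}_{\cL_{C\cup D}}(\rho)+\operatorname{EP}_{\cL_{C\cap D}}(\rho)$ --- is exactly the part you leave as ``combinatorial bookkeeping,'' and the remedy you sketch does not obviously close. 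With a single split per scale, the recursion accumulates one extra term $\operatorname{EP}_{\cL_{\text{overlap}}}$ for every node of the recursion tree, so your $N_R$ is governed by how many overlap slabs (across all $O(\log L)$ levels) contain a fixed site. Making these slabs pairwise disjoint across levels is geometrically obstructed in $d\ge 2$: fatness forces you to alternate the cutting direction, and an overlap slab at level $k+1$ is a full cross-section of its rhomboid in a direction transverse to the level-$k$ slab, hence must intersect it. Cutting always in the same direction keeps the slabs disjoint but destroys fatness after $O(1)$ levels. So as written, $N_R$ is only $O(\log L)$, which is not enough.

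The paper's proof (Steps \ref{step:1}--\ref{step:3} and \Cref{betatoalpha}) resolves this with a different mechanism that your proposal is missing: at each scale it does not perform one split but $n_L=\lfloor L/(10\lfloor\sqrt{L}\rfloor)\rfloor$ parallel splits $(C_n,D_n)$ whose overlaps $C_n\cap D_n$ are pairwise disjoint and of width $\sim\sqrt{L}$, and then \emph{averages} the resulting inequalities over $n$. The overcounted term becomes $\frac{1}{n_L}\sum_n\operatorname{EP}_{\cL_{C_n\cap D_n}}(\rho)\le\frac{1}{n_L}\operatorname{EP}_{\cL_R}(\rho)$, i.e.\ a multiplicative factor $(1+O(1/\sqrt{L}))$ per scale, whose product over dyadic scales converges; no multiplicity $N_R$ ever appears. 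Note also that this forces the overlap width to be $\sim\sqrt{L}$ rather than proportional to $L$ as you propose: with overlaps of width $\Theta(L)$ only $O(1)$ disjoint slabs fit inside $R$, so the averaging gain disappears, while $\sqrt{L}$ is still large enough that $\varepsilon\sim L^{d}\e^{-\sqrt{L}/\xi}\to 0$. Incorporating this averaging step (and the bookkeeping of Step \ref{step:3}, where both coordinate directions are cut in succession to restore fatness) is what is needed to turn your outline into a proof.
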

	
Before proving \Cref{secondterm}, we briefly prove \Cref{maintheorem2} assuming \Cref{firstterm} and \Cref{secondterm}:
\begin{proof}[Proof of \Cref{maintheorem2}]
This follows directly from the use of  \Cref{firstterm} and \Cref{secondterm}  into \Cref{eqchainrule}.
	\end{proof}

Now, we turn our attention to the proof of \Cref{secondterm}. The geometric construction that we devise  is an extension of the strategy used in order to prove the result in the case of classical Gibbs samplers \cite{cesi2001quasi,[D02]}, as well as in the proof of the positivity of the spectral gap of Davies generators in \cite{[BK16]}. Roughly speaking, the idea of the proof  is to split some regions of the lattice into smaller subregions, and reduce the analysis of the MLSI constant on $\Lambda$ to that on those. As an original contribution, apart from the aforementioned rhombi and rhomboids, we introduce and use the notion of  a ``Pinched MLSI constant''.

\begin{definition}[Pinched MLSI]\label{def:pinchedmlsi}
	For any $C\subset\Lambda\ssubset \ZZ^d$, the generator $\cL_C$ satisfies a \textit{Pinched modified logarithmic Sobolev inequality} if there exists a constant $\beta>0$ such that, for all $\rho\in\cD(\cH_\Lambda)$: 
	\begin{align}\label{PinchedMLSI}
	4\beta D(E_{A\cap\Lambda*}(\rho)\|E_{C*}\circ E_{A\cap \Lambda*}(\rho))\le \operatorname{EP}_{\cL_C}(\rho)\,.
	\end{align}
	The largest constant satisfying \Cref{PinchedMLSI} is denoted by $\beta_\Lambda(\cL_C)$.
\end{definition}

\begin{remark}
Note that the notion of Pinched MLSI is introduced for any dimension, not only for dimension 2.
Moreover, it plays the analogous role in this proof to that of conditonal MLSI in \cite{BardetCapelLuciaPerezGarciaRouze-HeatBath1DMLSI-2019}, with which it would coincide if $E_{A \cap \Lambda *}(\rho) = \rho$.
\end{remark}

Now, for the first step of the proof consider a rhombus $\Lambda \subset \mathbb{Z}^2$ and  split it into $C$ and $D$ as shown in \Cref{fig:1}. Then, by virtue of the approximate tensorization for the relative entropy stated above, we can prove the following:

\begin{figure}[h!]
	\centering
	\includegraphics[width=0.7\linewidth]{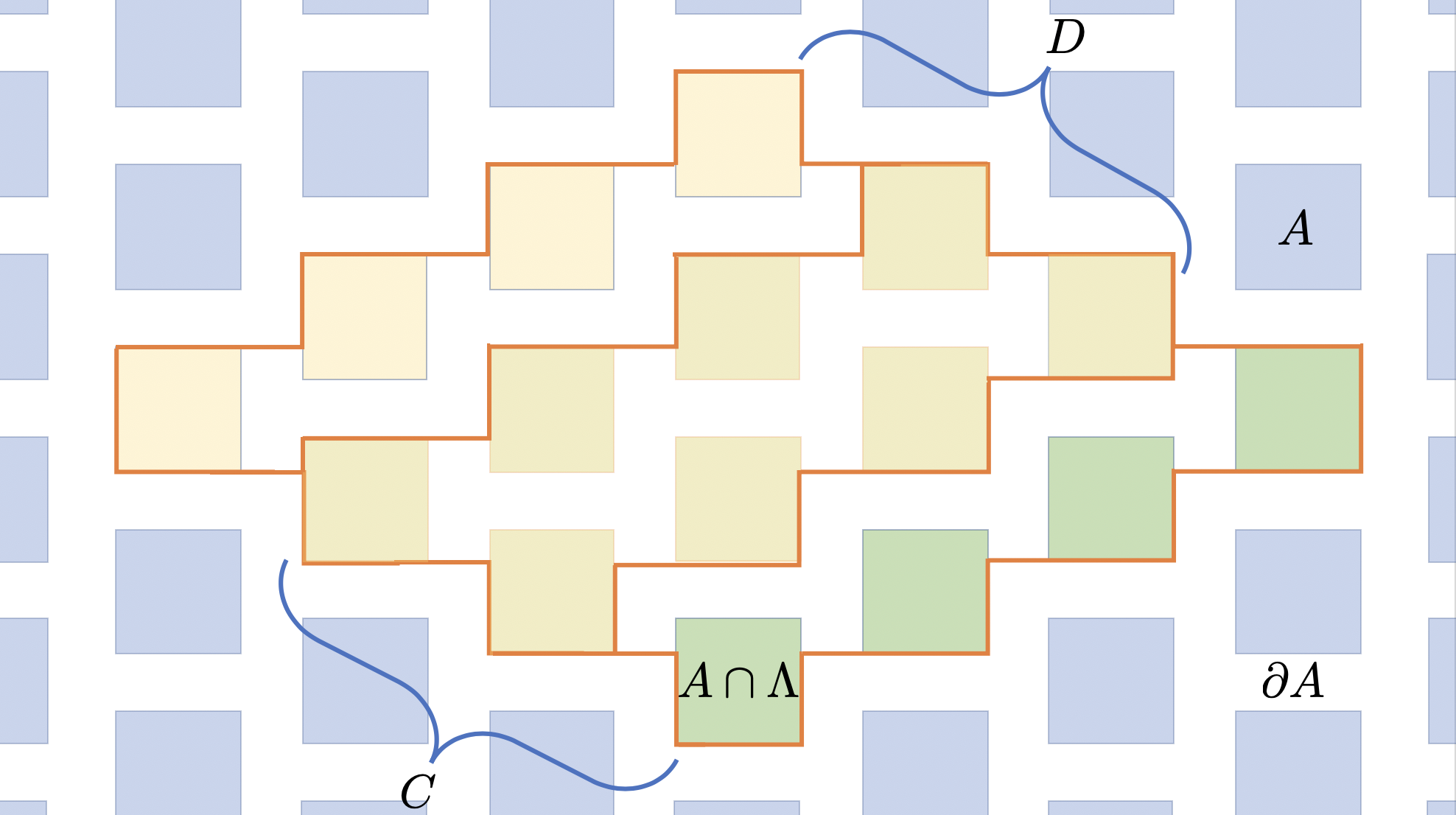}
	\caption{Splitting of a rhombus $\Lambda=C\cup D$ into rhombi $C$ and $D$.}
	\label{fig:1}
\end{figure}

\begin{step}\label{step:1}
Assuming \Cref{LiLinftyaa}, the following holds for every $\rho \in \mathcal{D}(\mathcal{H}_\Lambda)$ and $C, D \subset \Lambda$  such that $c\, |C\cup D| \, \e^{-\dist(C\backslash D,D\backslash C)/\xi}< 1/2 $ (see Figure \ref{fig:1}):
\begin{equation*}
D(E_{A\cap\Lambda*}(\rho)\|E_{C\cup D*}\circ E_{A\cap\Lambda*}(\rho))  \leq \frac{ \theta(C,D)}{4 \min \qty{\beta_\Lambda(\LL_C), \beta_\Lambda(\LL_{D})}} \left( \operatorname{EP}_{\LL_{C\cap D} }(\rho) + \operatorname{EP}_{\LL_{C \cup D}}(\rho)   \right),
\end{equation*}
where $\displaystyle \theta(C,D):= \frac{1}{1-2\, c \, |C\cup D| \, \e^{-\dist(C\backslash D,D\backslash C)/\xi} }  $.
\end{step}

\begin{proof}
Denote $\omega:=E_{A\cap\Lambda*}(\rho)$.  By  \Cref{ATAC}, we have:
\begin{equation*}
	D(\omega\|E_{C\cup D*}(\omega))\le \theta(C,D)\,\left(D(\omega\|E_{C*}(\omega))+D(\omega\|E_{D*}(\omega))  \right)\,
\end{equation*}
Now, recalling the definitions of the Pinched MLSI constants in $C$ and $D$, as given in \eqref{PinchedMLSI}, one has
\begin{align*}
D(\omega\|E_{C\cup D*}(\omega)) &
\leq \theta(C,D) \left( \frac{-\tr[\LL_{C*}(\rho)(\log\rho-\log \sigma)]}{4 \beta_\Lambda(\LL_C)}+\frac{-\tr[\LL_{D*}(\rho)(\log\rho-\log \sigma)]}{4 \beta_\Lambda(\LL_D)} \right) \\
& \leq  \frac{\theta(C,D)}{4 \min \qty{\beta_\Lambda(\LL_C), \beta_\Lambda(\LL_D)}} \left( \operatorname{EP}_{\LL_C}(\rho) + \operatorname{EP}_{\LL_D}(\rho)   \right)\\
& = \frac{ \theta(C,D)}{4 \min \qty{\beta_\Lambda(\LL_C), \beta_\Lambda(\LL_{D})}} \left( \operatorname{EP}_{\LL_{C\cap D} }(\rho) + \operatorname{EP}_{\LL_{C \cup D}}(\rho)   \right),
\end{align*}
where in the last equality we are using the fact that
\begin{equation}
\LL_{C*}(\rho)+\LL_{D*}(\rho)=\LL_{C\cup D*}(\rho)+ \LL_{C \cap D*}(\rho)
\end{equation} 
for every $\rho \in \mathcal{D}(\mathcal{H}_\Lambda)$.
\end{proof}

\begin{remark}
Note that in the next step of the proof we need to choose $C$ and $D$ carefully so that $\theta(C,D)$ satisfies some desired decaying behaviour. Indeed, we will consider $C$ and $D$ such that $|C\cup D|\sim L^d$ and dist$(C,D)= \sqrt{L}$ for a certain $L \in \mathbb{N}$, obtaining the necessary decay for $\theta(C,D)$ as a consequence of the fact that $\e^{-\sqrt{L}}$ decays faster than any polynomial.
\end{remark}

In the second step of the proof, we split a certain region of the lattice into two subregions and get a lower bound for the Pinched MLSI constant of the former in terms of the Pinched MLSI constants of the latter. For that, we construct a suitable family of fat rhomboids in the rhombus $R$ where we apply the previous step. 

Let $R$ be a rhombus of size $2L$, for $L$ large enough. We define $a_L:=\lfloor\sqrt{L}\rfloor$ and $n_L:= \lfloor \frac{L}{10a_L}  \rfloor$, where $\lfloor\cdot\rfloor$ denotes the integer part. Consider one of the faces of $R$ and enumerate the pixels of the face corresponding to this side by $A_{1,1}, \ldots , A_{1, 2L}$. For the next layer we obtain after removing that face, we also enumerate the pixels as  $A_{2,1}, \ldots , A_{2, 2L}$. Analogously we enumerate all the pixels of $R$, until the ones of the opposite face are denoted by $A_{2L,1}, \ldots , A_{2L, 2L}$. Then, for every integer $1 \leq n \leq n_L$, we define the following pair of sets of pixels:
\begin{equation}\label{AnBn}
C_{A,n}=\qty{A_{i,j} \subset R \, : \,  1 \leq j \leq L + n a_L}\,,~~~
D_{A,n}=\qty{A_{i,j} \subset R \, : \, L + (n-1)a_L < j \leq l_d}\,.
\end{equation}
and we cover $R$ with the rhomboids generated by them by including the intersection of the boundaries of adjacent pixels, which we name $C_n$ and $D_n$ respectively (see Figure \ref{fig:2}). Furthermore, it is clear by construction that $C_n$ and $D_n$ are both  fat rhomboids. Hence, for $n$ fixed, it is clear that $C_n \cap D_n \neq \emptyset$ and the shortest side of the overlap has length of order $\sqrt{L}$ pixels. 

\begin{step}\label{step:2}
There exists a positive constant $K$, independent of the size $2L$ of $R$ such that
\begin{equation}\label{ineq:L-large}
\underset{n=1,\ldots,n_L}{\min} \qty{\beta_\Lambda(\LL_{C_n}), \beta_\Lambda(\LL_{D_n})}\left( 1+\frac{\kappa}{\sqrt{L}} \right)^{-1} \leq \beta_\Lambda(\LL_{R}),
\end{equation}
for every $1 \leq n  \leq n_L$ and $L$ large enough.
\end{step}
\begin{figure}
	\centering
	\includegraphics[width=0.7\linewidth]{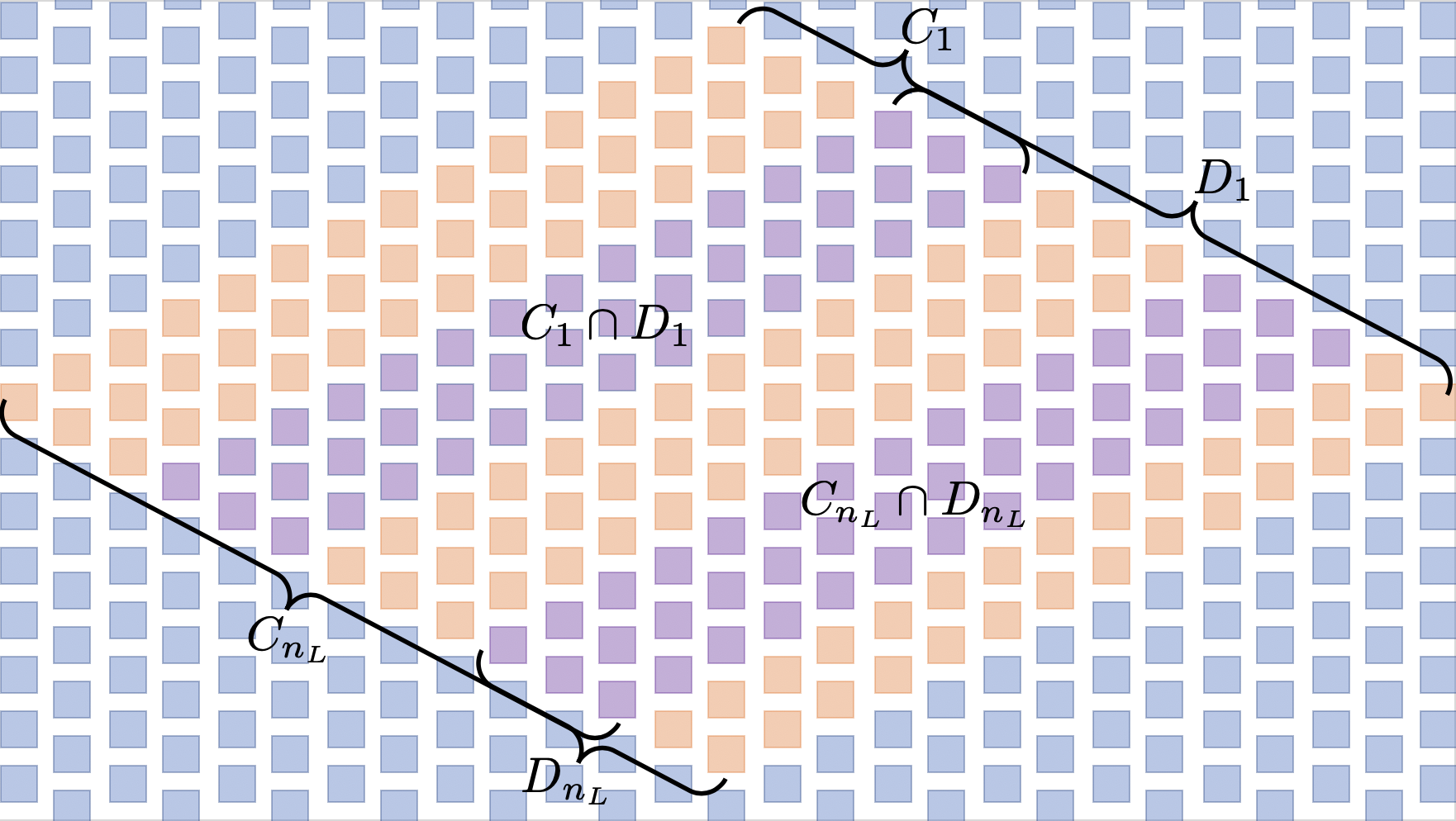}
	\caption{Splitting in $C_n$ and $D_n$.}
	\label{fig:2}
\end{figure}
\begin{proof}
	Once again, we denote $\omega:=E_{A\cap\Lambda*}(\rho)$. Then, using the sets $C_n$ and $D_n$ in the expression obtained in Step \ref{step:1}, we get, for every $1 \leq n  \leq n_L$,
\begin{equation}
D(\omega \| E_{R*}(\omega)) \leq \frac{ \theta(C_n,D_n)}{4 \min \qty{\beta_\Lambda(\LL_{C_n}), \beta_\Lambda(\LL_{D_n})}} \left( \operatorname{EP}_{\LL_{C_n \cap D_n} }(\rho) + \operatorname{EP}_{\LL_{C_n \cup D_n}}(\rho)   \right),
\end{equation}
where
\begin{equation*}
\theta(C_n,D_n)=  \frac{1}{1-2\, c \, |R| \, \e^{-\sqrt{L}/\xi}    }  \leq  \frac{1}{1-2\, \widetilde{c} \, L^d \, , \e^{-\sqrt{L}/\xi}    } 
\end{equation*}
for every $1 \leq n  \leq n_L$. Let us denote the latter by $ \theta(\sqrt{L})$. Now, by the definition of $C_n$ and $D_n$, the two following properties clearly hold:
\begin{enumerate}
\item $C_i \cap D_i \cap C_j \cap D_j = \emptyset$ for every $i \neq j$;
\item $ \ds \underset{1\leq n \leq n_L}{\bigcup} \left( C_n \cap D_n \right) \subseteq R $.
\end{enumerate}
Therefore, we can average over $n$ the previous expression to obtain:
\begin{align*}
D(\omega \| E_{R*}(\omega)) &  \leq \frac{1}{n_L} \,\underset{n=1}{\overset{n_L}{\sum}}\,\frac{ \theta(C_n,D_n)}{4 \, \min \qty{\beta_\Lambda(\LL_{C_n}), \beta_\Lambda(\LL_{D_n})}} \left( \operatorname{EP}_{\LL_{C_n \cap D_n} }(\rho) + \operatorname{EP}_{\LL_{R}}(\rho)   \right) \\
&\leq \frac{\theta(\sqrt{L})}{4 \, \underset{n=1,\ldots,n_L}{\min} \qty{\beta_\Lambda(\LL_{C_n}), \beta_\Lambda(\LL_{D_n})}} \left(\operatorname{EP}_{\LL_{R}}(\rho) + \frac{1}{n_L} \,\underset{n=1}{\overset{n_L}{\sum}} \,\operatorname{EP}_{\LL_{C_n \cap D_n} }(\rho)  \right) \\
&\leq \frac{\theta(\sqrt{L})}{4 \, \underset{n=1,\ldots,n_L}{\min} \qty{\beta_\Lambda(\LL_{C_n}), \beta_\Lambda(\LL_{D_n})}}\left( 1+\frac{1}{n_L} \right)\operatorname{EP}_{\LL_{R}}(\rho) .
\end{align*}
Hence, by the definition of $\beta_\Lambda(\LL_{R})$, we have
\begin{equation}
\frac{ \underset{n=1,\ldots,n_L}{\min} \qty{\beta_\Lambda(\LL_{C_n}), \beta_\Lambda(\LL_{D_n})}}{\theta(\sqrt{L})}\left( 1+\frac{1}{n_L} \right)^{-1} \leq \beta_\Lambda(\LL_{R}),
\end{equation}
Note that
\begin{equation*}
\theta(\sqrt{L}) \geq 1 \text{ for every }L>1 \text{ and }\underset{L \rightarrow \infty}{\lim} \theta(\sqrt{L})  = 1.
\end{equation*}
Then, for $L$ large enough, the following inequality holds:
\begin{equation}\label{ineq:L-large1'}
\underset{n=1,\ldots,n_L}{\min} \qty{\beta_\Lambda(\LL_{C_n}), \beta_\Lambda(\LL_{D_n})}\left( 1+\frac{K}{\sqrt{L}} \right)^{-1} \leq \beta_\Lambda(\LL_{R}),
\end{equation}
for $K>1$ independent of $L$.
\end{proof}

Now, let us first define the following quantities for $L>1$:
\begin{equation}
S(L):= \underset{R \in \RRR_L}{\inf} \beta_\Lambda(\LL_{R})\,.
\end{equation}
In the next step, we obtain a recursion between the quantities $S(L)$ which will later allow us to get a lower bound for the global MLSI constant in terms of size-fixed Pinched MLSI constants. 
\begin{step}\label{step:3}
There exists a positive constant $K$ independent of the size of $R$ such that
\begin{equation}
S(2L)\geq  \left( 1+\frac{K}{\sqrt{L}} \right)^{-6} S(L)\phantom{asdd} \text{for } L \text{ large enough}.
\end{equation}

\end{step}

\begin{proof}
Consider the expression obtained in the previous step. Let us analyse the value of the pinched MLSI constant in the rhomboids $C_n$ and $D_n$. Let us consider the grained rectangle $C_n$ (the analysis is analogous for $D_n$). One side of $C_n$ (the one corresponding to the direction of the cut) has length less than or equal to $1.2L$, by definition of $C_n$, whereas the other side has length $2L$. Then, we apply again the previous step, now along this side, and obtain then:
\begin{equation}\label{ineq:L-large1}
\underset{n=1,\ldots,n_L}{\min} \qty{\beta_\Lambda(\LL_{C_n}), \beta_\Lambda(\LL_{D_n})} \geq \left( 1+\frac{K}{\sqrt{L}} \right)^{-1} S \left( 1.2 L \right),
\end{equation}
since now both sides of each rhomboid have size less than or equal to $1.2L$. Therefore,
\begin{equation}
\beta_\Lambda(\LL_{R})\geq \left( 1 + \frac{K}{\sqrt{L}} \right)^{-2} S \left(1.2 L \right)\,,
\end{equation}
and since the rhombus that we were considering in Step \ref{step:2} verified  $R \in \RRR_{2L}$, we obtain
\begin{equation}
S(2L) \geq \left( 1 + \frac{K}{\sqrt{L}} \right)^{-2} S \left( 1.2 L \right).
\end{equation}
To conclude, we iterate this expression one more time to obtain
\begin{equation}
S(2L) \geq \left( 1 + \frac{K}{\sqrt{L}} \right)^{-2}  \left( 1 + \frac{K}{\sqrt{0.6 L}} \right)^{-2} S \left( 0.72 L \right),
\end{equation}
and since $ \displaystyle S\left( 0.72 L \right)\geq S (L)$, we obtain
\begin{equation}
S(2L) \geq \left( 1+ \frac{K}{\sqrt{L}} \right)^{-4} S(L),
\end{equation}
where $\displaystyle K$ is a constant independent of the size of the system.
\end{proof}

Finally, in the last step of the proof, using recursively the relation obtained in the previous one, we get a lower bound for the global MLSI constant in terms of complete MLSI constants.  Similar to above, we define the quantities for $L>1$:
\begin{equation}
T(L):= \underset{R \in \RRR_L}{\inf} \alpha_\ccc(\LL_{\widetilde{R}})\,.
\end{equation}
\begin{step}
There exists a constant $L_0\in\NN$, independent of $\Lambda$ such that the following holds:
\begin{equation*}
\alpha( \LL_\Lambda)  \geq  \Psi (L_0) \,T(L_0)\,,
\end{equation*}
where $\Psi (L_0)$ does not depend on the size of $\Lambda$.
\end{step}

\begin{proof}
Let us denote by  $L_0$ the first integer for which inequality (\ref{ineq:L-large1}) holds. By virtue of the previous step, it is clear that the following holds for $L_0$:
\begin{equation}
S(2L_0) \geq \left( 1+ \frac{K}{\sqrt{L_0}} \right)^{-4} S(L_0),
\end{equation}
Note now that the limit of $\Lambda$ tending to $\ZZ^d$ is the same as the one of $S(n L_0)$ with $n$ tending to infinity. Therefore,
 \begin{align*}
  \underset{\Lambda \rightarrow \ZZ^2}{\lim} \, \alpha(\LL_\Lambda) & =  \underset{n \rightarrow \infty}{\lim}  \, S(2^n L_0) \\
  & \geq   \left( \underset{n=1}{\overset{\infty}{\prod}} \left( 1+ \frac{K}{\sqrt{2^{n-1}L_0}} \right) \right)^{-4} S(L_0) \\
 &  \geq \left( \exp \left[ \underset{n=0}{\overset{\infty}{\sum}} \, \frac{K}{2^n L_0} \right] \right)^{-4} S(L_0) \\
 & = \exp \left[\frac{-4 K}{L_0}(2+\sqrt{2}) \right] S(L_0),
 \end{align*}
where the constants $L_0$ and $K$ do not depend on the size of $\Lambda$. We conclude from the following simple observation that 
\begin{align*}
S(L_0)=\underset{R\in\RRR_{L_0}}{\inf}\beta_\ccc(\LL_{R})\overset{(1)}{\ge}
\underset{R\in \RRR_{L_0}}{\inf} \alpha_\ccc(\LL_{R})=T(L_0)\,.
\end{align*}
where $(1)$ follows from \Cref{betatoalpha} below.

\end{proof}

\begin{lemma}\label{betatoalpha}
	For any rhomboid $R \subseteq \Lambda\ssubset \ZZ^2$, 
	\begin{align*}
	\beta_\Lambda(\cL_{R})\ge \alpha_{\ccc}(\cL_{R})\,.
	\end{align*}
\end{lemma}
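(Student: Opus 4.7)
The plan is to reduce the Pinched MLSI on $\cL_R$ to its complete MLSI by combining two ingredients: the commutativity of the conditional expectations $E_R$ and $E_{A\cap\Lambda}$ on grained sets, and the data processing inequality for the relative entropy. Fix $\rho\in\cD(\cH_\Lambda)$ and set $\omega := E_{A\cap\Lambda*}(\rho)$. What we wish to establish is the pointwise estimate
\begin{equation*}
4\,\alpha_{\operatorname{c}}(\cL_R)\,D\big(\omega\,\big\|\,E_{R*}(\omega)\big)\,\le\,\operatorname{EP}_{\cL_R}(\rho)\,,
\end{equation*}
which, by the very definition of $\beta_\Lambda(\cL_R)$, immediately yields $\beta_\Lambda(\cL_R)\ge \alpha_{\operatorname{c}}(\cL_R)$.

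First I would invoke \Cref{Pi}(i) together with frustration-freeness of the family: since any rhomboid $R$ is a grained set, the Remark following \Cref{Pi} gives $E_R\circ E_{A\cap\Lambda} = E_{A\cap\Lambda}\circ E_R$. Dualising, $E_{R*}\circ E_{A\cap\Lambda*}=E_{A\cap\Lambda*}\circ E_{R*}$, so that
\begin{equation*}
E_{R*}(\omega)=E_{R*}E_{A\cap\Lambda*}(\rho)=E_{A\cap\Lambda*}\big(E_{R*}(\rho)\big)\,.
\end{equation*}
Hence
\begin{equation*}
D\big(\omega\,\big\|\,E_{R*}(\omega)\big)=D\big(E_{A\cap\Lambda*}(\rho)\,\big\|\,E_{A\cap\Lambda*}(E_{R*}(\rho))\big)\,\le\,D\big(\rho\,\big\|\,E_{R*}(\rho)\big)\,,
\end{equation*}
where the final inequality is the data processing inequality for the relative entropy applied to the CPTP map $E_{A\cap\Lambda*}$.

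Next, since $\cL_R$ is supported on $R\partial\subseteq\Lambda$, the action of $\cL_R$ on $\cH_\Lambda$ is that of $\cL_R\otimes\id_{\cH_{\Lambda\setminus R\partial}}$. Viewing $\cH_{\Lambda\setminus R\partial}$ as a reference system and applying the complete modified logarithmic Sobolev inequality for $\cL_R$ directly to $\rho\in\cD(\cH_\Lambda)$ gives
\begin{equation*}
4\,\alpha_{\operatorname{c}}(\cL_R)\,D\big(\rho\,\big\|\,E_{R*}(\rho)\big)\,\le\,\operatorname{EP}_{\cL_R}(\rho)\,.
\end{equation*}
Chaining this bound with the data processing step above yields the desired estimate, completing the proof. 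The only nontrivial ingredient is the commutation $E_R\circ E_{A\cap\Lambda}=E_{A\cap\Lambda}\circ E_R$, which is already supplied by the standing hypothesis \Cref{Pi}(i) together with frustration-freeness; no further geometric or spectral input is required.
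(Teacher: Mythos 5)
Your argument is correct and follows essentially the same route as the paper's proof: commuting $E_{R*}$ past $E_{A\cap\Lambda*}$ via \Cref{Pi}, applying the data processing inequality for the channel $E_{A\cap\Lambda*}$ to reduce to $D(\rho\|E_{R*}(\rho))$, and invoking the complete MLSI for $\cL_R$ with the complement of $R\partial$ as reference system. The only difference is that you spell out the last step (why the complete, rather than ordinary, MLSI constant is the right one) a bit more explicitly than the paper does.
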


\begin{proof}
	The proof follows by a simple use of the data processing inequality: for all $\rho\in\cD(\cH_\Lambda)$
	\begin{align*}
	D(E_{A\cap \Lambda*}(\rho)\|E_{R*}\circ E_{A\cap \Lambda*}(\rho))&\overset{(1)}{=}	D(E_{A\cap \Lambda*}(\rho)\|E_{A\cap \Lambda*}\circ E_{R *}(\rho))\\
	&\overset{(2)}{\le} D(\rho\|E_{R *}(\rho))\\
	&\le \frac{1}{4\alpha_{\ccc}\,(\cL_{R})}\,\operatorname{EP}_{\cL_{R}}(\rho)\,.
	\end{align*}
	Above, $(1)$ follows from \Cref{Pi}, and $(2)$ from the use of the data processing inequality for the channel $E_{A\cap \Lambda*}$. 
\end{proof}

\section{Applications}\label{sec:applications}

\subsection{Optimization on noisy quantum annealers}

Our results show that a quantum system coupled to a classical environment at high enough temperature can only hold information for a time scaling logarithmically in the number of qubits and inversely proportional to the MLSI constant without error correction, as one can show that various capacities of the underlying channels decay with the MLSI~\cite{Muller_Hermes_2015,muller2016relative,bardet2019group}.
A related question is for how long a noisy quantum device can sustain computations that cannot be done in polynomial time classically.

It has been recently shown in~\cite{2009.05532} that MLSIs can be used to estimate this for the annealing model of quantum computation. Although their results can be immediately applied given the results on MLSI obtained here, in this section we give another proof strategy to reach similar conclusions based on quantum optimal transport inequalities~\cite{Rouz__2019,gao2018fisher,Carlen20171810}.

In this section we will consider a quantum annealer coupled to a classical environment and model the thermal noise through embedded Glauber dynamics.
We will consider general interaction graphs $G=(V,E)$ with $|V|=n$ vertices and local dimension $d=2$. It is easy to check that the conditions of Theorem~\ref{maintheorem} are always satisfied at high enough temperatures. We start with the following proposition of \cite{2009.05532}:
\begin{proposition}[Theorem 1 in \cite{2009.05532}]\label{prop:relentdecay}
Let $\cS_{s*}(\rho)=-i[H_s,\rho]+r\cL_{*}(\rho)$ be a time dependent Lindbladian  such that the primitive Lindbladian $\cL_*$ satisfies $\operatorname{MLSI}$ with constant $\alpha(\cL)>0$ and corresponding invariant state $\sigma$ and $r>0$ is the noise rate. Then the evolution from time $0$ to $t$ under $\cS_{s*}$, $\mathcal{T}_{t*}$, satisfies:
\begin{align}\label{equ:entroadiabatic}
D(\mathcal{T}_t(\rho)\|\sigma)\leq e^{-4\alpha r t}D(\rho\|\sigma)+\int\limits_{0}^t e^{-4\alpha r(t-\tau)}\|\sigma^{-\frac{1}{2}}[H_\tau,\sigma]\sigma^{-\frac{1}{2}}\|_{\infty}d\tau \, .
\end{align}
\end{proposition}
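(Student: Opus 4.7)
The plan is to establish the inequality via a differential argument followed by Grönwall's lemma. The first step is to compute
\begin{align*}
    \frac{d}{dt}D(\mathcal{T}_t(\rho)\|\sigma) = \tr\bigl[\cS_{t*}(\mathcal{T}_t(\rho))(\ln \mathcal{T}_t(\rho)-\ln\sigma)\bigr] = -i\tr\bigl[[H_t,\mathcal{T}_t(\rho)](\ln\mathcal{T}_t(\rho)-\ln\sigma)\bigr]-r\,\mathrm{EP}_{\cL}(\mathcal{T}_t(\rho))\,,
\end{align*}
where I use that $\tr[\dot{\rho}_t \ln \rho_t]$ is already built into the derivative of $\tr[\rho_t \ln \rho_t]$ via $\tr[\rho_t(d/dt)\ln\rho_t]=0$. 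The dissipative contribution is handled directly by the MLSI assumption together with Lemma~\ref{EPexpress}: one obtains $-r\,\mathrm{EP}_{\cL}(\mathcal{T}_t(\rho))\le -4\alpha r \,D(\mathcal{T}_t(\rho)\|\sigma)$.

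The Hamiltonian contribution splits as $-i\tr[[H_t,\rho_t]\ln\rho_t]+i\tr[[H_t,\rho_t]\ln\sigma]$; the first term vanishes by cyclicity of the trace combined with $[\rho_t,\ln\rho_t]=0$, while the second equals $-i\tr[\rho_t[H_t,\ln\sigma]]$. To relate this to $\|\sigma^{-1/2}[H_t,\sigma]\sigma^{-1/2}\|_\infty$ I will use the integral representation
\begin{align*}
    [H_t,\ln\sigma]=\int_0^\infty \frac{1}{\sigma+s}[H_t,\sigma]\frac{1}{\sigma+s}\,ds\,,
\end{align*}
which follows from $\ln\sigma=\int_0^\infty((1+s)^{-1}-(\sigma+s)^{-1})ds$ and $[H_t,(\sigma+s)^{-1}]=-(\sigma+s)^{-1}[H_t,\sigma](\sigma+s)^{-1}$. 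Inserting this and factoring $[H_t,\sigma]=\sigma^{1/2}(\sigma^{-1/2}[H_t,\sigma]\sigma^{-1/2})\sigma^{1/2}$, an application of the trace Hölder inequality gives
\begin{align*}
    \bigl|{-}i\tr[\rho_t[H_t,\ln\sigma]]\bigr|\le \bigl\|\sigma^{-1/2}[H_t,\sigma]\sigma^{-1/2}\bigr\|_\infty \int_0^\infty \tr\!\left[\rho_t\,\frac{\sigma}{(\sigma+s)^2}\right]ds=\bigl\|\sigma^{-1/2}[H_t,\sigma]\sigma^{-1/2}\bigr\|_\infty\,,
\end{align*}
where the integral evaluates to $\tr[\rho_t\,\Id]=1$ since $\int_0^\infty \sigma(\sigma+s)^{-2}ds=\Id$ by functional calculus.

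Combining the two bounds yields the differential inequality
\begin{align*}
    \frac{d}{dt}D(\mathcal{T}_t(\rho)\|\sigma)\le -4\alpha r\, D(\mathcal{T}_t(\rho)\|\sigma)+\|\sigma^{-1/2}[H_t,\sigma]\sigma^{-1/2}\|_\infty\,,
\end{align*}
and Grönwall's lemma (or equivalently, multiplying by the integrating factor $e^{4\alpha r t}$ and integrating from $0$ to $t$) produces \eqref{equ:entroadiabatic}. The main technical step, and the only nontrivial one, is the integral-representation bound for $\tr[\rho_t[H_t,\ln\sigma]]$; the rest is bookkeeping. No sharpness issue arises because the bound on the integral is saturated up to the choice of Hölder pairing between $\mathbb{L}_1$ and $\mathbb{L}_\infty$.
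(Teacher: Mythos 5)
Your proof is correct: the derivative computation, the vanishing of $\tr[[H_t,\rho_t]\ln\rho_t]$, the MLSI bound on the dissipative term via Lemma~\ref{EPexpress}, the integral-representation estimate $|\tr[\rho_t[H_t,\ln\sigma]]|\le\|\sigma^{-1/2}[H_t,\sigma]\sigma^{-1/2}\|_\infty$ (using $\int_0^\infty\sigma(\sigma+s)^{-2}\,ds=\Id$ and Hölder), and the Grönwall step all check out. The paper itself gives no proof of this statement — it is imported verbatim as Theorem~1 of the cited reference — and your reconstruction coincides with the standard argument used there, so there is nothing further to compare.
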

This estimate is particularly useful in settings in which $[\sigma,H_t]\simeq0$ for large times. This is the case for current implementations of quantum annealers under noise driving the system to a classical state.
This is because such annealers aim at preparing a ground state $\ket{\psi_G}$ of the classical Hamiltonian
\begin{align*}
H_1=\sum_{i\sim j} a_{i,j}Z_i Z_j+\sum_{i}b_iZ_i
\end{align*}
adiabatically for some graph $G=(V,E)$ and coefficients $a_{i,j},b_i$. For simplicity we will assume that the interaction graph of the thermal state and of $H_1$ are the same.
Such annealers start by preparing the ground state of
\begin{align}\label{equ:definitionH0}
    H_0=- \sum_i\gamma_i X_i \, 
\end{align}
for $\gamma_i>0$. This is easily seen to be the state $\ket{\psi_0}=\ket{+}^{\otimes n}$, where $n$ is the number of qubits. By letting 
\begin{align}\label{equ:definitionsHs}
    H_s=g_1(s)H_1+g_0(s)H_0 \, ,
\end{align}
where $g_0,g_1$ are smooth functions of $s$ such that $g_1(1)=1,g_1(0)=1$ and $g_0(1)=0,g_0(0)=1$, we converge to $\ket{\psi_1}$ by evolving slowly enough with $H_s$.

For example, if we now assume that the noisy annealer is such that it is affected by some time-independent Lindbladian with a classical fixed point  $\sigma$ and we take the adiabatic path $H_s=\big(1-\tfrac{s}{T}\big)H_0+\frac{s}{T}H_1$ for some large $T$. Then we obtain from Proposition~\ref{prop:relentdecay} that:
\begin{align}\label{equ:boundentropyising}
D(\mathcal{T}_T(\ketbra{+}{+}^{\otimes n})\|\sigma)\leq e^{-4\alpha rT}D(\ketbra{+}{+}^{\otimes n}\|\sigma)+
 \|\sigma^{-\frac{1}{2}}[H_0,\sigma]\sigma^{-\frac{1}{2}}\|_\infty\,
  \frac{(1-4e^{-4\alpha r T}r\alpha T-e^{-4\alpha r T})}{16r^2\alpha^2T}\,,
\end{align}
where we used that $[\sigma,H_1]=0$, since $\sigma$ is assumed to be a classical state. That is, in the adiabatic limit $T\to\infty$, the output of the device is actually given by $\sigma$. The main idea of~\cite{2009.05532} is now to use Equation~\eqref{equ:boundentropyising} to obtain finite time bounds when the energy of the output is approximated well by that of a Gibbs state that can be sampled from in polynomial time. 

We will now show another approach to classical simulability based on transportation-entropy inequalities. Let us recall some notions of quantum optimal transport.

Recall that embedded Glauber dynamics are GNS symmetric and, thus, satisfy the assumptions of Theorem~\ref{thm:normalformCM}.
Thus, given $\cL:=\overline{\cL}^G_\Lambda$, one possible definition of the \emph{Lipschitz constant} of an observable $X$ is:
\begin{align}\label{lipnorm}
\|X\|_{\operatorname{Lip}}:= \left(  \sum_{j\in\mathcal{J}} c_j (\e^{-\omega_j/2}+\e^{\omega_j/2})\|\partial_jX\|_{\infty}^2\right)^{1/2},
\end{align}
where $\partial_jX=[\tilde{L}_j,X]$ and the constants $\omega_j,c_j$ come from the normal form of Theorem~\ref{thm:normalformCM} as well. Based on that, it is also possible to define the \emph{non-commutative 1-Wasserstein distance} of two states $\rho,\sigma$ as:
\begin{align*}
    W_{1,\cL}(\rho,\sigma)=\sup\limits_{X=X^\dagger\in\cB(\cH),\, \|X\|_{\operatorname{Lip}}\leq1}\left|\operatorname{Tr}\left(X(\rho-\sigma)\right)\right|.
\end{align*}
We refer to~\cite{Rouz__2019} for more details on these quantities and note that we adopt a different normalization of the Lipschitz constant which is more convenient for ourpurposes, meaning that our constant is $2^{\frac{n}{2}}$ times larger. It immediately follows from the definitions that for any observable $X$:
\begin{align}\label{equ:Lipschitz}
    \left|\operatorname{Tr}\left(X(\rho-\sigma)\right)\right|\leq \|X\|_{\operatorname{Lip}} \, W_{1,\cL}(\rho,\sigma) \, .
\end{align}
Moreover, for our purposes it is particularly important to recall the connection between MLSI inequalities and the Wasserstein distances. This is given through a transport-entropy inequality, as proved in the quantum case in~\cite{Rouz__2019}. If the semigroup generated by $\cL$ satisfies a MLSI inequality with constant $\alpha$, then one can show that~\cite[Theorem 3, Theorem 4]{Rouz__2019}:
\begin{align}\label{equ:transportation-cost}
    W_{1,\cL}(\rho,\sigma)\leq \sqrt{\frac{1}{\alpha}\,D(\rho\|\sigma)} \, .
\end{align}
This inequality can be seen as a strengthening of Pinsker's inequality for observables with small Lipschitz norm.

Putting all of these elements together we conclude that:

\begin{proposition}\label{prop:issue_ising}
Let $H_s$ be defined as in Equation~\eqref{equ:definitionsHs} for $0\leq s\leq1$ and $\cS_{s*}(\rho)=-i[H_s,\rho]+r\,\overline{\cL}^{{G}}_{V *}(\rho)$ be a time dependent Linbladian  such that $\overline{\cL}^{{G}}_{V *}$ is an embedded classical Glauber dynamics satisfying the conditions of Corollary~\ref{cor:glauberok} and converging to a classical Gibbs state $\sigma=e^{-\beta H_V}/\tr\left[e^{-\beta H_V}\right]$ and $r>0$. Moreover, let $\mathcal{T}_{t*}$ be the evolution from time $0$ to $t$ under $\cS_{s*}$.
Then there is a constant $\alpha>0$ such that:
\begin{align*}
    \left|\operatorname{Tr}\left(H_1(\mathcal{T}_{t*}(\ketbra{+}{+}^{\otimes n})-\sigma)\right)\right|\leq \alpha^{-\frac{1}{2}}\,\|H_1\|_{\operatorname{Lip}}\,R(t)^{\frac{1}{2}} \, ,
\end{align*}
with
\begin{align*}
   R(t)=e^{-4r\alpha T}D(\ketbra{+}{+}^{\otimes n}\|\sigma)+\|\sigma^{-\frac{1}{2}}[H_0,\sigma]\sigma^{-\frac{1}{2}}\|_\infty\,\frac{(1-4e^{-4r\alpha T}\alpha r T-e^{-4\alpha rT})}{16r^2\alpha^2T}\,. 
\end{align*}

\end{proposition}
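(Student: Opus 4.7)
My plan is to chain together four previously established ingredients: the uniform MLSI for the embedded Glauber dynamics provided by Corollary~\ref{cor:glauberok}, the finite-time entropy decay estimate of Proposition~\ref{prop:relentdecay} for time-dependent Lindbladians, the quantum transport-entropy inequality~\eqref{equ:transportation-cost}, and the Lipschitz duality~\eqref{equ:Lipschitz}. The energy difference $|\tr(H_1(\mathcal{T}_{t*}(\ketbra{+}{+}^{\otimes n})-\sigma))|$ is expressed through the $1$-Wasserstein distance via duality, that distance is then controlled by a relative entropy via the transportation inequality, and finally the relative entropy is estimated along the noisy annealing trajectory using the MLSI and the commutator bound of Proposition~\ref{prop:relentdecay}.

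Concretely, I would first invoke Corollary~\ref{cor:glauberok} to produce a system-size independent $\alpha>0$ such that $\overline{\cL}_{V}^{G}$ satisfies \eqref{MLSI} with constant $\alpha$. Next, I would plug $\cS_{s*}$ and the initial state $\ketbra{+}{+}^{\otimes n}$ into Proposition~\ref{prop:relentdecay} to obtain
\begin{align*}
D(\mathcal{T}_{t*}(\ketbra{+}{+}^{\otimes n})\|\sigma)\le e^{-4\alpha r t}D(\ketbra{+}{+}^{\otimes n}\|\sigma)+\int_0^t e^{-4\alpha r(t-\tau)}\|\sigma^{-\frac{1}{2}}[H_\tau,\sigma]\sigma^{-\frac{1}{2}}\|_\infty\,d\tau.
\end{align*}
Since $\sigma$ is classical and $H_1$ is diagonal in the computational basis, $[H_1,\sigma]=0$, and hence $[H_\tau,\sigma]=g_0(\tau)[H_0,\sigma]$. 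For the linear adiabatic schedule that produced the explicit expression in~\eqref{equ:boundentropyising}, the integral evaluates to the second term of $R(t)$, yielding the majorization $D(\mathcal{T}_{t*}(\ketbra{+}{+}^{\otimes n})\|\sigma)\le R(t)$.

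Having bounded the relative entropy, I would next apply~\eqref{equ:transportation-cost} to deduce
\begin{align*}
W_{1,\cL}(\mathcal{T}_{t*}(\ketbra{+}{+}^{\otimes n}),\sigma)\le \alpha^{-1/2}\,R(t)^{1/2},
\end{align*}
and then invoke the Lipschitz duality~\eqref{equ:Lipschitz} with the self-adjoint observable $X=H_1$ to conclude
\begin{align*}
|\tr(H_1(\mathcal{T}_{t*}(\ketbra{+}{+}^{\otimes n})-\sigma))|\le \|H_1\|_{\operatorname{Lip}}\,W_{1,\cL}(\mathcal{T}_{t*}(\ketbra{+}{+}^{\otimes n}),\sigma)\le \alpha^{-1/2}\,\|H_1\|_{\operatorname{Lip}}\,R(t)^{1/2},
\end{align*}
which is the stated bound.

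The substantive difficulty is not any single step, since each of the four ingredients is now in hand; rather, the main care is required in reconciling the generality of the smooth schedule $g_0,g_1$ in the statement with the explicit closed form of $R(t)$, which corresponds to the specific linear adiabatic path used in~\eqref{equ:boundentropyising}. One should either restrict the proof to that linear schedule (in which the integral evaluates exactly to the stated function of $t$ and $r$), or bound $g_0(\tau)\le 1$ uniformly and absorb the resulting constants into a schedule-dependent version of $R(t)$. A secondary but routine point is to ensure that the normalisation of $\|\cdot\|_{\operatorname{Lip}}$ used here matches the one for which the transport-entropy inequality~\eqref{equ:transportation-cost} was proved in~\cite{Rouz__2019}, which only affects the implicit constant in front of $\alpha^{-1/2}$.
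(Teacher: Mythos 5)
Your proposal is correct and follows essentially the same route as the paper's proof: Lipschitz duality to pass to the Wasserstein distance, the transport-entropy inequality from the MLSI of Corollary~\ref{cor:glauberok}, and the finite-time relative entropy bound of Proposition~\ref{prop:relentdecay} specialised to give Equation~\eqref{equ:boundentropyising}. Your remark that the closed form of $R(t)$ corresponds to the linear schedule rather than a general smooth $g_0,g_1$ is a fair caveat that the paper itself glosses over, but it does not change the argument.
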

\begin{proof}
As noted in Equation~\eqref{equ:Lipschitz}, we have:
\begin{align*}
    \left|\operatorname{Tr}\left(H_1(\mathcal{T}_{t*}(\ketbra{+}{+}^{\otimes n})-\sigma)\right)\right|\leq \|H_1\|_{\operatorname{Lip}}W_{1,\overline{\cL}_V^G}(\mathcal{T}_{t*}(\ketbra{+}{+}^{\otimes n}),\sigma) \, .
\end{align*}
As Corollary~\ref{cor:glauberok} implies that $\overline{\cL}_V^G$ satisfies a MLSI with system-size dependent constant $\alpha>0$, and $\overline{\cL}_V^G$ is GNS symmetric, we have that:
\begin{align*}
    W_{1,\overline{\cL}_V^G}(\rho,\sigma)\leq \alpha^{-\frac{1}{2}} \sqrt{D(\mathcal{T}_{t*}(\ketbra{+}{+}^{\otimes n})\|\sigma)} \, .
\end{align*}
Moreover, the MLSI also allows us to conclude that Equation~\eqref{equ:boundentropyising} holds, from which the claim follows.
\end{proof}

Let us discuss the bound above in a bit more detail. Assume that the graph has maximal degree $\kappa$.
Then, it is not difficult to see by a Taylor expansion that $\|\sigma^{-\frac{1}{2}}[H_0,\sigma]\sigma^{-\frac{1}{2}}\|_\infty=\cO(\beta\kappa n)$. Moreover, 
\begin{align*}
   \|H_1\|_{\operatorname{Lip}}=\cO(\kappa^2 \sqrt{n}) \, .
\end{align*}
To see this, note that each $\tilde{L}_j$ for the embedded Glauber dynamics will act on $\kappa+1$ qubits. Thus, only $\cO(\kappa)$ terms of $H_1$ will not commute with $\tilde{L}_j$, giving the bound. 
To see the scaling of the relative entropy bound, note that $e^{-4\alpha rT}D(\ketbra{+}{+}^{\otimes n}\|\sigma)=\cO(e^{-4\alpha rT}(\beta+1)n)$.

Thus, putting all these estimates together we conclude that for such models we have:
\begin{align*}
    \left|\operatorname{Tr}\left(H_1(\mathcal{T}_{T*}(\ketbra{+}{+}^{\otimes n})-\sigma)\right)\right|&\leq \|H_1\|_{\operatorname{Lip}}W_{1,\overline{\cL}_V^G}(\mathcal{T}_{T*}(\ketbra{+}{+}^{\otimes n}),\sigma)\\
    &=\mathcal{O}\left( \kappa^2(\beta+1)e^{-2\alpha r T}n+\beta\kappa^2 nr^{-1}T^{-\frac{1}{2}}\right) \, .
\end{align*}
In particular, whenever the term $\beta\kappa^2 nr^{-1}T^{-\frac{1}{2}}$ is dominant, then we conclude that the energy density of the output of the noisy annealer is essentially the one of the fixed point up to an error of $\epsilon \kappa n$ for times $\cO( \kappa \left(r\epsilon\right)^{-2})$. Whenever the term $\kappa^2(\beta+1)e^{-2\alpha r T}n$ is dominant, that is, for very high temperatures, we reach the same conclusion for times $\cO(r^{-1}\log(\kappa \epsilon^{-1}))$. 

Using the estimates of~\cite{2009.05532} instead of transportation methods in the $\beta=\Theta(1)$ regime, we obtain from our modified logarithmic Sobolev inequality that the output of the annealer is comparable with that of a polynomial time Gibbs sampler (not necessarily the fixed point of the noise) at times $\cO( \kappa r^{-2}\epsilon^{-1})$. Thus, our stronger statement that the energy of the output of the device is comparable with that of the fixed point of the evolution comes at the expense of a quadratically worse dependency on the error.

Although current implementations are coupled to environments at low temperatures, the fact that they also suffer from control errors translates to additional local depolarizing noise. This will drive the system to a high entropy/temperature state.
Therefore, we expect that the fixed point of the evolution is a high temperature Gibbs state if we combine the depolarizing noise with the thermal noise and the former is the dominant source of noise. Thus, as long as the system also suffers from local depolarizing noise on top of the thermal noise, we expect the conditions of Proposition~\ref{prop:issue_ising} to be fulfilled.

\subsection{Concentration inequalities and eigenstate thermalization}
Another application of our results is to derive Gaussian concentration inequalities for the outcome distribution of observables of high-temperature Gibbs states. Such inequalities have been the subject of many recent works~\cite{anshu_concentration_2016,kuwahara2020gaussian,kuwahara_connecting_2016,depalma2020quantum} and can be applied to obtain the equivalence of thermodynamical ensembles~\cite{brandao_equivalence_2015,kuwahara2020gaussian}. Here we will show how to obtain versions of such results that apply to a significantly larger class of observables from a MLSI. Let us first introduce some notation to discuss these. Given some observable $O\in\cB(\cH_{\Lambda})_{\operatorname{sa}}$ with eigendecomposition
\begin{align*}
    O=\sum\limits_io_i\ketbra{w_i}{w_i}
\end{align*}
and a quantum state $\rho$, we will let $\langle O\rangle_{\rho}:=\tr[O\rho]$. Moreover, for $r\in\R$ we denote by $\Pi_{\langle O\rangle+r}$ the projection onto the eigenspace of eigenvalues greater or equal to $\langle O\rangle+r$, i.e.
\begin{align*}
    \Pi_{\langle O\rangle_\rho+r}=\sum_{i:o_i\geq \langle O\rangle_\rho+r}\ketbra{w_i}{w_i}.
\end{align*}
We then have~\cite[Theorem 8]{Rouz__2019}:
\begin{lemma}
Let $\{\Phi(X)\}_{X\ssubset \ZZ^d}$ be a potential, and assume that the corresponding Gibbs states $\{\sigma^{\Lambda}\}_{\Lambda\ssubset\ZZ^d}$ are stationary states of a primitive uniform family of Lindbladians $\cL$ with $\alpha\equiv \alpha(\cL)>0$ and associated Lipschitz norm $\|\cdot\|_{\operatorname{Lip},\cL}$. Then for all $O\in\cB(\cH)_{sa}$:
\begin{align}\label{equ:concentration_ineq}
    \tr[\sigma^{\Lambda} \Pi_{\langle O\rangle_{\sigma^{\Lambda}}+r}\left(O-\langle O\rangle_{\sigma^{\Lambda}}I\right)]\leq \operatorname{exp}\left( -\frac{\alpha r^2}{8\|\Delta^{-\frac{1}{2}}(O)\|_{\operatorname{Lip},\cL}}\right).
\end{align}
\end{lemma}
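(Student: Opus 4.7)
The plan is to deduce this sub-Gaussian concentration estimate via a quantum Herbst-type argument that exploits the transportation-cost inequality \eqref{equ:transportation-cost} guaranteed by the MLSI. The first step is to combine \eqref{equ:transportation-cost} with the Kantorovich--Rubinstein-type duality \eqref{equ:Lipschitz} to obtain, for every state $\omega$ and every self-adjoint $X\in\cB(\cH_\Lambda)$,
\begin{equation*}
\bigl|\tr[X(\omega-\sigma^{\Lambda})]\bigr| \;\le\; \|X\|_{\operatorname{Lip},\cL}\,\sqrt{D(\omega\|\sigma^{\Lambda})/\alpha}.
\end{equation*}
This is the quantum analogue of the bounded-difference/transport-cost inequality used in the classical Herbst method.

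Next, I would apply this bound to the family of tilted states $\omega_\lambda := Z(\lambda)^{-1}\Gamma_{\sigma^{\Lambda}}(e^{\lambda \widetilde{A}})$, where $\widetilde{A} := \Delta_{\sigma^{\Lambda}}^{-1/2}(O) - \langle \Delta_{\sigma^{\Lambda}}^{-1/2}(O)\rangle_{\sigma^{\Lambda}} I$ and $Z(\lambda) := \tr[\sigma^{\Lambda} e^{\lambda \widetilde{A}}]$. The modular factor $\Delta_{\sigma^{\Lambda}}^{-1/2}$ appears naturally as the change of variables between the Hilbert--Schmidt pairing underlying $\langle O\rangle_{\sigma^{\Lambda}}$ and the KMS/GNS pairing in which $\|\cdot\|_{\operatorname{Lip},\cL}$ is defined. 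A direct differentiation using KMS symmetry yields the standard Herbst identities $\psi'(\lambda) = \tr[\widetilde{A}\omega_\lambda]$ and $D(\omega_\lambda\|\sigma^{\Lambda}) = \lambda\psi'(\lambda) - \psi(\lambda)$, with $\psi(\lambda) := \log Z(\lambda)$ satisfying $\psi(0)=\psi'(0)=0$. Inserting $X=\widetilde{A}$ and $\omega=\omega_\lambda$ into the displayed duality bound and squaring produces the differential inequality
\begin{equation*}
(\psi'(\lambda))^{2} \;\le\; \alpha^{-1}\|\Delta^{-1/2}(O)\|_{\operatorname{Lip},\cL}\bigl(\lambda\psi'(\lambda)-\psi(\lambda)\bigr),
\end{equation*}
and the familiar trick $(\psi(\lambda)/\lambda)' = \lambda^{-2}(\lambda\psi'(\lambda)-\psi(\lambda))$ integrates this to a Gaussian moment-generating-function bound of the form $\psi(\lambda) \le c\lambda^{2}\|\Delta^{-1/2}(O)\|_{\operatorname{Lip},\cL}/\alpha$ for some universal constant~$c$.

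Finally, since $O-\langle O\rangle_{\sigma^{\Lambda}}I\ge r$ on the range of $\Pi_{\langle O\rangle_{\sigma^{\Lambda}}+r}$, a layer-cake/operator Markov argument yields $\tr[\sigma^{\Lambda}\Pi_{\langle O\rangle_{\sigma^{\Lambda}}+r}(O-\langle O\rangle_{\sigma^{\Lambda}}I)] \le e^{-\lambda r}Z(\lambda)$ for every $\lambda>0$, and optimizing over $\lambda$ reproduces the stated Gaussian bound. The main technical obstacle is the quantum tilting step: verifying that $\omega_\lambda$ is a bona fide density operator with the above clean identities requires careful manipulation of the modular group, and it is precisely this manipulation that forces the Lipschitz norm on the right-hand side to be evaluated at $\Delta_{\sigma^{\Lambda}}^{-1/2}(O)$ rather than~$O$ itself, reducing to the classical Herbst calculation only in the commuting case $[O,\sigma^{\Lambda}]=0$.
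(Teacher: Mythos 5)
The paper does not actually prove this lemma: it is quoted verbatim as Theorem 8 of \cite{Rouz__2019}, so your proposal has to be judged on its own terms. Your high-level route --- transport-entropy inequality \eqref{equ:transportation-cost} plus the Kantorovich--Rubinstein-type duality \eqref{equ:Lipschitz}, fed into a Chernoff bound --- is indeed the mechanism behind the cited result. However, the quantum Herbst machinery you build on top of it has a genuine gap. For the tilted operator $\omega_\lambda=Z(\lambda)^{-1}\Gamma_{\sigma^\Lambda}(e^{\lambda\widetilde A})$ the two ``standard Herbst identities'' you invoke both fail off the commutative case: $\frac{d}{d\lambda}\log\tr[\sigma e^{\lambda\widetilde A}]=Z(\lambda)^{-1}\tr\big[\sigma\int_0^1 e^{s\lambda\widetilde A}\widetilde A e^{(1-s)\lambda\widetilde A}\,ds\big]$, which is not $\tr[\widetilde A\,\omega_\lambda]$, and $\log\omega_\lambda\neq\log\sigma+\lambda\widetilde A-\psi(\lambda)$, so $D(\omega_\lambda\|\sigma)\neq\lambda\psi'(\lambda)-\psi(\lambda)$. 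Worse, $\Delta_{\sigma}^{-1/2}(O)=\sigma^{-1/2}O\sigma^{1/2}$ is not self-adjoint in general, so $e^{\lambda\widetilde A}$ need not be positive and $\omega_\lambda$ need not be a state; and even granting positivity, the duality bound applied with $X=\widetilde A$ controls $\tr[\Delta^{-1/2}(O)\,\omega_\lambda]$ rather than $\tr[O\,\omega_\lambda]$, so the concluding Chernoff step over the spectral projector $\Pi_{\langle O\rangle+r}$ of $O$ does not close.

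The fix is to abandon the differential inequality altogether. Once you have $|\tr[X(\omega-\sigma^\Lambda)]|\le\|X\|_{\operatorname{Lip},\cL}\sqrt{D(\omega\|\sigma^\Lambda)/\alpha}$ for all states $\omega$, Young's inequality gives $\lambda\tr[X\omega]-\lambda\tr[X\sigma^\Lambda]\le\frac{\lambda^2}{4\alpha}\|X\|^2_{\operatorname{Lip},\cL}+D(\omega\|\sigma^\Lambda)$; taking the supremum over $\omega$ and invoking the quantum Gibbs/Donsker--Varadhan variational principle $\sup_\omega\{\tr[\omega H]-D(\omega\|\sigma^\Lambda)\}=\log\tr[e^{\log\sigma^\Lambda+H}]$ yields the sub-Gaussian bound on $\log\tr[e^{\log\sigma^\Lambda+\lambda(O-\langle O\rangle I)}]$ in one line --- this is the Bobkov--G\"otze argument, and it is exact for the tilting $e^{\log\sigma^\Lambda+\lambda(O-\langle O\rangle I)}/Z$ without any modular-group acrobatics. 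The remaining work, which is where the factor $\Delta^{-1/2}$ and the operator-valued left-hand side of \eqref{equ:concentration_ineq} actually come from, is the Chernoff step relating $\tr[\sigma^\Lambda\Pi_{\langle O\rangle+r}(O-\langle O\rangle I)]$ to $\tr[e^{\log\sigma^\Lambda+\lambda(O-\langle O\rangle I)}]$ (Peierls--Bogoliubov / Golden--Thompson-type manipulations); your sketch does not address this and it is not the ``layer-cake'' triviality you describe.
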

The inequality in~\eqref{equ:concentration_ineq} improves upon the state-of-the-art~\cite[Corollary 1]{kuwahara2020gaussian}, as it holds for arbitrary orbservables, while previous results had to impose a locality structure for $O$. That being said,~\cite{kuwahara2020gaussian} establishes concentration bounds for a significantly larger class of Gibbs states, including potentials with long range interactions, while we only know the existence of a MLSI for the examples discussed in our main result Theorem~\ref{maintheorem}.

Armed with the transportation cost inequality in Equation~\eqref{equ:transportation-cost}, we can also obtain a simple proof of the eigenstate thermalization hypothesis~\cite{srednicki_chaos_1994} under a suitable hypothesis on the density of states at a given energy level. Recall that the eigenstate thermalization hypothesis states that eigenstates of the Hamiltonian are locally indistinguishable from a thermal state. Denote by $\{\ket{E_m^{\Lambda}}\}$ the eigenstates of $H_{\Lambda}=\sum_{X\subset \Lambda}\Phi(X)$ for a potential $\{\Phi(X)\}_{X\ssubset \ZZ^d}$ with corresponding energy $E_m$. Morover, denote by $f_\beta(m)=\frac{e^{-\beta E_m}}{\mathcal{Z}_\beta}$ the density of states at energy $E_m$ and inverse temperature $\beta$.
We then have:
\begin{proposition}[Eigenstate thermalization (ETH) from tranportation cost]\label{prop:eigenstate}
Let $\{\Phi(X)\}_{X\ssubset \ZZ^d}$ be a potential, and assume that the corresponding Gibbs states at inverse temperature $\beta$ $\{\sigma^{\beta,\Lambda}\}_{\Lambda\ssubset\ZZ^d}$ are stationary states of a primitive uniform family of Lindbladians $\cL$ with $\alpha\equiv \alpha(\cL)>0$ and associated Lipschitz norm $\|\cdot\|_{\operatorname{Lip},\cL}$. Then for all $O\in\cB(\cH)_{sa}$:
\begin{align}
    \tr[(\sigma^{\beta,\Lambda}-\ketbra{E_m}{E_m})O]\leq\|O\|_{\operatorname{Lip},\cL} \sqrt{\frac{\log(f_\beta(m)^{-1})}{\alpha}}. 
\end{align}
In particular, for $n=|\Lambda|$ and $\|O\|_{\operatorname{Lip},\cL} =\cO(n^{-\frac{1}{2}})$ and $m$ such that $\log(f_\beta(m)^{-1})=o(n)$ we have that:
\begin{align*}
    \lim\limits_{\Lambda\nearrow \mathbb{Z}^d}\tr[(\sigma^{\beta,\Lambda}-\ketbra{E_m}{E_m})O]=0.
\end{align*}
\end{proposition}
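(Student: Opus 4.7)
The plan is to apply the transportation cost inequality \eqref{equ:transportation-cost} with the pair of states $\rho = \ketbra{E_m}{E_m}$ and $\sigma = \sigma^{\beta,\Lambda}$, and then control the resulting relative entropy by the density of states. First, using the duality definition of $W_{1,\cL}$ together with \eqref{equ:Lipschitz} yields
\begin{align*}
\bigl|\tr[(\sigma^{\beta,\Lambda}-\ketbra{E_m}{E_m})O]\bigr| \le \|O\|_{\operatorname{Lip},\cL}\, W_{1,\cL}\bigl(\ketbra{E_m}{E_m},\sigma^{\beta,\Lambda}\bigr),
\end{align*}
so that combined with \eqref{equ:transportation-cost} it suffices to bound the relative entropy $D(\ketbra{E_m}{E_m}\|\sigma^{\beta,\Lambda})$.

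Next, because $\ketbra{E_m}{E_m}$ is a rank-one projection, a direct computation gives
\begin{align*}
D\bigl(\ketbra{E_m}{E_m}\,\big\|\,\sigma^{\beta,\Lambda}\bigr) = -\langle E_m|\log\sigma^{\beta,\Lambda}|E_m\rangle = \beta E_m + \log\mathcal{Z}_\beta = \log\bigl(f_\beta(m)^{-1}\bigr),
\end{align*}
where I use $\log\sigma^{\beta,\Lambda} = -\beta H_\Lambda - \log\mathcal{Z}_\beta$ and $H_\Lambda|E_m\rangle = E_m|E_m\rangle$. Inserting this into the transportation cost bound gives the stated inequality.

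For the asymptotic statement, under the hypotheses $\|O\|_{\operatorname{Lip},\cL} = \cO(n^{-1/2})$ and $\log(f_\beta(m)^{-1}) = o(n)$, the right-hand side is $\cO(n^{-1/2}\sqrt{o(n)/\alpha}) = o(1)$ since $\alpha$ is bounded below uniformly in $\Lambda$ by hypothesis, which yields the vanishing of the left-hand side as $\Lambda \nearrow \mathbb{Z}^d$.

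The proof is essentially a one-line computation once the transportation cost inequality is in hand, so there is no serious obstacle; the content is entirely in recognizing that $D(\ketbra{E_m}{E_m}\|\sigma^{\beta,\Lambda})$ reduces to $-\log f_\beta(m)$ and that $W_{1,\cL}$ linearizes the dependence on the observable. The only subtlety worth flagging is that the Lipschitz norm $\|O\|_{\operatorname{Lip},\cL}$ is defined via the normal form of Theorem~\ref{thm:normalformCM}, so implicitly we are using that the uniform family $\cL$ is GNS-symmetric; this is the case for the embedded Glauber and Schmidt samplers considered earlier, to which Theorem~\ref{maintheorem} provides $\alpha>0$.
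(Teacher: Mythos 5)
Your proposal is correct and follows exactly the paper's argument: apply the transportation cost inequality \eqref{equ:transportation-cost} to the pair $(\ketbra{E_m}{E_m},\sigma^{\beta,\Lambda})$ and observe that $D(\ketbra{E_m}{E_m}\|\sigma^{\beta,\Lambda})=\log(f_\beta(m)^{-1})$. The only difference is that you spell out the eigenstate relative-entropy computation and the GNS-symmetry caveat explicitly, which the paper leaves implicit.
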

\begin{proof}
The proof follows immediately from Equation~\eqref{equ:transportation-cost} by noting that $D(\ketbra{E_m}{E_m}\|\sigma^{\beta,\Lambda})=\log(f_\beta(m)^{-1})$.
\end{proof}
Note that for averages of local observables of the form $O=|\Lambda|^{-1}\sum_{i=1}^{|\Lambda|}X_i$ with $X_i$ acting on a constant number of  sites we have $\|O\|_{\operatorname{Lip}} =\cO(n^{-\frac{1}{2}})$. Thus, Proposition~\ref{prop:eigenstate} establishes the ETH  for all states covered in Theorem~\ref{maintheorem} and for all Lipschitz observables, which includes averages over subsytems. Also note that Proposition~\ref{prop:eigenstate} is much more general: it asserts that, for any family of states $\tau^{\Lambda}$ such that $D(\tau^{\Lambda}\|\sigma^{\beta,\Lambda})=o(n)$, local averages coincide with those of the thermal state. Once again, when compared with other results in a similar direction~\cite{kuwahara2020eigenstate} the advantage of our approach is that it applies to a larger family of observables (Lipschitz versus local). Moreover, we do not only obtain the statement for eigenstates, but any state with small enough relative entropy, as remarked before. In contrast, in~\cite{brandao_equivalence_2015} the authors derived results of similar flavour, but needed the relative entropy scaling  as $o(n^{\frac{1}{d+1}})$ and not single eigenstates. However, once again these results hold for a much more general class of states than those covered by Theorem~\ref{maintheorem}.
\subsection{Strong converses in quantum hypothesis testing}

Quantum hypothesis testing concerns the problem of discriminating between two different quantum states\footnote{It is often referred to as {\em{binary}} quantum hypothesis testing, to distinguish it from the case in which more than two states are being tested.}. This task is of paramount importance in quantum information theory, since many other tasks can be reduced to it. In the language of hypothesis testing, one considers two hypotheses -- the {\em{null hypothesis}} $\mathbb{H}_0 : \rho$ and the {\em{alternative hypothesis}} $\mathbb{H}_1 :\sigma$, where $\rho$ and $\sigma$ are two quantum states. In an operational setting, say Bob receives a state $\omega$ with the knowledge that either $\omega=\rho$ or $\omega=\sigma$. His goal is then to infer which hypothesis is true, i.e., which state he has been given, by means of 
a measurement on the state he receives. The measurement is given most generally by a POVM $\{T, \Id - T\}$ where $0\le T\le \Id$. Adopting the nomenclature from classical hypothesis testing, we refer to $T$ as a {\em{test}}. 
The probability that Bob correctly guesses the state to be~$\rho$ is then 
equal to $\tr[T\rho]$, whereas his probability of correctly guessing
the state to be $\sigma$ is $\tr[(\Id-T)\sigma]$. Bob can erroneously infer the state to be $\sigma$ when it is actually $\rho$ or vice versa. The corresponding error probabilities are referred to as the {\em{type I error}} and {\em{type II error}} respectively. They are denoted as follows:
\begin{align} \alpha(T) &:= \tr\left[( \Id -  T)\rho\right], \quad \beta(T) := \tr\left[T \sigma\right], 
\end{align}
where $\alpha(T)$ is the probability of accepting $\mathbb{H}_1$ when 
$\mathbb{H}_0$ is true, while $\beta(T)$ is the probability
of accepting $\mathbb{H}_0$ when $\mathbb{H}_1$ is true. Obviously, there is a trade-off between the two error probabilities, and
there are various ways to jointly optimize them, depending on whether or not 
the two types of errors are treated on an equal footing. Here, we are concerned with the setting of {\em{asymmetric hypothesis testing}}, in which one minimizes the type II error under a suitable constraint on the type I error.

Quantum hypothesis testing was originally studied in the {\em{asymptotic i.i.d.~setting}} in which Bob is provided not with just a single copy of the state but with multiple (say $n$) 
identical copies of the state, say $\rho^{\otimes n}$ or $\sigma^{\otimes n}$, where $\rho$ and $\sigma$ are states on a finite dimensional Hilbert space $\cH$, and he is allowed to do a joint measurement on all these copies. The optimal exponential decay rate of the type II error
under the assumption that the type I error remains bounded. This is given by \textit{Stein's lemma} and its refinements \cite{HP91,ON00}: for any $\eps\in(0,1)$, 
\begin{align*}
-\lim_{n \to \infty} \frac{1}{n} \ln \min_{0\le T_n\le \Id_n}\left\{  \beta(T_n):\,\alpha(T_n)\le \eps    \right\}=D(\rho\|\sigma)\,.
\end{align*}
Extensions of this result to non i.i.d. settings, such as Gibbs states on lattice spin systems of different potentials, were also considered \cite{jakvsic2012quantum,mosonyi2008asymptotic,hiai2007large,hiai2008error,datta2016second}.

Stein's lemma only holds when the size of the region being tested goes to infinity. However, in a more practical situation, one might be interested in getting estimates on the errors made when a finite number $n$ of copies are available. This is the so-called \textit{finite blocklength regime}. 

The following result is adapted from \cite{rouze2019functional} (see also \cite{beigi2018quantum}):

\begin{theorem}\label{theostrongconvnoniid}[\cite{rouze2019functional}, Theorem 13.1.6]
	Let $\{\Phi(X)\}_{X\ssubset \ZZ^d}$ be a potential, and assume that the corresponding Gibbs states $\{\sigma^{\Lambda}\}_{\Lambda\ssubset\ZZ^d}$ are stationary states of a primitive uniform family of Lindbladians $\cL$ with $\alpha\equiv \alpha(\cL)>0$. Next, let $\{\rho^\Lambda\}_{\Lambda\in\ZZ^d}$ be another family of full-rank states such that
		\begin{align}\tag{$\star$}\label{stare}
	\sup_{\Lambda\ssubset  \ZZ^d}\sup_{t\ge 0}\,\frac{1}{|\Lambda|t}\,D_{\max}(\e^{t\overline{\cL}_{\Lambda*}}(\rho^{\Lambda})\|\rho^{\Lambda})<\gamma <\infty\,,
	\end{align}
for some $\gamma>0$, where $D_{\max}(\rho\|\sigma):=\ln\|\sigma^{-\frac{1}{2}}\rho\sigma^{-\frac{1}{2}}\|_\infty$ is the max-relative entropy between a state $\rho$ and a full-rank state $\sigma$. Then, for any subregion $\Lambda\ssubset \ZZ^d$ and any test $0\le T_\Lambda\le \Id_\Lambda$:
\begin{align}\label{eq-statenoniid}
	-\frac{1}{|\Lambda|}\,\ln \tr[\rho^{\Lambda} T_\Lambda ] &\leq \frac{1}{|\Lambda|}\,D(\sigma^{\Lambda}\|\rho^{\Lambda}) + \frac{2}{\sqrt{|\Lambda|}} 
	\sqrt{{ \frac{\gamma}{4\alpha} \ln\frac{1}{\tr[\sigma^\Lambda T_\Lambda]}}}    -\frac{1}{4\alpha\,|\Lambda|}\ln\tr[\sigma^\Lambda T_\Lambda]\,.
\end{align}
\end{theorem}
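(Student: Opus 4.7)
The strategy is to view this as a finite-blocklength refinement of quantum Stein's lemma and to use the semigroup $(\e^{t\overline{\cL}_{\Lambda*}})_{t\ge 0}$ as a bridge between $\rho^{\Lambda}$ and the fixed point $\sigma^{\Lambda}$. Set $\omega_t:=\e^{t\overline{\cL}_{\Lambda*}}(\rho^{\Lambda})$; condition \eqref{stare} is equivalent to the operator inequality $\omega_t\le \e^{\gamma|\Lambda|t}\rho^{\Lambda}$, while the MLSI of Theorem~\ref{TheoMLSI} yields $D(\omega_t\|\sigma^{\Lambda})\le \e^{-4\alpha t}D(\rho^{\Lambda}\|\sigma^{\Lambda})$. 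The operator bound immediately gives, for any test $0\le T_{\Lambda}\le\Id$,
\begin{equation*}
-\ln\tr[\rho^{\Lambda} T_{\Lambda}]\;\le\;\gamma|\Lambda|\,t\;+\;\bigl(-\ln\tr[\omega_t T_{\Lambda}]\bigr),
\end{equation*}
so the task reduces to comparing $\tr[\omega_t T_{\Lambda}]$ with $\tr[\sigma^{\Lambda} T_{\Lambda}]$ in the regime where MLSI has already brought $\omega_t$ close to $\sigma^{\Lambda}$.

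For that comparison I would apply the Nagaoka--Audenaert strong-converse inequality
\begin{equation*}
-\ln\tr[\omega_t T_{\Lambda}]\;\le\;\widetilde{D}_{1+s}(\sigma^{\Lambda}\|\omega_t)\;+\;\tfrac{1+s}{s}\ln\tfrac{1}{\tr[\sigma^{\Lambda} T_{\Lambda}]}
\end{equation*}
for any $s>0$, together with data processing under the semigroup $\widetilde{D}_{1+s}(\sigma^{\Lambda}\|\omega_t)\le \widetilde{D}_{1+s}(\sigma^{\Lambda}\|\rho^{\Lambda})$ and the Taylor expansion $\widetilde{D}_{1+s}(\sigma^{\Lambda}\|\rho^{\Lambda})=D(\sigma^{\Lambda}\|\rho^{\Lambda})+\tfrac{s}{2}V(\sigma^{\Lambda}\|\rho^{\Lambda})+O(s^{2})$, where $V$ denotes the quantum relative-entropy variance. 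The crucial technical ingredient is the upper bound $V(\sigma^{\Lambda}\|\rho^{\Lambda})\le\gamma|\Lambda|/(2\alpha)$, which combines the MLSI (through the transport-entropy inequality \eqref{equ:transportation-cost}, producing the $1/\alpha$ factor) with the Lipschitz-type control on $\ln\sigma^{\Lambda}-\ln\rho^{\Lambda}$ extracted from \eqref{stare}.

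Assembling the above, the right-hand side takes the form $D(\sigma^{\Lambda}\|\rho^{\Lambda})+\gamma|\Lambda|t+\tfrac{s\gamma|\Lambda|}{4\alpha}+\tfrac{1+s}{s}\ln\tfrac{1}{\tr[\sigma^{\Lambda} T_{\Lambda}]}$ up to lower-order terms, and one minimizes over $s>0$ and $t\ge 0$. Balancing $\tfrac{s\gamma|\Lambda|}{4\alpha}$ against $\tfrac{1}{s}\ln\tfrac{1}{\tr[\sigma^{\Lambda} T_{\Lambda}]}$ at the optimal $s^{\ast}\propto|\Lambda|^{-1/2}$ delivers the advertised $\tfrac{2}{\sqrt{|\Lambda|}}\sqrt{\tfrac{\gamma}{4\alpha}\ln\tfrac{1}{\tr[\sigma^{\Lambda} T_{\Lambda}]}}$ correction, while the residual $-\tfrac{1}{4\alpha|\Lambda|}\ln\tr[\sigma^{\Lambda} T_{\Lambda}]$ arises from optimizing the $t$-dependent part, whose time-scale is set by the MLSI relaxation rate $4\alpha$ via $t^{\ast}\propto(4\alpha|\Lambda|)^{-1}\ln\tfrac{1}{\tr[\sigma^{\Lambda} T_{\Lambda}]}$. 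The main obstacle is precisely the variance bound $V\le\gamma|\Lambda|/(2\alpha)$: the MLSI controls the decay of $D(\cdot\|\sigma^{\Lambda})$ rather than $D(\sigma^{\Lambda}\|\cdot)$, so converting between the two directions requires the transport-entropy inequality to upgrade MLSI to a Gaussian-concentration estimate for Lipschitz observables, and then identifying $\ln\sigma^{\Lambda}-\ln\rho^{\Lambda}$ as Lipschitz with norm $O(\sqrt{\gamma|\Lambda|})$ via the infinitesimal form of \eqref{stare}.
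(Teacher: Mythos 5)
Your opening move is sound and coincides with the argument behind the cited result: condition \eqref{stare} is equivalent to the operator inequality $\e^{t\overline{\cL}_{\Lambda*}}(\rho^{\Lambda})\le \e^{\gamma|\Lambda|t}\rho^{\Lambda}$, which gives $-\ln\tr[\rho^{\Lambda}T_{\Lambda}]\le \gamma|\Lambda|t-\ln\tr[\omega_t T_{\Lambda}]$ for your $\omega_t$. The problem is what you do next. After invoking the (correct) Nagaoka-type bound $-\ln\tr[\omega_t T_{\Lambda}]\le \widetilde{D}_{1+s}(\sigma^{\Lambda}\|\omega_t)+\frac{1+s}{s}\ln\frac{1}{\tr[\sigma^{\Lambda}T_{\Lambda}]}$, you immediately apply data processing to replace $\widetilde{D}_{1+s}(\sigma^{\Lambda}\|\omega_t)$ by $\widetilde{D}_{1+s}(\sigma^{\Lambda}\|\rho^{\Lambda})$. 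At that point the right-hand side depends on $t$ only through the penalty $\gamma|\Lambda|t$, so the optimum is $t=0$: the semigroup, and with it the constant $\alpha$, has disappeared from the argument, and there is nothing left for your claimed $t^{\ast}\propto(4\alpha|\Lambda|)^{-1}\ln\frac{1}{\tr[\sigma^{\Lambda}T_{\Lambda}]}$ to balance against. The factor $\gamma/(4\alpha)$ in \eqref{eq-statenoniid} can only appear if the evolution makes the comparison between $\omega_t$ and $\sigma^{\Lambda}$ \emph{improve} with $t$; your data-processing step discards exactly that improvement. The attempt to reinstate $\alpha$ through $V(\sigma^{\Lambda}\|\rho^{\Lambda})\le \gamma|\Lambda|/(2\alpha)$ does not repair this: that estimate is nowhere established, \eqref{stare} controls $D_{\max}(\e^{t\overline{\cL}_{\Lambda*}}(\rho^{\Lambda})\|\rho^{\Lambda})$ (how fast the generator moves $\rho^{\Lambda}$ relative to itself) and carries no information about $\ln\sigma^{\Lambda}-\ln\rho^{\Lambda}$ being Lipschitz, and the uncontrolled $O(s^{2})$ remainder in the expansion of $\widetilde{D}_{1+s}$ is a further gap, since your optimization wants $s\propto|\Lambda|^{-1/2}$ while the remainder could scale with $|\Lambda|$.

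For the record, the paper does not prove this theorem; it imports it from the cited reference. The argument there keeps the $t$-dependence alive by moving the evolution onto the test: $\tr[\omega_t T_{\Lambda}]=\tr[\rho^{\Lambda}\,\e^{t\overline{\cL}_{\Lambda}}(T_{\Lambda})]$, then the Peierls--Bogoliubov/Golden--Thompson variational inequality $\ln\tr[\rho^{\Lambda}X]\ge -D(\sigma^{\Lambda}\|\rho^{\Lambda})+\tr[\sigma^{\Lambda}\ln X]$ produces the leading term, and the MLSI enters through the \emph{reverse hypercontractivity} it implies, which gives $\tr[\sigma^{\Lambda}\ln \e^{t\overline{\cL}_{\Lambda}}(T_{\Lambda})]\ge \ln\|T_{\Lambda}\|_{\mathbb{L}_p(\sigma^{\Lambda})}\ge \frac{1}{p}\ln\tr[\sigma^{\Lambda}T_{\Lambda}]$ for an exponent $p\in(0,1)$ tied to $t$ by $t\simeq\frac{1}{4\alpha}\ln\frac{1}{1-p}$. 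Optimizing over $p$ (equivalently $t$) then yields all three terms of \eqref{eq-statenoniid}, including the cross term $\frac{2}{\sqrt{|\Lambda|}}\sqrt{\frac{\gamma}{4\alpha}\ln\frac{1}{\tr[\sigma^{\Lambda}T_{\Lambda}]}}$. Your scaffolding (the $D_{\max}$ step and the target form of the bound) is right, but the bridge between $\omega_t$ and $\sigma^{\Lambda}$ has to go through reverse hypercontractivity rather than data processing plus a second-order expansion.
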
	

Building on the main result of this manuscript, we can prove the following result in the line of the last theorem. 

\begin{corollary}
Let $\Phi^{(1)}:=\{\Phi^{(1)}(X)\}_{X\ssubset \ZZ^d}$ and $\Phi^{(2)}:=\{\Phi^{(2)}(X)\}_{X\ssubset \ZZ^d}$ be two local commuting potentials, and assume that the Gibbs states $\{\sigma_1^\Lambda\}_{\Lambda\ssubset\ZZ^d}$ corresponding to $\Phi^{(1)}$ are fixed points of a uniform family of Lindbladians $\cL$ such that $\alpha(\cL)>0$. Then the two corresponding families $\{\sigma^\Lambda\equiv\sigma^\Lambda_1\}_{\Lambda\ssubset \ZZ^d}$ and $\{\rho^\Lambda\equiv \sigma^\Lambda_2\}_{\Lambda\ssubset \ZZ^d}$ of Gibbs states satisfy \eqref{eq-statenoniid} for some $\gamma,\alpha>0$. 
\end{corollary}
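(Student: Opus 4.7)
The plan is to apply Theorem~\ref{theostrongconvnoniid} with $\sigma^\Lambda=\sigma_1^\Lambda$ and $\rho^\Lambda=\sigma_2^\Lambda$. The spectral-gap hypothesis $\alpha(\cL)>0$ is part of the corollary's assumptions, so everything reduces to verifying the growth condition~\eqref{stare}: one needs a constant $\gamma>0$ such that
\begin{equation*}
    \frac{1}{|\Lambda|\,t}\, D_{\max}\!\big(\e^{t\overline{\cL}_{\Lambda*}}(\sigma_2^\Lambda)\,\big\|\,\sigma_2^\Lambda\big)\le \gamma
\end{equation*}
uniformly in $\Lambda\ssubset\ZZ^d$ and $t\ge 0$.

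The key device I would use is a conjugation by the weight $\sigma_2^\Lambda$: set $\cN_t:=\Gamma_{\sigma_2^\Lambda}^{-1}\circ\e^{t\overline{\cL}_{\Lambda*}}\circ\Gamma_{\sigma_2^\Lambda}$, which is a (non-CP, non-unital) semigroup on $\cB(\cH_\Lambda)$ with generator $\widetilde{\cL}_*:=\Gamma_{\sigma_2^\Lambda}^{-1}\,\overline{\cL}_{\Lambda*}\,\Gamma_{\sigma_2^\Lambda}$. By the very definition of the max-relative entropy,
\begin{equation*}
    D_{\max}\!\big(\e^{t\overline{\cL}_{\Lambda*}}(\sigma_2^\Lambda)\,\big\|\,\sigma_2^\Lambda\big)=\ln\big\|\cN_t(\Id)\big\|_\infty,
\end{equation*}
and the standard Banach-space estimate $\|\e^{tA}\|\le\e^{t\|A\|}$ applied to $A=\widetilde{\cL}_*$ on $(\cB(\cH_\Lambda),\|\cdot\|_\infty)$ yields
\begin{equation*}
    \ln\big\|\cN_t(\Id)\big\|_\infty\;\le\; t\,\|\widetilde{\cL}_*\|_{\infty\to\infty}.
\end{equation*}
Consequently it suffices to show that $\|\widetilde{\cL}_*\|_{\infty\to\infty}\le \gamma\,|\Lambda|$ for a constant $\gamma$ independent of $\Lambda$.

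I would establish this locality bound by exploiting the commutativity of $\Phi^{(2)}$. Decompose $\overline{\cL}_{\Lambda*}=\sum_{j}\cL_{j*}$ into its $O(|\Lambda|)$ local pieces, each supported on a ball $B_j$ of bounded radius (the $\kappa$ of the uniform family) and with $\|\cL_{j*}\|_{1\to 1,\operatorname{cb}}\le J$. For each $j$, split $H^{(2)}_\Lambda=H^{(2)}_{\operatorname{near}(j)}+H^{(2)}_{\operatorname{far}(j)}$, where the first term collects those $\Phi^{(2)}(X)$ whose support intersects a $\kappa$-fattening of $B_j$ and the second collects the rest. Because $\Phi^{(2)}$ is a commuting potential, $H^{(2)}_{\operatorname{far}(j)}$ commutes with every Lindblad operator of $\cL_{j*}$ and with $H^{(2)}_{\operatorname{near}(j)}$; hence, in $\Gamma_{\sigma_2^\Lambda}^{-1}\circ\cL_{j*}\circ\Gamma_{\sigma_2^\Lambda}$, the weight factors ``from far'' cancel and the local piece reduces to a superoperator supported on a region of bounded volume. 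Its $\infty\to\infty$-norm is therefore bounded by a constant $C=C(\beta,h,\kappa,d_\cH)$ independent of $\Lambda$, and summing over $O(|\Lambda|)$ indices yields $\|\widetilde{\cL}_*\|_{\infty\to\infty}\le C\,|\Lambda|$.

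The main obstacle is precisely this last step: the near/far cancellation relies essentially on the commutativity of $\Phi^{(2)}$, and without it one would need a Lieb-Robinson--style propagation estimate that typically produces a polynomial blow-up in $|\Lambda|$, incompatible with~\eqref{stare}. A second, minor subtlety is that one must treat the boundary Lindbladians $\cL_{\partial\Lambda}$ of the uniform family on the same footing, but these are supported on a $\kappa$-thickened inner boundary of $\Lambda$, so the same near/far decomposition applies and contributes only a surface-order correction already absorbed in the $O(|\Lambda|)$ bound. Once this locality estimate is in hand,~\eqref{stare} holds with $\gamma=C$, and the corollary follows directly from Theorem~\ref{theostrongconvnoniid}.
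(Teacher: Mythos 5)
Your proposal is correct and follows essentially the same route as the paper: conjugate the semigroup by $\Gamma_{\sigma_2^\Lambda}$, identify $D_{\max}$ with $\ln\|\e^{t\cL'_\Lambda}(\Id)\|_\infty$, bound this by $t\,\|\cL'_\Lambda\|$, and use locality of the generator together with commutativity of $\Phi^{(2)}$ to get $\|\cL'_\Lambda\|\le\gamma|\Lambda|$. Your near/far cancellation argument is simply a more explicit justification of the step the paper states in one line (``since $\overline{\cL}_\Lambda^S$ is local and $\sigma_2^\Lambda$ arises from a commuting potential, $\cL'_\Lambda$ is itself local''), so no gap remains.
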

\begin{proof}
The result follows after proving \Cref{stare}, i.e.:
\begin{align*}
 \ln\|(\sigma^\Lambda_2)^{-\frac{1}{2}}\e^{t\overline{\cL}^S_{\Lambda*}}(\sigma^\Lambda_2)(\sigma^{\Lambda}_2)^{-\frac{1}{2}}\|_\infty\le \gamma\,|\Lambda|\, t\,.
\end{align*}
This is done simply by defining the tilted generator $\cL'_\Lambda:=\Gamma^{-1}_{\sigma_2^{\Lambda}}\circ \overline{\cL}^S_{\Lambda*}\circ \Gamma_{\sigma_2^{\Lambda}}$, so that the left-hand side above is equal to
\begin{align*}
    \ln\|\e^{t\cL_\Lambda'}(\Id)\|_\infty=\ln\|\e^{t\cL'_\Lambda}:\cB(\cH_\Lambda)\to\cB(\cH_\Lambda)\|\le t\|\cL_\Lambda':\cB(\cH_\Lambda)\to\cB(\cH_\Lambda)\|\,.
\end{align*}
Now, since $\overline{\cL}_\Lambda^S$ is local, and since $\sigma_2^{\Lambda}$ arises from a commuting potential, the generator $\cL_\Lambda'$ itself is local, and its norm is upper bounded by:
\begin{align*}
    \|\cL_\Lambda':\cB(\cH_\Lambda)\to\cB(\cH_\Lambda)\|\le \gamma \,|\Lambda| \, ,
\end{align*}
for some $\gamma>0$. The result follows.

\end{proof}

Thus, it follows from our main result Theorem~\ref{maintheorem} that we are able to establish strong converses for hypothesis testing between arbitrary classical and commuting Gibbs states at high enough temperature.

\subsection{Efficient local quantum Gibbs samplers for nearest neighbour potentials}
The Schmidt generators define a family of local Lindbladians converging to the Gibbs state of a commuting potential. In~\cite{Kliesch_2011}, the authors show how to simulate the evolution under a family of local Lindbladians using a unitary circuit whose size is polynomial in the system size and evolution time. Combining this with our main result, we conclude that in the rapidly mixing regime, the Schmidt generators give rise to efficient \emph{local} circuits for the preparation of Gibbs states corresponding to commuting potentials on a quantum computer. This is in contrast to~\cite{brandao2019finite}, where the authors show how to efficiently prepare such states with gates whose locality scales logarithmically with system size.

In this subsection we briefly explain how to efficiently obtain the Schmidt generators with nearest neighbour interactions.  First, recall that the Schmidt generators are defined in terms of the conditional expectations onto minimal fixed point algebra generated by the neighbourhood Schmidt span of each site $k$, 
$${\cF}^S_k:=\Id_k\,\otimes \,\cA_{k,\operatorname{out}}\,\otimes \, \underset{j \in \Lambda \setminus \{k\} \partial }{\bigotimes}\cB(\cH_{j}) \,.$$
Thus, it suffices to obtain an explicit decomposition of such algebras to also obtain the corresponding Schmidt generators. 

However, this decomposition was already presented in Equation \eqref{equ:decomp} and, in~\cite{Murota_2010}, the authors showed how to efficiently find a unitary $U\in\cB(\cH_{k\partial})$ and subspaces $\cH_{jj},\cH_{ji},\cH_{jk}$ for any $j \in \partial k$ and $i$ adjacent to $j$ with $i \in \Lambda \setminus k$, such that
\begin{align*}
    \cA_{k, \operatorname{out}}=U \left( \bigotimes_{j\in\partial k}\,\bigotimes_{\substack{i\in V\backslash \{k\}\\(i,j)\in E}}\,\bigoplus_{\alpha_j}\Id_{\cH_{jj}^{\alpha_j}}\,\otimes\,\cB(\cH_{ji}^{\alpha_j})\,
  \otimes
  \bigotimes_{\substack{i'\in \{k\}\\(j,i')\in E}} \Id_{ \cH_{ji'}^{\alpha_{j}}} \right) U^\dagger \, ,
\end{align*}
given the list of generators. With this at hand, we conclude that it is possible to efficiently find the description of the local Schmidt conditional expectations:
\begin{proposition}[Efficient preparation of Gibbs states of commuting potentials]
Consider $\Phi:=\{\Phi(X)\}_{X\ssubset \ZZ^d}$ a $2$-local potential satisfying \eqref{qCA}.
Then we can efficiently find a circuit of local quantum channels of depth $\mathcal{O}(\ln(|\Lambda|)\epsilon^{-1})$ that prepares a quantum state that is $\epsilon$ in trace distance to $\sigma^{\Lambda}$. 
\end{proposition}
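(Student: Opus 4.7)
My plan is to combine the existence of a system-size independent MLSI for the Schmidt generator (established via our main result \Cref{maintheorem}) with the standard Lindbladian-to-circuit simulation of~\cite{Kliesch_2011} to obtain a local circuit of the claimed depth. There are three conceptually separate steps: verify the hypotheses of \Cref{maintheorem} for the Schmidt semigroup at hand, convert the resulting MLSI into an $\epsilon$-mixing time in trace distance, and finally simulate the continuous-time evolution by a discrete local quantum circuit.

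\emph{Verifying the hypotheses of \Cref{maintheorem}.} I take the uniform family $\cL^S=\{\overline{\cL}^S_\Lambda\}_{\Lambda\ssubset\ZZ^d}$ of Schmidt generators associated with the $2$-local commuting potential $\Phi$, as defined in \Cref{sec-examplessemigroups}. By \Cref{lemmaSchmidt} this family is primitive, $\kappa$-local, frustration-free and locally reversible. The assumption \eqref{qCA}, combined with \Cref{thm:differentclusterings}, implies \eqref{qIIId}, which in turn gives the $\mathbb{L}_1\to\mathbb{L}_\infty$ clustering condition \eqref{LiLinftyaa} via \Cref{propDScorrdecay}. Furthermore, the paragraph preceding the proposition, together with \Cref{lemma:ConditionI}, shows that \Cref{Pi}(i)--(ii) both hold for the $2$-local Schmidt semigroup. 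Applying \Cref{maintheorem} we obtain a constant $\alpha>0$, independent of $|\Lambda|$, such that every $\overline{\cL}^S_{\Lambda}$ satisfies a MLSI with constant $\alpha$.

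\emph{From MLSI to $\epsilon$-mixing in trace distance.} I pick an easily prepared initial state $\rho_0\in\cD(\cH_\Lambda)$, e.g.\ the maximally mixed state, for which $D(\rho_0\|\sigma^\Lambda)\le\|\ln(\sigma^\Lambda)^{-1}\|_\infty=\mathcal{O}(|\Lambda|)$ using that $\sigma^\Lambda$ comes from a bounded-norm local potential. The MLSI and Gr\"onwall's lemma give
\[
D\bigl(\e^{t\,\overline{\cL}^S_{\Lambda*}}(\rho_0)\big\|\sigma^\Lambda\bigr)\le \e^{-4\alpha t}\,D(\rho_0\|\sigma^\Lambda)\,,
\]
and Pinsker's inequality upgrades this to $\|\e^{t\,\overline{\cL}^S_{\Lambda*}}(\rho_0)-\sigma^\Lambda\|_1\le C\sqrt{|\Lambda|}\,\e^{-2\alpha t}$. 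Requiring the right-hand side to be at most $\epsilon/2$ yields a mixing time $t_\epsilon=\mathcal{O}(\alpha^{-1}\log(|\Lambda|/\epsilon))\le \mathcal{O}(\log(|\Lambda|)\,\epsilon^{-1})$.

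\emph{Local-circuit simulation and main obstacle.} Since each Schmidt conditional expectation $E_k^S$ acts on $\mathcal{O}(1)$ sites and its description (together with that of $\overline{\cL}^S_\Lambda$) can be computed in classical polynomial time via the algorithm of \cite{Murota_2010} as recalled before the proposition, the generator $\overline{\cL}^S_\Lambda$ falls into the class of local Lindbladians treated in~\cite{Kliesch_2011}. Applying their simulation result, the channel $\e^{t_\epsilon\,\overline{\cL}^S_{\Lambda*}}$ can be approximated within trace distance $\epsilon/2$ by a circuit of local quantum channels whose depth is linear in $t_\epsilon$; composing with the previous step yields a state that is $\epsilon$-close to $\sigma^\Lambda$ with circuit depth $\mathcal{O}(\log(|\Lambda|)\,\epsilon^{-1})$, as claimed. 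The most delicate point, and the only place where a quantitative estimate beyond our MLSI must be invoked, is this Trotter-type reduction: one has to trade locality and commuting structure of the Schmidt generator against the Trotter error so that the accumulated error stays below $\epsilon/2$ while the per-time-unit depth remains $\mathcal{O}(1)$. This is precisely what is achieved in \cite{Kliesch_2011}, so the remaining work is essentially bookkeeping.
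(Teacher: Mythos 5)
Your proposal is correct and follows essentially the same route as the paper: establish the MLSI for the Schmidt semigroup via \Cref{maintheorem} together with \Cref{propDScorrdecay}, deduce rapid mixing in trace distance, and invoke the local-Lindbladian circuit simulation of \cite{Kliesch_2011}. You merely spell out the intermediate steps (hypothesis verification and the Pinsker/Gr\"onwall bookkeeping) that the paper leaves implicit.
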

\begin{proof}
By \Cref{maintheorem} and \Cref{propDScorrdecay}, the family $\cL^S$ of Schmidt generators corresponding to the $2$-local potential $\{\Phi(X)\}_{X\ssubset \ZZ^d}$ satisfies MLSI with constant $\alpha(\cL^S)>0$. This implies that, in particular, that the family $\cL^S$ is rapidly mixing. As discussed above, we can efficiently find the local terms of $\cL^S$ given the 2-local potential.
The result then follows from~\cite{Kliesch_2011}, where the authors show how to simulate the evolution $e^{t\cL^S_{*}}$ for times $\mathcal{O}(\ln(|\Lambda|)\epsilon^{-1})$ with a unitary circuit of depth $\mathcal{O}(\ln(|\Lambda|)\epsilon^{-1})$.
\end{proof}

\section{Conclusion}
In this work, we have substantially advanced the understanding of modified logarithmic Sobolev inequalities for quantum spin systems. For the first time, we have an unconditional proof of a system-size independent MLSI for classically interacting dynamics of quantum systems at high enough temperatures. Moreover, our results would give the first system-size independent MLSI for quantum models. 

In addition to that, our results further advance the program of relating dynamical and statistical properties of quantum Gibbs states at high temperatures and adds further connections to the zoo of correlation measures in quantum Gibbs states, as displayed in Figure~\ref{figurezoo}.
Finally, the different flavors of our novel applications to fundamental problems in quantum statistical physics, quantum information and computation demonstrate the strength of our results, in particular when combined with optimal transport methods.

Several research directions remain open. It would be interesting to investigate how to generalize our results to models beyond $2$-body interactions and to obtain similar results for Davies and Heat-bath generators. Some equivalences of notions of clustering of correlations also remain open.

\paragraph{Acknowledgements} The authors are grateful to Li Gao, Kohtaro Kato and Peter Johnson for useful discussions. AC is partially supported by an MCQST Distinguished PostDoc fellowship. CR is partially supported by a Junior Researcher START Fellowship from the MCQST.  CR and AC acknowledge funding by the Deutsche Forschungsgemeinschaft (DFG, German Research Foundation) under Germany's Excellence Strategy EXC-2111 390814868. DSF was supported by VILLUM FONDEN via the QMATH Centre of Excellence under Grant No. 10059.

\bibliographystyle{abbrv}
\bibliography{library}

\appendix
\section{Properties of conditional expectations}\label{appendixcondexp}
In this section we gather some properties of conditional expectations we require for our results. Let us first recall that a conditional expectation satisfies the following useful properties, whose proofs can be found in \cite{Aspects2003}:

\begin{proposition}\label{propositioncondexp1}
	Let $\cM\subset \cN$ be a von Neumann subalgebra of $\cN$ and $E:\cN\to\cM$  a conditional expectation with respect to $\sigma$ of $\cN$ onto $\cM$. Then, $E$ satisfies the following properties:
	\begin{itemize}
		\item[(i)] The map $E$ is completely positive and unital.
		\item[(ii)] For any $X\in\cN$ and any $Y,Z\in\cM$, $E[YXZ]=Y E[X]Z$.
		\item[(iii)] $E$ is self-adjoint with respect to the scalar product $\langle .,\,.\rangle_\sigma$. In other words:
		\begin{align*}
			\Gamma_\sigma\circ E=E_*\circ\Gamma_\sigma   \,,
		\end{align*}
		where $E_*$ denotes the adjoint of $E$ with respect to the Hilbert-Schmidt inner product and the map $\Gamma_\sigma $ is given by $\Gamma_\sigma (X) = \sigma^{1/2} X \sigma^{1/2} $ for every observable $X$.
	\end{itemize}
\end{proposition}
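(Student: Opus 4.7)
The three properties are standard facts from the theory of conditional expectations in finite-dimensional von Neumann algebras; I would deduce each from the defining conditions (i)–(iii) of the conditional expectation together with one classical structure theorem. Since $\sigma\in\cD(\cH)_+$ is full rank in our setting, all modular objects associated with $\sigma$ are well defined, which simplifies some steps considerably.

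For part (i), unitality is immediate: since $\cM$ is a $*$-subalgebra of $\cB(\cH)$ containing $\Id$, condition (ii) of the definition gives $E[\Id]=\Id$. Complete positivity follows from Tomiyama's theorem, which asserts that any norm-one linear projection from a $C^*$-algebra onto a $C^*$-subalgebra is automatically completely positive. Conditions (i) and (ii) of the definition of $E$ (contractivity together with being the identity on the range $\cN$) furnish exactly the hypotheses of Tomiyama's theorem.

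For part (ii), I would invoke the Kadison–Schwarz inequality: any unital completely positive map $E$ satisfies $E[Y^\dagger Y]\ge E[Y^\dagger]E[Y]$. Its multiplicative domain
\begin{align*}
\cD(E):=\bigl\{Y\in\cB(\cH):\ E[Y^\dagger Y]=E[Y]^\dagger E[Y]\text{ and }E[YY^\dagger]=E[Y]E[Y]^\dagger\bigr\}
\end{align*}
is a $*$-subalgebra on which $E$ behaves as a $*$-homomorphism, and a standard argument (e.g.\ via polarization) shows that $E[YXZ]=E[Y]E[X]E[Z]$ for all $Y,Z\in\cD(E)$ and $X\in\cB(\cH)$. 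By condition (ii) of the definition $E$ acts as the identity on $\cN$, so $\cN\subseteq\cD(E)$, and the bimodule identity $E[YXZ]=YE[X]Z$ for $Y,Z\in\cN$ follows.

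Property (iii) is the only one requiring genuine modular-theoretic input and is therefore the main technical step. The key input is Takesaki's theorem: the existence of a $\sigma$-preserving conditional expectation onto $\cN$ is equivalent to $\cN$ being invariant under the modular automorphism group $(\Delta_\sigma^{it})_{t\in\RR}$, and when it exists, $E$ commutes with $\Delta_\sigma^{it}$ for every $t\in\RR$. Since the modular group is entire in finite dimension, I would analytically continue this commutation to $t=-i/2$ to obtain $E\circ \Delta_\sigma^{-1/2}=\Delta_\sigma^{-1/2}\circ E$. Combining this identity with the $\sigma$-invariance $\tr[\sigma E[X]]=\tr[\sigma X]$ and computing the Hilbert–Schmidt pairing
\begin{align*}
\tr\bigl[Y^\dagger \sigma^{1/2}E[X]\sigma^{1/2}\bigr]=\tr\bigl[E[Y]^\dagger \sigma^{1/2} X \sigma^{1/2}\bigr]
\end{align*}
for all $X,Y\in\cB(\cH)$ yields $\Gamma_\sigma\circ E=E_*\circ\Gamma_\sigma$, i.e.\ the KMS self-adjointness of $E$. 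The bimodule property from (ii) is needed along the way to push factors of $\sigma^{1/2}$ through $E$ during the analytic continuation. The main obstacle is thus organising the modular analytic continuation cleanly; everything else is a mechanical unwinding of the definitions.
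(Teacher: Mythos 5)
The paper does not actually prove this proposition: it is stated in Appendix A as a recollection of standard facts, with the proofs delegated to the cited reference, and the modular-invariance/commutation statement you lean on for part (iii) is exactly what the paper records separately as Takesaki's theorem (Proposition \ref{propositioncondexp}). Your sketch is the standard textbook route and is correct in outline: Tomiyama's theorem for complete positivity, the multiplicative-domain argument for the bimodule property (Tomiyama in fact yields (ii) directly, so the Kadison--Schwarz detour is optional), and GNS-symmetry from the bimodule property plus $\sigma$-invariance, upgraded to KMS-symmetry via analytic continuation of the commutation with $(\Delta_\sigma^{it})_{t\in\RR}$, which is unproblematic in finite dimension since $\sigma$ is full rank. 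Two small imprecisions to fix: first, you repeatedly write $\cN$ where the range algebra $\cM$ is meant --- the multiplicative domain contains $\cM$ (the fixed-point algebra of $E$), not $\cN$, and the bimodule identity holds for $Y,Z\in\cM$; second, the factors of $\sigma^{1/2}$ are not pushed through $E$ by the bimodule property (as $\sigma^{1/2}\notin\cM$ in general) but by the modular commutation $E\circ\Delta_\sigma^{z}=\Delta_\sigma^{z}\circ E$ itself; the bimodule property enters only in establishing the GNS-symmetry identity $\tr[\sigma E[Y]^\dagger X]=\tr[\sigma Y^\dagger E[X]]$. Neither point affects the validity of the argument.
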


We recall the definition of the modular automorphism group $\left\lbrace \Delta_\sigma^{i s} \right\rbrace_{s \in \mathbb{R}}$, where each element of the group is given by the following operator:
\begin{equation*}
 \Delta_\sigma^{is} (X) := \sigma^{is} X \sigma^{- is} \, ,
\end{equation*}
for every observable $X$. Then, a conditional expectation with respect to $\sigma$ satisfies the following properties concerning these operators. 

\begin{proposition}\label{propositioncondexp}[See \cite{Takesaki_1972}]
Let $E$ be a conditional expectation  with respect to $\sigma$ of $\cN$ onto $\cM$. Then, $E$ commutes with the modular automorphism group of $\sigma$, i.e. for any $s\in\RR$,
		\begin{align}
			\Delta_\sigma^{is}\circ E=E\circ \Delta^{is}_\sigma\,.
		\end{align}
		
Moreover, given a von Neumann subalgebra $\cM\subset \cN$ and a faithful state $\sigma$, the existence of a conditional expectation $E$ is equivalent to the invariance of $\cM$ under the modular automorphism group $(\Delta_\sigma^{is})_{s\in\RR}$.  In this case, $E$ is uniquely determined by $\sigma$.
\end{proposition}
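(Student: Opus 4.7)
The statement is the finite-dimensional version of Takesaki's classical theorem, bundling together the commutation of $E$ with the modular group, the equivalence between existence of $E$ and modular invariance of $\cM$, and the uniqueness of $E$. My plan is to handle these three claims in an interlocking way that exploits the non-degenerate pairing provided by faithfulness of $\sigma$, together with the KMS-self-adjointness of $E$ supplied by property (iii) of Proposition~\ref{propositioncondexp1}.

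I would begin with uniqueness, as it is both the shortest step and will be reused in the commutation argument. Combining the bimodule property (ii) of Proposition~\ref{propositioncondexp1} with the state-preservation condition $\tr[\sigma E(Z)] = \tr[\sigma Z]$ yields the characterization
\begin{equation*}
\tr[\sigma Y E(X)] = \tr[\sigma E(YX)] = \tr[\sigma YX], \qquad X \in \cN,\ Y \in \cM.
\end{equation*}
Faithfulness of $\sigma$ makes the pairing $(Y, Z) \mapsto \tr[\sigma YZ]$ non-degenerate on $\cM$, so $E(X)$ is uniquely determined as the element of $\cM$ realizing this pairing with $X$. Granted uniqueness, commutation follows cheaply: $\alpha_s := \Delta_\sigma^{is}$ is a $\ast$-automorphism preserving $\sigma$, and \emph{assuming for the moment that $\alpha_s(\cM) = \cM$}, the map $\widetilde E := \alpha_{-s} \circ E \circ \alpha_s$ satisfies all defining properties of a $\sigma$-preserving conditional expectation from $\cN$ onto $\cM$; uniqueness then forces $\widetilde E = E$, i.e. $E \circ \alpha_s = \alpha_s \circ E$.

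The heart of the proof is therefore modular invariance of $\cM$. Property (iii) says $\Gamma_\sigma \circ E = E_* \circ \Gamma_\sigma$, which is equivalent to $E$ being an orthogonal projection onto $\cM$ in the KMS inner product $\langle X, Y\rangle^{\operatorname{KMS}}_\sigma = \tr[\sigma^{1/2} X^\dagger \sigma^{1/2} Y]$, and a direct cyclicity computation shows that $\alpha_s$ is KMS-unitary with adjoint $\alpha_{-s}$. In finite dimensions I would then invoke the structure theorem to write $\cM = \bigoplus_i \cB(\cH_i) \otimes \Id_{\cK_i}$ relative to $\cH = \bigoplus_i \cH_i \otimes \cK_i$, and show that existence of $E$ forces the block-product form $\sigma = \bigoplus_i p_i\,\rho_i \otimes \tau_i$ with $\rho_i \in \cD(\cH_i)$ and $\tau_i \in \cD(\cK_i)$. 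This form is manifestly $\alpha_s$-invariant on $\cM$, giving $\alpha_s(\cM) \subseteq \cM$ and closing the commutation loop. The converse direction is a mirror image: invariance of $\cM$ under $\alpha_s$ likewise forces $\sigma$ into the same block-product shape, after which the explicit formula
\begin{equation*}
E(X) := \bigoplus_i \bigl(\tr_{\cK_i}[(\Id_{\cH_i}\otimes \tau_i)\, P_i X P_i]\bigr) \otimes \Id_{\cK_i},
\end{equation*}
with $P_i$ the central projections of $\cM$, is readily checked to be a $\sigma$-preserving conditional expectation onto $\cM$.

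The main obstacle is the implication from ``$E$ exists'' to ``$\sigma$ takes the block-product form''. This is the technical core of Takesaki's theorem; in infinite dimensions it requires the full Tomita--Takesaki modular machinery, whereas in our finite-dimensional setting it splits into two manageable substeps: first, applying property (iii) to the central projections $P_i \in \cM$ forces $[\sigma, P_i] = 0$, so that $\sigma$ is block-diagonal across the central summands of $\cM$; second, applying (iii) to elements of the form $A \otimes \Id_{\cK_i}$ together with the bimodule property forces $\sigma|_{P_i\cH}$ to factor as $\rho_i \otimes \tau_i$ in the tensor decomposition of each block. Everything else in the proof reduces to bookkeeping with these decompositions.
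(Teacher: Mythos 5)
The paper never proves this proposition: it is quoted directly from Takesaki \cite{Takesaki_1972} and used as a black box, so there is no in-paper argument to compare against. Your self-contained finite-dimensional proof is therefore a genuinely different (and more informative) route, and its architecture is sound: uniqueness from the non-degenerate pairing $(Y,Z)\mapsto\tr[\sigma YZ]$ on $\cM$, commutation by applying uniqueness to $\Delta_\sigma^{-is}\circ E\circ\Delta_\sigma^{is}$, and the existence/invariance equivalence routed through the block-product form $\sigma=\bigoplus_i p_i\,\rho_i\otimes\tau_i$. I checked that every step can be completed, including your explicit slice-map formula for $E$ in the converse direction. The one place where the stated mechanism is off is the claim that $[\sigma,P_i]=0$ follows from ``applying property (iii) to $P_i$'': the identity $\Gamma_\sigma\circ E=E_*\circ\Gamma_\sigma$ evaluated at $P_i$ only says that $\sigma^{1/2}P_i\sigma^{1/2}$ is fixed by $E_*$, which is not obviously a commutation statement. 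What actually works is the module property together with trace preservation: $\tr[\sigma P_iX]=\tr[\sigma P_iE(X)]=\tr[\sigma E(X)P_i]=\tr[\sigma XP_i]$ for all $X\in\cN$, the middle equality because $P_i$ is central in $\cM$ and $E(X)\in\cM$; non-degeneracy of the trace then gives $\sigma P_i=P_i\sigma$. Similarly, the in-block factorization is cleanest via the bimodule property alone: $E(\Id\otimes C)$ commutes with all of $\cB(\cH_i)\otimes\Id$ and lies in it, hence equals $\tr[\tau_i C]\,\Id$ for some state $\tau_i$, and trace preservation then forces $\sigma|_{P_i\cH}=\rho_i\otimes\tau_i$.

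For comparison, there is a shortcut that bypasses the structure theorem entirely: from the bimodule property and trace preservation one checks that $E$ is self-adjoint for the GNS inner product, while property (iii) of Proposition~\ref{propositioncondexp1} says it is self-adjoint for the KMS inner product; since the two adjoints differ by conjugation with $\Delta_\sigma^{1/2}$, this forces $E\circ\Delta_\sigma^{1/2}=\Delta_\sigma^{1/2}\circ E$, hence $E$ commutes with every $\Delta_\sigma^{is}$, and $\Delta_\sigma^{is}(\cM)=\Delta_\sigma^{is}(E(\cN))=E(\Delta_\sigma^{is}(\cN))=\cM$. This yields the commutation and the forward implication in one stroke; your block-decomposition route costs more bookkeeping but buys the explicit form of $\sigma$ and of $E$, which is exactly what the rest of the paper (e.g.\ Equation \eqref{Daviescond}) uses.
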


In order to derive an expression for the conditional expectations in the case of Gibbs states of commuting Hamiltonians, we need to delve into the theory of commuting $C^*$-algebras. The tools that will be used in this section are by no means new to quantum information theory \cite{bravyi2003commutative,6108194,schuch2011complexity}. 

Let $\sigma^\Lambda:=\frac{\e^{-\beta\,H_\Lambda}}{Z}$ be the Gibbs state corresponding to the commuting Hamiltonian $H_\Lambda$ at inverse temperature $\beta$. By the quantum Hammersley-Clifford theorem \cite{brown2012quantum},  given any three regions $A-B-C$ of $\Lambda$, where $B$ shields $A$ away from $C$, $\sigma_{ABC}$ is a quantum Markov chain, which means that there exists a decomposition of the Hilbert space $\cH_B:=\bigoplus_{i\in I_B}\,\cH^{i}_{B_{\operatorname{in}}}\otimes\cH^{i}_{B_{\operatorname{out}}}$ such that 
\begin{align}\label{decomqmc}
\sigma^\Lambda:=\bigoplus_{i\in I_B}\mu(i)\,\sigma_{AB^{i}_{\operatorname{in}}}\otimes\sigma_{B^{i}_{\operatorname{out}}C}\,,
\end{align}
for some probability mass function $\mu$. In fact, to each geometrical decomposition $\Lambda=A-B-C$, there is in general more than one possible decomposition of the state $\sigma_\Lambda$ as in \Cref{decomqmc} (see \cite{jenvcova2006sufficiency,luczak2014quantum,johnson2015general}). To each of these decompositions one can associate a $*$-algebra 
\begin{align*}
\cF_A:=\bigoplus_{i\in I_B}\Id_{{AB_{\operatorname{in}}^{i}}}\otimes\cB(\cH_{B_{\operatorname{out}}^{i}C})\,.
\end{align*}
One can readily verify that any such constructed algebra $\cF_A$ is invariant under the action of the modular group corresponding to $\sigma^\Lambda$: 
$$\Delta_{\sigma^{\Lambda}}(\cF_A)=\cF_A \, .$$ 
It follows from \Cref{propositioncondexp} that there exists a conditional expectation $E_A:\cB(\cH_\Lambda)\to\cF_A$, so that
\begin{align}\label{eq:condexpA}
E_A[X]:=\bigoplus_{i\in I_B}\,\Id_{{AB_{\operatorname{in}}^{i}}}\otimes\tr_{\cH_{AB_{\operatorname{in}}^{i}}}\left[ P_{i}XP_{i}\left(\Id_{{B_{\operatorname{out}}^{i}C}}\otimes\sigma^{i}_{AB_{\operatorname{in}}^{i}}\right)\,\right]\,,
\end{align}
where $\{P_{i}\}_{i\in I_B}$ is the set of minimal projections of $\cF_A$. Among all the algebras that are invariant under $\Delta_{\sigma^\Lambda}$, there exists a maximal one, call it $\cF_A^{\max}$ \cite{carlen2020recovery}. It was shown by some of the present authors that this algebra coincides with the set of fixed points of both the Heat-bath and Davies generators on the subregion $A$ being considered \cite{bardet2020approximate}.

\begin{lemma}[Compatibility of conditional expectations]\label{lemm1}
Let $E_1,E_2$ be two conditional expectations on $\cB(\cH)$ with respect to the same state $\sigma$, and assume that $\cF(E_2)\subset \cF(E_1)$, i.e. $E_2\circ E_1=E_1\circ E_2=E_2$. Then, assuming the following block decomposition of the smallest algebra: for $\cH:=\bigoplus_j P_j\cH\equiv \bigoplus_j \cH_j\otimes \cK_j $,
\begin{align*}
\cF(E_2):=\bigoplus_j\,\cB(\cH_j)\otimes \Id_{\cK_j}\,,
\end{align*}
$E_1|_{P_j\cB(\cH)P_j}=\id_{\cB(\cH_j)}\otimes E_1^{(j)}$, for some conditional expectations $E^{(j)}$ on $\cB(\cK_j)$.
\end{lemma}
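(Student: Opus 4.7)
The hypothesis $\cF(E_2)\subset\cF(E_1)$ implies in particular that each central projection $P_j$ of $\cF(E_2)$ belongs to $\cF(E_1)$. By the bimodule property of a conditional expectation (Proposition \ref{propositioncondexp1}(ii)) applied to elements of $\cF(E_1)$, this gives $E_1[P_jXP_j]=P_jE_1[X]P_j$, so $E_1$ preserves each block $P_j\cB(\cH)P_j\simeq \cB(\cH_j)\otimes\cB(\cK_j)$. It therefore suffices to study the restriction $E_1^{(j)}:=E_1|_{P_j\cB(\cH)P_j}$.

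The first key step is to identify the range of $E_1^{(j)}$. Set $\cN:=\cF(E_1)\cap P_j\cB(\cH)P_j$. Since $\cB(\cH_j)\otimes\Id_{\cK_j}\subset\cF(E_2)\subset\cF(E_1)$, we have $\cB(\cH_j)\otimes\Id_{\cK_j}\subset\cN\subset\cB(\cH_j)\otimes\cB(\cK_j)$. Taking commutants inside $P_j\cB(\cH)P_j$ and using $(\cB(\cH_j)\otimes\Id)'=\Id\otimes\cB(\cK_j)$, we obtain $\cN'\subset\Id\otimes\cB(\cK_j)$, hence $\cN'=\Id\otimes\cN_j'$ for some von Neumann subalgebra $\cN_j'\subset\cB(\cK_j)$. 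By the double commutant theorem, this forces
\[
\cN=\cB(\cH_j)\otimes\cN_j,\qquad \cN_j:=(\cN_j')'\subset\cB(\cK_j).
\]

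The second key step is to show that $E_1^{(j)}$ has the product form $\id\otimes E_1^{(j)}$. Since $\cB(\cH_j)\otimes\Id\subset\cF(E_1)$, the bimodule property yields, for any $Y_1,Y_2\in\cB(\cH_j)$ and $X_2\in\cB(\cK_j)$,
\[
E_1\bigl[(Y_1Y_2)\otimes X_2\bigr]=(Y_1\otimes\Id)\,E_1[\Id\otimes X_2]\,(Y_2\otimes\Id).
\]
Writing $E_1[\Id\otimes X_2]=\sum_k A_k\otimes B_k$ with the $B_k\in\cN_j$ linearly independent, comparing the choices $(Y_1,Y_2)=(\Id,Y)$ and $(Y,\Id)$ forces $A_kY=YA_k$ for all $Y\in\cB(\cH_j)$, so each $A_k$ is a scalar. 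Hence $E_1[\Id\otimes X_2]=\Id\otimes E_1^{(j)}[X_2]$ for a linear map $E_1^{(j)}:\cB(\cK_j)\to\cN_j$, and then by the bimodule property again $E_1[X_1\otimes X_2]=X_1\otimes E_1^{(j)}[X_2]$ on simple tensors, which extends by linearity to all of $P_j\cB(\cH)P_j$.

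Finally, one verifies that $E_1^{(j)}$ is itself a conditional expectation (with respect to an appropriate normalization of the reduced state on $\cK_j$): unitality, contractivity, and the idempotent property $(E_1^{(j)})^2=E_1^{(j)}$ all descend directly from the corresponding properties of $E_1$ via the product decomposition just derived, and $E_1^{(j)}|_{\cN_j}=\id_{\cN_j}$ follows from $E_1|_{\cF(E_1)}=\id$. I expect no serious obstacle: the argument is a standard application of the commutation/bicommutant theorem and of the bimodule property, and the only point requiring care is making sure the decomposition of $\cN$ does not introduce extra blocks beyond those indexed by $j$, which is guaranteed because we have already localized to the block $P_j\cB(\cH)P_j$.
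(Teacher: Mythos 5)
Your argument is correct and follows essentially the same route as the paper's proof: first use the bimodule property with $P_j\in\cF(E_2)\subset\cF(E_1)$ to see that $E_1$ preserves each block, then identify $\cF(E_1)\cap P_j\cB(\cH)P_j=\cB(\cH_j)\otimes\cN_j$, from which the factorized form follows. You merely supply the details the paper compresses into ``The result follows'' — your commutant/bicommutant identification of $\cN$ and the bimodule-based derivation of $E_1[\Id\otimes X_2]\in(\cB(\cH_j)\otimes\Id)'$ are exactly the omitted steps, and both are sound.
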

\begin{proof}
Write $\sigma:=\bigoplus_j\,\tr_{\cK_j}(P_j\sigma P_j)\otimes \tau_j$. First, we observe that $E_1[P_jXP_j]=E_1[P_j^2XP_j^2]=P_jE_1[P_jXP_j]P_j$, since $P_j\in \cF(E_2)\subset \cF(E_1)$. Therefore, each subalgebra $\cB(\cH_j\otimes \cK_j)$ is preserved by the map $E_1$. Moreover, since $E_1\circ E_2=E_2$, we have that $\cB(\cH_j)\otimes \Id_{\cK_j}\subset \cF(E_1)$, so that $\cF(E_1)=\cB(\cH_j)\otimes  \cN_j$ for some subalgebra $\cN_j$ of $\cB(\cK_j)$. The result follows.

\end{proof}

\section{Proof of Theorem \ref{thm:analyticity-high-temperature}}\label{ap:Analyticity}

In this appendix, we prove that the condition of analyticity after measurement (cf. Definition \ref{def:analyticity_after_measurement}) holds above a critical temperature. The proof of this result follows the steps of that of Theorem 20 in \cite{harrow2020classical}, although we include it here for sake of completeness. 

More explicitly, given a geometrically-local Hamiltonian $H$,  $\delta >0$ and $N \geq 0$ with $\norm{N}_\infty = 1$, here we prove that there exists a constant $c$ such that: 
\begin{equation*}
    \left| \ln \left( \tr[\operatorname{e}^{- \sum_{X \subset \Lambda} z_X \Phi(X)} N] \right)\right|  \leq c \abs{\Lambda}   \; \phantom{asdda} \forall z_X \in \mathbb{C}, \abs{z_X - \beta} \leq \delta \, .
\end{equation*}
First, we fix $0< \delta <\frac{1}{5 \operatorname{e} g h \kappa }$  and $N \geq 0$ with $\norm{N}_\infty = 1$, and define 
\begin{equation*}
    g_\mathbf{z} (\Lambda) :=  \tr \left[\operatorname{e}^{- \sum_{X \subset \Lambda} z_X \Phi(X)} N \right] \, .
\end{equation*}
Then, we aim to show that: 
\begin{equation*}
    \left| \ln(g_\mathbf{z} (\Lambda)) \right| \leq O(\abs{\Lambda}) \, .
\end{equation*}
More specifically, as stated in Theorem \ref{thm:analyticity-high-temperature}, we show below that given $0< \delta <\frac{1}{5 \operatorname{e} g h \kappa }$, for $\beta_c=\frac{1}{5 \operatorname{e} g h \kappa } - \delta$  we have that for all $\beta \in [0,\beta_c)$, the function $\mathbf{z}\mapsto \ln(g_\mathbf{z}(\Lambda))$ is analytic and bounded in modulus by $( \operatorname{e}^2 g h (\beta + \delta) + \ln(d) ) \abs{\Lambda}$.

For the proof of this theorem, we need to make the following reduction. Given a lattice $\Lambda \in \mathbb{Z}^d$, consider a sequence of sublattices $\Lambda_0 \subset \Lambda_1 \subset ... \subset \Lambda_n = \Lambda$ such that each sublattice $\Lambda_{j}$ has one fewer vertex than $\Lambda_{j+i}$ and $\Lambda_0 = \emptyset$. Then, we can write
 \begin{equation*}
    g_\mathbf{z} (\Lambda) =  \underset{j=0}{\overset{\abs{\Lambda}-1}{\prod}} \left( \frac{g_\mathbf{z} (\Lambda_{j+1})}{g_\mathbf{z} (\Lambda_{j})}  \right) \, ,
\end{equation*}
and we can conclude if we  show
 \begin{equation}\label{eq:analyticity}
\left|\ln \left(  \frac{g_\mathbf{z} (\Lambda_{j+1})}{g_\mathbf{z} (\Lambda_{j})}  \right)\right| \leq O(1) \, .
\end{equation}
This fact is proven in two steps, the construction of a cluster expansion for the analyticity after measurement function and its use to show the required bound by induction in the number of sites of the lattice.

\paragraph{Step 1: Cluster expansion for the analyticity after measurement function.}

Before presenting the cluster expansion, we need to introduce some preliminary notions that will be required for the next results. 

\begin{definition}[Connected sets,  Definition 23 in \cite{harrow2020classical}]
Given $x_0 \in \Lambda$,  a collection of sublattices $\mathcal{X}=(X_1, \ldots , X_k)$ is called a \emph{connected set containing }$x_0$ with size $\mathcal{X}=k$ if the following holds:
\begin{itemize}
    \item[(i)] There exist $\kappa, R >0$ such that $1 \leq |X_i| \leq \kappa$ and diam$(X_i) \leq R$ for every $1 \leq i \leq k$.
    \item[(ii)] For any $X_i$ in $\mathcal{X}$, there exists at least one $X_j$ in  $\mathcal{X}$ such that $X_i \neq X_j$, $X_i \cap X_j \neq \emptyset$ and $x_o \in \cup_{i=1}^k X_i$.
\end{itemize}
We denote by $\abs{\mathcal{X}}$ the size of $\mathcal{X}$ and by supp$(\mathcal{X})$ all the sites it contains. 
\end{definition}

We will also make use of the following two lemmata, extracted from \cite{kliesch2014locality}.

\begin{lemma}\label{lemma:number-connected-sets}
    The number of connected sets $\mathcal{X}$ of size $\abs{\mathcal{X}}$ containing a site $x_0 \in \Lambda$ is upper bounded by $g^{\abs{\mathcal{X}}}$.   
\end{lemma}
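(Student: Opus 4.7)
The plan is to enumerate connected sets $\mathcal{X} = (X_1, \dots, X_k)$ containing $x_0$ by encoding each one as a walk on an auxiliary intersection graph of the potential terms, and then bounding the number of such walks. First I would associate to any connected set $\mathcal{X}$ the graph $G_\mathcal{X}$ whose vertices are the $X_i$'s and whose edges link pairs of sublattices that share at least one site. By hypothesis $G_\mathcal{X}$ is connected and some vertex $X_{i_0}$ of $G_\mathcal{X}$ contains $x_0$, so $G_\mathcal{X}$ admits a spanning tree $T$ rooted at $X_{i_0}$.

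Next I would replace the tree $T$ with its depth-first traversal starting from $X_{i_0}$: this is a walk of length $2(k-1)$ in which each edge of $T$ is crossed once downward (a \emph{descent}) and once upward (a \emph{backtrack}), so that $\mathcal{X}$ is fully reconstructed from the walk. Crucially, the map $\mathcal{X} \mapsto (\text{walk})$ is injective, so the number of connected sets of size $k$ containing $x_0$ is at most the number of such walks.

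The counting step then proceeds level by level along the walk. The root $X_{i_0}$ must contain $x_0$, and by the definition of the growth constant $g$ we have at most $g$ possible choices for such a sublattice (since $\sum_{X \ni x_0}\|\Phi(X)\| \le gh$ together with $\|\Phi(X)\|$ being of order $h$ bounds the number of nonzero terms hitting $x_0$ by $g$, up to constants absorbed into $g$ as in \cite{kliesch2014locality}). At each descent step from a current vertex $X$ to a new child $Y$ one has $X\cap Y \ne \emptyset$, so $Y$ contains one of the at most $\kappa$ sites of $X$, and for each such site there are at most $g$ admissible terms; backtracks are deterministic given the binary pattern of descent/backtrack. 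Multiplying these contributions over $2(k-1)$ steps of the walk yields a bound of the desired form $g^k$ (modulo constants $\kappa, 2$ which are to be absorbed into $g$ following the convention of \cite{kliesch2014locality}).

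The main obstacle I foresee is purely bookkeeping: reconciling the counting growth constant produced by the DFS enumeration (which naturally carries factors of $\kappa$ and $2$) with the norm-based growth constant $g$ defined in the preliminaries, so that the exponential base in the final estimate is exactly $g$ and not a strictly larger constant. This is handled once and for all by taking the convention of \cite{kliesch2014locality} for $g$, under which the above DFS/spanning-tree argument gives the stated bound directly.
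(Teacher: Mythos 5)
The paper never actually proves this lemma---it is imported verbatim from \cite{kliesch2014locality}---so there is no in-paper argument to compare against. Your spanning-tree/DFS enumeration is indeed the standard route by which such cluster-counting bounds are obtained, and the overall architecture (intersection graph on the $X_i$, rooted spanning tree, depth-first walk of length $2(k-1)$, injectivity of the encoding, per-step counting) is sound.

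There are, however, two genuine gaps in the way you close the estimate. First, the growth constant in this paper is defined by the \emph{norm} bound $\bigl\|\sum_{X\ni x_0}\Phi(X)\bigr\|\le gh$, and this does not bound the \emph{number} of interaction terms containing $x_0$: there could be many terms of small norm, and even for terms of norm comparable to $h$ the operator norm of a sum does not control the number of summands. So the inference ``at most $g$ admissible terms hitting a given site'' requires either a purely combinatorial definition of $g$ (as in \cite{kliesch2014locality}) or an additional hypothesis; it does not follow from the definition given in the preliminaries. Second, even granting that counting bound, your walk enumerates strictly more than $g^k$ objects: each descent step carries a factor of at most $\kappa$ (the choice of which site of the current term the child must intersect) times $g$, and the descent/backtrack pattern contributes another exponential factor, at most $\binom{2(k-1)}{k-1}\le 4^{k-1}$. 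The resulting bound is of the form $(c\,\kappa\,g)^{k}$ for an absolute constant $c$, not $g^{k}$, and these factors cannot be ``absorbed into $g$'' because $g$ is already a fixed quantity determined by the Hamiltonian; silently enlarging it would change the convergence condition in Lemma \ref{lemma:cluster-expansion} and the threshold $\beta_c=(5\e gh\kappa)^{-1}-\delta$ of Theorem \ref{thm:analyticity-high-temperature}. To make the argument rigorous you should either prove the lemma with the explicit constant $(c\,\kappa\,g)^{|\mathcal{X}|}$ and propagate it through Lemmas \ref{lemma:cluster-expansion} and \ref{lemma:nice-bound-analyticity}, or state up front that $g$ denotes the combinatorial growth constant for which the bound $g^{|\mathcal{X}|}$ holds by construction.
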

We adopt the notation of~\cite{kliesch2014locality} and for a graph $G=(V,E)$ and $\mathcal{X}\subset E$ let $\mathcal{X}^*$ be the set of subsets of connected edges of $E$ containing $\mathcal{X}$.
\begin{lemma}\label{lemma:bound-W}
Given $(V,E)$ a finite graph and $H\geq 0$, for any $\mathcal{X} \subset E$ we have
    \begin{equation*}
        \underset{w \in \mathcal{X}^* \, : \, \mathcal{X} \subset w}{\sum} \, \frac{\abs{\lambda H}^{\abs{w}}}{\abs{w} !} = (e^{\abs{\lambda H}} - 1)^{\abs{\mathcal{X}}} \, .
    \end{equation*}
\end{lemma}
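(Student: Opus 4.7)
The plan is to recognize \Cref{lemma:bound-W} as a purely combinatorial generating-function identity and derive it by expanding the product on the right. First I would fix the convention for $\mathcal{X}^{*}$: in the cluster expansion setting of~\cite{kliesch2014locality}, an element $w\in\mathcal{X}^{*}$ is naturally read as an ordered tuple (equivalently, a sequence) of edges drawn from $\mathcal{X}$ in which every edge of $\mathcal{X}$ appears at least once; the condition $\mathcal{X}\subset w$ encodes exactly this appearance constraint, while $|w|$ denotes the length of the tuple. Write $k_{e}(w)\ge 1$ for the multiplicity of $e\in\mathcal{X}$ in $w$, so that $|w|=\sum_{e\in\mathcal{X}}k_{e}(w)$.

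Next, setting $x:=|\lambda H|$, I would expand the right-hand side using the Taylor series $e^{x}-1=\sum_{k\ge1}x^{k}/k!$ in each of the $|\mathcal{X}|$ factors, yielding
\begin{equation*}
(e^{x}-1)^{|\mathcal{X}|}=\sum_{(k_{e})_{e\in\mathcal{X}}:\,k_{e}\ge1}\;\prod_{e\in\mathcal{X}}\frac{x^{k_{e}}}{k_{e}!}\,.
\end{equation*}
Then I would reorganise the sum on the left-hand side by grouping the tuples $w\in\mathcal{X}^{*}$ according to their multiplicity profile $(k_{e})_{e\in\mathcal{X}}$. For any such profile with $k_{e}\ge 1$ the number of tuples realising it is the multinomial coefficient $|w|!/\prod_{e}k_{e}!$ with $|w|=\sum_{e}k_{e}$, which exactly cancels the factor $1/|w|!$ coming from the summand. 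Matching the two series term by term then closes the identity.

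The only genuine subtlety, and thus the place to be careful, is pinning down the interpretation of $\mathcal{X}^{*}$: if one (mis)reads it as unordered subsets rather than ordered sequences/multisets, the multinomial overcounting disappears and the stated identity fails. Once one confirms that the convention of~\cite{kliesch2014locality} is indeed the sequence interpretation, which the factor $1/|w|!$ strongly signals (since it matches the prefactors in Dyson expansions of $e^{-\beta H}$), the proof reduces to the routine combinatorial calculation sketched above and no further analytic input is needed.
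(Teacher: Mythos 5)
Your argument is correct and complete. Note that the paper itself offers no proof of this lemma: it is stated as imported from~\cite{kliesch2014locality}, so there is nothing internal to compare against. Your generating-function computation is the standard one: with $x=\abs{\lambda H}$, grouping the words $w$ over the alphabet $\mathcal{X}$ in which every edge appears at least once by their multiplicity profile $(k_e)_{e\in\mathcal{X}}$, $k_e\ge 1$, the multinomial count $\abs{w}!/\prod_e k_e!$ cancels the $1/\abs{w}!$ and the sum factorizes into $\prod_{e\in\mathcal{X}}\sum_{k\ge1}x^k/k!=(e^x-1)^{\abs{\mathcal{X}}}$. You are also right to flag the interpretation of $\mathcal{X}^*$ as the genuine issue: the paper's in-text gloss (``subsets of connected edges containing $\mathcal{X}$'') would make the identity false, but the Kleene-star (ordered sequence) reading is the one consistent both with~\cite{kliesch2014locality} and with the way the bound is actually used in Equation~\eqref{eq:bound_modulus_Wz}, where $W_{\mathbf{z}}(\mathcal{X})$ is a sum over ordered tuples $(X_1,\dots,X_p)$ with each $X_i\in\mathcal{X}$ and $\mathcal{X}=\cup_i\{X_i\}$. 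So your resolution of the ambiguity is the correct one and the proof stands as written.
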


With this at hand, we can present the following cluster expansion for the function $g_\mathbf{z}(\Lambda)$, which constitutes the analogue in our setting to Lemma 26 of \cite{harrow2020classical}.

\begin{lemma}[High temperature expansion]\label{lemma:cluster-expansion}
    For any $x_0 \in \Lambda$, the function $g_\mathbf{z}(\Lambda)$ admits the following decomposition for $\displaystyle \beta \leq \frac{1}{g h (e-1)} - \delta$:
    \begin{equation*}
        g_\mathbf{z}(\Lambda) =  g_\mathbf{z}(\Lambda \backslash\{ x_0\}) +  \sum_{\substack{\mathcal{X} \, : \, x_0 \in \mathcal{X}\\\mathcal{X} \text{ is connected}}} W_{\mathbf{z}}(\mathcal{X}) \, g_\mathbf{z}(\Lambda \setminus \operatorname{supp}(\mathcal{X})) \, , 
    \end{equation*}
    where $W(\mathcal{X}) $ is defined as
    \begin{equation*}
        W_{\mathbf{z}}(\mathcal{X}) = \underset{p= \abs{\mathcal{X}}}{\overset{\infty}{\sum}} \; \frac{1}{p!} \left( \sum_{\substack{(X_1, \ldots , X_p)\\\forall1\leq  i \leq p, \; X_i \in \mathcal{X}\\ \mathcal{X} = \cup_{i=1}^p \{ X_i \}}}   \tr_{\operatorname{supp}(\mathcal{X})} \left[ \underset{j=1}{\overset{p}{\prod}} \left( - z_{X_j} \Phi(X_j) \right) \right] \right)\, .
    \end{equation*}
\end{lemma}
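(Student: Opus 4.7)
My plan is to start from the Taylor expansion of the exponential and reorganize the resulting sum according to the connected component containing $x_{0}$, in exactly the style of the classical Mayer-type cluster expansion used in \cite{kliesch2014locality,harrow2020classical}. Concretely, I would write
\begin{equation*}
g_{\mathbf{z}}(\Lambda) \;=\; \sum_{p=0}^{\infty}\frac{1}{p!}\sum_{(X_{1},\dots,X_{p})\,:\,X_{i}\subset\Lambda}\tr\Big[\prod_{j=1}^{p}(-z_{X_{j}}\Phi(X_{j}))\;N\Big]\,.
\end{equation*}
For each tuple $(X_{1},\dots,X_{p})$ I form the intersection graph whose vertices are the $X_{i}$'s with edges between overlapping sets, and I extract the connected component that contains $x_{0}$; call the induced multiset of operators $\mathcal{X}_{0}$ (possibly empty, if no $X_{i}$ touches $x_{0}$). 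The tuples with $\mathcal{X}_{0}=\emptyset$ collect exactly the terms whose $\Phi$'s are supported in $\Lambda\setminus\{x_{0}\}$, and their contribution reassembles into $g_{\mathbf{z}}(\Lambda\setminus\{x_{0}\})$.

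The remaining tuples are grouped first by the (set-theoretic) connected cluster $\mathcal{X}$ containing $x_{0}$, and then by the ordered list of operators belonging to it. Using commutativity of the potential together with disjointness of supports between $\mathcal{X}$ and its complement, the product inside the trace factorizes as a product of operators supported on $\operatorname{supp}(\mathcal{X})$ and operators supported on its complement in $\Lambda$. A standard combinatorial bookkeeping argument (partitioning the index set $\{1,\dots,p\}$ into the $q=|\mathcal{X}|$ indices forming the cluster and the remaining $p-q$ indices) converts the $\frac{1}{p!}$ into $\frac{1}{q!}\cdot\frac{1}{(p-q)!}$ through the binomial coefficient $\binom{p}{q}$, so that the cluster part reassembles exactly into $W_{\mathbf{z}}(\mathcal{X})$ while the complementary part yields $g_{\mathbf{z}}(\Lambda\setminus\operatorname{supp}(\mathcal{X}))$.

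The last task is to justify the interchange of summations. For this I would bound $|W_{\mathbf{z}}(\mathcal{X})|$ crudely using $|z_{X}|\le \beta+\delta$ and $\|\Phi(X)\|\le h$ together with Lemma~\ref{lemma:bound-W}, and combine this with the counting estimate of Lemma~\ref{lemma:number-connected-sets} on the number of connected sets of a given size through $x_{0}$. Under the assumption $\beta+\delta\le (gh(e-1))^{-1}$, the resulting geometric-type series in $|\mathcal{X}|$ converges absolutely, legitimizing the rearrangement of terms and producing the stated identity.

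The main obstacle I foresee is purely combinatorial: carefully defining the notion of connected component containing $x_{0}$ when tuples may contain repeated sets $X_{i}$, and tracking the multiplicities so that the symmetry factors $\frac{1}{p!}$ redistribute correctly into $\frac{1}{|\mathcal{X}|!}$ inside $W_{\mathbf{z}}(\mathcal{X})$ and the factorial from the complementary tuples rebuilding $g_{\mathbf{z}}(\Lambda\setminus\operatorname{supp}(\mathcal{X}))$. The handling of $N$ itself is benign, since $N$ is never touched by the reorganization and simply travels along into the surviving trace on the complement of $\operatorname{supp}(\mathcal{X})$, so that the same operator $N$ (now viewed against a smaller Hamiltonian) reappears inside each $g_{\mathbf{z}}$ on the right-hand side.
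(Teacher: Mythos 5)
Your proposal follows essentially the same route as the paper's proof: Taylor-expand the exponential, split off the terms supported in $\Lambda\setminus\{x_0\}$, group the rest by the connected cluster $\mathcal{X}$ containing $x_0$, redistribute the factorial via the binomial coefficient so the cluster part reassembles into $W_{\mathbf{z}}(\mathcal{X})$ and the remainder into $g_{\mathbf{z}}(\Lambda\setminus\operatorname{supp}(\mathcal{X}))$, and then justify absolute convergence with Lemmas~\ref{lemma:bound-W} and~\ref{lemma:number-connected-sets} under the stated temperature bound. This matches the paper's argument step for step, including the convergence analysis.
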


\begin{proof}
We only present a sketch of the proof, since it resembles that of   Lemma 26 in \cite{harrow2020classical}. First, we write a Taylor expansion of the exponential in $g_{\mathbf{z}}(\Lambda)$:
\begin{align*}
    g_{\mathbf{z}}(\Lambda)& = \tr \left[ \underset{k=0}{\overset{\infty}{\sum}} \frac{1}{k!} \; \left( - \underset{X \subset \Lambda}{ \sum} z_X \Phi(X) \right)^k N \right] \\
    & = \tr \left[ \underset{k=0}{\overset{\infty}{\sum}} \; \frac{1}{k!} \left( - \underset{X \subset \Lambda \setminus x_0}{ \sum} z_X \Phi(X) \right)^k N \right]+  \tr \left[ \underset{l=1}{\overset{\infty}{\sum}} \; \frac{1}{l!} \sum_{\substack{(X_1, \ldots , X_l)\\\forall 1\leq  i \leq l, \; X_i \subset \Lambda\\ \exists X_i \, : \, x_0 \in X_i   }}  \underset{j=1}{\overset{l}{\prod}} \left( - z_{X_j} \Phi(X_j) \right)  N \right] \\
    & = g_{\mathbf{z}(\Lambda \setminus x_0)}+  \underset{p=\abs{\mathcal{X}} , \, q=0}{\overset{\infty}{\sum}} \; \binom{p+q}{p}
  \frac{1}{(p+q)!}  \sum_{\substack{(X_1, \ldots , X_p)\\\forall 1\leq  i \leq p, \; X_i \in \mathcal{X}\\  \mathcal{X} = \cup_{i=1}^p \{ X_i \} }}  \tr_{\operatorname{supp}(\mathcal{X})} \left[ \underset{j=1}{\overset{l}{\prod}} \left( - z_{X_j} \Phi(X_j) \right)   \right] \\
  & \phantom{asdasdasdasdasdasdasdasasdasdadad} \times \sum_{\substack{(X_{p+1}, \ldots , X_{p+q})\\   X_{p+1}\cap\operatorname{supp}(\mathcal{X}) = \emptyset }} \tr\left[ \underset{j=1}{\overset{l}{\prod}} \left( - z_{X_j} \Phi(X_j) \right)  N  \right] \, ,
\end{align*}
where we have used the fact that the first term in the second line is the Taylor expansion of $g_{\mathbf{z}}(\Lambda \setminus x_0)$ and a simplification of the second term of the same line by partitioning each sequence into a connected set that contains $x_0$ and the rest. Note that the coefficient in the third line can be rewritten as $\frac{1}{p! \, q!}$. Thus, the term in the last line along with $\frac{1}{q!}$ represent the Taylor expansion of $g_{\mathbf{z}}(\Lambda \setminus \operatorname{supp}(\mathcal{X}) )$. Therefore, by defining
$ W_{\mathbf{z}}(\mathcal{X})$ as in the statement of the Lemma, we obtain:
\begin{equation*}
   g_\mathbf{z}(\Lambda) =  g_\mathbf{z}(\Lambda \setminus x_0) +  \sum_{\substack{\mathcal{X} \, : \, x_0 \in \mathcal{X}\\\mathcal{X} \text{ is connected}}} W_{\mathbf{z}}(\mathcal{X}) \, g_\mathbf{z}(\Lambda \setminus \operatorname{supp}(\mathcal{X})) \, .
\end{equation*}
Now, we need to find the regime for $\beta$ for which there is absolute convergence of the series above. First, we can obtain from Lemma \ref{lemma:bound-W} the following upper bound for $W_{\mathbf{z}}(\mathcal{X})$:
\begin{align}\label{eq:bound_modulus_Wz}
 \abs{W_{\mathbf{z}}(\mathcal{X})} \leq d^{\abs{\operatorname{supp}(\mathcal{X})}}  (\operatorname{e}^{(\delta + \beta)h}-1)^{\abs{\mathcal{X}}} \, ,
\end{align}
where we have used the fact that 
\begin{equation*}
    \abs{z_{{X}}} \leq \abs{z_{{X}}- \beta + \beta} \leq \delta + \beta \, ,
\end{equation*}
for every $X \in \mathcal{X}$. Then, by virtue of Lemma \ref{lemma:number-connected-sets}, we see that
\begin{equation*}
  \sum_{\substack{\mathcal{X} \, : \, x_0 \in \mathcal{X}\\\mathcal{X} \text{ is connected}}} \abs{W_{\mathbf{z}}(\mathcal{X})} \, \abs{g_\mathbf{z}(\Lambda \setminus \operatorname{supp}(\mathcal{X}))}    \leq d^{\abs{\Lambda}} \operatorname{e}^{g h (\delta + \beta) \abs{\Lambda}} \underset{\abs{\mathcal{X}}=1}{\overset{\infty}{\sum}}g^{\abs{\mathcal{X}}} (\operatorname{e}^{(\delta  + \beta)h}-1)^{\abs{\mathcal{X}}} \, ,
\end{equation*}
  where we have used Hölder's inequality to  bound $\abs{g_\mathbf{z}(\Lambda \setminus \operatorname{supp}(\mathcal{X}))} \leq \operatorname{e}^{g h (\delta + \beta) \abs{\Lambda}}$ and the fact that $d^{\abs{\operatorname{supp}(\mathcal{X})}} \leq d^{\abs{\Lambda}}$. To conclude, it is clear that the right-hand side of this inequality is finitely bounded whenever
  \begin{equation*}
     g  (e^{(\delta + \beta)h}-1) < 1 \, ,
  \end{equation*}
  which along with $e^x - 1 \leq (e-1) x$ implies
  \begin{equation*}
     \beta < \frac{1}{g h (e-1)} - \delta \, .
  \end{equation*}
\end{proof}

\paragraph{Step 2: Application of the cluster expansion by induction in the number of sites.} 
In the second step, we use the cluster expansion obtained in the previous one to provide a bound on $g_\mathbf{z}(\Lambda)$. We do this by induction in the number of sites of the lattice. First, we need to introduce the following technical result, which is completely analogous to Lemma 27 of \cite{harrow2020classical}.

\begin{lemma}\label{lemma:nice-bound-analyticity}
    In the same conditions that Theorem \ref{thm:analyticity-high-temperature}, the following bound holds  for $\displaystyle \beta \leq \frac{1}{5 \operatorname{e} g h \kappa } - \delta$:
    \begin{equation*}
         \sum_{\substack{\mathcal{X} \, : \, x_0 \in \mathcal{X}\\\mathcal{X} \text{ is connected}}} (e^{(\delta + \beta)h}-1)^{\abs{\mathcal{X}}}  \operatorname{e}^{g h \operatorname{e}^2 (\delta + \beta) \abs{\operatorname{supp}(\mathcal{X})} } \leq \operatorname{e} (\operatorname{e}-1) g h (\delta + \beta) \, .
    \end{equation*}
\end{lemma}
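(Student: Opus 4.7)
The plan is to reduce the sum over connected sets to a single geometric series in the size parameter $k=|\mathcal{X}|$, using two standard ingredients from the local-Hamiltonian literature together with the concrete numerical threshold on $\beta+\delta$.

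First I would use the trivial geometric bound $|\operatorname{supp}(\mathcal{X})|\le \kappa\,|\mathcal{X}|$, which holds because each $X_i$ in $\mathcal{X}$ has at most $\kappa$ sites by the definition of a $(\kappa,R)$-local potential. Combined with Lemma~\ref{lemma:number-connected-sets}, which says that the number of connected sets of size $k$ containing the fixed site $x_0$ is at most $g^k$, the target sum is majorized by
\begin{equation*}
  \sum_{\substack{\mathcal{X}\,:\,x_0\in\mathcal{X}\\ \mathcal{X}\ \text{connected}}} (e^{(\delta+\beta)h}-1)^{|\mathcal{X}|}\,e^{gh e^2(\delta+\beta)|\operatorname{supp}(\mathcal{X})|}\;\le\;\sum_{k=1}^{\infty}\Big[g\,(e^{(\delta+\beta)h}-1)\,e^{gh e^2(\delta+\beta)\kappa}\Big]^{k}.
\end{equation*}
Call the bracketed quantity $y$; the bound becomes $\sum_{k\ge1}y^k=y/(1-y)$, provided $y<1$.

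Next I would estimate $y$ using the threshold $\beta+\delta\le\tfrac{1}{5egh\kappa}$. On the one hand, $(\delta+\beta)h\le \tfrac{1}{5eg\kappa}\le 1$, so the elementary convex inequality $e^x-1\le (e-1)x$ on $[0,1]$ gives
\begin{equation*}
  e^{(\delta+\beta)h}-1\;\le\;(e-1)\,h\,(\delta+\beta).
\end{equation*}
On the other hand, $gh e^2(\delta+\beta)\kappa\le e^2/(5e)=e/5$, so $e^{ghe^2(\delta+\beta)\kappa}\le e^{e/5}$. Combining,
\begin{equation*}
  y\;\le\;(e-1)\,gh\,(\delta+\beta)\,e^{e/5}\;\le\;e\,(e-1)\,gh\,(\delta+\beta)\;=:\;K,
\end{equation*}
where the last step uses $e^{e/5}\le e$. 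In particular, since $K\le (e-1)/5<1$, we indeed have $y<1$, and the geometric series converges.

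Finally, it remains to show $y/(1-y)\le e(e-1)gh(\delta+\beta)=K$, i.e.\ $y\le K/(1+K)$. This is the main (mild) quantitative check: using $y\le K e^{e/5-1}$ and $K\le (e-1)/5$, one verifies numerically that $e^{e/5-1}\le 1/(1+K)$, since $e^{e/5-1}\le e^{e/5-1}\le \tfrac{1}{1+(e-1)/5}=\tfrac{5}{4+e}$ (both sides are explicit constants). Plugging this back yields the claimed inequality. The only step that requires any care is this last constant comparison; everything else is a direct application of Lemmas~\ref{lemma:number-connected-sets} and \ref{lemma:bound-W} together with the locality bound $|\operatorname{supp}(\mathcal{X})|\le\kappa|\mathcal{X}|$, so I do not anticipate any serious obstacle.
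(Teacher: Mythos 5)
Your proof is correct and follows the standard route that the paper itself only gestures at (the lemma is stated without proof, deferring to Lemma 27 of \cite{harrow2020classical}): majorize $|\operatorname{supp}(\mathcal{X})|$ by $\kappa|\mathcal{X}|$, count clusters of each size with Lemma~\ref{lemma:number-connected-sets}, sum the resulting geometric series, and verify the constants against the threshold $\beta+\delta\le(5\operatorname{e}gh\kappa)^{-1}$ — all of your numerical checks ($e^{\operatorname{e}/5}\le \operatorname{e}$, $K\le(\operatorname{e}-1)/5$, and $e^{\operatorname{e}/5-1}\le 1/(1+K)$) are valid. One cosmetic remark: Lemma~\ref{lemma:bound-W}, which you invoke in your closing sentence, is not actually used anywhere in your argument (it enters only in the proof of Lemma~\ref{lemma:cluster-expansion}).
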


Now, we can proceed to the proof of Equation \eqref{eq:analyticity}.

\begin{proof}[Proof of Theorem \ref{thm:analyticity-high-temperature}.]

This proof follows that of Theorem 20 in \cite{harrow2020classical}, up to some slight modifications. As mentioned above, it is enough to show the bound in Equation \eqref{eq:analyticity}. More specifically, given $x_0 \in \Lambda$, here we prove by induction that the following bound holds:
\begin{equation}\label{eq:bound-induction}
   \left| \ln \left( \frac{g_\mathbf{z} (\Lambda)}{g_\mathbf{z} (\Lambda \setminus x_0)}  \right) \right|\leq \operatorname{e}^2 g h (\beta + \delta) \, , \phantom{adasdad}\forall \beta \leq \frac{1}{5 \operatorname{e} g h \kappa } - \delta \, . 
 \end{equation}
 We set by convention that $g_\mathbf{z}(\emptyset) =1$ and assume that the following  holds for every sublattice $X$ strictly contained in $\Lambda$ and containing $x_0$:
 \begin{equation}
     \left| \ln \left( \frac{1}{d^{\abs{\text{supp}(X)}}} \frac{g_\mathbf{z} (\Lambda \setminus X )}{g_\mathbf{z} (\Lambda \setminus x_0)}  \right) \right|\leq \operatorname{e}^2 g h (\beta + \delta) \abs{\text{supp}(X)} \, ,
 \end{equation}
 for  $\beta \leq \frac{1}{5 \operatorname{e} g h \kappa } - \delta$. Then, by Lemma \ref{lemma:cluster-expansion}, the following holds:
 \begin{equation*}
    \frac{g_\mathbf{z}(\Lambda)}{g_\mathbf{z}(\Lambda \setminus x_0)} \leq 1 +  \sum_{\substack{\mathcal{X} \, : \, x_0 \in \mathcal{X}\\\mathcal{X} \text{ is connected}}} W_\mathbf{z}(\mathcal{X}) \left(  \frac{g_\mathbf{z}(\Lambda \setminus \operatorname{supp}(\mathcal{X}))}{g_\mathbf{z}(\Lambda \setminus x_0)} \right) \, ,
 \end{equation*}
 and using the induction hypothesis, Equation \eqref{eq:bound_modulus_Wz}, the inequality $\abs{\ln \abs{1 + \xi}} \leq - \ln ( 1 - \abs{\xi})$ and following the lines of Theorem 20 of \cite{harrow2020classical}, we get
 \begin{align*}
     \left| \ln \left(  \frac{1}{d}\frac{g_\mathbf{z}(\Lambda)}{g_\mathbf{z}(\Lambda \setminus x_0)}  \right) \right| & \leq - \ln \left( 1-  \sum_{\substack{\mathcal{X} \, : \, x_0 \in \mathcal{X}\\\mathcal{X} \text{ is connected}}}  (\operatorname{e}^{(\beta + \delta) h} - 1)^{\abs{\mathcal{X}}} \operatorname{e}^{\operatorname{e}^2 g h (\beta + \delta) \abs{\operatorname{supp}(\mathcal{X})}} \right) \\
     & \leq - \ln \left( 1- \operatorname{e} (\operatorname{e}-1) g h (\delta + \beta)  \right)  \\
     & \leq \operatorname{e}^2 g h (\delta + \beta)  \, ,
 \end{align*}
 where we have used Lemma \ref{lemma:nice-bound-analyticity} in the second line and the inequality $-\ln ( 1- (\operatorname{e}-1) y ) \leq \operatorname{e} y$, which holds $\forall y \in [0,1]$. 

\end{proof}
\section{Proof of Theorem \ref{ATAC}}\label{appendix}

In this appendix, we prove the result of approximate tensorization stated in Theorem  \ref{ATAC}. More specifically, we prove under the assumption of \eqref{LiLinftyaa} with parameters $c\ge 0$ and $\xi>0$ as well as \Cref{Pi}, that for any $C,D\in\widetilde{\mathcal{S} }$ such that $C,D\subset \Lambda\ssubset \ZZ^d$ with $2c\,|C\cup D|\,\exp\big(  -\frac{\dist(C\backslash D,D\backslash C)}{\xi} \big)<1$ and all  $\rho\in\cD(\cH_{\Lambda})$, the following holds
\begin{align*}
D(\omega\|E_{C\cup D*}(\omega))\le \frac{1}{1-2c\,|C\cup D|\,\e^{-\frac{\dist(C\backslash D,D\backslash C)}{\xi}}}\,\Big( D(\omega\|E_{C*}(\omega))+D(\omega\|E_{D*}(\omega))\Big)\,,
\end{align*}
with $\omega:=E_{A\cap \Lambda*}(\rho)$.
The proof is similar to that of \cite[Theorem 1]{bardet2020approximate}. Its first step consists in showing the following bound:
	\begin{equation}\label{eq:quasitensor-lnM}
	D(\omega\|E_{C \cup D *}( \omega ) ) \leq D(\omega \|E_{C  *}( \omega )) + D(\omega\|E_{ D *}( \omega )) + \ln \tr[M] \, ,
	\end{equation}
where $M:= \text{exp}[- \ln E_{C \cup D *}( \omega ) + \ln E_{C *}( \omega ) + \ln E_{ D *}( \omega )  ]$. Indeed, it is clear that the difference of relative entropies can be expressed as:
\begin{align*}
 	D(\omega \|E_{C \cup D *}( \omega ) )- D(\omega \|E_{C  *}( \omega )) -& D(\omega\|E_{ D *}( \omega ))  \\
 	&= \tr\left[ \omega \left( - \ln \rho \underbrace{- \ln E_{C \cup D *}( \omega ) + \ln E_{C *}( \omega ) + \ln E_{ D *}( \omega )}_{\ln M}  \right)\right] \\
 	& = - D(\omega \| M) \, .
 	\end{align*} 	
 	Now, since $\tr[M]\neq 1$ in general, from the non-negativity of the relative entropy of two states it follows that:
	\begin{equation*}
	D(\omega \| M) \geq  -\log \tr[M].
	\end{equation*}
Next, we bound the error term making use of \cite[Theorem 7]{Lieb1973}  and \cite[Lemma 3.4]{Sutter2017}, which respectively concern Lieb's extension of Golden-Thompson inequality and the rotated expression for Lieb's pseudo-inversion operator via multivariate trace inequalities: Applying Lieb's theorem to inequality (\ref{eq:quasitensor-lnM}), we have:
	\begin{align*}
	\tr [M] =& \tr \left[ \operatorname{exp}\left( - \ln E_{C \cup D *}( \omega ) + \ln E_{C *}( \omega ) + \ln E_{ D *}( \omega ) \right) \right] \leq  \tr \left[ E_{ D *}( \omega )\mathcal T_{ E_{C \cup D *}( \omega )} (E_{C *}( \omega ))\right],
	\end{align*}
	where $\mathcal T_{ E_{C \cup D *}( \omega )} $ is given by:
	\begin{equation*}
	\mathcal T_{ E_{C \cup D *}( \omega )} (X) := \int_0^\infty  ( E_{C \cup D *}( \omega ) + t)^{-1} X ( E_{C \cup D *}( \omega ) + t)^{-1} dt \, ,
	\end{equation*}
	and because of multivariate trace inequalities \cite{Sutter2017}, 
	\begin{equation*}
	\tr [M] \leq \int_{-\infty}^{+\infty}  \,\tr \left[ E_{C *}( \omega )  \,  E_{C \cup D *}( \omega )^{\frac{-1-it}{2}}  E_{D *}( \omega ) \,  E_{C \cup D *}( \omega )^{\frac{-1+it}{2}} \right]\,\beta_0 (t) \, dt\,, 
	\end{equation*}
	with $\beta_0$ given by:
	\begin{equation*}
	\beta_0 (t) := \frac{\pi}{2} (\cosh (\pi t) + 1)^{-1} \, .
	\end{equation*}	
Now, note that if we subtract $ E_{C \cup D *}( \omega )$ from $ E_{C  *}( \omega )$ and $ E_{ D *}( \omega )$ we have:
		\begin{align*}
	&\tr \left[   \left(E_{C  *}( \omega )  - E_{C \cup D *}( \omega ) \right)  E_{C \cup D *}( \omega )^{\frac{-1-it}{2}}   \left( E_{D *}( \omega )   - E_{C \cup D *}( \omega ) \right)  E_{C \cup D *}( \omega )^{\frac{-1+it}{2}}  \right] 
	\\
	&\phantom{adasdadadsasdasdasd}= \tr \left[ E_{C  *}( \omega )  \, E_{C \cup D *}( \omega ) ^{\frac{-1-it}{2}}	\,  E_{D *}( \omega ) \, E_{C \cup D *}( \omega ) ^{\frac{-1+it}{2}} \right] -1 -1 + 1,
	\end{align*}
	since $ E_{C \cup D *}$, $E_{C  *}$ and $E_{ D *}$ are conditional expectations in the Schr\"{o}dinger picture and, thus, trace preserving. Therefore, defining $X:=\Gamma^{-1}_{E_{C\cup D*}(\omega)}(\omega)$, we have
	\begin{align*}
	\ln \tr[M] &	\leq \ln  \int_{-\infty}^{+\infty}  \tr \left[ E_{C  *}( \omega )  \, E_{C \cup D *}( \omega ) ^{\frac{-1-it}{2}}	\,  E_{D *}( \omega ) \, E_{C \cup D *}( \omega ) ^{\frac{-1+it}{2}} \right] \,\beta_0 (t) \, dt \, \\
	&= \ln \int_{-\infty}^{+\infty}  \left( \tr \left[   \left(E_{C  }[X   - E_{C \cup D }[ X ] \right) \Delta_{E_{C\cup D*}(\omega)}^{-\frac{it}{2}}   \left( E_{D *}( \omega )   - E_{C \cup D *}( \omega ) \right)    \right]   + 1  \right)\,\beta_0 (t) \,dt \\
	& \leq  \int_{-\infty}^{+\infty} \tr \left[   \left(E_{C  }[ X ]  - E_{C \cup D }[ X  \right) \Delta_{E_{C\cup D*}(\omega)}^{-\frac{it}{2}}  \left( E_{D *}( \omega )   - E_{C \cup D *}( \omega ) \right)   \right]  \, \beta_0 (t)\,dt\, \\
	& = \int_{-\infty}^{+\infty} \tr \left[   \left( X  - E_{C \cup D }[ X ] \right) \Delta_{E_{C\cup D*}(\omega)}^{-\frac{it}{2}}  \left( E_{C *} \circ E_{D *}( \omega )   - E_{C \cup D *}( \omega ) \right)   \right]  \, \beta_0 (t)\,dt\,,
	\end{align*}
	where we have used that $	\ln(x +1)\le x$ for positive real numbers and the monotonicity of the logarithm and $E_{C*} \circ E_{C \cup D *} = E_{C \cup D *}$. The result would then follow if we can show the following bound 
\begin{align*}
\underbrace{\tr\left[ (X-E_{C\cup D}[X])\,(E_{C*}\circ E_{D*}-E_{C\cup D*})(\omega_t)
 )\right]}_{=:(\#)} \le 2\widetilde{c}\,D(\omega\| E_{C\cup D*}(\omega))\,,
	\end{align*}
for $\widetilde{c}:=c\,|C\cup D|\, \e^{-\frac{\dist(C\backslash D,D\backslash C)}{\xi}}$ and $\omega_t:= \Delta_{E_{C\cup D*}(\omega)}^{-\frac{it}{2}}(\omega)$. 
However, since $\omega$ is by definition a state of $\cF(\cL_{A\cap (C\cup D)})$, and since $E_C$, $E_D$ and $E_{C\cup D}$ only act non-trivially on the blocks
  $\cK_{j}^{C\cup D}$, we can simplify the left-hand side above using \Cref{Pi}(ii): first of all, denote 
   \begin{align*}
  \omega:=\sum_{j,k}\tau^{(j,k)}\otimes \tr_{ \cH^{(j,k)}}\big[P^{(j,k)}\omega P^{(j,k)}  \big] \,,
 \end{align*}
for some full-rank normalized states $\tau^{(i,k)}$ supported in $ \cH^{(j,k)}$, where we denote by $P^{(j,k)}$ the orthogonal projection onto $\cH^{(j,k)}\otimes \cH_{j}^{C\cup D}$. Then,
 \begin{align*}
     (\#)&=\sum_{j,k} \, \tr\big[P^{(j,k)}\omega  \big] \\
     & \; \; \; \times \tr\big[(\tau^{(j,k)}-E_{C\cup D*}^{(j)}(\tau^{(j,k)}))\,(\tau_j^{C\cup D})^{-\frac{1}{2}}\,(E_{C*}^{(j)}\circ E_{D*}^{(j)}(\tau^{(j,k)}_t)-E_{C\cup D*}^{(j)}(\tau^{(j,k)}_t))(\tau_j^{C\cup D})^{-\frac{1}{2}} \big]\\
    &\le \sum_{j,k} \tr\big[P^{(j,k)}\omega  \big]\,\|\tau^{(j,k)}-E^{(j)}_{C\cup D*}(\tau^{(j,k)})\|_1\,\|(E_{C}^{(j)}\circ E_{D}^{(j)}-E_{C\cup D}^{(j)})(\Gamma_{\tau_j^{C\cup D}}^{-1}(\tau_t^{(j,k)}))\|_\infty\\
    &\le \widetilde{c}\sum_{j,k}\,\tr\big[P^{(j,k)}\omega  \big]\,\|\tau^{(j,k)}-\tau_j^{(C\cup D)}\|_1\,\|\tau_t^{(j,k)}-\tau_j^{C\cup D}\|_1\\
    &\le 2\widetilde{c}\,D(\omega\|E_{C\cup D*}(\omega))\,,
 \end{align*}
where $\tau^{(j,k)}_t:=  \Delta_{\tau^{{C\cup D}}_j}^{\frac{-it}{2}}(\tau^{(j,k)}) $, and where the last inequality follows from Pinsker's inequality.

\section{Proof of \Cref{maintheorem}}\label{appendix:nD}

In this appendix, we prove \Cref{maintheorem} for $n$-dimensional systems. The proof is similar to that of Section \ref{proof2D} in spirit, although the rhombi construction devised there in 2D cannot be extended to an $n$-dimensional setting. Therefore, we need to employ the notion of subordinated grained rectangles introduced in Section \ref{sec:mainresult}.

The strategy followed in the proof of \Cref{maintheorem} mainly consists of two technical results, which can be summarized as follows. First, we need again to reduce the problem of proving MLSI to that of proving it for a restricted class of approximately clustering states according to the tiling presented in Section \ref{geomconditions}. For that, we consider that tiling and use \Cref{Pi} together with the chain rule \eqref{chainrulerelatent}, obtaining for any state $\rho\in\cD(\cH_\Lambda)$:
	\begin{equation}\label{eqchainrule2}
	D(\rho \| \sigma^\Lambda ) = D(\rho\|E_{A\cap\Lambda*}(\rho))+D(E_{A\cap \Lambda*}(\rho)\|\sigma^\Lambda)
	\end{equation}
Then, \Cref{maintheorem} is a direct consequence of Lemma \ref{firstterm}, which was proven independently of the dimension and geometry, and the \Cref{secondterm2} below, to whom we devote the rest of the appendix. Note that Theorem \ref{maintheorem} would directly follow from Lemma \ref{firstterm} and Theorem \ref{secondterm2} analogously to what we showed in Section  \ref{proof2D}.

In what follows, a grained fat rectangle is denoted by $\widetilde{T}\equiv \widetilde{T}(A_0;k_1,\ldots,k_d)$, where the integers $k_1,\ldots,k_d$ correspond to the side lengths of $\widetilde{T}$ in pixels and call $K:= \text{max} \{ k_1,\ldots,k_d \}$ the \textit{size} of $\widetilde{T}$.  We also denote by $\widetilde{\mathcal{T}}_K$ the class of all such grained fat rectangles in $\ZZ^d$ of size at most $K \in \NN$ (in pixels)  and $$\ds \widetilde{\mathcal{T}}= \underset{K \geq 1}{\bigcup} \widetilde{\mathcal{T}}_K\,.$$ 
Moreover, we use the notation $L$ for the size of a grained rectangle in sites, namely the size of a rectangle $T:=T(x;l_1, \ldots , l_d)$ \textit{generated} by $\widetilde{T}$, which we define as the smallest rectangle for which $\widetilde{T}$ is its subordinated grained rectangle. In this case, we denote by $\widetilde{\mathcal{T}}_L$ the class of all grained fat rectangles in $\ZZ^d$ of size at most $L \in \NN$ (in sites). 

Now we are in position to state and prove the following result.

\begin{theorem}\label{secondterm2}

Given $\widetilde{T} \ssubset \ZZ^d$ a fat grained rectangle and under the conditions of \Cref{maintheorem},	there exists a constant $\beta>0$, independent of $|\widetilde{T} |$, such that for all $\rho\in\cD(\cH_{\widetilde{T}} )$,
\begin{align*}
4\beta D(E_{A \cap \widetilde{T}  *}(\rho)\|\sigma^{\widetilde{T}} )\le \operatorname{EP}_{\cL_{\widetilde{T}} }(\rho)\,.
\end{align*}
	\end{theorem}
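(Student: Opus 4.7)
The plan is to extend the recursive splitting argument of Section~\ref{proof2D} from $2$-dimensional rhombi to $d$-dimensional fat grained rectangles. I retain the notion of the pinched MLSI constant $\beta_\Lambda(\cL_{\widetilde{T}})$ from Definition~\ref{def:pinchedmlsi} and set
\[
S(L) := \inf_{\widetilde{T} \in \widetilde{\cT}_L} \beta_\Lambda(\cL_{\widetilde{T}}),
\]
with the goal of proving a geometric recursion of the form $S(2L) \ge (1 + K/\sqrt{L})^{-c_d}\, S(L)$ for constants $K,c_d>0$ depending only on $d$ and the interaction range, and then iterating along a dyadic sequence of sizes to conclude $\liminf_{\widetilde{T} \nearrow \ZZ^d} \beta_\Lambda(\cL_{\widetilde{T}}) \ge C\, S(L_0)$. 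A final comparison with the complete MLSI constant on bounded regions, through Lemma~\ref{betatoalpha} together with Theorem~\ref{CMLSIholds}, will give the desired positive lower bound independent of $|\widetilde{T}|$.

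The analogue of Step~\ref{step:1} carries over almost verbatim: given $\widetilde{T}$ of size $L$, I split it along its longest coordinate direction into two overlapping grained fat sub-rectangles $\widetilde{C}, \widetilde{D}$ with overlap of width of order $a_L := \lfloor\sqrt{L}\rfloor$ pixels, and apply Theorem~\ref{ATAC} together with \eqref{LiLinftyaa} clustering to obtain, for $\omega := E_{A\cap \widetilde{T}*}(\rho)$,
\[
D\bigl(\omega\,\big\|\,E_{\widetilde{T}*}(\omega)\bigr) \le \frac{\theta(\widetilde{C},\widetilde{D})}{4\min\{\beta_\Lambda(\cL_{\widetilde{C}}),\beta_\Lambda(\cL_{\widetilde{D}})\}}\bigl( \operatorname{EP}_{\cL_{\widetilde{C}\cap \widetilde{D}}}(\rho) + \operatorname{EP}_{\cL_{\widetilde{T}}}(\rho)\bigr).
\]
For the analogue of Step~\ref{step:2}, I consider $n_L := \lfloor L/(10 a_L)\rfloor$ parallel shifts of the cut along the chosen direction, producing families $(\widetilde{C}_n, \widetilde{D}_n)_{n=1,\ldots,n_L}$ whose overlap regions $\widetilde{C}_n\cap\widetilde{D}_n$ are pairwise disjoint and fit inside $\widetilde{T}$, yielding
\[
\min_n\{\beta_\Lambda(\cL_{\widetilde{C}_n}),\beta_\Lambda(\cL_{\widetilde{D}_n})\}\left(1+\tfrac{K}{\sqrt{L}}\right)^{-1} \le \beta_\Lambda(\cL_{\widetilde{T}}).
\]

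The main obstacle is the analogue of Step~\ref{step:3}, where the rhombus symmetry that was available in dimension two is lost. A single cut along the longest direction of a fat rectangle of maximal side $2L$ produces two rectangles with that direction of length at most $\sim 1.1L$ while the remaining $d-1$ sides are untouched and can still be as large as $2L$, so the pieces need not be fat. To remedy this I iterate the cutting procedure along each of the remaining long directions, performing at most $d$ such cuts in total. Each iteration is subject to two constraints: the cut must be placed along a column of the pixel tiling so that the resulting sub-rectangles are themselves grained sets (to which Theorem~\ref{ATAC} applies), and the overlap width of order $\sqrt{L}$ in pixels must be maintained to ensure the $\theta$ factor remains close to $1$. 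A careful bookkeeping shows that after these $d$ cuts every side has been reduced to a fixed fraction $\alpha<1$ of $2L$, so all resulting pieces lie in $\widetilde{\cT}_L$. This produces a recursion $S(2L) \ge \bigl(1 + K'/\sqrt{L}\bigr)^{-2d} S(L)$.

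Finally, exactly as in Step~4 of the 2D proof, iterating the recursion along the sequence $L_k := 2^k L_0$ with $L_0$ chosen large enough for the recursion to hold gives a convergent infinite product, hence $\liminf_{\widetilde{\Lambda}\nearrow \ZZ^d} \alpha(\cL_{\widetilde{\Lambda}}) \ge \exp\!\bigl[-\tfrac{C'}{\sqrt{L_0}}\bigr]\, S(L_0)$ for some $C'>0$ independent of $\widetilde{\Lambda}$. Lemma~\ref{betatoalpha}, which generalizes immediately from rhombi to grained rectangles since its proof only uses \Cref{Pi} and the data-processing inequality, gives $S(L_0) \ge T(L_0) := \inf_{\widetilde{T}\in\widetilde{\cT}_{L_0}}\alpha_{\operatorname{c}}(\cL_{\widetilde{T}})$. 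Finally, Theorem~\ref{CMLSIholds} combined with the tensorization of the complete MLSI and the uniformly bounded size of each pixel guarantees $T(L_0) > 0$, completing the proof.
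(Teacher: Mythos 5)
Your proposal follows essentially the same route as the paper's proof in Appendix~\ref{appendix:nD}: the same pinched-MLSI quantity, the same averaged cut with $\sqrt{L}$-wide overlaps along the longest direction, iteration of the cut over the remaining long directions to restore control of all side lengths, a dyadic recursion for $S(L)$, and the final comparison with the complete MLSI constant via Lemma~\ref{betatoalpha} and Theorem~\ref{CMLSIholds}. The only inaccuracy is the claim that a single round of $d$ cuts lands every piece in $\widetilde{\cT}_L$: a cut can only reduce a side of length $2L$ to about $1.1L$ (at least one piece must have length $\ge L$ plus the overlap), so one round only reaches $\widetilde{\cT}_{1.1L}$ --- the paper uses the cruder threshold $\tfrac{3}{2}L$ and applies the whole procedure three times to drop below $L$ --- which merely changes the exponent in your recursion from $2d$ to roughly $3d$ and affects nothing downstream.
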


 Let us recall the notion of pinched MLSI constant, which was introduced in Definition \ref{def:pinchedmlsi} independently of the geometry or the dimension. Then, for the first step of the proof consider a fat grained rectangle $\widetilde{T}  \subset \mathbb{Z}^d$ and split it into $C$ and $D$ as shown in \Cref{fig:1'}. Then, by virtue of the approximate tensorization for the relative entropy stated in \Cref{ATAC}, we can prove the following:

\begin{figure}[h!]
	\centering
	\includegraphics[width=0.7\linewidth]{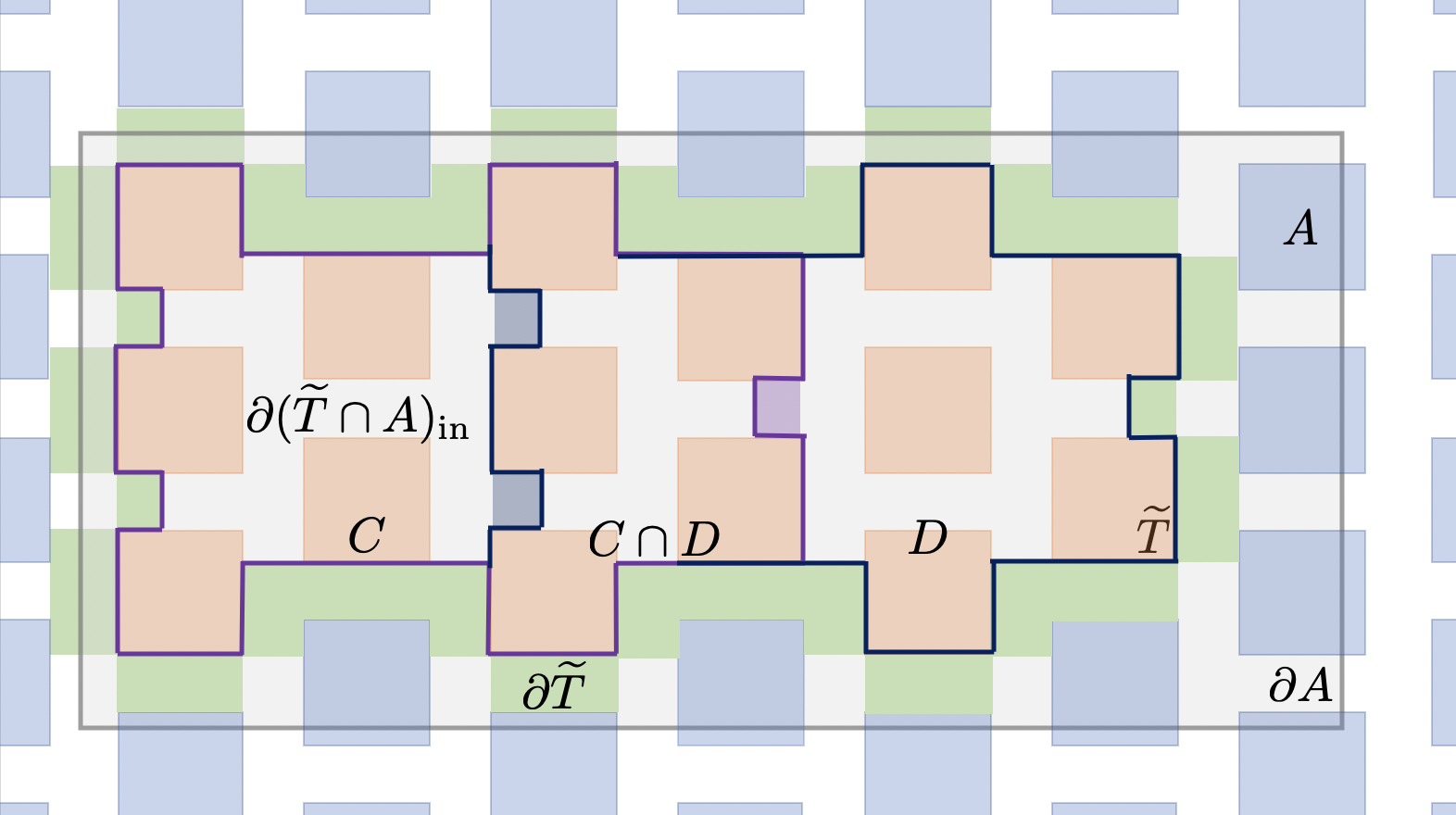}
	\caption{Splitting of a grained rectangle $\widetilde{T}=C\cup D$ into grained rectangles $C$ and $D$.}
	\label{fig:1'}
\end{figure}

\begin{step2}\label{step:12}
Assuming \Cref{LiLinftyaa}, the following holds for every $\rho \in \mathcal{D}(\mathcal{H}_{\widetilde{T}})$ and $C, D \subset {\widetilde{T}}$  such that $c\, |C \cup D | \,\e^{-\dist(C\backslash D,D\backslash C)/\xi}< 1/2 $ (see Figure \ref{fig:1}):
\begin{equation*}
D(E_{A\cap\widetilde{T} *}(\rho)\|E_{C\cup D*}\circ E_{A\cap{\widetilde{T}}*}(\rho))  \leq \frac{ \theta(C,D)}{4 \min \qty{\beta_{\widetilde{T}} (\LL_C), \beta_{\widetilde{T}}(\LL_{D})}} \left( \operatorname{EP}_{\LL_{C\cap D} }(\rho) + \operatorname{EP}_{\LL_{C \cup D}}(\rho)   \right),
\end{equation*}
where $\displaystyle \theta(C,D):= \frac{1}{1-2\, c \,|C \cup D | \, \e^{-\dist(C\backslash D,D\backslash C)/\xi} }  $.
\end{step2}

The proof of this result is completely analogous to that of \Cref{step:1}. Moreover, analogously to the proof of Theorem \ref{maintheorem2}, in the next step of the proof we need to choose $C$ and $D$ carefully so that $\theta(C,D)$ satisfies some desired decaying behaviour. For that, we will again consider $C$ and $D$ such that $|C\cup D|\sim L^d$ and dist$(C,D)= \sqrt{L}$ for a certain $L \in \mathbb{N}$, obtaining the necessary decay for $\theta(C,D)$ as a consequence of the fact that $\e^{-\sqrt{L}}$ decays faster than any polynomial.

In the second step of the proof, we split a fat grained rectangle into two smaller grained rectangles and get a lower bound for the Pinched MLSI constant of the former in terms of the Pinched MLSI constants of the latter. For that, we construct a suitable family of fat grained subrectangles in the grained rectangle $\widetilde{T}$, where we apply the previous step. Let $\widetilde{T}:=\widetilde{T}(A_0;k_1,\ldots ,k_d)$. Without loss of generality, assume that $k_1\leq \ldots \leq k_d$. Consider now the rectangle $T:=T(x;l_1, \ldots , l_d)$ \textit{generated} by $\widetilde{T}$, which we define as the smallest rectangle for which $\widetilde{T}$ is its subordinated grained rectangle. We assume for simplicity that $x=0$. Then, it is clear that for every $1\leq j \leq d$, the coordinates of both rectangles are related in the following way:
\begin{equation*}
    (2D-1)k_j + 2(\kappa-1)(k_j - 1) \leq l_j < (2D+1) k_j + 2 (\kappa - 1) k_j \, .
\end{equation*}
Let us also suppose that $l_1\leq \ldots \leq l_d$ and $(2D+1)^2 < l_d = 2L$, with $L$ large enough. We define $a_L:=\lfloor\sqrt{L}\rfloor$ and $n_L:= \lfloor \frac{L}{10a_L}  \rfloor$, where $\lfloor\cdot\rfloor$ denotes the integer part. For every integer $1 \leq n \leq n_L$, we cover $T$ with the following pair of sets:
\begin{equation}\label{AnBn'}
C_n=\qty{x \in T \, : \,  0 \leq x_d \leq \frac{l_d}{2}+ n a_L}\,,~~~
D_n=\qty{x \in T \, : \, \frac{l_d}{2} + (n-1)a_L < x_d \leq l_d}\,.
\end{equation}
Moreover, we denote by $\widetilde{C}_n$, resp. by $\widetilde{D}_n$, the grained sets subordinated to $C_n$, resp. to $D_n$. Then, it is clear by construction that $\widetilde{C}_n$ and $\widetilde{D}_n$ are both grained fat rectangles and $\widetilde{C}_n \cup \widetilde{D}_n = \widetilde{T}$ for every $1 \leq n \leq n_L$. Moreover, for $n$ fixed, it is clear that $\widetilde{C}_n \cap \widetilde{D}_n \neq \emptyset$ and the shortest side of the overlap has length of order $\sqrt{L}$ (due to the fact that we are considering $T$ a fat rectangle, so $l_1 \geq \frac{1}{10} \, l_d > \frac{L}{10}$ and if we had $\sqrt{L}>l_1$, we would have $\sqrt{L} > \frac{L}{10}$, or, equivalently, $\frac{L}{100} < 1$, which only holds for $L$ small). See Figure \ref{fig:2'}.

\begin{figure}[h!]
	\centering
	\includegraphics[width=0.7\linewidth]{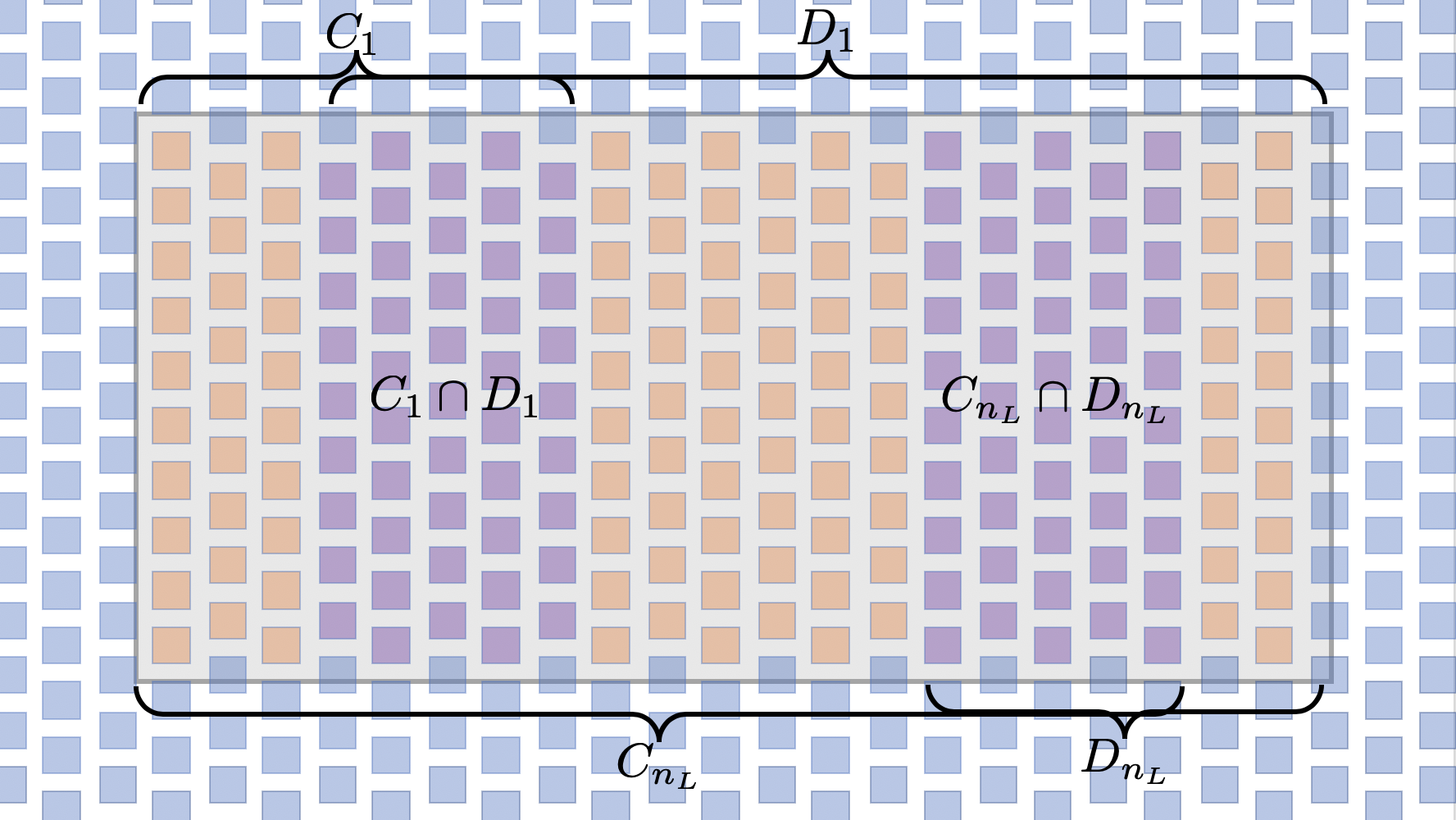}
	\caption{Splitting in $C_n$ and $D_n$.}
	\label{fig:2'}
\end{figure}

\newpage

\begin{step2}\label{step:22}
There exists a positive constant $C$, independent of the size of $\widetilde{T}$ such that:
\begin{equation}\label{ineq:L-large2}
\underset{n=1,\ldots,n_L}{\min} \qty{\beta_\Lambda(\LL_{\widetilde{C}_n}), \beta_\Lambda(\LL_{\widetilde{D}_n})}\left( 1+\frac{C}{\sqrt{L}} \right)^{-1} \leq \beta_\Lambda(\LL_{\widetilde{T}}),
\end{equation}
for every $1 \leq n  \leq n_L$ and $L$ large enough.
\end{step2}

\begin{proof}
	Once again, we denote $\omega:=E_{A\cap\Lambda*}(\rho)$. Then, using the sets $\widetilde{C}_n$ and $\widetilde{D}_n$ in the expression obtained in Step \ref{step:12}, we get, for every $1 \leq n  \leq n_L$,
\begin{equation}
D(\omega \| E_{\widetilde{T}*}(\omega)) \leq \frac{ \theta(\widetilde{C}_n,\widetilde{D}_n)}{4 \min \qty{\beta_\Lambda(\LL_{\widetilde{C}_n}), \beta_\Lambda(\LL_{\widetilde{D}_n})}} \left( \operatorname{EP}_{\LL_{\widetilde{C}_n \cap \widetilde{D}_n} }(\rho) + \operatorname{EP}_{\LL_{\widetilde{C}_n \cup \widetilde{D}_n}}(\rho)   \right),
\end{equation}
where
\begin{equation*}
\theta(\widetilde{C}_n,\widetilde{D}_n)=  \frac{1}{1-2\, c \, |\widetilde{C}_n \cup \widetilde{D}_n | \, \e^{-\sqrt{L}/\xi}    } \leq \frac{1}{1-2\, \widetilde{c} \, L^d \, \e^{-\sqrt{L}/\xi}    }  \, ,
\end{equation*}
for every $1 \leq n  \leq n_L$. Let us denote the latter by $ \theta(\sqrt{L})$. Now, by the definition of $\widetilde{C}_n$ and $\widetilde{D}_n$, the two following properties clearly hold:
\begin{enumerate}
\item $\widetilde{C}_i \cap \widetilde{D}_i \cap \widetilde{C}_j \cap \widetilde{D}_j = \emptyset$ for every $i \neq j$;
\item $ \ds \underset{1\leq n \leq n_L}{\bigcup} \left( \widetilde{C}_n \cap \widetilde{D}_n \right) \subseteq \widetilde{T} $.
\end{enumerate}
Therefore, as we did in Step \ref{step:2}, we can average over $n$ the previous expression to obtain:
\begin{align*}
D(\omega \| E_{\widetilde{T}*}(\omega)) &  \leq \frac{\theta(\sqrt{L})}{4 \, \underset{n=1,\ldots,n_L}{\min} \qty{\beta_\Lambda(\LL_{\widetilde{C}_n}), \beta_\Lambda(\LL_{\widetilde{D}_n})}}\left( 1+\frac{1}{n_L} \right)\operatorname{EP}_{\LL_{\widetilde{T}}}(\rho) \, .
\end{align*}
Hence, by the definition of $\beta_\Lambda(\LL_{\widetilde{R}})$,  for $L$ large enough we have
\begin{equation}\label{ineq:L-large12}
\underset{n=1,\ldots,n_L}{\min} \qty{\beta_\Lambda(\LL_{\widetilde{C}_n}), \beta_\Lambda(\LL_{\widetilde{D}_n})}\left( 1+\frac{C}{\sqrt{L}} \right)^{-1} \leq \beta_\Lambda(\LL_{\widetilde{R}}) \, ,
\end{equation}
for $C>1$ independent of $L$.
\end{proof}
Now, let us first define the following quantities for $L>1$:
\begin{equation}
T(L):= \underset{\widetilde{T} \in \widetilde{\mathcal{T}}_L}{\inf} \beta_\Lambda(\LL_{\widetilde{T}})\,.
\end{equation}
In the next step, we obtain a recursion between the quantities $T(L)$ which will later allow us to get a lower bound for the global MLSI constant in terms of size-fixed Pinched MLSI constants. 
\begin{step2}\label{step:32}
There exists a positive constant $C$ independent of the size of $T$ such that
\begin{equation}
T(2L)\geq  \left( 1+\frac{C}{\sqrt{L}} \right)^{-3d} T(L)\phantom{asdd} \text{for } L \text{ large enough}.
\end{equation}

\end{step2}

\begin{proof}
Consider the expression obtained in the previous step. Let us analyze the value of the MLSI constant in the grained rectangles $\widetilde{C}_n$ and $\widetilde{D}_n$. Let us consider the grained rectangle $\widetilde{C}_n$ (the analysis is analogous for $\widetilde{D}_n$). The side of $\widetilde{C}_n$ corresponding to the coordinate $x_d$ has length (in sites) less than or equal to $1.2L$, by definition of $C_n$. For the other sides, we have to distinguish between two different cases.
\begin{enumerate}
\item If $ \displaystyle \max\qty{l_k \, : \, k=1, \ldots d-1} \leq \frac{3}{2} L$, then the longest side of $\widetilde{C}_n $ is less than or equal to $ \displaystyle \frac{3}{2}L$, so $\displaystyle \widetilde{C}_n \in \widetilde{\mathcal{T}}_{\frac{3}{2}L}$ and $ \displaystyle \beta_\Lambda(\LL_{\widetilde{C}_n}) \geq T\Big(\,\frac{3}{2}L\,\Big)$.
\item If the largest side of $\widetilde{C}_n$, which we call $l_i$, satisfies $\displaystyle l_i > \frac{3}{2}L$, it is clear that $\widetilde{C}_n$ verifies max$\qty{l_k}>1.5 L$ and  min$\qty{l_k}\leq 1.2 L$. Hence,
\begin{equation}
\beta_\Lambda(\LL_{\widetilde{C}_n}) \geq \underset{\widetilde{T}:\,\max\qty{l_k}>1.5 L, \, \min\qty{l_k}\leq 1.2 L}{\min} \beta_\Lambda(\LL_{\widetilde{T}})\,.
\end{equation}
\end{enumerate}
Therefore, from the right-hand side of equation (\ref{ineq:L-large2}), we have
\begin{multline*}
\left( 1 + \frac{C}{\sqrt{L}}  \right)^{-1} \underset{n=1,\ldots,n_L}{\min} \qty{\beta_\Lambda(\LL_{\widetilde{C}_n}), \beta_\Lambda(\LL_{\widetilde{D}_n})}\\
 \geq \left( 1 + \frac{C}{\sqrt{L}}  \right)^{-1} \min\qty{T\left(\frac{3}{2}L\right), \underset{R:\,\max\qty{l_k}>1.5 L, \, \min\qty{l_k}\leq 1.2 L}{\min} \beta_\Lambda(\LL_{\widetilde{T}})}\,.
\end{multline*}
Now, we consider a grained fat rectangle in $\widetilde{\mathcal{T}}_{2L}$ such that its longest side is greater than or equal to $1.5 L$ and its shortest side has length less than or equal to $1.2 L$. Iterating Step \ref{step:22} at most $d-1$ times on that grained rectangle, we end up with a grained rectangle whose longest side is shorter than or equal to $1.5 L$. Hence,
\begin{equation}
\underset{\widetilde{T}:\,\max\qty{l_k}>1.5 L, \, \min\qty{l_k}\leq 1.2 L}{\min} \beta_\Lambda(\LL_{\widetilde{T}}) \geq  \left( 1 + \frac{C}{\sqrt{L}} \right)^{-(d-1)} T \left( \frac{3}{2}L \right)\,.
\end{equation}
Therefore,
\begin{equation}
 \beta_\Lambda(\LL_{\widetilde{R}}) \geq \left( 1 + \frac{C}{\sqrt{L}} \right)^{-d} T \left( \frac{3}{2}L \right)\,,
\end{equation}
and since the rectangle that we were considering in Step \ref{step:22} verified  $\widetilde{T} \in \widetilde{\mathcal{T}}_{2L}$, we obtain
\begin{equation}
T(2L) \geq \left( 1 + \frac{C}{\sqrt{L}} \right)^{-d} T \left( \frac{3}{2}L \right).
\end{equation}
To conclude, we iterate this expression two more times to obtain
\begin{equation}
T(2L) \geq \left( 1 + \frac{C}{\sqrt{L}} \right)^{-d}  \left( 1 + \frac{C}{\sqrt{\frac{3L}{4}}} \right)^{-d} \left( 1 + \frac{C}{\sqrt{\frac{9L}{16}}} \right)^{-d} T \left( \frac{27}{32}L \right),
\end{equation}
and since $ \displaystyle S\left( \frac{27}{32} L \right)\geq S (L)$, we obtain
\begin{equation}
T(2L) \geq \left( 1+ \frac{C}{\sqrt{L}} \right)^{-3d} T(L) \, ,
\end{equation}
where $\displaystyle C$ is a constant independent of the size of the system.
\end{proof}
Finally, in the last step of the proof, using recursively the relation obtained in the previous one, we get a lower bound for the global MLSI constant in terms of complete MLSI constants.  Similar to above, we define the quantities for $L>1$:
\begin{equation}
U(L):= \underset{\widetilde{T} \in \widetilde{\mathcal{T}}_L}{\inf} \alpha_\ccc(\LL_{\widetilde{T}})\,.
\end{equation}
\begin{step2}\label{step:42}
There exists a constant $L_0\in\NN$, independent of $\Lambda$ such that the following holds:
\begin{equation*}
\alpha( \LL_\Lambda)  \geq  \Psi (L_0) \,U(L_0)\,,
\end{equation*}
where $\Psi (L_0)$ does not depend on the size of $\Lambda$.
\end{step2}

\begin{proof}

Let us denote by  $L_0$ the first integer for which inequality (\ref{ineq:L-large12}) holds. By virtue of the previous step, it is clear that the following holds for $L_0$:
\begin{equation}
T(2L_0) \geq \left( 1+ \frac{C}{\sqrt{L_0}} \right)^{-3d} T(L_0),
\end{equation}
Note that the limit of $\Lambda$ tending to $\ZZ^d$ is the same as the one of $T(n L_0)$ with $n$ tending to infinity. Therefore,
 \begin{align*}
  \underset{\Lambda \rightarrow \ZZ^d}{\lim} \alpha_\Lambda(\LL_\Lambda) & =  \underset{n \rightarrow \infty}{\lim} T(2^n L_0) \\
  & \geq   \left( \underset{n=1}{\overset{\infty}{\prod}} \left( 1+ \frac{C}{\sqrt{2^{n-1}L_0}} \right) \right)^{-3d} T(L_0) \\
 &  \geq \left( \exp \left[ \underset{n=0}{\overset{\infty}{\sum}} \, \frac{C}{2^n L_0} \right] \right)^{-3d} T(L_0) \\
 & = \exp \left[\frac{-3 d C}{L_0}(2+\sqrt{2}) \right] T(L_0),
 \end{align*}
where the constants $L_0$ and $C$ do not depend on the size of $\Lambda$. We conclude from the following simple observation that 
\begin{align*}
T(L_0)=\underset{\widetilde{T}\in\widetilde{\mathcal{T}}_{L_0}}{\inf}\beta_\ccc(\LL_{\widetilde{T}})\overset{(1)}{\ge}
\underset{\widetilde{T}\in\widetilde{\mathcal{T}}_{L_0}}{\inf} \alpha_\ccc(\LL_{\widetilde{T}})=U(L_0)\,.
\end{align*}
where $(1)$ follows from an obvious extension of \Cref{betatoalpha} to grained rectangles, namely 
	\begin{align*}
	\beta_\Lambda(\cL_{\widetilde{T}})\ge \alpha_{\ccc}(\cL_{\widetilde{T}})\,,
	\end{align*}
	for any grained rectangle $\widetilde{T} \subseteq \Lambda\ssubset \ZZ^d$.

\end{proof}

\end{document}